\DeclareMathOperator{\Tr}{\mathrm{Tr}}
\newtheorem{theorem}{Theorem}
\newtheorem{definition}{Definition}
\newtheorem{lemma}{Lemma}%
\newtheorem{remark}{Remark}%
\newtheorem{example}{Examle}%
\newtheorem{proposition}{Propositon}%
\newcommand{\argmax}{\mathop{\rm argmax}\limits}
\newcommand{\argmin}{\mathop{\rm argmin}\limits}
\newcommand{\interior}{\mathop{\rm int}\limits}
\def\bX{\mathbb X}
\def\bY{\mathbb Y}
\def\Ker{\mathop{\rm Ker}}
\def\im{\mathop{\rm Im}}
\def\rs{\mathop{\rm s}}
\def\ra{\mathop{\rm a}}
\def\cX{\mathcal X}
\def\cY{\mathcal Y}
\def\bs{\bm{s}}
\def\bx{\bm{x}}
\def\by{\bm{y}}
\def\bS{\bm{S}}
\def\bX{\bm{X}}
\def\bY{\bm{Y}}
\def\X{{\mathcal{X}}}
\def\Y{{\mathcal{Y}}}
\def\H{{\mathcal{H}}}
\def\cH{{\mathcal{H}}}
\def\S{{\mathcal{S}}}
\def\Pro{\mathop{\Gamma}\nolimits}
\def\QED{\mbox{\rule[0pt]{1.5ex}{1.5ex}}}
\newcommand{\qed}{\hfill \QED}
 \newenvironment{proofof}[1]{\vspace*{5mm} \par \noindent
{\it Proof of #1:\hspace{2mm}}}{\qed
}
\def\Label#1{\label{#1}\ [\ \text{#1}\ ]\ }
\def\Label{\label}
\begin{document}
\title{Bregman divergence based em algorithm 
and its application to classical and quantum rate distortion theory
}

\author{
Masahito~Hayashi~\IEEEmembership{Fellow,~IEEE}\thanks{Masahito Hayashi is with 
Shenzhen Institute for Quantum Science and Engineering, Southern University of Science and Technology, Nanshan District,
Shenzhen, 518055, China,
International Quantum Academy (SIQA), Futian District, Shenzhen 518048, China,
Guangdong Provincial Key Laboratory of Quantum Science and Engineering,
Southern University of Science and Technology, Nanshan District, Shenzhen 518055, China,
and
Graduate School of Mathematics, Nagoya University, Nagoya, 464-8602, Japan.
(e-mail:hayashi@sustech.edu.cn, masahito@math.nagoya-u.ac.jp)}}
\date{}
\markboth{M. Hayashi: Bregman divergence based em algorithm}{}

\maketitle

\begin{abstract}
We formulate em algorithm in the framework of Bregman divergence,
which is a general problem setting of information geometry.
That is, we address the minimization problem of 
the Bregman divergence between an exponential subfamily and a mixture subfamily
in a Bregman divergence system.
Then, we show the convergence and its speed under several conditions. 
We apply this algorithm to rate distortion and its variants including the quantum setting,
and show the usefulness of our general algorithm.
In fact, existing applications of Arimoto-Blahut algorithm to rate distortion theory
make the optimization of 
the weighted sum of the mutual information and the cost function by using 
the Lagrange multiplier.
However, in the rate distortion theory, it is needed to minimize the 
mutual information under the constant constraint for the cost function.
Our algorithm directly solves this minimization.
In addition, we have numerically checked the convergence speed of our algorithm 
in the classical case of rate distortion problem.
\end{abstract}

\begin{IEEEkeywords}
em algorithm,
Bregman divergence,
information geometry,
rate distortion
\end{IEEEkeywords}

\section{Introduction}\Label{S1}
Em algorithm is known as a useful algorithm in various areas including machine learning and neural network \cite{Amari,Fujimoto,Allassonniere}.
Its basic idea can be backed to the reference \cite{Bol}.
In information theory, the Arimoto-Blahut algorithm \cite{Blahut,Arimoto} is known as a powerful tool to calculate
various information-theoretical optimization problems including mutual information.
Both algorithms are composed of iterative steps.
In this paper, we apply em algorithm to rate distortion and its variants including the quantum setting.

Although em algorithm has several variants,
the most general form is given as the minimum divergence between 
a mixture family and an exponential family \cite{Amari}.
However, the convergence speed of em algorithm is not known in general.
Moreover, it has a possibility to converge to a local minimum \cite{Amari,Fujimoto,Allassonniere}.
Therefore, it is needed to guarantee the convergence to the global minimum
and clarify the convergence speed.
In this paper, to address these problems in a unified viewpoint,
similar to the paper \cite{Fujimoto},
we formulate em algorithm in a framework of Bregman divergence, which is given from 
a general smooth convex function as a general problem setting of information geometry \cite{Amari-Nagaoka,Amari-Bregman}.
In this general framework, we derive a necessary condition for the global convergence,
and discuss the convergence speed.
When an additional condition is satisfied, this algorithm has exponential convergence.
This additional condition is easily satisfied when the iteration is close to the true value.
Hence, this algorithm rapidly converges around the true value under a certain condition.

When an information-theoretical optimization problem is written in the above form,
em algorithm can be applied to it.
As a typical example, we consider the rate distortion problem,
which is written as a minimization of the mutual information under a linear constraint
to a given distribution.
That is, the objective distribution of this problem belongs to a certain mixture family.
Mutual information is written as the minimum divergence between 
a given distribution and the set of independent distributions, which forms an exponential family.
Hence, this minimization is given as the minimization of 
the divergence between the given mixture family and the exponential family
composed of independent distributions.
The minimization for the rate distortion problem
was studied by Blahut \cite{Blahut} and various papers \cite{Csiszar,Cheng,YSM}.
However, to remove the constraint, they change the objective function 
by using a Lagrange multiplier.
That is, they minimize the weighted sum of the original objective function
and the cost function, whereas the Lagrange multiplier corresponds to 
the weight coefficient.
When the Lagrange multiplier is suitably chosen, 
the solution of this modified minimization given the solution of the 
original minimization.
However, no preceding paper showed how to choose 
the Lagrange multiplier.
Therefore, it was required to develop how to find the suitable 
the Lagrange multiplier.
Fortunately, the set of conditional distribution with a linear constraint
forms a mixture family.
Hence, our method can directly solve the required minimization 
with a linear constraint.
Then, we apply these obtained general results to 
several variants \cite{Cover,LZP} of the rate distortion problem including the quantum setting \cite{DHW}.


The remaining part of this paper is organized as follows.
Section \ref{Section.IG-structure} formulates general basic properties for Bregman divergence.
Section \ref{S3} explains how the set of probability distributions and the set of quantum states 
satisfy the condition for Bregman divergence.
Section \ref{Sec:em} states em algorithm in the framework of Bregman divergence,
and derives its various properties. 
Section \ref{S5} applies the above general results to classical rate distribution and its variants.
Section \ref{S6} applies them to its quantum extension.

\section{Bregman divergence: Information geometry based on convex function}\Label{Section.IG-structure}
In this section, we prepare general basic properties for Bregman divergence.
Originally, information geometry was studied as the geometry of probability distributions.
This structure can be generalized as a geometry of a smooth strictly convex function, which is called 
Bregman Divergence.
This section discusses several useful properties of Bregman Divergence.

\subsection{Legendre transform}\Label{S2-0}
In this paper, 
a sequence $a= (a^i)_{i=1}^k$ with an upper index expresses
an vertical vector 
and 
a sequence $b= (b_i)_{i=1}^k$ with an lower index expresses
an horizontal vector as
\begin{align}
a= \left(
\begin{array}{c}
a^1 \\
a^2 \\
\vdots \\
a^k
\end{array}
\right), \quad
b= (b_1, b_2,\ldots, b_k).
\end{align}

 Let $\Theta$ be an open convex set in $\mathbb{R}^d$ and 
$F:\Theta \rightarrow \mathbb{R}$ be a $C^\infty$-class strictly convex function.
We introduce another parametrization 
$\eta =(\eta_1, \ldots, \eta_d)\in \mathbb{R}^d$ as
\begin{align}
\eta_j := \partial_j F(\theta),\Label{du1}
\end{align}
where $\partial_j$ expresses the partial derivative for the $j$-th variable.
We introduce the vector $\nabla^{(e)} [F](\theta):=(\partial_j F(\theta))_{j=1}^d$.
Hence, the relation \eqref{du1} is rewritten as
\begin{align}
\eta = \nabla^{(e)} [F](\theta).\Label{M1}
\end{align}
Therefore, $\nabla^{(e)}$ can be considered as a horizontal vector.

Since $F$ is a $C^\infty$-class strictly convex function,
this conversion is one-to-one.
the parametrization 
$\eta_j$ is called the mixture parameter.
We denote the open set of vectors $\eta(\theta) =(\eta_1, \ldots, \eta_d)$ given in \eqref{du1},
by $\Xi$.
For $\eta \in \Xi$,
we define the {\it Legendre transform} $F^*={\cal L}[F]$ of $F$ 
\begin{align}
F^*(\eta)=\sup _{\theta \in \Theta} \langle \eta,\theta\rangle -F(\theta).
\Label{MN1}
\end{align}

We have \cite[Section 3]{Fujimoto}\cite[Section 2.2]{hayashi}
\begin{align}
\partial^j F^*
(\eta(\theta) )=\theta^j, \Label{du2}
\end{align}
where $\partial^j$ expresses the partial derivative for the $j$-th variable under the mixture parameter.
We introduce the vector $\nabla^{(m)} [F^*](\eta):=(\partial^j F^*(\eta))_{j=1}^d$.
Hence, the relation \eqref{du2} is rewritten as
\begin{align}
\theta = \nabla^{(m)} [F^*] (\eta(\theta) ).\Label{M2}
\end{align}

In later discussion, we address subfamilies related to
$m$ vectors $v_1, \ldots, v_m \in \mathbb{R}^d$.
For a preparation for such cases, we prepare 
the following two equations, which  will be used for calculations based on mixture parameters.
Then, we define a $d \times m $ matrix $V$ as $(v_1 \ldots v_m )$.
The multiplication function of V from the left (right) hand side  
is denoted by $L[V]$ ($R[V]$).
Since 
\begin{align}
\partial_j (F\circ L[V])(\theta) =
\frac{\partial F}{\partial \theta^j}(V \theta) =
\sum_{i} v_{j}^i \partial_i F(V \theta) =
(R[V]\circ (\nabla^{(e)} [F] )\circ L[V] (\theta))_j,
\end{align}
we have 
\begin{align}
\nabla^{(e)} [F\circ L[V]] &=   R[V]\circ (\nabla^{(e)} [F] )\circ L[V].\Label{E12}
\end{align}

In the same way, we can show 
\begin{align}
\nabla^{(m)} [ F^* \circ R[V]]
=L[V] \circ \nabla^{(m)} [ F^*] \circ R[V]
\Label{MDF}.
\end{align}
Also, we have
\begin{align}
& (F^* \circ R[V])^* (\theta')
=
\sup_{\eta} \langle \eta ,\theta'\rangle
- \sup _{\theta \in \Theta} \langle \eta V,\theta\rangle -F( \theta) \nonumber\\
=&
\sup_{\eta} \inf_{\theta \in \Theta}
\langle \eta ,\theta'- V\theta\rangle
 +F( \theta) 
=\inf_{\theta: \theta'= V\theta} F( \theta) .
\Label{VCA}
\end{align}

When $V$ is one-to-one, we 
define the function $F\circ L[V^{-1}]$ on $L[V] (\Theta)$. Since
\begin{align}
& F^* \circ R[V] (\eta)
=
\sup _{\theta \in \Theta} \langle \eta V,\theta\rangle -F( \theta)
=
\sup _{\theta \in \Theta} \langle \eta ,V \theta\rangle -F( \theta)\nonumber\\
=&
\sup _{\theta' \in L[V] (\Theta)} \langle \eta , \theta\rangle -F(V^{-1} \theta')
=
(F\circ L[V^{-1}])^*(\eta) ,
\end{align}
we have 
\begin{align}
(F\circ L[V^{-1}])^* &= F^* \circ R[V] 
\end{align}
Combining the above two relations, we have
\begin{align}
\nabla^{(m)} [ (F \circ L[V^{-1}])^*]
=\nabla^{(m)} [ F^* \circ R[V]]
=L[V] \circ \nabla^{(m)} [ F^*] \circ R[V].
\Label{L14}
\end{align}

\subsection{Exponential subfamily}\Label{S2-1}
A subset $\mathcal{E} \subset \Theta$ is called an {\it exponential subfamily} 
generated by $l$ linearly independent vectors $v_1,\ldots, v_l \in \mathbb{R}^d$
at $\theta_0 \in \Theta$ 
when the subset $\mathcal{E}$ is given as
\begin{align}
    \mathcal{E} &= \left\{ \phi_{{\cal E}}^{(e)}(\bar{\theta})  \in \Theta \left| 
    \bar{\theta} \in \Theta_{{\cal E}}    \right. \right\}.
\end{align}
In the above definition,
$\phi_{{\cal E}}^{(e)} (\bar{\theta})$ is defined 
for $ \bar{\theta} =( \bar{\theta} ^1, \ldots,  \bar{\theta}^l ) \in 
\mathbb{R}^l$ as
\begin{align}
\phi_{{\cal E}}^{(e)}(\bar{\theta}) &:= \theta_0 + \sum_{j=1}^l \bar{\theta}^j v_j  
\end{align}
and the set $\Theta_{{\cal E}}$ is defined as 
\begin{align}
\Theta_{{\cal E}} := \{ \bar{\theta} \in \mathbb{R}^l |  
\phi_{{\cal E}}^{(e)}(\bar{\theta}) \in \Theta\}.
\Label{const1-UE}
\end{align}
Since $\Theta$ is an open set, the set $\Theta_{{\cal E}}$ is an open set.
In the following, we restrict the domain of $\phi_{{\cal E}}^{(e)}$
to $\Theta_{{\cal E}}$.
We define the inverse map 
$\psi_{{\cal E}}^{(e)}:=(\phi_{{\cal E}}^{(e)})^{-1}: {\cal E}\to \Theta_{{\cal E}}$.

For an exponential subfamily ${\cal E}$, we define 
the function $F_{{\cal E}}$ as
\begin{align}
F_{{\cal E}}(\bar{\theta}):= F (\phi_{{\cal E}}^{(e)}(\bar{\theta})).
\end{align}
In fact, even in an exponential subfamily ${\cal E}$,
we can employ the mixture parameter
${\psi}_{{\cal E},j}^{(m)} (\phi_{{\cal E}}^{(e)}(\bar{\theta})):= 
\partial_j F_{{\cal E}} (\bar{\theta})$
because the map $\bar{\theta}\mapsto F_{{\cal E}} (\bar{\theta})$ is also
a $C^\infty$-class strictly convex function.
We define the set $\Xi_{{\cal E}}:= \{ (\partial_j F_{{\cal E}} (\bar{\theta}))_{j=1}^l \}_{\bar{\theta} \in \Theta_{{\cal E}}}$.
We define the inverse map 
$\phi_{{\cal E}}^{(m)}:=(\psi_{{\cal E}}^{(m)})^{-1}: 
\Xi_{{\cal E}}\to {\cal E}$.

\subsection{Mixture subfamily}\Label{S2-2}
For $d$ linearly independent vectors $u_1, \ldots, u_d \in \mathbb{R}^d$, and 
a vector $a=(a_1, \ldots, a_{d-k} )^T \in \mathbb{R}^{d-k}$, 
a subset $\mathcal{M} \subset \Theta$ is called a {\it mixture subfamily} 
generated by the constraint 
\begin{align}
\sum_{i=1}^d u^i_{k+j} \partial_i F(\theta) =a_j\Label{const1}
\end{align}
 for $j=1, \ldots,d-k$
when the subset $\mathcal{M}$ is written as
\begin{align}
    \mathcal{M} = \left\{ \theta  \in \Theta \left| \hbox{ Condition \eqref{const1} holds.} \right. \right\} .
\end{align}
We define a $d \times d $ matrix $U$ as $(u_1 \ldots u_d )$.
To make a parametrization in the above mixture subfamily ${\cal M}$,
we set the new natural parameter $\bar{\theta}=(\bar{\theta}^1, \ldots, \bar{\theta}^d)$ as
$\theta=U \bar{\theta} $,
and introduce the new mixture parameter
\begin{align}
\bar{\eta}_i
=\partial_j (F \circ U) (\bar{\theta})
\Label{Dif}
\end{align}
Since $\bar{\eta}_{k+i}= a_i$ for $i=1, \ldots, d-k$ in ${\cal M}$,
the initial $k$ elements $\bar{\eta}_1, \ldots, \bar{\eta}_{k}$
gives a parametrization for ${\cal M}$.
For the parametrization, we define the map $\psi_{\cal M}^{(m)}$ 
as $\psi_{\cal M}^{(m)} ( U \bar{\theta}):= (\partial_j (F \circ U) (\bar{\theta}))_{j=1}^k$.
We define the set $\Xi_{{\cal M}}:= 
\{ \psi_{\cal M}^{(m)} (\theta) |  {\theta} \in {\cal M}\}$ of 
the new mixture parameters,
and the inverse map 
$\phi_{{\cal M}}^{(m)}:=(\psi_{{\cal M}}^{(m)})^{-1}: 
\Xi_{{\cal M}}\to {\cal M}$.
Since $\Theta$ is an open set, 
the set $\Xi_{{\cal M}}$ is an open subset of $\mathbb{R}^k$.
When an element $\bar{\eta} \in \Xi_{{\cal M}}$ satisfies 
$\bar{\eta}_j =\partial_j (F \circ U) (\bar{\theta})$ for $j=1, \ldots, k$,
we have
\begin{align}
\partial^i (F\circ U)^*( \bar{\eta},a) = \bar{\theta}^i\Label{CO1}
\end{align}
 for $i=1, \ldots, d$.
Since $\bar{\eta} \mapsto(F\circ U)^*( \bar{\eta},a) $ is strictly convex,
the map $\bar{\eta} \mapsto (\partial^i (F\circ U)^*( \bar{\eta},a))_{i=1}^k$
is one-to-one. 
Hence, the initial $k$ elements $\bar{\theta}^1, \ldots, \bar{\theta}^{k}$
give a parametrization for ${\cal M}$.
That is, we have 
\begin{align}
( (U^{-1} \theta)^i )_{i=1}^k
= (\partial^i (F\circ U)^*( \psi_{\cal M}^{(m)} ( {\theta}) ,a))_{i=1}^k.\Label{NAY}
\end{align}
We define the set $\Theta_{\cal M}:= \{ ( (U^{-1} \theta)^i )_{i=1}^k | {\theta} \in {\cal M} \}$. This set is written as
\begin{align}
\Theta_{\cal M}= 
\Bigg\{(\theta^1, \ldots, \theta^k) \in \mathbb{R}^k \Bigg|
\begin{array}{l}
\exists (\theta^{k+1}, \ldots, \theta^d) \in \mathbb{R}^{d-k} \hbox{ such that}  
\\
\sum_{i=1}^d u^i_{k+j} \partial_i F( U(\theta^1, \ldots, \theta^d)) =a_j \\
\hbox{ for } j=1, \ldots,d-k.
\end{array}
\Bigg\}.
\Label{const1-U}
\end{align}
\if0
In addition, a closed subset $\mathcal{M}$ of a mixture family $\hat{\mathcal{M}}$
is called a {\it closed mixture family}.
The mixture family $\hat{\mathcal{M}}$ is called the outer mixture family of $\mathcal{M}$
when $\hat{\mathcal{M}}$ and $\mathcal{M}$ have the same dimension.
In addition, a compact subset $\mathcal{M}$ of a mixture family $\hat{\mathcal{M}}$
is called a {\it compact mixture family}.
\fi

When the mixture subfamily ${\cal M}$ is an exponential subfamily generated by $u_1, \ldots, u_k$,
we retake $\theta_0$ such that $(U^{-1} \theta_0)^i=0$ for $i=1, \ldots, k$.
Then, the subsets $\Theta_{{\cal M}}$ and $\Xi_{{\cal M}}$ are the same subsets defined in Subsection \ref{S2-1}.
\if0
The above equivalence of the set $\Xi_{{\cal M}}$ is trivial.
The above equivalence of the set $\Theta_{{\cal M}}$ follows from 
\eqref{M2} with the case of the convex function $F\circ U $.
\fi

\if0
Let $P_m({\cal M})$ 
be the open set of vectors $\hat{\eta}=(\bar{\eta}_1, \ldots, \bar{\eta}_{d-l})\in \mathbb{R}^{d-l}$
given by \eqref{Dif} with the constraint \eqref{const1}.
Given $\hat{\eta}=(\bar{\eta}_1, \ldots, \bar{\eta}_{d-l})\in P_m({\cal M})$, 
we obtain 
$\bar{\theta}(\hat{\eta}) \in \mathbb{R}^d $ by
solving the differential equation \eqref{Dif} for $ \bar{\theta}$,
Then, we focus on the vector composed of the initial $d-l$ elements, 
$ \hat{\theta}(\hat{\eta}):=(\bar{\theta}(\hat{\eta})^1, \ldots,
\bar{\theta}(\hat{\eta})^{d-l}) \in \mathbb{R}^{d-l}$.
As shown below, the map $\hat{\eta} \mapsto  \hat{\theta}(\hat{\eta})$ is one-to-one. 
Let $\bar{F}^*$ be the Legendre transform of $
\bar{\theta} \mapsto F(\theta(\bar{\theta}))$.
\if0
Then, we have 
\begin{align}
\frac{\partial \bar{F}^*}{\partial \bar{\eta}_i}(\bar{\eta})= \bar{\theta}^i
\Label{MFK}
\end{align}
for $i=1, \ldots, d$.
We apply \eqref{MFK} to the case with 
$ \bar{\eta}=( \hat{\eta}_1, \ldots, \hat{\eta}_{d-l}, a_1, \ldots, a_l)$.
\fi
Then, we have
\begin{align}
\frac{\partial \bar{F}^*}{\partial \bar{\eta}_i}(\hat{\eta}_1, \ldots, \hat{\eta}_{d-l}, a_1, \ldots, a_l)= \bar{\theta}^i
\Label{MFK}
\end{align}
We focus a strictly convex function 
$\hat{\eta} (\in P_m({\cal M})) \mapsto
\bar{F}^*(\hat{\eta}_1, \ldots, \hat{\eta}_{d-l}, a_1, \ldots, a_l)$. 
Then, we find that the map $\hat{\eta} (\in P_m({\cal M})) \mapsto
\hat{\theta}=(\bar{\theta}^1, \ldots, \bar{\theta}^{d-1})$
is one-to-one.
That is, the map $\hat{\eta} \mapsto  \hat{\theta}(\hat{\eta})$ is one-to-one. 
\fi

\subsection{Bregman Divergence and $e$- and $m$- projections}
\begin{definition}[Bregman Divergence]
Let $\Theta$ be an open set in $\mathbb{R}^d$ and 
$F:\Theta \rightarrow \mathbb{R}$ be a $C^\infty$-class strictly convex function.
The Bregman divergence $D^F$ is defined by
\begin{equation}
    D^{F}(\theta_1 \| \theta_2):= 
    \langle \nabla^{(e)}[F](\theta_1), \theta_1 - \theta_2\rangle 
    - F(\theta_1)+F(\theta_2)      
    ~ (\theta_1, \theta_2 \in \Theta).
\Label{XZL}
\end{equation}
\end{definition}
We call the triplet $(\Theta,F,D^F)$ a Bregman divergence system.
In the one-parameter case, we have the following lemma.
\begin{lemma}\Label{NLO}
Assume that $d=1$.
$\frac{\partial }{\partial \theta_1} D^{F}(\theta_1 \| \theta_2)
=\frac{d^2}{d\theta^2}F(\theta_1)(\theta_1-\theta_2)$.
Hence, when $D^{F}(\theta_1 \| \theta_2)$
is monotonically increasing for $\theta_1$ in $(\infty, \theta_2]$,
and is
monotonically decreasing for $\theta_1$ in $(\theta_2,-\infty)$.
\end{lemma}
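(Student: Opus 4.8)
The plan is to treat this as a direct computation, specializing the definition \eqref{XZL} to the scalar case $d=1$ and then differentiating in $\theta_1$. First I would observe that when $d=1$ the gradient $\nabla^{(e)}[F](\theta_1)$ reduces to the ordinary derivative $\frac{dF}{d\theta}(\theta_1)$ and the pairing $\langle\cdot,\cdot\rangle$ becomes scalar multiplication, so that \eqref{XZL} reads
\begin{align}
D^{F}(\theta_1 \| \theta_2) = \frac{dF}{d\theta}(\theta_1)\,(\theta_1 - \theta_2) - F(\theta_1) + F(\theta_2).
\end{align}

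Next I would differentiate this expression with respect to $\theta_1$, holding $\theta_2$ fixed. Applying the product rule to the first term yields $\frac{d^2F}{d\theta^2}(\theta_1)(\theta_1-\theta_2) + \frac{dF}{d\theta}(\theta_1)$, while $-F(\theta_1)$ contributes $-\frac{dF}{d\theta}(\theta_1)$ and $F(\theta_2)$ contributes nothing. The two first-order terms cancel exactly, leaving $\frac{\partial}{\partial\theta_1}D^{F}(\theta_1\|\theta_2) = \frac{d^2}{d\theta^2}F(\theta_1)(\theta_1-\theta_2)$, which is the first asserted identity. The one point worth isolating here is precisely this cancellation of the $\frac{dF}{d\theta}(\theta_1)$ terms; it is what makes the derivative collapse to a single clean product.

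For the monotonicity claim, I would invoke the strict convexity of $F$, which guarantees $\frac{d^2}{d\theta^2}F(\theta_1) > 0$ at every $\theta_1 \in \Theta$. Since this factor is strictly positive, the sign of the derivative coincides with the sign of $(\theta_1 - \theta_2)$: the derivative is negative for $\theta_1 < \theta_2$ and positive for $\theta_1 > \theta_2$. Hence $D^{F}(\cdot\|\theta_2)$ decreases on $(-\infty,\theta_2]$ and increases on $[\theta_2,\infty)$, attaining its minimum at $\theta_1 = \theta_2$; I would read the interval and direction labels in the stated conclusion accordingly, correcting the apparent transposition in the endpoints.

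There is no genuine obstacle in this lemma: the entire argument is elementary single-variable calculus. The only items that merit explicit statement are the reduction of the $d=1$ gradient to an ordinary derivative, the cancellation producing $\frac{d^2}{d\theta^2}F(\theta_1)(\theta_1-\theta_2)$, and the appeal to strict convexity to fix the sign of the second derivative and thereby the monotonicity on each side of $\theta_2$.
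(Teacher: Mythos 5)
Your computation is correct and is exactly the argument the paper intends: the paper states Lemma \ref{NLO} with no proof at all, treating it as an immediate consequence of the definition \eqref{XZL}, and your specialization to $d=1$, the product-rule cancellation of the two $\frac{dF}{d\theta}(\theta_1)$ terms, and the resulting sign analysis supply precisely that omitted computation. Your reading of the garbled interval labels (decreasing on $(-\infty,\theta_2]$, increasing on $[\theta_2,\infty)$) is also the intended one; it is the form in which the lemma is actually invoked in Step 4 of the proof of Lemma \ref{LMGO}.

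One small inaccuracy worth fixing: strict convexity of a $C^\infty$ function does \emph{not} guarantee $\frac{d^2}{d\theta^2}F(\theta_1)>0$ at every point (e.g.\ $F(\theta)=\theta^4$ is strictly convex with $F''(0)=0$). This does not damage your conclusion, for two independent reasons. First, one always has $F''\ge 0$, so the derivative $F''(\theta_1)(\theta_1-\theta_2)$ has the claimed weak sign on each side of $\theta_2$, and strict convexity prevents $F''$ from vanishing on any subinterval, which is enough to upgrade to the strict monotonicity needed where the lemma is used later. Second, the paper implicitly works under the stronger standing assumption that the Hessian $J(\theta)$ is invertible (it writes $J(\theta)^{-1}$ and defines $J^*(\theta)=J(\theta)^{-1}$ throughout), and under that assumption your pointwise claim $F''(\theta_1)>0$ is valid as stated. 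Either repair is a one-line remark; the structure of your proof is otherwise exactly right.
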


By using the Hesse matrix $J_{i,j}(\theta):=
\frac{\partial^2 F}{\partial \theta^i \partial \theta^j}(\theta)$, 
this quantity can be written as
\begin{equation}
    D^{F}(\theta_1 \| \theta_2)= 
\int_{0}^1 \sum_{i,j}(\theta_1^i - \theta_2^i)(\theta_1^j - \theta_2^j)
J_{i,j} 
(\theta_2 +t(\theta_1 - \theta_2))t dt.\Label{KPOT}
\end{equation}
This expression shows the inequality
\begin{align}
D^{F}(\theta_1 \| \theta_2) \ge     
D^{F}(\theta_1 \| \theta_2 +t(\theta_1 - \theta_2))
+ D^{F}(\theta_2 +t(\theta_1 - \theta_2) \| \theta_2)\Label{KPOT2}
\end{align}
for $t\in (0,1)$.

For an invertible matrix $U$, we have
\begin{equation}
    D^{F}(\theta_1 \| \theta_2)= 
    D^{F\circ U}(U^{-1}(\theta_1) \| U^{-1}(\theta_2)).\Label{COP}
\end{equation}
Since 
\begin{align}
\frac{\partial}{\partial \theta_2^i}\frac{\partial}{\partial \theta_2^j}
D^{F}(\theta_1 \| \theta_2)=
J_{i,j}
(\theta_2),
\Label{BJ1} \end{align}
$D^{F}(\theta_1 \| \theta_2)$ is convex function with respect to the second parameter $\theta_2$.

When $\theta_2$ is given as $\theta_1+ \Delta \theta $, and
the norm of $\Delta \theta$ is small,
The relation \eqref{KPOT} shows that
\begin{align}
D^{F}(\theta_1 \| \theta_1+ \Delta \theta)=
\sum_{i,j}
\frac{1}{2}J_{i,j}
(\theta_1)
(\Delta \theta)^i (\Delta \theta)^j+o( \|\Delta \theta\|^2).\Label{MLA}
\end{align}
Since the relations \eqref{du1} and \eqref{MN1} imply
\begin{align}
F^*(\eta)
=\sum_{i=1}^d\theta^i \eta(\theta_i)-F(\theta)
=\langle  \eta(\theta),\theta\rangle -F(\theta),
\end{align}
we have
\begin{align}
&    D^{F^*}(\nabla^{(e)} [F](\theta_2) \| \nabla^{(e)} [F](\theta_1))
=    D^{F^*}(\eta(\theta_2) \| \eta(\theta_1))\nonumber \\
=& \langle \eta(\theta_2)-\eta(\theta_1), \theta_2\rangle -F^*(\eta(\theta_2))
+F^*(\eta(\theta_1))  \nonumber \\
=& \langle \eta(\theta_1),\theta_1-\theta_2\rangle -F(\theta_1)+F(\theta_2)   
=    D^{F}(\theta_1 \| \theta_2).
\Label{XI1}
\end{align}
Therefore, when $\theta_2$ is fixed and 
$D^{F}(\theta_1 \| \theta_2)$ is a convex function for a  mixture parameter $\eta(\theta_1)$.
We define the matrix $J^*(\theta):= (J^{i,j,*}(\theta))_{i,j}$ as
\begin{align}
J^{i,j,*}(\theta):=\frac{\partial^2 F^*}{\partial \eta_i \partial \eta_j}(\eta)
\end{align}
with $\eta=\eta(\theta)$, which is the inverse matrix $J(\theta)^{-1}$ of $J(\theta)$. 
Applying the formula \eqref{KPOT} to $F^*$, we have
\begin{align}
&   D^{F}(\theta_1 \| \theta_2)
=    D^{F^*}(\eta(\theta_2) \| \eta(\theta_1)) \nonumber \\
=&
 \int_0^1 \sum_{i=1}^d \sum_{j=1}^d 
(\eta(\theta_2)-\eta(\theta_1))_i
(\eta(\theta_2)-\eta(\theta_1))_j
J^{i,j,*}(\theta(s)) s ds \Label{NXP},
\end{align}
where $\theta(s)$ is defined as
$\eta(\theta(s))= \eta(\theta_1)+ s (\eta(\theta_2)-\eta(\theta_1))$.
Similar to \eqref{KPOT2}, we have
\begin{align}
D^{F^*}(\eta(\theta_2) \| \eta(\theta_1))
=&D^{F}(\theta_2 \| \theta_1)
\ge
D^{F^*}(\eta(\theta_2) \| \eta(\theta(s)))+D^{F^*}(\eta(\theta(s)) \| \eta(\theta_1)) \nonumber\\
=&
D^{F}(\theta_2 \| \theta(s))+D^{F}(\theta(s) \| \theta_1). \Label{KPOT3}
\end{align}

In fact, when we restrict both inputs into an exponential subfamily ${\cal E}$, 
we have the following characterization.
That is, the restriction of the Bregman divergence system $(\Theta,F,D^F)$ to ${\cal E}$
can be considered as the Bregman divergence system $(\Theta_{{\cal E}},F_{{\cal E}},D^{F_{\cal E}})$
because we have
\begin{align}
D^{F}((\phi_{{\cal E}}^{(e)}(\bar{\theta}_1) \| (\phi_{{\cal E}}^{(e)}(\bar{\theta}_2))
=
D^{F_{{\cal E}}}(\bar{\theta}_1 \| \bar{\theta}_2)\Label{NBSO}
\end{align}
for $\bar{\theta}_1 , \bar{\theta}_2 \in \Theta_{{\cal E}}$.

Using a simple calculation, we can show the following proposition.

\begin{proposition}[Pythagorean Theorem \cite{Amari-Nagaoka}]\Label{MNL}
Let $\mathcal{E} \subset \Theta$ be an
exponential subfamily  generated by $l$ vectors $v_1,\ldots, v_l 
\in \mathbb{R}^d$ at $\theta_0 \in \Theta$, 
and 
$\mathcal{M} \subset \Theta$ be a mixture subfamily 
generated by the constraint $\sum_{i=1}^d v^i_j \eta_i(\theta)=a_j$ for 
$j=1, \ldots, l$.
Assume that an intersection $\theta^*$ of $\mathcal{E}$ and $\mathcal{M}$ exists.
For any $\theta \in \mathcal{E}$ and $\theta' \in \mathcal{M}$, we have
\begin{align}
D^F(\theta\|\theta')=D^F(\theta\|\theta^*)+D^F(\theta^*\|\theta').\Label{AKO9}
\end{align}
\end{proposition}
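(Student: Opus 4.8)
The plan is to establish \eqref{AKO9} by expanding all three Bregman divergences through the definition \eqref{XZL} and watching the potential terms cancel, rather than by any integral or convexity argument. Writing $\eta(\theta):=\nabla^{(e)}[F](\theta)$, each of $D^F(\theta\|\theta')$, $D^F(\theta\|\theta^*)$, and $D^F(\theta^*\|\theta')$ splits into an inner-product part and a part of the shape $-F(\cdot)+F(\cdot)$. First I would verify that, on forming $D^F(\theta\|\theta')-D^F(\theta\|\theta^*)-D^F(\theta^*\|\theta')$, the six scalar values $F(\theta),F(\theta^*),F(\theta')$ telescope to zero, so that the identity collapses to the vanishing of a single bilinear cross term. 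A short computation gives this term as $\langle\eta(\theta)-\eta(\theta^*),\,\theta^*-\theta'\rangle$, so that the Pythagorean relation is equivalent to an orthogonality statement at $\theta^*$.

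Next I would isolate the two geometric facts encoded in the hypotheses, both of which exploit that $\mathcal{E}$ and $\mathcal{M}$ are built from the same vectors $v_1,\dots,v_l$. Because every point of $\mathcal{E}$ has the form $\theta_0+\sum_j\bar\theta^j v_j$, the natural-parameter difference of any two points of $\mathcal{E}$ lies in $\mathrm{span}(v_1,\dots,v_l)$; in particular $\theta-\theta^*\in\mathrm{span}(v_1,\dots,v_l)$. Dually, since every point of $\mathcal{M}$ satisfies $\sum_i v^i_j\eta_i=a_j$ for all $j$, the mixture-parameter difference of any two points of $\mathcal{M}$ is annihilated by every $v_j$; in particular $\eta(\theta^*)-\eta(\theta')$ is orthogonal to $\mathrm{span}(v_1,\dots,v_l)$. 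The pairing of a vector of the first kind with a vector of the second kind is therefore zero, and this is precisely the orthogonality of the $e$-flat $\mathcal{E}$ and the $m$-flat $\mathcal{M}$ at their intersection.

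The step I expect to be the main obstacle is reconciling the cross term produced in the first paragraph with the orthogonal pairing isolated in the second. The hypotheses directly control $\theta-\theta^*$ (a natural-parameter difference, lying in $\mathrm{span}(v_j)$) and $\eta(\theta^*)-\eta(\theta')$ (a mixture-parameter difference, orthogonal to the $v_j$), whereas the cross term as written multiplies $\eta(\theta)-\eta(\theta^*)$ by $\theta^*-\theta'$, i.e.\ the other diagonal; since $D^F$ is not symmetric, the order of the arguments genuinely matters here. I would therefore handle this point with care, using the duality relation \eqref{XI1}, which trades $D^F$ for $D^{F^*}$ and exchanges the roles of $\theta$ and $\eta$, to re-express the cross term so that the two differences the hypotheses actually control are the ones being paired. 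Once the cross term is exhibited as the pairing of an element of $\mathrm{span}(v_1,\dots,v_l)$ with a vector orthogonal to all of $v_1,\dots,v_l$, it vanishes and \eqref{AKO9} follows.
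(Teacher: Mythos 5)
Your first two paragraphs are correct, and they are essentially the paper's own proof: the paper establishes \eqref{AKO9} by exactly this expand-and-cancel computation, carried out in coordinates adapted to an invertible matrix $U=(u_1 \ldots u_d)$ with $u_i=v_i$ for $i \le l$, in which your two geometric facts become the statements that certain coordinates of $U^{-1}\theta$, respectively certain components of $\nabla (F\circ U)$, coincide. The gap is in your third paragraph. The duality relation \eqref{XI1} cannot re-pair the cross term, because it is an identity between the very same numbers: applying $D^{F}(\theta_1\|\theta_2)=D^{F^*}(\eta(\theta_2)\|\eta(\theta_1))$ to all three terms of \eqref{AKO9} and recomputing the cross term of the resulting dual decomposition, using $\nabla^{(m)}[F^*](\eta(\theta))=\theta$, gives $\langle \theta'-\theta^{*},\,\eta(\theta^{*})-\eta(\theta)\rangle$, which is the same scalar as $\langle \eta(\theta)-\eta(\theta^{*}),\,\theta^{*}-\theta'\rangle$, merely transposed. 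Duality exchanges which coordinates you write, but not across which pair of points each difference is taken, so the ``wrong diagonal'' you identified survives unchanged.

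In fact no argument can close this gap, because under the statement's literal reading ($\theta\in\mathcal{E}$, $\theta'\in\mathcal{M}$) the identity \eqref{AKO9} is false. Take $d=2$, $F(\theta)=\log(1+e^{\theta^1}+e^{\theta^2})$, $l=1$, $v_1=(1,0)^T$, $\theta_0=(0,0)$, $a_1=1/3$; then $\mathcal{E}=\{(t,0)\}_{t\in\mathbb{R}}$, $\mathcal{M}=\{\theta \,|\, \eta_1(\theta)=1/3\}$, $\theta^{*}=(0,0)$, and for $\theta=(1,0)\in\mathcal{E}$, $\theta'=(\log 2,\log 3)\in\mathcal{M}$ the cross term equals $\frac{e-1}{3(2+e)}\log\frac{3}{4}\neq 0$. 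What is actually true, and what the paper actually proves, is the statement with the roles of $\theta$ and $\theta'$ interchanged: the paper's justification of its step $(a)$ explicitly assumes $\theta^{*},\theta'\in\mathcal{E}$ and $\theta^{*},\theta\in\mathcal{M}$, and every later application (e.g., the proof of Lemma \ref{LA1}, which splits $D^F(\theta_0\|\hat{\theta})$ with $\theta_0$ in the mixture family and $\hat{\theta}\in\mathcal{E}$) uses that orientation; the labels in the proposition as printed are a typo. Once you swap them, your own paragraph-two facts are precisely what the cross term pairs: $\eta(\theta)-\eta(\theta^{*})$ becomes a mixture-parameter difference between two points of $\mathcal{M}$, hence annihilated by every $v_j$, while $\theta^{*}-\theta'$ becomes a natural-parameter difference between two points of $\mathcal{E}$, hence lies in $\mathrm{span}(v_1,\ldots,v_l)$; their pairing vanishes and \eqref{AKO9} follows, with no appeal to duality at all.
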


\begin{proof}
To show the relation \eqref{AKO9},
we choose an invertible matrix $U=(u_1 \ldots u_d )$ such that $u_i=v_i$ for $i=1, \ldots, l$.
Using the formula \eqref{COP}, we have
\begin{align}
&D^F(\theta\|\theta')=D^{F\circ U}(U^{-1}(\theta)\|U^{-1}(\theta'))\nonumber \\
=&\sum_{i=1}^d \frac{\partial }{\partial \theta^i} F\circ U(\theta) 
( (U^{-1}\theta)^i-(U^{-1} \theta')^i)
-F(\theta)+F(\theta') \nonumber \\
\stackrel{(a)}{=}&\sum_{i=1}^d \frac{\partial }{\partial \theta^i} F\circ U(\theta) 
((U^{-1}\theta)^i-(U^{-1}{\theta^*})^i)
-F(\theta)+F(\theta^*) \nonumber \\
&+\sum_{i=1}^l \frac{\partial }{\partial \theta^i} F\circ U(\theta^*) 
((U^{-1}{\theta^*})^i-(U^{-1}{\theta'})^i)
-F(\theta^*)+F(\theta') \nonumber \\
=&
D^F(\theta\|\theta^*)+D^F(\theta^*\|\theta'),\Label{AKO3}
\end{align}
where $(a)$ follows from the following facts;
Since $\theta^*$ and $\theta'$ belong to the same exponential family $\mathcal{E}$,
$(U^{-1}{\theta^*})^i=(U^{-1}{\theta'})^i$ for $i=l+1, \ldots, d$.
Since $\theta^*$ and $\theta$ belong to the same mixture family $\mathcal{M}$,
$\frac{\partial }{\partial \theta^i} F\circ U(\theta) 
=\frac{\partial }{\partial \theta^i} F\circ U(\theta^*) $
for $i=1, \ldots, l$.
\end{proof}

\begin{lemma}\Label{LA1}
Let $\mathcal{E}$
be an exponential family generated by $l$ vectors $v_1,\ldots, v_l 
\in \mathbb{R}^d$.
The following conditions are equivalent for 
the exponential subfamily $\mathcal{E}$, $\theta^* \in \mathcal{E}$,
and $\theta_0 \in \Theta$.
\begin{description}
\item[(E0)]
The element $\theta^*\in \mathcal{E}$ achieves a local minimum
for the minimization $\min_{\hat{\theta} \in \mathcal{E} }  D^F (\theta_0 \| \hat{\theta})$.
\item[(E1)]
The element $\theta^*\in \mathcal{E}$ achieves the minimum value  
for the minimization $\min_{\hat{\theta} \in \mathcal{E} }  D^F (\theta_0 \| \hat{\theta})$.
\item[(E2)]
Let $\mathcal{M} \subset \Theta$ be the mixture subfamily 
generated by the constraint $\sum_{i=1}^d v^i_j \eta_i(\theta)=
\sum_{i=1}^d v^i_j \eta_i(\theta_0)$ for $j=1, \ldots, l$.
The element $\theta^*\in \mathcal{E}$ belongs to
the intersection 
$\mathcal{M}\cap \mathcal{E}$.
\end{description}
Further, when there exists an element $\theta^*\in \mathcal{E}$ to satisfy the above condition,
such an element is unique. 
\end{lemma}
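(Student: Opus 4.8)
The plan is to reduce all three conditions to the vanishing of the gradient of the single auxiliary objective
$g(\bar{\theta}) := D^{F}(\theta_0 \,\|\, \phi_{\mathcal{E}}^{(e)}(\bar{\theta}))$
defined on the open convex set $\Theta_{\mathcal{E}}\subset\mathbb{R}^l$, and then to exploit that $g$ is strictly convex. First I would record the structural facts: $\phi_{\mathcal{E}}^{(e)}$ is an injective affine map with $\partial_{\bar{\theta}^j}\phi_{\mathcal{E}}^{(e)}(\bar{\theta})=v_j$ (the $v_j$ are linearly independent), and $\Theta_{\mathcal{E}}$ is open and convex, being the preimage of the open convex $\Theta$ under this affine map. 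Since $D^{F}(\theta_0\,\|\,\cdot)$ has Hessian $J(\cdot)$ in the natural parameter by \eqref{BJ1}, and $J$ is positive definite (consistent with the paper's standing use of $J^{-1}$), composition with the injective affine $\phi_{\mathcal{E}}^{(e)}$ gives $g$ the Hessian $V^{T} J(\phi_{\mathcal{E}}^{(e)}(\bar{\theta})) V \succ 0$, where $V=(v_1\ \cdots\ v_l)$. Hence $g$ is strictly convex on $\Theta_{\mathcal{E}}$.

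The key computation is the gradient of $g$. Using $\nabla^{(e)}[F](\theta)=\eta(\theta)$ from \eqref{M1} and the chain rule, with $\hat{\theta}:=\phi_{\mathcal{E}}^{(e)}(\bar{\theta})$,
\begin{align}
\frac{\partial g}{\partial \bar{\theta}^j}(\bar{\theta})
= \bigl\langle \eta(\hat{\theta}) - \eta(\theta_0),\, v_j\bigr\rangle
= \sum_{i=1}^d v_j^i\bigl(\eta_i(\hat{\theta}) - \eta_i(\theta_0)\bigr).
\end{align}
Therefore $\nabla g(\bar{\theta})=0$ holds precisely when $\sum_{i} v_j^i \eta_i(\hat{\theta}) = \sum_i v_j^i \eta_i(\theta_0)$ for every $j=1,\ldots,l$, which is exactly the constraint defining the mixture subfamily $\mathcal{M}$ in (E2). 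Thus $\theta^*=\hat\theta$ is a stationary point of $g$ if and only if $\theta^*\in\mathcal{M}\cap\mathcal{E}$, i.e.\ (E2) is equivalent to stationarity of $g$ at $\theta^*$.

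The equivalences then follow from standard convex-analytic facts on the open convex domain $\Theta_{\mathcal{E}}$. The implication (E1)$\Rightarrow$(E0) is trivial. For (E0)$\Rightarrow$(E2), a local minimizer lying in the \emph{open} set $\Theta_{\mathcal{E}}$ is necessarily a stationary point of $g$, which is (E2) by the gradient computation. For (E2)$\Rightarrow$(E1), I would invoke that a stationary point of a convex function is a global minimizer; equivalently, since $\theta_0\in\mathcal{M}$ (it satisfies its own defining constraint) and $\theta^*\in\mathcal{M}\cap\mathcal{E}$, the Pythagorean relation of Proposition~\ref{MNL} yields $D^{F}(\theta_0\,\|\,\hat{\theta}) = D^{F}(\theta_0\,\|\,\theta^*) + D^{F}(\theta^*\,\|\,\hat{\theta}) \ge D^{F}(\theta_0\,\|\,\theta^*)$ for all $\hat{\theta}\in\mathcal{E}$. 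Uniqueness is then immediate from strict convexity of $g$, a strictly convex function having at most one global minimizer; alternatively it follows from the Pythagorean identity, since the surplus term $D^{F}(\theta^*\,\|\,\hat{\theta})$ vanishes only when $\hat{\theta}=\theta^*$.

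I expect the main difficulty to be organizational rather than deep. One must keep the two roles of the symbol $\theta_0$ distinct, since here $\theta_0$ is the fixed reference point of the divergence, not the base point of $\mathcal{E}$ used in Subsection~\ref{S2-1}; the parametrization of $\mathcal{E}$ should therefore be written through $\phi_{\mathcal{E}}^{(e)}$ with $\partial_{\bar{\theta}^j}\phi_{\mathcal{E}}^{(e)}=v_j$, avoiding any named base point. The crux is verifying that $\nabla g$ reproduces exactly the linear constraints defining $\mathcal{M}$, and confirming $J\succ 0$ so that strict convexity (hence uniqueness) is genuine. If one routes (E2)$\Rightarrow$(E1) through Proposition~\ref{MNL} rather than through bare convexity, care is needed to apply the Pythagorean theorem in the orientation matching our minimization, namely with $\theta_0\in\mathcal{M}$ in the first argument and $\hat{\theta}\in\mathcal{E}$ in the second; the self-contained convexity argument is the safer backbone and already supplies every implication as well as uniqueness.
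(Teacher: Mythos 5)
Your proposal is correct, and its skeleton---identifying (E2) with the first-order (stationarity) condition for the minimization over $\mathcal{E}$---matches the paper's, but the finishing mechanism differs. The paper proves (E0)$\Rightarrow$(E2) by the by-hand version of your gradient argument: it expands $D^F(\theta_0\|\hat{\theta})-D^F(\theta_0\|\theta^*)$ into a linear term plus $D^F(\theta^*\|\hat{\theta})$, notes via \eqref{MLA} that the latter is $O(x^2)$ along a line, and derives a contradiction when the linear coefficient is nonzero; this is exactly your statement that an interior local minimizer of $g$ is a stationary point, and your explicit formula $\partial g/\partial\bar{\theta}^j=\sum_i v_j^i(\eta_i(\hat{\theta})-\eta_i(\theta_0))$ reproduces the paper's coefficients. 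Where you genuinely diverge is in (E2)$\Rightarrow$(E1) and in uniqueness: the paper obtains both from the Pythagorean identity \eqref{AKO9} of Proposition \ref{MNL}, whereas you obtain them from convexity of $g$ (a stationary point of a convex function on an open convex domain is a global minimizer) and from strict convexity for uniqueness. Your route makes the lemma self-contained in elementary convex analysis; the paper's route reuses a decomposition it needs anyway and yields the exact quantitative identity rather than only the inequality. One caveat on your side: strict convexity of $g$, i.e.\ $V^{T}JV\succ 0$, needs $J\succ 0$, which the paper never states explicitly (a $C^\infty$ strictly convex $F$ may have a degenerate Hessian at some points), though it is implicit in its use of $J(\theta)^{-1}$; your fallback arguments---plain convexity for (E2)$\Rightarrow$(E1), and vanishing of the Pythagorean surplus term $D^F(\theta^*\|\hat{\theta})$ for uniqueness---do not need positive definiteness, so your proof stands either way. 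Finally, your remark about orientation is well taken: Proposition \ref{MNL} as stated takes $\theta\in\mathcal{E}$ and $\theta'\in\mathcal{M}$, but its own proof (and the use made of it in this lemma) requires the first argument in $\mathcal{M}$ and the second in $\mathcal{E}$, which is precisely the orientation you specify with $\theta_0\in\mathcal{M}$ and $\hat{\theta}\in\mathcal{E}$.
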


In the following, we denote the above mixture family $\mathcal{M}$ by $\mathcal{M}_{\theta_0\to \mathcal{E}}$.
Then, 
$\theta^* \in \mathcal{E}$ is called
the $e$-{\it projection} of $\theta$ onto an exponential subfamily $\mathcal{E}$,
and is denoted by $\Pro^{(e),F}_{\mathcal{E}} (\theta)$
because the points $\theta$ and $\theta^* $ are connected via the mixture family 
$\mathcal{M}_{\theta_0\to \mathcal{E}}$.
We call the minimum $\min_{\hat{\theta} \in \mathcal{E} }  D^F (\theta \| \hat{\theta})$
the projected Bregman divergence
between $\theta$ and $\mathcal{E}$.

\begin{proof}
Assume that (E0) holds. 
When an element $\hat{\theta}\in \mathcal{E}$ belongs to the neighbor hood of $\theta^*$,
we have
\begin{align}
&D^F(\theta_0\|\hat{\theta})-D^F(\theta_0\|{\theta}^*) \nonumber \\
=
&\sum_{i=1}^l \frac{\partial }{\partial \theta^i} F\circ U(\theta_0) 
((U^{-1}{\theta^*})^i-(U^{-1}{\hat{\theta}})^i)
-F(\theta^*)+F(\hat{\theta}) \nonumber  \\
=
&\sum_{i=1}^l 
\Big( \frac{\partial }{\partial \theta^i} F\circ U(\theta^*) 
-\frac{\partial }{\partial \theta^i} F\circ U(\theta_0) \Big)
((U^{-1}{\theta^*})^i-(U^{-1}{\hat{\theta}})^i) \nonumber \\
&+\sum_{i=1}^l \frac{\partial }{\partial \theta^i} F\circ U(\theta_0) 
((U^{-1}{\theta^*})^i-(U^{-1}{\hat{\theta}})^i)
-F(\theta^*)+F(\hat{\theta})  \nonumber \\
=
&\sum_{i=1}^l 
\Big( \frac{\partial }{\partial \theta^i} F\circ U(\theta^*) 
-\frac{\partial }{\partial \theta^i} F\circ U(\theta_0) \Big)
((U^{-1}{\theta^*})^i-(U^{-1}{\hat{\theta}})^i)\nonumber  \\
&+D^F(\theta^*\|\hat{\theta}).
\end{align}
In the following, assuming 
$(\frac{\partial }{\partial \theta^i} F\circ U(\theta^*) )_{i=1}^l
\neq 
(\frac{\partial }{\partial \theta^i} F\circ U(\theta_0) )_{i=1}^l$,
we derive the contradiction.
Since $\theta^*$ is an inner element of $\mathcal{E}$,
we choose 
an element $\hat{\theta}\in\mathcal{E}$ as
$\theta^*+ x \Delta \theta $ such that
$T:=\sum_{i=1}^l 
\Big( \frac{\partial }{\partial \theta^i} F\circ U(\theta^*) 
-\frac{\partial }{\partial \theta^i} F\circ U(\theta_0) \Big)
(\Delta \theta)^i <0$.
Then, due to \eqref{MLA}, the divergence $D^F(\theta^*\|\hat{\theta})$ behaves as the order 
$O(x^2)$.
Hence, choosing sufficiently small $x$, we have
$D^F(\theta_0\|\hat{\theta})-D^F(\theta_0\|{\theta}^*) 
=Tx + O(x^2)<0$, which implies contradiction.
Hence, we have 
$(\frac{\partial }{\partial \theta^i} F\circ U(\theta^*) )_{i=1}^l
=(\frac{\partial }{\partial \theta^i} F\circ U(\theta_0) )_{i=1}^l$,
which implies that 
$\theta^*$ is an intersection 
between $\mathcal{M}$ and $\mathcal{E}$.
Hence, (E2) holds.

Assume that (E2) holds.
Let  $\theta^*$ an intersection
between $\mathcal{M}$ and $\mathcal{E}$.
Then, the relation \eqref{AKO9} guarantees that
the element $\theta^*$ realizes the minimum 
$\min_{\hat{\theta} \in \mathcal{E} }  D^F (\theta_0 \| \hat{\theta})$.
Hence, (E1) holds.
Further, (E1) implies (E0).

When there are two different 
intersections 
between $\mathcal{M}$ and $\mathcal{E}$,
the above discussion and the relation \eqref{AKO9} guarantee
that the divergence between two intersections must be zero, which yields contradiction.
Thus, the intersection
between $\mathcal{M}$ and $\mathcal{E}$ 
should be unique.
\end{proof}

Exchanging the roles of the exponential family and the mixture family, we have the following lemma.

\begin{lemma}\Label{LA2}
We choose $l$ vectors $v_1,\ldots, v_l \in \mathbb{R}^d$.
Let $\mathcal{M}$
be an mixture family generated by 
generated by the constraint $\sum_{i=1}^d v^i_j \eta_i(\theta)=
\sum_{i=1}^d v^i_j \eta_i(\theta_0)$ for $j=1, \ldots, l$.
The following conditions are equivalent for 
the mixture subfamily $\mathcal{M}$, $\theta^{**} \in \mathcal{M}$,
and $\theta_0 \in \Theta$.
\begin{description}
\item[(M0)]
The element $\theta^{**}\in \mathcal{M}$ achieves a local minimum
for the minimization $\min_{\hat{\theta} \in \mathcal{M} }  D^F ( \hat{\theta}\|\theta_0 )$.
\item[(M1)]
The element $\theta^{**}\in \mathcal{M}$ achieves the minimum value  
for 
the minimization $\min_{\hat{\theta} \in \mathcal{M} }  D^F ( \hat{\theta}\|\theta_0 )$.
\item[(M2)]
Let $\mathcal{E} \subset \Theta$ be the mixture subfamily 
generated by $l$ vectors $v_1,\ldots, v_l \in \mathbb{R}^d$ at $\theta_0 \in \Theta$.
The element $\theta^{**}\in \mathcal{M}$ belongs to
the intersection 
$\mathcal{M}\cap \mathcal{E}$.
\end{description}
Further, when there exists an element $\theta^{**}\in \mathcal{M}$ to satisfy the above condition,
such an element is unique. 
\if0
Further, when one of the above conditions holds,
the point $\theta^{**} \in \mathcal{M}$; 
\begin{equation}\Label{eq;m-proj2}
    \theta^{**} := \argmin_{\hat{\theta} \in \mathcal{M} }  D^F ( \hat{\theta}\|\theta )
\end{equation}
uniquely exists and coincides with the intersection 
between 
$\mathcal{M}$ and $\mathcal{E}$.
\fi
\end{lemma}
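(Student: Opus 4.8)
The plan is to obtain Lemma~\ref{LA2} from Lemma~\ref{LA1} by reading the latter in the dual Bregman divergence system, which is exactly what ``exchanging the roles of the exponential family and the mixture family'' amounts to. The engine is the duality identity~\eqref{XI1}, $D^{F}(\theta_1\|\theta_2)=D^{F^*}(\eta(\theta_2)\|\eta(\theta_1))$. Applying it with $\theta_1=\hat{\theta}$ and $\theta_2=\theta_0$ rewrites the $m$-projection objective as
\begin{align}
D^{F}(\hat{\theta}\|\theta_0)=D^{F^*}(\eta(\theta_0)\|\eta(\hat{\theta})),
\end{align}
so that $\min_{\hat{\theta}\in\mathcal{M}}D^{F}(\hat{\theta}\|\theta_0)$ becomes the minimization of $D^{F^*}(\eta(\theta_0)\|\,\cdot\,)$ with the first argument held fixed and the second ranging over $\eta(\mathcal{M})$. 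Since $F^*$ is again a $C^\infty$ strictly convex function on the open set $\Xi$ (Subsection~\ref{S2-0}), with $\eta$ as its natural parameter and $\theta$ as the induced mixture parameter by~\eqref{M2}, the triple $(\Xi,F^*,D^{F^*})$ is itself a Bregman divergence system, and the right-hand side is precisely an $e$-projection problem of the type governed by Lemma~\ref{LA1}.

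First I would translate the two families into this dual system. A mixture subfamily is defined by affine constraints on $\eta$, hence $\mathcal{M}$ is an \emph{exponential} subfamily of $(\Xi,F^*,D^{F^*})$: writing $\mathcal{M}=\eta(\theta_0)+W$ with $W=\{w:\langle v_j,w\rangle=0,\ j=1,\dots,l\}$, it is generated by any basis $w_1,\dots,w_{d-l}$ of $W$ at the base point $\eta(\theta_0)$. Conversely $\mathcal{E}=\theta_0+\mathrm{span}(v_1,\dots,v_l)$ is affine in $\theta$, hence a \emph{mixture} subfamily of the dual system. The essential check is that $\mathcal{E}$ coincides with the dual mixture subfamily that Lemma~\ref{LA1} attaches to the dual exponential subfamily $\mathcal{M}$ and the base point $\eta(\theta_0)$, namely the one cut out by $\langle w_k,\theta\rangle=\langle w_k,\theta_0\rangle$ for all $k$; because the $w_k$ span $W=\{v_j\}^{\perp}$, this condition is equivalent to $\theta-\theta_0\in\mathrm{span}(v_1,\dots,v_l)$, i.e.\ to $\theta\in\mathcal{E}$.

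With these identifications, Lemma~\ref{LA1} applied verbatim to $(\Xi,F^*,D^{F^*})$ gives the equivalence of its three conditions and the uniqueness statement for $\eta(\theta^{**})$; pulling them back through~\eqref{XI1} yields exactly (M0), (M1), (M2) and the uniqueness of $\theta^{**}$. I would execute the steps in the order: (i) record the dual form of the objective; (ii) note that $(\Xi,F^*,D^{F^*})$ is a Bregman divergence system; (iii) set up the family correspondence $\mathcal{M}\leftrightarrow$ dual exponential and $\mathcal{E}\leftrightarrow$ dual mixture with matching generators; (iv) invoke Lemma~\ref{LA1} and translate back.

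I expect step (iii) to be the main obstacle, since it is the only place where the duality is not purely formal: one must confirm that passing from the constraint description of $\mathcal{M}$ to a generator description, and from the generator description of $\mathcal{E}$ back to a constraint description, is compatible with the pairing used in Lemma~\ref{LA1} (the same vectors generating the exponential family and entering the conjugate mixture constraint). A route that avoids this bookkeeping is to establish directly the reverse Pythagorean identity $D^{F}(\theta'\|\theta)=D^{F}(\theta'\|\theta^{**})+D^{F}(\theta^{**}\|\theta)$ for $\theta'\in\mathcal{M}$ and $\theta\in\mathcal{E}$; the computation in the proof of Proposition~\ref{MNL} carries over unchanged, because for $i\le l$ the two mixture-family points $\theta',\theta^{**}$ share the parameter $\partial_i(F\circ U)$, while for $i>l$ the two exponential-family points $\theta,\theta^{**}$ share the coordinate $(U^{-1}\cdot)^i$. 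Given this identity, the equivalences (M0)$\Leftrightarrow$(M1)$\Leftrightarrow$(M2) and the uniqueness follow by repeating the argument of Lemma~\ref{LA1} with the two divergence arguments interchanged.
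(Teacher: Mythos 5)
Your proposal is correct and takes essentially the paper's own route: the paper justifies Lemma \ref{LA2} precisely by ``exchanging the roles of the exponential family and the mixture family,'' i.e., by the Legendre duality you make explicit via \eqref{XI1}, under which $\mathcal{M}$ becomes an exponential subfamily of the dual system $(\Xi,F^*,D^{F^*})$, $\mathcal{E}$ becomes the conjugate mixture subfamily cut out by the annihilating vectors $w_k$, and Lemma \ref{LA1} applies verbatim before translating back through the diffeomorphism $\eta$. Your fallback route (re-running the computation of Proposition \ref{MNL} with the two divergence arguments interchanged and then repeating the variational argument of Lemma \ref{LA1}) is the same symmetry executed in primal coordinates, so it is equally faithful to the paper's argument.
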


In the following, we denote the above exponential family $\mathcal{E}$ by $\mathcal{E}_{\theta_0\to \mathcal{M}}$.
Then, 
$\theta^{**} \in \mathcal{M}$ is called
the {\it $m$-projection} of $\theta$ onto an mixture subfamily $\mathcal{M}$,
and is denoted by $\Pro^{(m),F}_{\mathcal{M}} (\theta)$
because the points $\theta$ and $\theta^{**} $ are connected via 
the exponential family $\mathcal{E}_{\theta_0\to \mathcal{M}}$.
When $ \mathcal{M}$ is an exponential subfamily and a mixture subfamily,
we can define both projections
$\Pro^{(e),F}_{\mathcal{M}}$ and $\Pro^{(m),F}_{\mathcal{M}}$, and these projections are different maps.
Hence, the subscripts $(e)$ and $(m)$ are needed.

\begin{lemma}\Label{Th5}
Let $\mathcal{E} \subset \Theta$ be an
exponential subfamily  generated by $l$ vectors $v_1,\ldots, v_l 
\in \mathbb{R}^d$ at $\theta_0 \in \Theta$.
For $\theta_* \in \Theta$,
the element $\Pro^{(e),F}_{\mathcal{E}} (\theta_*)=\theta^* \in {\cal E}$ 
is uniquely characterized as
$\sum_{j=1}^d v_i^j \partial_j F(\theta^*)
= \sum_{j=1}^d v_i^j \partial_j F(\theta_*)$, i.e.,
$R[V] \circ \nabla[F](\theta^*)=R[V] \circ \nabla[F](\theta_*)$.
That is, 
the mixture parameter of the element $\Pro^{(e),F}_{\mathcal{E}} (\theta_*)=\theta^* \in {\cal E}$ 
is given by the above condition.
\end{lemma}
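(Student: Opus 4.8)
The plan is to recognize Lemma~\ref{Th5} as a coordinate restatement of the equivalence already established in Lemma~\ref{LA1}, so that the proof is a short translation rather than a fresh variational argument.

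First I would unwind the definitions. By construction $\Pro^{(e),F}_{\mathcal{E}}(\theta_*)=\theta^*$ is the element of $\mathcal{E}$ achieving $\min_{\hat{\theta}\in\mathcal{E}}D^F(\theta_*\|\hat{\theta})$; that is, $\theta^*$ satisfies condition (E1) of Lemma~\ref{LA1} with the role of $\theta_0$ played by $\theta_*$. Applying the equivalence (E1)$\Leftrightarrow$(E2) of Lemma~\ref{LA1}, the point $\theta^*$ is characterized as lying in the intersection $\mathcal{E}\cap\mathcal{M}_{\theta_*\to\mathcal{E}}$, where $\mathcal{M}_{\theta_*\to\mathcal{E}}$ is the mixture subfamily cut out by the constraints $\sum_{i=1}^d v^i_j\,\eta_i(\theta)=\sum_{i=1}^d v^i_j\,\eta_i(\theta_*)$ for $j=1,\ldots,l$. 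Note that the generating vectors $v_1,\ldots,v_l$ of $\mathcal{E}$ are exactly the vectors cutting out $\mathcal{M}_{\theta_*\to\mathcal{E}}$, which is what makes Lemma~\ref{LA1} applicable.

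Next I would translate membership in $\mathcal{M}_{\theta_*\to\mathcal{E}}$ into the stated gradient form. Since $\eta_i(\theta)=\partial_i F(\theta)$ by \eqref{du1}, the constraint $\sum_{i=1}^d v^i_j\,\eta_i(\theta^*)=\sum_{i=1}^d v^i_j\,\eta_i(\theta_*)$ for all $j$ is, after relabelling the summation indices $i\leftrightarrow j$, precisely $\sum_{j=1}^d v_i^j\,\partial_j F(\theta^*)=\sum_{j=1}^d v_i^j\,\partial_j F(\theta_*)$ for all $i$. Because the $i$-th entry of the row vector $R[V]\circ\nabla^{(e)}[F](\theta)$ is exactly $\sum_{j=1}^d v_i^j\,\partial_j F(\theta)$, this is the asserted identity $R[V]\circ\nabla^{(e)}[F](\theta^*)=R[V]\circ\nabla^{(e)}[F](\theta_*)$. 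Thus $\theta^*$ satisfies the condition, and conversely any $\hat{\theta}\in\mathcal{E}$ satisfying it lies in $\mathcal{E}\cap\mathcal{M}_{\theta_*\to\mathcal{E}}$; the uniqueness clause of Lemma~\ref{LA1} then yields that such an element is unique, giving the claimed characterization.

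There is essentially no hard analytic step, as the variational content has already been absorbed into Lemma~\ref{LA1}. The only point demanding care is the bookkeeping: matching the index conventions (upper component indices versus lower generator indices, and the placement of the ranges $1,\ldots,l$ and $1,\ldots,d$ in the double sum) so that the defining constraint of $\mathcal{M}_{\theta_*\to\mathcal{E}}$ aligns verbatim with the row vector $R[V]\circ\nabla^{(e)}[F](\cdot)$.
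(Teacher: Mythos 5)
Your proof is correct and takes essentially the same route as the paper: both characterize $\Pro^{(e),F}_{\mathcal{E}}(\theta_*)$ as the unique intersection point of $\mathcal{E}$ with the mixture subfamily $\mathcal{M}_{\theta_*\to\mathcal{E}}$ cut out by matching the mixture coordinates $\sum_{j=1}^d v_i^j\,\partial_j F(\cdot)$ along $v_1,\ldots,v_l$. The only cosmetic difference is that you invoke Lemma \ref{LA1} (which already packages the Pythagorean argument together with the uniqueness clause) where the paper appeals directly to Proposition \ref{MNL}; this makes the converse direction and uniqueness slightly more explicit, but the underlying argument is identical.
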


\begin{proof}
We choose the mixture subfamily ${\cal M}$ 
generated by the constraint 
\begin{align}
\sum_{j=1}^d v^j_i \partial_j F(\theta)=
 \sum_{j=1}^d v_i^j \partial_j F(\theta_*)
\Label{MFA}
\end{align}
 for $i=1, \ldots,l$.
Due to Pythagorean theorem (Proposition \ref{MNL}),
the point $\theta^*$ is characterized by the intersection between
${\cal M}$  and ${\cal E}$. 
Hence, the constraint \eqref{MFA} for ${\cal M}$ guarantees the desired statement.
\end{proof}

\if0
\begin{lemma}\Label{LOT1}
Let $l$ vectors $u_1,\ldots, u_l \in \mathbb{R}^d$ be linearly independent.
Let $\mathcal{M} \subset \Theta$ be a mixture subfamily 
generated by the constraint $\sum_{i=1}^d u^i_j \partial_i F (\theta)=a_j$ for 
$j=1, \ldots, l$.
Then, $\Pro^{(m),F}_{\mathcal{M}} (\theta_{*})$ is given as 
$\theta_{*}+ \sum_{j'=1}^{l} \bar{\tau}^{j'} u_{j'}$,
where $(\bar{\tau}^1, \ldots, \bar{\tau}^{l})$ is the unique element to satisfy 
\begin{align}
\frac{\partial}{\partial \tau^{j}} F \Big(\theta_{*}+ \sum_{j'=1}^{l} \tau^{j'} u_{j'} \Big) =a_j
\Label{const1-T}
\end{align}
for $j=1, \ldots, l$.
\end{lemma}
\fi

\begin{lemma}\Label{Th6}
Let $l$ vectors $u_1,\ldots, u_d \in \mathbb{R}^d$ be linearly independent.
Let $\mathcal{M} \subset \Theta$ be a mixture subfamily 
generated by the constraint 
\begin{align}
\sum_{i=1}^d u^i_j \partial_i F (\theta)=a_j \Label{BO1}
\end{align}
for $j=k+1, \ldots, d$.
When the maximum $\max_{\theta\in {\cal M}} D^F(\theta\|\theta_{**})$ exists,
we obtain the following characterizations for $\Pro^{(m),F}_{\mathcal{M}} (\theta_{**})$.
\begin{description}
\item[(A1)]
The point $\Pro^{(m),F}_{\mathcal{M}} (\theta_{**})=\theta^{**} \in {\cal M}$ 
is uniquely characterized as
\begin{align}
(U^{-1}{\theta}^{**})^i
=(U^{-1}{\theta}_{**})^i\Label{MFA2}
\end{align}
 for $i=1, \ldots, k$,
 where $U$ is defined in the same way as Subsection \ref{S2-2}.
\item[(A2)]
We choose the exponential subfamily ${\cal E}$ 
generated by $d-k$ vectors $u_{k+1},\ldots, u_{d} \in \mathbb{R}^{d}$
at $\theta_{**}$.
The intersection between
${\cal M}$  and ${\cal E}$ is composed of 
the unique element $\Pro^{(m),F}_{\mathcal{M}} (\theta_{**})$. 

\item[(A3)]
The point $\Pro^{(m),F}_{\mathcal{M}} (\theta_{**})=\theta^{**} \in {\cal M}$ 
is uniquely characterized as
$\theta_{**}+ \sum_{j'=1}^{d-k} \bar{\tau}^{j'} u_{k+j'}$,
where $(\bar{\tau}^1, \ldots, \bar{\tau}^{d-k})$ is the unique element to satisfy 
\begin{align}
\frac{\partial}{\partial \tau^{j}} F \Big(\theta_{*}+ \sum_{j'=1}^{l} \tau^{j'} u_{k+j'} \Big) =a_j
\Label{const1-T}
\end{align}
for $j=1, \ldots, d-k$.

\end{description}
\end{lemma}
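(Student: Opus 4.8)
The plan is to prove the three conditions equivalent by identifying each as a description of the single point where the mixture subfamily $\mathcal{M}$ meets the exponential subfamily $\mathcal{E}$ through $\theta_{**}$, and then to use the stated hypothesis to guarantee that this intersection is nonempty. First I would pass to the coordinate system adapted to $U$: by \eqref{COP} the divergence is unchanged under $\theta \mapsto U^{-1}\theta$, so all three conditions may be read in the $U$-coordinates $(U^{-1}\theta)^i$. Condition (A1), namely $(U^{-1}\theta^{**})^i=(U^{-1}\theta_{**})^i$ for $i=1,\dots,k$ as in \eqref{MFA2}, says exactly that $\theta^{**}-\theta_{**}$ lies in the span of $u_{k+1},\dots,u_d$, i.e.\ that $\theta^{**}$ belongs to the exponential subfamily $\mathcal{E}$ of (A2). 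Since (A1) and (A2) both additionally impose $\theta^{**}\in\mathcal{M}$, the two conditions describe the same object $\mathcal{M}\cap\mathcal{E}$. Moreover this intersection is precisely the $m$-projection: reading $D^F(\theta\|\theta_{**})=D^{F^*}(\eta(\theta_{**})\|\eta(\theta))$ through \eqref{XI1}, minimizing $D^F(\cdot\|\theta_{**})$ over $\mathcal{M}$ becomes an $e$-projection in the dual system, and the Pythagorean identity \eqref{AKO9} of Proposition \ref{MNL} (as already exploited in Lemma \ref{LA2}) pins the minimizer at $\mathcal{M}\cap\mathcal{E}$.

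Next I would dispatch (A3) by a direct chain-rule computation. Parametrizing $\mathcal{E}$ by $\theta(\tau)=\theta_{**}+\sum_{j'=1}^{d-k}\tau^{j'}u_{k+j'}$, the left-hand side of \eqref{const1-T} is $\frac{\partial}{\partial\tau^j}F(\theta(\tau))=\sum_i u^i_{k+j}\,\partial_i F(\theta(\tau))$, which is exactly the left-hand side of the defining constraint \eqref{BO1} of $\mathcal{M}$ evaluated at $\theta(\tau)$. Hence $\bar{\tau}$ solves \eqref{const1-T} if and only if $\theta(\bar{\tau})\in\mathcal{M}$, and since $\theta(\bar{\tau})\in\mathcal{E}$ automatically, (A3) again produces $\mathcal{M}\cap\mathcal{E}$, matching (A1) and (A2). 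Uniqueness of $\bar{\tau}$ is immediate from the strict convexity of $\tau\mapsto F(\theta(\tau))$: its gradient $\tau\mapsto(\sum_i u^i_{k+j}\partial_i F(\theta(\tau)))_j$ is injective, so at most one $\bar{\tau}$ meets the constraint.

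The substantive part, and the step I expect to be the main obstacle, is existence, that is, that $\mathcal{M}\cap\mathcal{E}$ is nonempty, equivalently that \eqref{const1-T} admits a solution $\bar{\tau}$. This is where the hypothesis that $\max_{\theta\in\mathcal{M}}D^F(\theta\|\theta_{**})$ exists is needed, and where the analysis is delicate. Solving \eqref{const1-T} is the first-order condition for the concave program $\max_{\tau}\big(\langle a,\tau\rangle - F(\theta(\tau))\big)$, so a maximizer exists precisely when this program is coercive; by the strict convexity of $F(\theta(\tau))$ the only way coercivity can fail is for the target value $a$ to sit at the boundary of the range attainable along $\mathcal{E}$. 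I would show that the assumed attainment of $\max_{\theta\in\mathcal{M}}D^F(\theta\|\theta_{**})$ excludes this, by arguing that it forces the relevant level sets of $D^F(\cdot\|\theta_{**})$ on $\mathcal{M}$ to be compact, so the companion minimization $\min_{\theta\in\mathcal{M}}D^F(\theta\|\theta_{**})$ is attained at an interior point whose stationarity is exactly \eqref{MFA2}. A one-dimensional reduction along each direction $u_{k+j}$, using the monotonicity recorded in Lemma \ref{NLO}, makes this coercivity and boundary bookkeeping explicit and is the part I expect to require the most care. Finally, uniqueness of the intersection follows as in Lemma \ref{LA2}: two distinct intersection points would, by \eqref{AKO9}, have vanishing mutual Bregman divergence and hence coincide.
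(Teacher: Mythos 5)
Your equivalence chain (A1)$\Leftrightarrow$(A2)$\Leftrightarrow$(A3) is essentially the paper's own argument: reading everything in $U$-coordinates, condition \eqref{MFA2} says exactly that $\theta^{**}$ lies in the exponential subfamily $\mathcal{E}$ of (A2); the dual reading of the objective via \eqref{XI1}, namely
$D^F(\phi_{\mathcal{M}}^{(m)}(\bar\eta,a)\|\theta_{**})
=D^{(F\circ U)^*}(\psi_{\mathcal{M}}^{(m)}(\theta_{**})\|(\bar\eta,a))$,
is convex in the free mixture coordinates $\bar\eta$, its stationarity condition \eqref{NA1} translates through \eqref{CO1} into \eqref{MFA2}, the Pythagorean theorem (Proposition \ref{MNL}, as in Lemma \ref{LA2}) pins the projection at $\mathcal{M}\cap\mathcal{E}$ and gives uniqueness, and your chain-rule identification of \eqref{const1-T} with the constraint \eqref{BO1} along the parametrization $\theta_{**}+\sum_{j'}\tau^{j'}u_{k+j'}$ is precisely how the paper gets (A3) from (A2). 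No complaints there.

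The genuine gap is in the step you yourself flag as the substantive one. You read the hypothesis literally as existence of the \emph{maximum} and propose to deduce existence of the minimizer from it via ``max exists $\Rightarrow$ relevant sublevel sets of $D^F(\cdot\|\theta_{**})$ on $\mathcal{M}$ are compact $\Rightarrow$ the min is attained at an interior stationary point.'' Neither implication is sound in a general Bregman system. First, since $\bar\eta\mapsto D^{(F\circ U)^*}(\psi_{\mathcal{M}}^{(m)}(\theta_{**})\|(\bar\eta,a))$ is convex on the open parameter set $\Xi_{\mathcal{M}}$, a convex function attaining a maximum on an open convex set is constant, so the literal ``maximum'' hypothesis is essentially vacuous; it is a typo for ``minimum,'' as the paper's own proof (which invokes the $\argmin$ defining $\Pro^{(m),F}_{\mathcal{M}}$ and then writes the first-order condition \eqref{NA1}) makes clear. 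Second, even granting a uniform bound on $D^F(\cdot\|\theta_{**})$ over $\mathcal{M}$, boundedness of a Bregman divergence does not give compact sublevel sets: for $F(\theta)=e^{\theta}$ on $\Theta=\mathbb{R}$ one has $D^F(\theta\|\theta_{**})=e^{\theta}(\theta-\theta_{**})-e^{\theta}+e^{\theta_{**}}\to e^{\theta_{**}}<\infty$ as $\theta\to-\infty$, so the sublevel set is unbounded and no minimizing-sequence compactness follows. Under the intended reading the existence of the projection \emph{is} the hypothesis; the stationarity argument then closes the proof exactly as in the paper, and your coercivity machinery is both unnecessary and, as sketched, would fail.
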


\begin{proof}
To characterize elements of ${\cal M}$, we employ the parameter $\bar\eta$ defined in \eqref{Dif}.
Then, the set ${\cal M}$ is given as $\{ (\bar\eta_1, \ldots, \bar\eta_k, a_1, \ldots, a_{d-k})|
(\bar\eta_1, \ldots, \bar\eta_k) \in \mathbb{R}^{k}\}$ under this parameterization.
Then, using \eqref{XI1}, we have
\begin{align}
& D^F( \phi_{\mathcal{M}}^{(m)}(\bar\eta_1, \ldots, \bar\eta_k, a_1, \ldots, a_{d-k}) \|  \theta_{**}) \nonumber \\
=& D^{ (F\circ U)^*}(\psi_{\mathcal{M}}^{(m)}(\theta_{**})
\|(\bar\eta_1, \ldots, \bar\eta_k, a_1, \ldots, a_{d-k}))
\end{align}
Since the map $(\bar\eta_1, \ldots, \bar\eta_k)\mapsto
D^{ (F\circ U)^*}(\psi_{\mathcal{M}}^{(m)}(\theta_{**})
\|(\bar\eta_1, \ldots, \bar\eta_k, a_1, \ldots, a_{d-k}))$ is smooth and convex,
the minimum
$\min_{ (\bar\eta_1, \ldots, \bar\eta_k)} 
D^{ (F\circ U)^*}( \psi_{\mathcal{M}}^{(m)}(\theta_{**})
\|(\bar\eta_1, \ldots, \bar\eta_k, a_1, \ldots, a_{d-k}))$
is realized when
\begin{align}
\partial^i(F\circ U)^*(\bar\eta_1, \ldots, \bar\eta_k, a_1, \ldots, a_{d-k})
=\partial^i(F\circ U)^*(\psi_{\mathcal{M}}^{(m)}(\theta_{**})).\Label{NA1}
\end{align}
 for $i=1, \ldots, k$.
Since \eqref{NA1} is equivalent to \eqref{MFA2} due to \eqref{CO1}, 
we obtain (A1).

The exponential subfamily ${\cal E}$ is characterized as
$\{\theta| (U^{-1} \theta)^{i}=(U^{-1}{\theta}_{**})^i \hbox{ for }
i= 1, \ldots, k\}$.
Then, we find that the intersection between
${\cal M}$  and ${\cal E}$ is not empty and contains $\theta^{**}$.
Further, 
when an element $\theta$ belongs to the intersection between ${\cal M}$  and ${\cal E}$,
the Pythagorean theorem (Proposition \ref{MNL}) guarantees that 
the element $\theta$ realizes the maximum $\max_{\theta\in {\cal M}} D^F(\theta\|\theta_{**})$.
Hence, the intersection between ${\cal M}$  and ${\cal E}$ is composed of 
the unique element $\Pro^{(m),F}_{\mathcal{M}} (\theta_{**})$. 
Hence, we obtain (A2).

Due to (A2), the unique element $\Pro^{(m),F}_{\mathcal{M}} (\theta_{**})$
is characterized as 
an element in ${\cal E}=
\{ \theta_{**}+ \sum_{j'=1}^{d-k} \tau^{j'} u_{k+j'}|$\par\noindent$ (\tau^1, \ldots, \tau^{d-k}) \in \mathbb{R}^{d-k} \}$
to satisfy \eqref{const1-T}.
Hence, we obtain (A3).
\end{proof}

Due to Lemmas \ref{LA1} and \ref{LA2},
it is important to find a sufficient condition for (E2) and (M2).
To discuss this issue for a convex function $F$ and $\Theta$,
we fix $l$ linearly independent vectors $v_1,\ldots, v_l \in \mathbb{R}^d$.
Then, we consider the following conditions;
\begin{description}
\item[(M3)]
We denote the exponential family generated by 
the $l$ linearly independent vectors $v_1,\ldots, v_l \in \mathbb{R}^d$ at $\theta_0 \in \Theta$ by $\mathcal{E}(\theta_0) $.
The $l$-dimensional parameter space $\Theta_{\mathcal{E}(\theta_0)}$ 
does not depend on $ \theta_0 \in \Theta$.
while the space 
$\Theta_{\mathcal{E}(\theta_0)}$ 
is defined in the way as \eqref{const1-UE}.
In this case, this set is denoted by $\Xi(v_1,\ldots, v_l)$.
\item[(E3)]
We denote the mixture family generated by the constraint
$\sum_{i=1}^d v^i_{j} \partial_i F(\theta) =a_j
$ for $j=1, \ldots,l$
by $\mathcal{M}(a_1, \ldots, a_l)$.
The $d-l$-dimensional parameter space
$\Theta_{\mathcal{M}(a_1, \ldots, a_l)}$ 
does not depend on $ (a_1, \ldots, a_l) \in \mathbb{R}^l$ unless
$\mathcal{M}(a_1, \ldots, a_l)$ is empty
while the space 
$\Theta_{\mathcal{M}(a_1, \ldots, a_l)}$ 
is defined in the way as \eqref{const1-U}.
In this case, this set is denoted by $\Theta(v_1,\ldots, v_l)$.
\end{description}

Under the above condition, we have the following lemmas. 
\begin{lemma}\Label{Lem7}
Assume that 
the $l$ linearly independent vectors $v_1,\ldots, v_l \in \mathbb{R}^d$
satisfy Condition (M3).
Given $(a_1, \ldots, a_l)\in \Xi(v_1,\ldots, v_l)$, we define the mixture family 
$\mathcal{M}(a_1, \ldots, a_l)$ by using the condition \eqref{BO1}.
Then, for $\theta_0 \in \Theta$,
the projected point $\Pro^{(m),F}_{\mathcal{M}(a_1, \ldots, a_l)} (\theta_{0}) $ exists.
\end{lemma}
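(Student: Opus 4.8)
The plan is to reduce the existence of the $m$-projection to the solvability of a single gradient equation along the exponential subfamily through $\theta_0$, and then to read that solvability off from the hypothesis $a\in \Xi(v_1,\ldots,v_l)$. First I would invoke the geometric characterization of the $m$-projection. Let $\mathcal{E}(\theta_0)$ be the exponential subfamily generated by $v_1,\ldots,v_l$ at $\theta_0$, parametrized by $\phi_{\mathcal{E}}^{(e)}(\bar\theta)=\theta_0+\sum_{j=1}^l \bar\theta^{j}v_j$ with $\bar\theta\in\Theta_{\mathcal{E}(\theta_0)}$. By the Pythagorean relation (Proposition~\ref{MNL}), if a point $\theta^{**}$ lies in the intersection $\mathcal{M}(a_1,\ldots,a_l)\cap\mathcal{E}(\theta_0)$, then it attains $\min_{\hat\theta\in\mathcal{M}(a_1,\ldots,a_l)}D^F(\hat\theta\|\theta_0)$ and is therefore the $m$-projection (this is exactly the content of Lemma~\ref{Th6}, condition (A2)). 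Hence it suffices to exhibit one element of $\mathcal{M}(a_1,\ldots,a_l)\cap\mathcal{E}(\theta_0)$.

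Next I would translate the membership $\phi_{\mathcal{E}}^{(e)}(\bar\theta)\in\mathcal{M}(a_1,\ldots,a_l)$ into an equation for $\bar\theta$. The defining constraint $\sum_{i=1}^d v^i_j\,\partial_i F(\phi_{\mathcal{E}}^{(e)}(\bar\theta))=a_j$ is, by the chain rule \eqref{E12}, nothing but $\partial_j F_{\mathcal{E}}(\bar\theta)=a_j$ for $j=1,\ldots,l$, i.e. $\nabla^{(e)}[F_{\mathcal{E}}](\bar\theta)=(a_1,\ldots,a_l)$. Thus the intersection is non-empty exactly when $(a_1,\ldots,a_l)$ lies in the image of $\nabla^{(e)}[F_{\mathcal{E}}]$, which by the construction in Subsection~\ref{S2-1} is precisely the mixture-parameter range $\Xi_{\mathcal{E}(\theta_0)}$. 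Since $F_{\mathcal{E}}$ is a strictly convex $C^\infty$ function, $\nabla^{(e)}[F_{\mathcal{E}}]$ is one-to-one onto $\Xi_{\mathcal{E}(\theta_0)}$, so the solution $\bar\theta^{*}$ is automatically unique and $\phi_{\mathcal{E}}^{(e)}(\bar\theta^{*})$ is the sought projection. It therefore remains only to verify $(a_1,\ldots,a_l)\in \Xi_{\mathcal{E}(\theta_0)}$.

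This last step is where Condition (M3) is used, and I expect it to be the main obstacle. Condition (M3) guarantees that the natural parameter domain $\Theta_{\mathcal{E}(\theta_0)}$ is the same open convex set for every base point $\theta_0$; the point I would have to establish is that the range $\Xi_{\mathcal{E}(\theta_0)}$ is likewise independent of $\theta_0$ and equals the common set $\Xi(v_1,\ldots,v_l)$ into which $a$ is assumed to fall, so that $a\in\Xi(v_1,\ldots,v_l)$ transfers to the particular $\theta_0$ at hand. The delicate part is precisely this $\theta_0$-independence of the range: the boundary of the open convex set $\Xi_{\mathcal{E}(\theta_0)}$ is controlled by the asymptotic directional derivatives $\langle v_j,\nabla^{(e)}[F](\phi_{\mathcal{E}}^{(e)}(\bar\theta))\rangle$ as $\bar\theta$ tends to the boundary of the common domain, and I would show that moving the base point $\theta_0$ changes these limits by an integral of the mixed Hessian entries $v_{j}^{\mathsf{T}}J\,\delta$ taken along the exponential family. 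Strict convexity of $F$, together with the finiteness of the range, forces those entries to vanish in the limit, so that the range endpoints cannot move with $\theta_0$. Carrying out this comparison rigorously is the technical heart of the argument; once it is in place, $a\in\Xi(v_1,\ldots,v_l)=\Xi_{\mathcal{E}(\theta_0)}$ follows, and the construction of $\bar\theta^{*}$ above completes the proof.
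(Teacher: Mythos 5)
Your first two steps reproduce the paper's own proof: an intersection point of $\mathcal{E}(\theta_0)$ with $\mathcal{M}(a_1,\ldots,a_l)$ is automatically the $m$-projection of $\theta_0$ (the paper invokes Lemma \ref{LA2} here, which encapsulates exactly the Pythagorean argument you describe), and by the chain rule \eqref{E12} such an intersection point exists precisely when $(a_1,\ldots,a_l)$ lies in the mixture-parameter range $\Xi_{\mathcal{E}(\theta_0)}$ of the exponential subfamily through $\theta_0$. The gap is in your third step, and it stems from a misreading of Condition (M3). You take (M3) to speak only about the natural-parameter domain $\Theta_{\mathcal{E}(\theta_0)}$, and hence treat the $\theta_0$-independence of the range $\Xi_{\mathcal{E}(\theta_0)}$ as the ``technical heart'' still to be established. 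In the paper's intended meaning---forced by the notation $\Xi(v_1,\ldots,v_l)$, by the duality with Condition (E3) (which concerns the natural-parameter sets $\Theta_{\mathcal{M}(a_1,\ldots,a_l)}$ of mixture subfamilies and denotes the common set by $\Theta(v_1,\ldots,v_l)$), and by the way (M3) is verified in the proof of Lemma \ref{LOS} (``the set $\Xi_{\mathcal{E}(\theta_0)}$ does not depend on $\theta_0$\ldots Hence, we have Condition (M3)'')---the set $\Xi(v_1,\ldots,v_l)$ is by definition the common mixture-parameter range $\Xi_{\mathcal{E}(\theta_0)}$. Under that reading, the hypothesis $(a_1,\ldots,a_l)\in\Xi(v_1,\ldots,v_l)$ already says that $\mathcal{E}(\theta_0)$ contains an element whose mixture parameter is $(a_1,\ldots,a_l)$, so the proof terminates after your step 2; the statement you set out to prove in step 3 is not what the lemma needs. (The occurrence of ``$\Theta_{\mathcal{E}(\theta_0)}$'' in the statement of (M3) is a slip in the paper, so the misreading is understandable, but it sent you after a much stronger claim than the hypothesis grants.)

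Moreover, the argument you sketch for that stronger claim would not survive being made rigorous. Strict convexity yields only the Cauchy--Schwarz bound $|v_j^{\mathsf{T}}J\delta|\le\sqrt{(v_j^{\mathsf{T}}Jv_j)\,(\delta^{\mathsf{T}}J\delta)}$; finiteness of the range (which can fail outright, since mixture parameters may be unbounded along $\mathcal{E}(\theta_0)$) controls only the longitudinal curvature $v_j^{\mathsf{T}}Jv_j$, while the transverse curvature $\delta^{\mathsf{T}}J\delta$ may diverge, so the mixed Hessian entries need not vanish in the limit; and even pointwise vanishing would not control the quantity you actually need, namely the line integral $\int_0^1 v_j^{\mathsf{T}}J(\cdot+s\delta)\,\delta\,ds$ connecting the two families. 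You also assume without justification that $\Xi_{\mathcal{E}(\theta_0)}$ is an open \emph{convex} set, which is not automatic for the gradient image of a restriction of $F$. For what it is worth, the implication you want is true when the common domain $\Theta_{\mathcal{E}(\theta_0)}$ equals $\mathbb{R}^l$ (and literal equality of these domains for all base points in fact forces this), but the correct route is via the base-point independence of the recession function of a convex function: one identifies $\Xi_{\mathcal{E}(\theta_0)}=\mathrm{int}\,\mathrm{dom}\,F_{\mathcal{E}}^{*}$, whose closure has support function $F^{\infty}\circ V$ independent of $\theta_0$---an argument of a different nature from the Hessian-limit one you propose.
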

\begin{proof}
When the assumption holds,
for $\theta_0 \in \Theta$,
the exponential family $\mathcal{E}(\theta_0)$ contains an element whose mixture parameter is $(a_1, \ldots, a_l)$.
Hence, due to Lemma \ref{LA2}, 
the exponential family $\mathcal{E}(\theta_0)$ and
the mixture family $\mathcal{M}(a_1, \ldots, a_l)$ have a unique intersection.
Therefore, 
the projected point $\Pro^{(m),F}_{\mathcal{M}(a_1, \ldots, a_l)} (\theta_{0}) $ exists
unless $\mathcal{M}(a_1, \ldots, a_l)$ is empty. 
\end{proof}

\begin{lemma}\Label{Lem8}
Assume that 
the $l$ linearly independent vectors $v_1,\ldots, v_l \in \mathbb{R}^d$
satisfy Condition (E3).
Then, for $ (b^1, \ldots, b^{d-l}) \in \mathbb{R}^{d-l}$ and
$\theta_0 \in \Theta$,
the projected point $\Pro^{(e),F}_{\mathcal{E}(b^1, \ldots, b^{d-l})} (\theta_{0}) $ exists
unless $\mathcal{E}(b^1, \ldots, b^{d-l})$ is empty
where the exponential family $\mathcal{E}(b^1, \ldots, b^{d-l})$ is defined
as $\{  (\sum_{i=1}^{d-l} u_i^j b^i+ \sum_{i=1}^{l} u_i^j \theta^i)_{j=1}^d |
(\theta^1,\ldots,\theta^{l}) \in \mathbb{R}^{l} 
\}\cap \Theta$. 
\end{lemma}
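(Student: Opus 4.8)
The plan is to treat this as the exponential--mixture dual of Lemma~\ref{Lem7}, replacing the role of Lemma~\ref{LA2} by that of Lemma~\ref{LA1}. First I would set $a^0_j := \sum_{i=1}^d v_j^i \partial_i F(\theta_0)$ for $j=1,\dots,l$ and introduce the mixture family $\mathcal{M}(a^0)$ generated by the constraint \eqref{BO1} with right-hand side $a^0$. By Lemma~\ref{LA1}, the $e$-projection $\Pro^{(e),F}_{\mathcal{E}(b^1,\dots,b^{d-l})}(\theta_0)$ exists precisely when the intersection $\mathcal{M}(a^0)\cap \mathcal{E}(b^1,\dots,b^{d-l})$ is nonempty, because $\mathcal{M}(a^0)$ is exactly the mixture family $\mathcal{M}_{\theta_0\to\mathcal{E}(b^1,\dots,b^{d-l})}$ attached to the $e$-projection in condition (E2). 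Thus the whole statement reduces to producing one point lying in both families.

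Next I would observe that $\theta_0$ itself satisfies the constraint defining $\mathcal{M}(a^0)$, so $\mathcal{M}(a^0)$ is nonempty; hence Condition (E3) applies and its parameter space equals $\Theta_{\mathcal{M}(a^0)}=\Theta(v_1,\dots,v_l)$. Recall that $\mathcal{E}(b^1,\dots,b^{d-l})$ is, in the $U$-basis with $v_j=u_{d-l+j}$, exactly the set of $\theta\in\Theta$ whose first $d-l$ natural coordinates $((U^{-1}\theta)^i)_{i=1}^{d-l}$ equal $b$. Therefore a point of $\mathcal{M}(a^0)\cap\mathcal{E}(b)$ is nothing but a point of $\mathcal{M}(a^0)$ whose first $d-l$ natural coordinates equal $b$, which exists iff $b\in\Theta_{\mathcal{M}(a^0)}$.

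The key step is then to show that the assumed nonemptiness of $\mathcal{E}(b^1,\dots,b^{d-l})$ forces $b\in\Theta(v_1,\dots,v_l)$. I would take any $\theta\in\mathcal{E}(b)$, let $a'_j:=\sum_{i=1}^d v_j^i\partial_i F(\theta)$ be its mixture parameters in the directions $v_1,\dots,v_l$, and note $\theta\in\mathcal{M}(a')$, so $\mathcal{M}(a')$ is nonempty. Condition (E3) now gives $\Theta_{\mathcal{M}(a')}=\Theta(v_1,\dots,v_l)=\Theta_{\mathcal{M}(a^0)}$, and since $b=((U^{-1}\theta)^i)_{i=1}^{d-l}\in\Theta_{\mathcal{M}(a')}$, we obtain $b\in\Theta_{\mathcal{M}(a^0)}$. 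Consequently $\mathcal{M}(a^0)$ contains an element with first $d-l$ natural coordinates $b$, i.e.\ an element of $\mathcal{E}(b)$, so the intersection is nonempty and Lemma~\ref{LA1} yields the unique $e$-projection.

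The main obstacle I anticipate is purely bookkeeping: making the dictionary $v_j=u_{d-l+j}$ between the generating directions of $\mathcal{E}(b)$ and the constraint directions of $\mathcal{M}(a)$ precise, and confirming that the parameter space $\Theta_{\mathcal{M}(a)}$ of \eqref{const1-U} really is the set of first $d-l$ natural coordinates, so that the membership $b\in\Theta(v_1,\dots,v_l)$ is equivalent to the existence of the common point. Once this identification is fixed, the essential use of Condition (E3)---that this parameter space is the same for $\mathcal{M}(a')$ and $\mathcal{M}(a^0)$---is exactly what transfers nonemptiness of $\mathcal{E}(b)$ into nonemptiness of the intersection.
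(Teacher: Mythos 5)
Your proposal is correct and follows essentially the same route as the paper's own proof: both introduce the mixture family $\mathcal{M}(\theta_0)$ (your $\mathcal{M}(a^0)$) determined by the mixture coordinates of $\theta_0$ in the directions $v_1,\ldots,v_l$, use Condition (E3) to conclude that the nonemptiness of $\mathcal{E}(b^1,\ldots,b^{d-l})$ places $b$ in the common parameter space $\Theta(v_1,\ldots,v_l)=\Theta_{\mathcal{M}(a^0)}$, and then invoke Lemma~\ref{LA1} to identify the resulting intersection point as the unique $e$-projection. Your write-up merely makes explicit the step the paper asserts without elaboration, namely why $\mathcal{M}(\theta_0)$ contains an element whose first $d-l$ natural coordinates equal $(b^1,\ldots,b^{d-l})$.
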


\begin{proof}
Assume that the assumption holds.
For $\theta_0 \in \Theta$,
we define the mixture family $\mathcal{M}(\theta_0)$ by using the constraint;
$\sum_{j=1}^d v^j_i \partial_j F(\theta) =\sum_{j=1}^d v^j_i \partial_j F(\theta_0) $ for $i=1, \ldots, l$.
Then, the mixture family $\mathcal{M}(\theta_0)$ contains an element whose 
natural parameter is $(b^1, \ldots, b^{d-l})$.
Hence, due to Lemma \ref{LA1}
the mixture $\mathcal{M}(\theta_0)$ and
the exponential family $\mathcal{E}(b^1, \ldots, b^{d-l})$ have a unique intersection.
Therefore, 
the projected point $\Pro^{(e),F}_{\mathcal{E}(b^1, \ldots, b^{d-l})} (\theta_{0}) $ exists
unless $\mathcal{E}(b^1, \ldots, b^{d-l})$ is empty.
\end{proof}

In addition, we introduce the following conditions for 
the Bregman divergence system $(\Theta,F,D^F)$.
\begin{description}
\item[(M4)]
Any $l$ linearly independent vectors $v_1,\ldots, v_l \in \mathbb{R}^d$
satisfy Condition (M3) for $l=1, \ldots, d-1$.
\item[(E4)]
Any $l$ linearly independent vectors $v_1,\ldots, v_l \in \mathbb{R}^d$
satisfy Condition (E3) for $l=1, \ldots, d-1$.
\end{description}
When (M4) holds, the $m$-projection 
$\Pro^{(m),F}_{\mathcal{M}}$ 
can be defined for any mixture subfamily $\mathcal{M}$.
Also, 
when (E4) holds, the $e$-projection 
$\Pro^{(e),F}_{\mathcal{E}}$ 
can be defined for any exponential subfamily $\mathcal{E}$.
Therefore, these two conditions are helpful for the analysis of these projections.

\begin{table*}[htb]
\begin{center}
\caption{Summary of dimensions}
\begin{tabular}{|c|c|} \hline
Symbol & Space \\
\hline 
$d$ & Dimension of the whose space \\ 
\hline 
$l$ & Dimension of Exponential family ${\cal E}$ \\
\hline 
$k$ & Dimension of Mixture family ${\cal M}$ \\
\hline 
\end{tabular}
\Label{table1}
\end{center}
\end{table*}

\subsection{Evaluation of Bregman divergence without Pythagorean theorem}\Label{SXNO}
Next, we evaluate Bregman divergence when we cannot use Pythagorean theorem.
For this aim, we focus on $J(\theta)^{-1}$, i.e.,
the inverse of the Hesse matrix $J(\theta)$ defined for the parameters of $\Theta$.
Then, we introduce the following quantity
$\gamma(\hat{\Theta} |{\Theta})$
for a subset $\hat{\Theta}$ of $\Theta$.
\begin{align}
\gamma(\hat{\Theta}|{\Theta}):= &
\inf\{\gamma|
\gamma J(\theta_1)^{-1}\ge J(\theta_2)^{-1}
\hbox{ for }\theta_1,\theta_2 \in \hat{\Theta} \} \\
\end{align}
\if0
\begin{description}
\item[(D1)]
There exists a constant $\gamma(\hat{\Theta} |{\Theta})>1$ such that
\begin{align}
\gamma(\hat{\Theta}|{\Theta}) J^*(\theta_1)\ge J^*(\theta_2)
\end{align}
for $ \theta_1,\theta_2 \in \hat{\Theta}$.
\item[(D2)]
There exists constant $\gamma_{\uparrow}(\hat{\Theta}|{\Theta}),
\gamma_{\downarrow}(\hat{\Theta}|{\Theta}),$ such that
\begin{align}
\gamma_{\uparrow}(\hat{\Theta}|{\Theta}) I \ge J^*(\theta) \ge \gamma_{\downarrow}(\hat{\Theta}|{\Theta}) I
\end{align}
for $ \theta \in \hat{\Theta}$.
\end{description}
\fi
We say that a subset $\hat{\Theta}$ of $\Theta$
is a {\it star subset} for an element $\theta_1 \in \hat{\Theta}$
when 
$\lambda \eta(\theta)+(1-\lambda)\eta(\theta_1) \in \eta(\hat{\Theta})$ for 
$\theta \in \hat{\Theta}$ and $\lambda \in (0,1)$.

Then, we have the following theorem.
\begin{theorem}\Label{XAM} 
We assume that Condition (M4) holds.
Then, for 
a star subset with $\hat{\Theta}$ for $ \theta_1 \in \hat{\Theta}$,
$\theta_2 \in \hat{\Theta}$, and $\theta_3 \in \Theta$, we have
\begin{align}
& D^F(\theta_1\|\theta_2) \nonumber \\
\le & 
D^F(\theta_1\|\theta_3)+\gamma (\hat{\Theta}|{\Theta}) D^F(\theta_2\|\theta_3)+ 
2 \gamma (\hat{\Theta}|{\Theta})\sqrt{ D^F(\theta_1\|\theta_3)D^F(\theta_2\|\theta_3)}.
\Label{BLT}
\end{align}
\end{theorem}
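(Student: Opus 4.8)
The asymmetric form of \eqref{BLT}, where $D^F(\theta_1\|\theta_3)$ carries coefficient $1$ while $D^F(\theta_2\|\theta_3)$ carries $\gamma(\hat{\Theta}|{\Theta})$, points to the three-point identity as the backbone rather than a symmetric estimate. Expanding the definition \eqref{XZL} directly gives
\begin{align}
D^F(\theta_1\|\theta_2)=D^F(\theta_1\|\theta_3)-D^F(\theta_2\|\theta_3)
+\langle \eta_2-\eta_1,\ \theta_2-\theta_3\rangle ,
\end{align}
where $\eta_i:=\nabla^{(e)}[F](\theta_i)$. Since $D^F(\theta_1\|\theta_3)$ already appears with the right coefficient, the theorem is equivalent to the cross-term bound
\begin{align}
\langle \eta_2-\eta_1,\theta_2-\theta_3\rangle
\le (\gamma+1)\,D^F(\theta_2\|\theta_3)
+2\gamma\sqrt{D^F(\theta_1\|\theta_3)\,D^F(\theta_2\|\theta_3)} ,
\end{align}
writing $\gamma=\gamma(\hat{\Theta}|{\Theta})$.

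To prove this I would split $\eta_2-\eta_1=(\eta_2-\eta_3)+(\eta_3-\eta_1)$. The first piece pairs into a symmetrized divergence, $\langle \eta_2-\eta_3,\theta_2-\theta_3\rangle=D^F(\theta_2\|\theta_3)+D^F(\theta_3\|\theta_2)$; using the integral representation \eqref{NXP} (i.e.\ \eqref{KPOT} for $F^*$) and the $\gamma$-bound on the Hessian one checks that the reversed divergence satisfies $D^F(\theta_3\|\theta_2)\le\gamma\,D^F(\theta_2\|\theta_3)$, so this piece contributes at most $(\gamma+1)D^F(\theta_2\|\theta_3)$. The second piece $\langle \eta_3-\eta_1,\theta_2-\theta_3\rangle$ I would treat by Cauchy--Schwarz in the Hessian inner product, $\langle \eta_3-\eta_1,\theta_2-\theta_3\rangle\le \|\eta_3-\eta_1\|_{J^*(\zeta)}\,\|\theta_2-\theta_3\|_{J(\zeta)}$; the integral representations \eqref{KPOT} and \eqref{NXP} combined with the $\gamma$-bound give $\|\eta_3-\eta_1\|_{J^*(\zeta)}^2\le 2\gamma\,D^F(\theta_1\|\theta_3)$ and $\|\theta_2-\theta_3\|_{J(\zeta)}^2\le 2\gamma\,D^F(\theta_2\|\theta_3)$, whose product is exactly $2\gamma\sqrt{D^F(\theta_1\|\theta_3)D^F(\theta_2\|\theta_3)}$ (the factor $2\gamma$ arising as $\sqrt{2\gamma}\cdot\sqrt{2\gamma}$). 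Adding the two pieces yields the displayed cross-term bound. The star hypothesis for $\theta_1$ is used to keep the segment $\eta_1\eta_2$ inside $\hat{\Theta}$, and the convexity of $\theta_2\mapsto D^F(\theta_1\|\theta_2)$ from \eqref{BJ1} (and of its dual from \eqref{XI1}) underlies the passage between the fixed-point Hessian norms and the path-integrated divergences.

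The hard part is that every one of these Hessian comparisons is applied along the straight segments that reach $\theta_3$, and $\theta_3$ ranges over all of $\Theta$, not just $\hat{\Theta}$. Thus the relations $D^F(\theta_3\|\theta_2)\le\gamma D^F(\theta_2\|\theta_3)$ and $\|\cdot\|^2_{J^*(\zeta)}\le 2\gamma D^F(\cdot\|\theta_3)$ require controlling $J^*$ at points of these legs by $J^*$ on $\hat{\Theta}$, which is precisely the quantitative content encoded in the definition of $\gamma(\hat{\Theta}|{\Theta})$ together with Condition (M4). I would reduce the part of each leg lying outside $\hat{\Theta}$ to its boundary exit point, using the monotonicity of $D^F(\,\cdot\,\|\theta_i)$ along the $m$-geodesic from $\theta_i$ (a consequence of the convexity in the mixture parameter), so that only Hessians on $\hat{\Theta}$, where $\gamma$ genuinely applies, remain. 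Making this reduction consistent for the two legs simultaneously---they meet $\partial\hat{\Theta}$ at different points---while preserving the asymmetric constants $1$ and $\gamma$, is the step I expect to demand the most care.
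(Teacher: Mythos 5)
Your reduction to the cross-term bound via the three-point identity is correct algebra, but the way you bound the first piece is where the argument breaks, and it breaks irreparably: the inequality $D^F(\theta_3\|\theta_2)\le\gamma(\hat{\Theta}|{\Theta})\,D^F(\theta_2\|\theta_3)$ is false in general. The constant $\gamma(\hat{\Theta}|{\Theta})$ only compares Hessians at pairs of points of $\hat{\Theta}$, whereas your comparison has to be made along the mixture segment joining $\eta(\theta_2)$ to $\eta(\theta_3)$, and $\theta_3$ is an arbitrary point of $\Theta$. Concretely, take $d=1$, $F(\theta)=e^\theta$ on $\Theta=\mathbb{R}$, $\hat{\Theta}=[0,1]$, so that $\gamma(\hat{\Theta}|{\Theta})=e$. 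With $\theta_2=0$ and $\theta_3=T$,
\begin{align}
D^F(\theta_3\|\theta_2)=(T-1)e^T+1,\qquad
D^F(\theta_2\|\theta_3)=e^T-T-1,
\end{align}
so the ratio grows like $T$ and exceeds $e$ already for $T>1+e$; equivalently, your first-piece bound $\langle\eta_2-\eta_3,\theta_2-\theta_3\rangle\le(\gamma+1)D^F(\theta_2\|\theta_3)$ would force $T(e^T-1)\le(1+e)(e^T-T-1)$, which fails. The same objection applies to the two Hessian-norm estimates in your Cauchy--Schwarz step, since they also integrate $J^*$ along legs that reach $\theta_3$. You flagged this as the hard part and proposed to cut each leg at its exit point from $\hat{\Theta}$, but no such reduction can help: the statement you need at that step is false, not merely unproved, so the decomposition $\eta_2-\eta_1=(\eta_2-\eta_3)+(\eta_3-\eta_1)$ itself must be abandoned. (The theorem survives in this example only because $D^F(\theta_1\|\theta_3)$ on its right-hand side blows up like $e^T$; your split discards that term before it can compensate.)

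The paper's proof avoids the trap by never making a Hessian comparison outside $\hat{\Theta}$. It introduces the one-dimensional mixture subfamily $\mathcal{M}$ through $\eta(\theta_1)$ and $\eta(\theta_2)$, which lies in $\eta(\hat{\Theta})$ by the star hypothesis, uses Condition (M4) to form the $m$-projection $\Pro^{(m),F}_{\mathcal{M}}(\theta_3)=\theta(\lambda)$, and splits into the cases $\lambda<0$, $0\le\lambda\le1$, and $\lambda>1$. All quantitative use of $\gamma$ then happens through the integral representation \eqref{NXP} restricted to the segment between $\eta(\theta_1)$ and $\eta(\theta_2)$; the divergences involving $\theta_3$ enter only through the exact Pythagorean identity (Proposition \ref{MNL}) and the monotonicity \eqref{KPOT3}, neither of which requires any Hessian control near $\theta_3$. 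If you want to salvage your route, that is the idea you would have to import: replace $\theta_3$ by its $m$-projection onto the line through $\theta_1,\theta_2$ before any Hessian estimate is made, so that $\gamma$ is only ever invoked where it is defined.
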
 
The proof of Theorem \ref{XAM} is given in Appendix \ref{A1}.

\subsection{Bregman divergence system for mixture subfamily}\Label{APB}
When ${\cal E}$ is an exponential subfamily, 
the triplet 
$(\Theta_{{\cal E}},F_{{\cal E}},D^{F_{\cal E}})$ is 
a Bregman divergence system as explained in \eqref{NBSO}.
However, 
when ${\cal M}$ is a mixture subfamily and it is not an exponential subfamily, 
it is not so trivial to recover a Bregman divergence system.
We use the symbols defined in Subsection \ref{S2-2}.
Any element in ${\cal M}$ can be parameterized by an element $\bar\theta \in \Theta_{\cal M}$.
Therefore, there uniquely exists an vector $\kappa(\bar\theta) \in \mathbb{R}^{d-k}$ such that
$U( \bar\theta,  \kappa(\bar\theta)) \in {\cal M}$.
Then, we define the map 
$\phi_{\mathcal{M}}^{(e)}: \Theta_{{\cal M}}\to {\cal M}$
as $\phi_{\mathcal{M}}^{(e)}(\bar\theta):=U( \bar\theta,  \kappa(\bar\theta)) $,
and its inverse map 
$\psi_{{\cal M}}^{(e)}:=(\phi_{{\cal M}}^{(e)})^{-1}: {\cal M}\to \Theta_{{\cal M}}$.

A convex function $F_{\mathcal{M}}(\bar\theta)$
is defined as
\begin{align}
F_{\mathcal{M}}(\bar\theta)
:=&(F\circ U)( \bar\theta,  \kappa(\bar\theta)) - \sum_{i=k+1}^d \partial_i (F\circ U)( \bar\theta,  \kappa(\bar\theta)) \kappa^{i-k}(\bar\theta) \nonumber \\
=&(F\circ U)( \bar\theta,  \kappa(\bar\theta)) - \sum_{i=k+1}^d a_i \kappa^{i-k}(\bar\theta).
\Label{XPA}
\end{align}
$(F\circ U)^*|_{\Xi_{{\cal M}}}$ is a convex function.
Due to \eqref{NAY},
the Legendre transform of $(F\circ U)^*|_{\Xi_{{\cal M}}}$ is $F_{\mathcal{M}}(\bar\theta)$.
Hence, $F_{\mathcal{M}}(\bar\theta)$ is a convex function.

Also, we have
\begin{align}
&\frac{\partial}{\partial {\bar\theta}^j } 
F_{\mathcal{M}}(\bar\theta)\nonumber \\
=&
\partial_j (F\circ U) ( \bar\theta,  \kappa(\bar\theta))
+\sum_{i=k+1}^d  \partial_i (F\circ U) ( \bar\theta,  \kappa(\bar\theta)) \partial_j \kappa^{i-k}(\bar\theta) 
-\sum_{i=k+1}^d a_i  \kappa^{i-k}(\bar\theta)\nonumber  \\
=&\partial_j ( F\circ U) ( \bar\theta,  \kappa(\bar\theta)).
\end{align}
Thus,
\begin{align}
&D^{F_{\mathcal{M}}}(\bar\theta_1\|\bar\theta_2)\nonumber \\
=&
F_{\mathcal{M}}(\bar\theta_1) -F_{\mathcal{M}}(\bar\theta_2)-
\sum_{j=1}^{k} \frac{\partial}{\partial {\bar\theta}^j } 
F_{\mathcal{M}}(\bar\theta) (\bar\theta_1^j-\bar\theta_2^j) \nonumber \\
=&
F\circ U(\bar\theta_1, \kappa(\bar\theta_1)) -F\circ U (\bar\theta_2,\kappa(\bar\theta_2))\nonumber \\
&-
\sum_{j=1}^{d} 
\partial_j (F\circ U) ((\bar\theta_1, \kappa(\bar\theta_1))^j-(\bar\theta_2, \kappa(\bar\theta_2))^j) \nonumber \\
=& D^{F\circ U}((\bar\theta_1, \kappa(\bar\theta_1))\|(\bar\theta_2, \kappa(\bar\theta_2))) 
=D^{F}(U(\bar\theta_1, \kappa(\bar\theta_1))\| U(\bar\theta_2, \kappa(\bar\theta_2))) \nonumber \\
=& D^{F}(
\phi_{\mathcal{M}}^{(e)}(\bar\theta_1)\|
\phi_{\mathcal{M}}^{(e)}(\bar\theta_2)) .
\end{align}
Therefore,
the Bregman divergence in the Bregman divergence system $(\Theta_{{\cal M}}, F_{\mathcal{M}}, D^{F_{\mathcal{M}}})$
equals the Bregman divergence in the Bregman divergence system $(\Theta, F, D^{F})$
for two elements in ${\cal M}$.

A subset $\mathcal{E} \subset \mathcal{M}$ 
is called an $l$-dimensional  {\it exponential subfamily} of ${\cal M}$
generated by $l$ linearly independent vectors $v_1,\ldots, v_l \in \mathbb{R}^k$
at $\theta_0 \in \Theta_{\mathcal{M}}$ with $l \le k$
when the subset $\mathcal{E}$ is given as
\begin{align}
    \mathcal{E} &= \left\{ \left. \phi_{{\cal M}}^{(e)} \Big({\theta}_0+ \sum_{i=1}^l \bar{\theta}^i v_i\Big)  
    \right| 
    \bar{\theta} \in \mathbb{R}^{l}    \right\}\cap \mathcal{M} .
\end{align}

A subset $\mathcal{M}_1 \subset \Theta$ is called an $l$-dimensional  {\it mixture subfamily} of ${\cal M}$
generated by the additional constraints 
\begin{align}
\sum_{i=1}^k v^i_{j} \eta_i =a_j\Label{const9}
\end{align}
 for $j=1, \ldots,l$ with $v_1,\ldots, v_l \in \mathbb{R}^k$
when the subset $\mathcal{M}_1$ is written as
\begin{align}
    \mathcal{M}_1 = 
    \left\{ \left.\phi_{\mathcal{M}}^{(m)}(\eta)  \in \mathcal{M} \right| 
\eta \in \Xi_{\mathcal{M}} \hbox{ satisfies Condition \eqref{const9}.}  \right\} 
\end{align}

\subsection{Closed convex mixture subfamily}\Label{4Y}
A closed subset $\mathcal{M}$ of a mixture subfamily $\hat{\mathcal{M}}$
is called a {\it closed mixture subfamily}.
The mixture subfamily $\hat{\mathcal{M}}$ is called the 
{\it extended mixture family} of $\mathcal{M}$
when $\hat{\mathcal{M}}$ and $\mathcal{M}$ have the same dimension.
When a closed mixture subfamily $\mathcal{M}$ is a convex set with respect to the mixture parameter, it is called a {\it closed convex mixture subfamily}.

We define the boundary set $\partial\mathcal{M}:=
\mathcal{M}\setminus \interior {\mathcal{M}}$, where $\interior {\mathcal{M}}$ is the 
interior of $\mathcal{M}$.
For an element $\theta \in \partial\mathcal{M}$, 
a $d-1$-dimensional mixture family $\mathcal{M}'$ is called 
a tangent space of $\mathcal{M}$ at $\theta$
when 
$\mathcal{M}'\cap \mathcal{M}\neq \emptyset$
and 
$\mathcal{M}'\cap \interior {\mathcal{M}}= \emptyset$.
When $\mathcal{M}$ is a closed convex mixture subfamily,
any element $\theta \in \partial\mathcal{M}$ has a tangent space.
When $\mathcal{M}$ is composed of one element, 
we consider that $\hat{\mathcal{M}}:=\mathcal{M}$,
$\partial \mathcal{M}=\emptyset$, and $\interior \mathcal{M}= \mathcal{M}$.

\if0
\begin{lemma}\Label{LXOT}
Assume that the Bregman divergence system $(\Theta,F,D^F)$ satisfies 
Condition (M4).
Given a vector $u \in \mathbb{R}^d$, 
we define the closed mixture subfamily 
${\cal M}:= \{\theta \in \Theta |
u\cdot \nabla F(\theta) \le D\}$.
When $theta \in \Theta $ satisfies the condition 
\begin{align}
u\cdot \nabla F(\theta) > D,
\end{align}
we have
\begin{align}
u\cdot \nabla F(
\Pro^{(m),F}_{\mathcal{M}} (\theta))=D.
\end{align}
\end{lemma}
\fi

\begin{lemma}\Label{LMGO}
Assume that the Bregman divergence system $(\Theta,F,D^F)$ satisfies 
Condition (M4).
For any element $\theta \in \Theta$ and any closed convex mixture subfamily 
$\mathcal{M}$,
there uniquely exists a minimum point 
\begin{align}
\Pro^{(m),F}_{\mathcal{M}} (\theta):=
\argmin_{\theta' \in \mathcal{M} } D^F(\theta'\|\theta).
\end{align}
In addition, any element $\theta' \in \mathcal{M}$ satisfies the inequality
\begin{align}
D^F(\theta'\| \theta)\ge 
D^F(\theta'\| \Pro^{(m),F}_{\mathcal{M}} (\theta))
+D^F(\Pro^{(m),F}_{\mathcal{M}} (\theta)\| \theta).\Label{MGO2}
\end{align}
Further, we denote 
the extended mixture family of ${\cal M}$ by $\hat{{\cal M}}$. 
When $\theta$ belongs to $\hat{{\cal M}}\setminus {\cal M}$,
then,
\begin{align} 
\Pro^{(m),F}_{\mathcal{M}} (\theta)\in \partial {\cal M}.
\Label{MGOF}
\end{align}
\end{lemma}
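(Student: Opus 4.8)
The plan is to establish the three assertions in turn. First I would fix $\theta$ and regard the objective $g(\theta'):=D^F(\theta'\|\theta)$ as a function of the mixture parameter $\eta(\theta')$. By \eqref{XI1} we have $g(\theta')=D^{F^*}(\eta(\theta)\|\eta(\theta'))$, and differentiating twice in the second slot (the analogue of \eqref{BJ1} for $F^*$) shows that its Hessian equals $J(\theta')^{-1}$, which is positive definite; hence $g$ is strictly convex in the mixture parameter. Since a closed convex mixture subfamily $\mathcal{M}$ is by definition convex in the mixture parameter, $\min_{\theta'\in\mathcal{M}}g(\theta')$ is a strictly convex program, so any minimizer is unique and any local minimizer is global.

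For existence I would first reduce to the case $\theta\in\hat{\mathcal{M}}$. Condition (M4) together with Lemma~\ref{Lem7} provides the $m$-projection $\theta_0:=\Pro^{(m),F}_{\hat{\mathcal{M}}}(\theta)$ onto the extended (open, full-dimensional) mixture family $\hat{\mathcal{M}}$. Because $\theta_0$ is an interior minimizer of $D^F(\cdot\|\theta)$ over $\hat{\mathcal{M}}$, the first-order condition holds with equality along the whole tangent space of $\hat{\mathcal{M}}$, which yields the exact decomposition $D^F(\theta'\|\theta)=D^F(\theta'\|\theta_0)+D^F(\theta_0\|\theta)$ for all $\theta'\in\hat{\mathcal{M}}\supseteq\mathcal{M}$. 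As $D^F(\theta_0\|\theta)$ is constant, minimizing $g$ over $\mathcal{M}$ is equivalent to minimizing $D^F(\cdot\|\theta_0)$ with $\theta_0\in\hat{\mathcal{M}}$, so we may assume $\theta\in\hat{\mathcal{M}}$ and work inside the induced Bregman system $(\Theta_{\hat{\mathcal{M}}},F_{\hat{\mathcal{M}}},D^{F_{\hat{\mathcal{M}}}})$ of Subsection~\ref{APB}, where $g$ is continuous and strictly convex with global minimum $0$ attained at $\theta$. If $\theta\in\mathcal{M}$ the projection is $\theta$; otherwise I would take a minimizing sequence in $\mathcal{M}$ and pass to a limit. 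I expect the attainment of this infimum to be the main obstacle: $\mathcal{M}$ is only closed and convex, not compact, so a minimizing sequence could a priori escape to infinity in the mixture parameter or approach $\partial\Theta$. The plan is to rule this out by showing that the sublevel sets of $g$ meeting $\mathcal{M}$ are compact, using the integral representation \eqref{NXP} to bound $D^F$ from below by the squared mixture-parameter displacement, together with Condition (M4), which fixes the mixture-parameter domain independently of the constraint values and thereby confines the sequence; this coercivity estimate is the delicate technical step.

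For the inequality \eqref{MGO2}, set $\theta^{**}:=\Pro^{(m),F}_{\mathcal{M}}(\theta)$. A direct expansion of the three divergences defined in \eqref{XZL} gives the three-point identity
\[
D^F(\theta'\|\theta)=D^F(\theta'\|\theta^{**})+D^F(\theta^{**}\|\theta)+\langle \eta(\theta')-\eta(\theta^{**}),\,\theta^{**}-\theta\rangle .
\]
Since the gradient of $g$ with respect to the mixture parameter is $\theta'-\theta$ (because $\nabla^{(m)}[F^*](\eta(\theta'))=\theta'$ by \eqref{M2}), optimality of $\theta^{**}$ over the convex set $\mathcal{M}$ yields the variational inequality $\langle \theta^{**}-\theta,\,\eta(\theta')-\eta(\theta^{**})\rangle\ge 0$ for every $\theta'\in\mathcal{M}$. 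Substituting this into the identity makes the cross term nonnegative and gives exactly \eqref{MGO2}.

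Finally, for \eqref{MGOF} I would argue by contradiction. Suppose $\theta\in\hat{\mathcal{M}}\setminus\mathcal{M}$ but $\theta^{**}\in\interior\mathcal{M}$. Then $\interior\mathcal{M}$ is an open neighborhood of $\theta^{**}$ inside $\hat{\mathcal{M}}$, so $\theta^{**}$ is a local minimizer of $g$ over $\hat{\mathcal{M}}$; by strict convexity it is the unique global minimizer over $\hat{\mathcal{M}}$, namely $\theta$ itself (here $\theta\in\hat{\mathcal{M}}$ and $g(\theta)=0$). This forces $\theta^{**}=\theta\notin\mathcal{M}$, contradicting $\theta^{**}\in\mathcal{M}$. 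Hence $\theta^{**}\in\partial\mathcal{M}$.
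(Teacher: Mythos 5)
Your proposal splits into two parts of very different status: the arguments you give for uniqueness, for \eqref{MGO2}, and for \eqref{MGOF} are correct and complete, but the existence (attainment) claim is left as an unexecuted plan, and that is a genuine gap since everything else in your proof presupposes that the minimizer $\theta^{**}$ exists.

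On the parts you completed, your route is genuinely different from the paper's and arguably cleaner. The paper's Step 2 proves, by contradiction using Condition (M4) and auxiliary one-dimensional exponential and mixture families, that every $\theta'\in\mathcal{M}$ satisfies $\sum_{i=1}^d v^i\partial_i F(\theta')\ge \sum_{i=1}^d v^i\partial_i F(\theta^*)$ with $v=\theta^*-\theta$ (its Eq.~\eqref{JO2}); this is exactly your variational inequality $\langle \theta^{**}-\theta,\,\eta(\theta')-\eta(\theta^{**})\rangle\ge 0$. The paper then derives \eqref{MGO2} and uniqueness in Step 3 by combining the Pythagorean theorem (Proposition \ref{MNL}) with the chain inequality \eqref{KPOT2}, and gets \eqref{MGOF} in Step 4 from the one-dimensional monotonicity Lemma \ref{NLO}. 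You instead obtain the same variational inequality directly from first-order optimality of a convex program — using that $\eta'\mapsto D^{F^*}(\eta(\theta)\|\eta')$ has Hessian $J^{-1}>0$ and gradient $\theta'-\theta$, and that $\eta(\mathcal{M})$ is convex — and then conclude \eqref{MGO2} from the exact three-point identity (which I verified is correct under the paper's definition \eqref{XZL}), while uniqueness and \eqref{MGOF} follow from strict convexity alone. This avoids the auxiliary-family constructions and, for these three claims, does not even need Condition (M4); the trade-off is that the paper's hyperplane property \eqref{JO2} is established as a global geometric fact within its information-geometric toolkit, which it reuses elsewhere.

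The gap is the first assertion of the lemma. Your reduction to the case $\theta\in\hat{\mathcal{M}}$ (via Lemma \ref{Lem7} and the Pythagorean decomposition) is valid — and is a refinement the paper does not bother with — but you then only describe a coercivity/compactness strategy for the minimizing sequence and explicitly defer "the delicate technical step." The paper's Step 1 is precisely your plan carried out bluntly: take a minimizing sequence, assert that the sublevel set $\{\theta'\in\mathcal{M}\,|\,D^F(\theta'\|\theta)\le D^F(\theta^{(1)}\|\theta)\}$ is compact, extract a convergent subsequence, and invoke closedness of $\mathcal{M}$ to put the limit in $\mathcal{M}$. To be fair, the paper gives no proof of that compactness either (in a general Bregman system a sublevel set of $D^F(\cdot\|\theta)$ intersected with a set that is merely closed in the open parameter space $\Theta$ need not be compact, since minimizing sequences can drift toward $\partial\Theta$), so your hesitation correctly identifies a soft spot in the paper's own argument; but as a proof of Lemma \ref{LMGO}, your text as written does not establish attainment, and without it the projection $\Pro^{(m),F}_{\mathcal{M}}(\theta)$ on which \eqref{MGO2} and \eqref{MGOF} are predicated is not known to exist.
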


\begin{proof}
\noindent{\bf Step 1:}\quad
We choose a sequence $\theta^{(n)} \in \partial\mathcal{M}$
such that
\begin{align}
\lim_{n \to \infty}
D^F(\theta^{(n)}\|\theta)=
\inf_{\theta' \in \mathcal{M} } D^F(\theta'\|\theta).
\end{align}
Since 
$\{ \theta' \in \mathcal{M} | D^F(\theta'\|\theta)\le
D^F(\theta^{(1)}\|\theta)\}$ is a compact subset,
there exists a subsequence of $n_m$ such that 
$\theta^{(n_m)}$ converges.
Since $\mathcal{M} $ is a closed subset,
$\theta^*:= \lim_{m \to \infty}\theta^{(n_m)}$ belongs to 
$\mathcal{M} $.

We define the vector $v:=(\theta^*-\theta) \in \mathbb{R}^d$
and the real numbers 
$a^{*}:=
\sum_{i=1}^d v^i \partial_i F(\theta^{*})\in \mathbb{R}$
and 
$b^{*}:=
\sum_{i=1}^d v^i \partial_i F(\theta)\in \mathbb{R}$.
When $a^{*}> b^{*}$,
as shown in Step 2, any element $\theta' \in \mathcal{M}$ satisfies 
\begin{align}
\sum_{i=1}^d v^i \partial_i F(\theta')
\ge a^*.\Label{JO2}
\end{align}
Otherwise, 
any element $\theta' \in \mathcal{M}$ satisfies 
\begin{align}
\sum_{i=1}^d v^i \partial_i F(\theta')
\le a^*.\Label{JO3}
\end{align}

\noindent{\bf Step 2:}\quad
We show only \eqref{JO2} by contradiction because
the relation \eqref{JO3} can be shown in the same way.
We choose an element $\theta' \in \mathcal{M}$ such that
\eqref{JO2} does not hold.
We denote the mixture parameters of $\theta^* $ and $\theta'$
by $\eta^*$ and $\eta'$.
Since $\mathcal{M}$ is convex with respect to the mixture parameter,
$\theta( \eta^*+t( \eta'-\eta^*))$ belongs to $\mathcal{M}$ for $t \in [0,1]$.

We denote the one-dimensional exponential subfamily 
$\{  \theta+ t(\theta^*-\theta)  \}_{t \in \mathbb{R}} \cap \Theta$
by $\mathcal{E}_1$.
We define $a(t):= \sum_{i=1}^d v^i \partial_i F(\theta( \eta^*+t( \eta'-\eta^*)))$.
We denote the $d-1$-dimensional mixture subfamily 
$\{\theta''\in \Theta|\sum_{i=1}^d v^i \partial_i F(\theta'')=a(t) \}$
by $\mathcal{M}(t)$.
Condition (M4) guarantees that the intersection $\mathcal{M}(t)\cap \mathcal{E}_1$
is composed of only one element. We denote the element by 
$\theta(t)$. 
Then, we have
\begin{align}
D^F(\theta( \eta^*+t( \eta'-\eta^*)) \| \theta)=
D^F(\theta( \eta^*+t( \eta'-\eta^*)) \| \theta(t))+
D^F(\theta(t) \| \theta)
\Label{JO2-L}.
\end{align}
We assume that $t>0$ is sufficiently small. 
Since \eqref{JO2} does not hold, the formula \eqref{KPOT} implies 
$D^F(\theta^* \| \theta)-D^F(\theta(t) \| \theta)=O(t)$.
However, 
$D^F(\theta( \eta^*+t( \eta'-\eta^*)) \| \theta(t))=O(t)$.
The combination of these relations shows that
\begin{align}
D^F(\theta( \eta^*+t( \eta'-\eta^*)) \| \theta)
< D^F(\theta^* \| \theta),
\end{align}
which yields the contradiction.

\begin{figure}[htbp]
\begin{center}
  \includegraphics[width=0.7\linewidth]{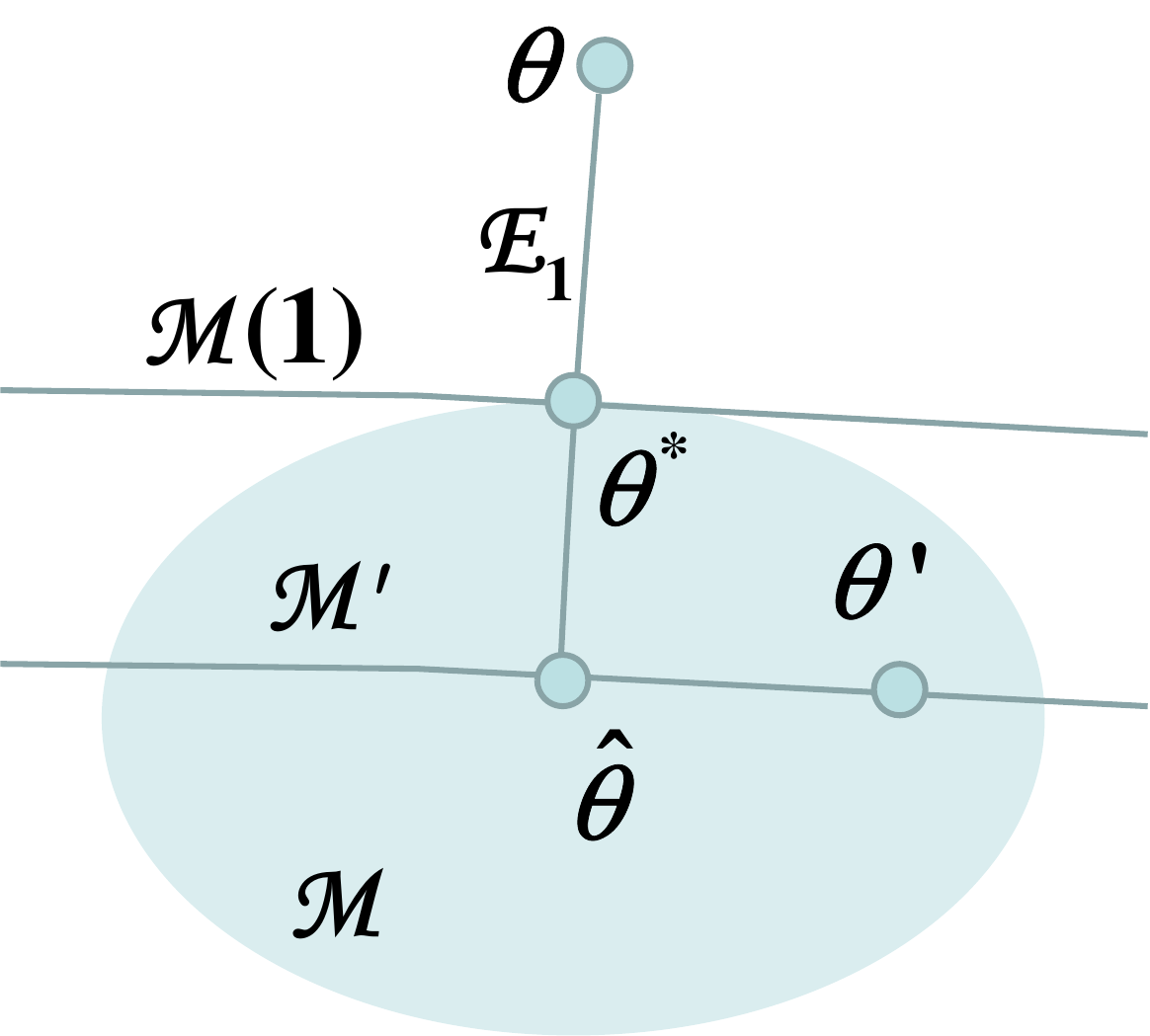}
  \end{center}
\caption{Figure for Step 3 of the proof of Lemma \ref{LMGO}.}
\Label{Fproof}
\end{figure}   

\noindent{\bf Step 3:}\quad
We show the uniqueness of the minimum point and \eqref{MGO2}
only in the case when $a^{*}> b^{*}$.
We can show the other case in the same way.

We define $a':= \sum_{i=1}^d v^i \partial_i F(\theta')$.
We denote the $d-1$-dimensional mixture subfamily 
$\{\theta''\in \Theta|\sum_{i=1}^d v^i \partial_i F(\theta'')=a' \}$
by $\mathcal{M}'$.
Condition (M4) guarantees that the intersection $\mathcal{M}'\cap \mathcal{E}_1$
is composed of only one element. We denote the element by 
$\hat{\theta}$. 
Hence, we have
\begin{align}
D^F(\theta'\| \theta)
\stackrel{(a)}{=}&
D^F(\theta'\| \hat{\theta})+D^F(\hat{\theta}\| \theta)\nonumber  \\
\stackrel{(b)}{\ge}&
D^F(\theta'\| \hat{\theta})
+D^F(\hat{\theta}\| \theta^*) 
+D^F(\theta^*\| \theta)\nonumber  \\
\stackrel{(c)}{=}&
D^F(\theta'\| \theta^*)
+D^F(\theta^*\| \theta),\Label{MGO}
\end{align}
where $(a)$ and $(c)$ follow from Proposition \ref{MNL}, 
and $(b)$ follows from \eqref{KPOT2}.
The relation \eqref{MGO} implies that
$D^F(\theta' \| \theta)> D^F(\theta^*\| \theta)$.
Hence, we obtain the uniqueness of the minimum point.
Also, \eqref{MGO}  implies \eqref{MGO2}.

\noindent{\bf Step 4:}\quad
We show \eqref{MGOF} by contradiction
only in the case when $a^{*}> b^{*}$
because we can show the other case in the same way.
We assume that \eqref{MGOF} does not hold.
We parameterize the exponential family ${\cal E}_1$ as $\{\theta_t\}$ such that
$\theta_0=\theta$ and $\theta_1=\theta_*=
\Pro^{(m),F}_{\mathcal{M}} (\theta)$.
Since $\theta_1 \in \int {\cal M}$,
there is an element $t_0 \in (0,1)$ such that 
$\theta_t \in \partial {\cal M}$.
Hence, Lemma \ref{NLO} guarantees that 
$D^F(\theta_t\| \theta_0)>  D^F(\theta_1\| \theta_0)$,
which contradicts that $\theta_1=\Pro^{(m),F}_{\mathcal{M}} (\theta)$.
\end{proof}

We say that a set of closed convex mixture subfamilies 
$\{ {\mathcal{M}}_{\lambda} \}_{\lambda \in \Lambda}$
covers the boundary $\partial \mathcal{M}$ of a closed convex mixture family $\mathcal{M}$
with subsets $\Lambda_\lambda \subset \Lambda$
and $\lambda \in \Lambda_*:=\Lambda \cup \{0\}$
when 
the following two conditions hold;
The relation 
\begin{align}
\partial {\mathcal{M}}_{\lambda}
=\cup_{\lambda'\in \Lambda_{\lambda}} {\mathcal{M}}_{\lambda'}
\end{align}
holds unless $\partial {\mathcal{M}}_{\lambda} =\emptyset$.
That is, when $\partial {\mathcal{M}}_{\lambda} =\emptyset$,
$\Lambda_{\lambda}$ is the empty set.
The relation 
\begin{align}
{\mathcal{M}}_{\lambda'}
\not\subset {\mathcal{M}}_{\lambda''}
\end{align}
holds for two elements $\lambda',\lambda''\in \Lambda_{\lambda}$.
That is, $0 \in \Lambda_*$ is considered as the index to express ${\cal M}$.
Hence, we define ${\cal M}_0:={\cal M}$.

Also, we define the subset $\bar{\Lambda}_{\lambda}:= 
\{\lambda' \in \Lambda| \exists 
\lambda_2, \ldots, \lambda_{x-1} \hbox{ such that }
\lambda_{i+1} \in \Lambda_{\lambda_i}
\hbox{ with }\lambda_1=\lambda, 
\lambda_x=\lambda' \}.$
In addition, we define the depth $D(\lambda)$ 
of an element $\lambda \in \Lambda$ as follows.
The depth $D(\lambda)$ of an element $\lambda$ is zero
when $\Lambda_{\lambda}$ is the empty set.
Otherwise, 
the depth $D(\lambda)$ of an element $\lambda$ is defined as
$1+\max_{\lambda' \in\Lambda_{\lambda} } D(\lambda')$.
Then, the depth of $\mathcal{M}$ is defined to be
the depth $D(0)$.

\begin{lemma}\Label{CLP}
The sets $\interior M_{\lambda}$ are disjoint, i.e.,
\begin{align}
\interior M_{\lambda} \cap \interior M_{\lambda'}= \emptyset
\hbox{ for }\lambda\neq \lambda' \in \Lambda_*.\Label{CMT}
\end{align}
Also, we have
\begin{align}
\partial \mathcal{M} = \cup_{\lambda' \in 
{\Lambda}} \interior \mathcal{M}_{\lambda'}.\Label{CMT2}
\end{align}
\end{lemma}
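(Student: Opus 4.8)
My plan is to reduce both assertions to standard convex geometry carried out in the mixture‑parameter coordinates of Subsection \ref{S2-2}, where each closed convex mixture subfamily $\mathcal{M}_\lambda$ is a genuine closed convex set, the defining constraints \eqref{const1} are affine, and $\interior\mathcal{M}_\lambda$ is the relative interior of $\mathcal{M}_\lambda$ inside its own affine hull. The structural fact I would establish first is that every $\mathcal{M}_\lambda$ with $\lambda\in\Lambda$ is a proper face of $\mathcal{M}=\mathcal{M}_0$. Indeed, by the definition of a tangent space each child $\mathcal{M}_{\lambda'}$ with $\lambda'\in\Lambda_\mu$ is the intersection of $\mathcal{M}_\mu$ with a supporting $(d-1)$-dimensional mixture family, hence an exposed face of $\mathcal{M}_\mu$ of strictly smaller dimension. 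Since a face of a face is again a face, descending along the finite chains recorded by $\bar{\Lambda}_{\lambda}$ shows that each $\mathcal{M}_\lambda$ is a face of $\mathcal{M}$, and a proper one whenever $\lambda\neq 0$; moreover distinct indices label distinct faces, as two incomparable siblings are distinct by $\mathcal{M}_{\lambda'}\not\subset\mathcal{M}_{\lambda''}$, and an ancestor strictly contains its descendants.

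For the disjointness \eqref{CMT} I would split into two cases. If one of $\lambda,\lambda'$ is an ancestor of the other, say $\mathcal{M}_{\lambda'}$ is reached by descending from $\mathcal{M}_\lambda$, then iterating the boundary relation $\partial\mathcal{M}_\mu=\cup_{\nu\in\Lambda_\mu}\mathcal{M}_\nu$ gives $\mathcal{M}_{\lambda'}\subseteq\partial\mathcal{M}_\lambda$, so $\interior\mathcal{M}_{\lambda'}\subseteq\mathcal{M}_{\lambda'}\subseteq\partial\mathcal{M}_\lambda$ is disjoint from $\interior\mathcal{M}_\lambda$ by definition of interior and boundary; this also covers the case $\lambda=0$. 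If $\lambda$ and $\lambda'$ are incomparable, I would invoke the classical fact that every point of a convex set lies in the relative interior of a unique face: a common point $\theta\in\interior\mathcal{M}_\lambda\cap\interior\mathcal{M}_{\lambda'}$ would be in the relative interior of two faces, forcing $\mathcal{M}_\lambda=\mathcal{M}_{\lambda'}$ and contradicting distinctness.

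For \eqref{CMT2} I would prove both inclusions. The inclusion $\supseteq$ is immediate: each proper face $\mathcal{M}_{\lambda'}$ with $\lambda'\in\Lambda$ lies in $\partial\mathcal{M}$, so $\interior\mathcal{M}_{\lambda'}\subseteq\partial\mathcal{M}$. For $\subseteq$, given $\theta\in\partial\mathcal{M}=\partial\mathcal{M}_0=\cup_{\lambda\in\Lambda_0}\mathcal{M}_\lambda$, I would pick $\lambda_1\in\Lambda_0$ with $\theta\in\mathcal{M}_{\lambda_1}$; if $\theta\in\interior\mathcal{M}_{\lambda_1}$ we are done, and otherwise $\theta\in\partial\mathcal{M}_{\lambda_1}=\cup_{\lambda\in\Lambda_{\lambda_1}}\mathcal{M}_\lambda$, so I descend to a child. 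Because the dimension, equivalently the depth $D(\lambda_i)$, strictly decreases at each step, the chain terminates after finitely many steps at some $\lambda^*$ with $\theta\in\interior\mathcal{M}_{\lambda^*}$ (at a leaf $\partial\mathcal{M}_{\lambda^*}=\emptyset$, so $\mathcal{M}_{\lambda^*}=\interior\mathcal{M}_{\lambda^*}$). Hence $\theta\in\cup_{\lambda'\in\Lambda}\interior\mathcal{M}_{\lambda'}$, and together with the first part this presents $\partial\mathcal{M}$ as the disjoint union of the relative interiors of all proper faces.

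I expect the main obstacle to be the first step, namely rigorously identifying each covering piece $\mathcal{M}_\lambda$ with a face of $\mathcal{M}$, since the covering is specified only through the recursive boundary relation and sibling incomparability rather than through an explicit face structure. This identification rests on the tangent‑space (supporting mixture family) description of the boundary of a closed convex mixture subfamily, which yields exposed faces, and on the transitivity "a face of a face is a face," both of which are available only because of the genuine convexity in the mixture‑parameter coordinates. Once the pieces are known to be exactly the proper faces of $\mathcal{M}$, the two assertions \eqref{CMT} and \eqref{CMT2} are the standard stratification of a convex body's boundary into the relative interiors of its faces.
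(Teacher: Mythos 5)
Your proof takes a genuinely different route from the paper's. The paper argues by induction on the depth $D(\lambda)$, proving simultaneously that distinct siblings $\lambda'\neq\lambda''\in\Lambda_\lambda$ have disjoint interiors and that $\partial\mathcal{M}_\lambda=\cup_{\lambda'\in\bar\Lambda_\lambda}\interior\mathcal{M}_{\lambda'}$; sibling disjointness is extracted from a convexity claim (namely that $\mathcal{M}_{\lambda'}\cap\mathcal{M}_{\lambda''}$ lies in both boundaries), and the decomposition follows by unfolding $\partial\mathcal{M}_\lambda=\cup_{\lambda'\in\Lambda_\lambda}\big(\interior\mathcal{M}_{\lambda'}\cup\partial\mathcal{M}_{\lambda'}\big)$ with the induction hypothesis. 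Your coverage argument is just the pointwise form of this same unfolding (a descent that terminates because the depth strictly decreases), so that half is essentially equivalent. The substantive difference is in the disjointness half: by identifying every covering piece with a face of $\mathcal{M}$ and invoking the classical fact that each point of a convex set lies in the relative interior of a unique face, you obtain \eqref{CMT} for all pairs in $\Lambda_*$, including incomparable non-sibling pairs (``cousins''). That case is part of the claim but is not explicitly delivered by the paper's induction, which treats only siblings and, implicitly through the decomposition, ancestor--descendant pairs; so on this half your argument is actually the more complete one.

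The genuine gap is the one you flagged yourself, and it cannot be repaired from the paper's definitions: the identification of the pieces $\mathcal{M}_\lambda$ with faces of $\mathcal{M}$ does not follow from the definition of ``covers the boundary''. That definition requires only $\partial\mathcal{M}_\lambda=\cup_{\lambda'\in\Lambda_\lambda}\mathcal{M}_{\lambda'}$ together with pairwise incomparability, and the tangent-space property of Subsection \ref{4Y} asserts only that every boundary point admits a supporting $(d-1)$-dimensional mixture family; neither says that the covering pieces arise as intersections of $\mathcal{M}_\lambda$ with such supporting families, i.e.\ are exposed faces. Indeed the lemma itself is false under the literal definition: for a square, take the five pieces $[0,3/4]\times\{0\}$, $[1/4,1]\times\{0\}$, and the remaining three edges (with singleton endpoints as their children). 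All covering conditions hold, yet the first two pieces are not faces and their interiors overlap on $(1/4,3/4)\times\{0\}$, violating \eqref{CMT}. So the face property must be treated as an implicit standing hypothesis on the covering, which is what holds in the paper's applications, where boundary pieces come from turning inequality constraints into equalities. Note, however, that the paper's own proof conceals the same assumption at the same spot: its assertion that convexity of $\mathcal{M}_\lambda$ forces $\mathcal{M}_{\lambda'}\cap\mathcal{M}_{\lambda''}$ into $\partial\mathcal{M}_{\lambda'}\cap\partial\mathcal{M}_{\lambda''}$ is exactly the face property in disguise and fails in the same counterexample. Modulo that shared hypothesis, your argument is correct.
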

\begin{proofof}{Lemma \ref{CLP}}
We show the following statement by induction for 
depth $D(\lambda)$;
The relations 
\begin{align}
\interior \mathcal{M}_{\lambda'} \cap 
\interior \mathcal{M}_{\lambda''}
&= \emptyset
\hbox{ for }\lambda'\neq \lambda'' \in \Lambda_{\lambda}.\Label{CMT9} \\
\partial \mathcal{M}_{\lambda} &= \cup_{\lambda'  \in \bar\Lambda_{\lambda}} 
\interior \mathcal{M}_{\lambda'}.\Label{CMT10}
\end{align}

The relations \eqref{CMT9} and \eqref{CMT10} are trivial when
$D(\lambda)=0$.
In the following, we show 
the relations \eqref{CMT9} and \eqref{CMT10} for 
$D(\lambda)=k$
when they hold for $D(\lambda)\le k-1$.

The convexity of $\mathcal{M}_\lambda$ guarantees that
$\mathcal{M}_{\lambda'}\cap \mathcal{M}_{\lambda''}
\in \partial \mathcal{M}_{\lambda'},\partial \mathcal{M}_{\lambda''}$
for $\lambda',\lambda'' \in \Lambda_\lambda$.
Hence, we have \eqref{CMT9}.
For $\lambda' \in \Lambda_{\lambda}$,
the assumption of induction implies
\begin{align}
\partial \mathcal{M}_{\lambda'}
= \cup_{\lambda'' \in \bar{\Lambda}_{\lambda'}} 
\interior \mathcal{M}_{\lambda''}.\Label{CMT8}
\end{align}
Thus,
\begin{align}
\partial \mathcal{M}_{\lambda}
=&
\cup_{\lambda'\in \Lambda_{\lambda}} {\mathcal{M}}_{\lambda'}
=\cup_{\lambda'\in \Lambda_{\lambda}} 
\big(\interior {\mathcal{M}}_{\lambda'} \cup \partial {\mathcal{M}}_{\lambda'}\big) \nonumber \\
=& 
\cup_{\lambda'\in \Lambda_{\lambda}} 
\Big(\interior {\mathcal{M}}_{\lambda'} \cup 
\big(\cup_{\lambda'' \in \bar{\Lambda}_{\lambda'}} 
\interior \mathcal{M}_{\lambda''} \big)\Big) \nonumber \\
=& \cup_{\lambda' \in 
\bar{\Lambda}_{\lambda}} \interior \mathcal{M}_{\lambda'}.\Label{CMT6}
\end{align}
\end{proofof}

\if0
$\partial \mathcal{M}= \cup_{j=1}^m {\mathcal{M}}_{j}$.

We denote the extended mixture family of 
${\mathcal{M}}_{j}$ by $\hat{\mathcal{M}}_{j}$.
$\partial {\mathcal{M}}_{j} =\cup_{j'\in \Lambda_{j}} {\mathcal{M}}_{j'}$
$\Lambda_{j}:=\{j'(\neq j) | {\mathcal{M}}_{j'} \subset {\mathcal{M}}_{j} \}$

When $\mathcal{M}_j$ is composed of one element, 
$\hat{\mathcal{M}}_j:=\mathcal{M}_j$
and $\partial \mathcal{M}_j=\emptyset$.

When the boundary $\partial \mathcal{M}$ is composed of a finite number of closed mixture families, 
\fi
Under the above case, 
the point $\Pro^{(m),F}_{\mathcal{M}} (\theta)$ for $\theta \in \Theta$ can be characterized as follows.

\begin{lemma}\Label{LO10}
Assume that a set of closed convex mixture subfamilies 
$\{ {\mathcal{M}}_{\lambda} \}_{\lambda \in \Lambda}$
covers the boundary $\partial \mathcal{M}$ of a closed convex mixture family $\mathcal{M}$
with subsets $\Lambda_\lambda \subset \Lambda$
and $\lambda \in \Lambda_*:=\Lambda \cup \{0\}$.
We denote the extended mixture subfamily of 
${\mathcal{M}}_{\lambda}$ by $\hat{\mathcal{M}}_{\lambda}$
for $\lambda \in \Lambda_*$.
For $\theta \in \Theta$, we define $\lambda_0:= \argmin_{\lambda '
\in \Lambda_*} 
\{D^F(\Pro^{(m),F}_{\hat{\mathcal{M}}_{\lambda'}} (\theta) \|  \theta)|
\Pro^{(m),F}_{\hat{\mathcal{M}}_{\lambda'}} (\theta) \in
\interior \mathcal{M}_{\lambda'}
\}$. 
Then, we have
$\Pro^{(m),F}_{\mathcal{M}} (\theta)=
\Pro^{(m),F}_{\hat{\mathcal{M}}_{\lambda_0}} (\theta)$.
\if0
the following conditions are equivalent for $\theta \in \Theta$ and $j'=0, 1\ldots, m$.
\begin{description}
\item[(F1)]
$\Pro^{(m),F}_{\mathcal{M}_{j'}} (\theta)$ equals $\Pro^{(m),F}_{\mathcal{M}} (\theta)$. 
\item[(F2)]
$\Pro^{(m),F}_{\mathcal{M}_{j'}} (\theta)$ belongs to $\mathcal{M}$.
\end{description}
In addition, 
there exists at least one element $j'$ to satisfy the above condition 
among $0,1,\ldots, m$.
\fi
\end{lemma}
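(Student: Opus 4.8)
The plan is to reduce the claim to the partition of $\mathcal{M}$ furnished by Lemma \ref{CLP} and to the local-to-global characterization of the $m$-projection in Lemma \ref{LA2}. First I would invoke Lemma \ref{LMGO}: since $\mathcal{M}$ is a closed convex mixture subfamily and Condition (M4) holds, the minimizer $\theta^{**}:=\Pro^{(m),F}_{\mathcal{M}} (\theta)$ exists and is unique. Writing $\mathcal{M}=\interior \mathcal{M}\cup \partial\mathcal{M}$ and combining with \eqref{CMT} and \eqref{CMT2}, the sets $\{\interior \mathcal{M}_{\lambda}\}_{\lambda \in \Lambda_*}$ form a disjoint cover of $\mathcal{M}$ (recall ${\cal M}_0={\cal M}$). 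Hence $\theta^{**}$ lies in exactly one cell $\interior \mathcal{M}_{\lambda_1}$ for a unique $\lambda_1 \in \Lambda_*$.

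The heart of the argument is to prove $\theta^{**}=\Pro^{(m),F}_{\hat{\mathcal{M}}_{\lambda_1}} (\theta)$. Because $\theta^{**}$ lies in the relative interior of $\mathcal{M}_{\lambda_1}$ and the extended family $\hat{\mathcal{M}}_{\lambda_1}$ has the same dimension as $\mathcal{M}_{\lambda_1}$, a sufficiently small neighborhood of $\theta^{**}$ inside $\hat{\mathcal{M}}_{\lambda_1}$ is contained in $\interior \mathcal{M}_{\lambda_1}\subseteq \mathcal{M}$. Since $\theta^{**}$ minimizes $D^F(\cdot\|\theta)$ over all of $\mathcal{M}$, it is therefore a local minimizer over $\hat{\mathcal{M}}_{\lambda_1}$. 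Applying the equivalence (M0)$\Leftrightarrow$(M1) of Lemma \ref{LA2} to the mixture subfamily $\hat{\mathcal{M}}_{\lambda_1}$ upgrades this local minimum to the global $m$-projection, giving $\theta^{**}=\Pro^{(m),F}_{\hat{\mathcal{M}}_{\lambda_1}} (\theta)$. In particular the projection onto $\hat{\mathcal{M}}_{\lambda_1}$ lands in $\interior \mathcal{M}_{\lambda_1}$, so $\lambda_1$ is an admissible index in the definition of $\lambda_0$.

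To finish, I would show that $\lambda_1$ realizes the argmin and that its value pins down $\theta^{**}$. For any admissible index $\lambda'$ the point $\Pro^{(m),F}_{\hat{\mathcal{M}}_{\lambda'}} (\theta)$ lies in $\interior \mathcal{M}_{\lambda'}\subseteq \mathcal{M}_{\lambda'}\subseteq \mathcal{M}$, where the last inclusion holds because each covering cell satisfies $\mathcal{M}_{\lambda'}=\overline{\interior \mathcal{M}_{\lambda'}}\subseteq \mathcal{M}$ by \eqref{CMT2} and the closedness of $\mathcal{M}$ (and trivially for $\lambda'=0$). Hence the optimality of $\theta^{**}$ over $\mathcal{M}$ forces
\begin{align}
D^F(\Pro^{(m),F}_{\hat{\mathcal{M}}_{\lambda'}} (\theta)\|\theta)\ge D^F(\theta^{**}\|\theta)
\end{align}
for every admissible $\lambda'$, while $\lambda_1$ attains $D^F(\theta^{**}\|\theta)$. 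Thus the minimum over admissible indices equals $D^F(\theta^{**}\|\theta)$, and any minimizing index $\lambda_0$ yields a point of $\mathcal{M}$ achieving the global minimum of $D^F(\cdot\|\theta)$ over $\mathcal{M}$; the uniqueness part of Lemma \ref{LMGO} then identifies $\Pro^{(m),F}_{\hat{\mathcal{M}}_{\lambda_0}} (\theta)$ with $\theta^{**}=\Pro^{(m),F}_{\mathcal{M}} (\theta)$, as required.

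The step I expect to be the main obstacle is the local-minimality argument: carefully justifying that the relative-interior membership $\theta^{**}\in \interior \mathcal{M}_{\lambda_1}$ together with the equal-dimension property of $\hat{\mathcal{M}}_{\lambda_1}$ produces a genuine neighborhood inside $\mathcal{M}$, so that Lemma \ref{LA2} applies. Everything else is bookkeeping with the partition of Lemma \ref{CLP} and the global optimality of $\theta^{**}$.
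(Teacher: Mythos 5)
Your proposal is correct and follows essentially the same route as the paper's own proof: locate $\Pro^{(m),F}_{\mathcal{M}}(\theta)$ in a unique cell $\interior\mathcal{M}_{\lambda_1}$ via Lemma \ref{CLP}, use Lemma \ref{LA2} to identify it with $\Pro^{(m),F}_{\hat{\mathcal{M}}_{\lambda_1}}(\theta)$, and then compare against all other admissible indices using the global optimality over $\mathcal{M}$ together with uniqueness from Lemma \ref{LMGO}. Your write-up merely spells out two steps the paper leaves implicit (the neighborhood argument behind the local-to-global upgrade, and the inclusion $\mathcal{M}_{\lambda'}\subseteq\mathcal{M}$), which is a refinement, not a different argument.
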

\begin{proof}
Due to Lemma \ref{CLP}, 
there uniquely exists $\lambda_0 \in \Lambda$ such that
$\Pro^{(m),F}_{\mathcal{M}} (\theta) \in \interior \mathcal{M}_{\lambda_0}$.
Then,
$\Pro^{(m),F}_{\mathcal{M}} (\theta)=\argmin_{\theta' \in \interior \mathcal{M}_{\lambda_0}} D^F(\theta' \| \theta )$.
Since Lemma \ref{LA2} guarantees that 
$\argmin_{\theta' \in \interior \mathcal{M}_{\lambda_0}} 
D^F(\theta' \| \theta )=\argmin_{\theta' \in  \hat{\mathcal{M}}_{\lambda_0}} D^F(\theta' \| \theta )$,
we have 
$\Pro^{(m),F}_{\mathcal{M}} (\theta)=\Pro^{(m),F}_{\hat{\mathcal{M}}_{\lambda_0}} (\theta)$.

When $\lambda \in \Lambda$ satisfies the condition 
$\Pro^{(m),F}_{\hat{\mathcal{M}}_{\lambda}} (\theta) \in
\interior \mathcal{M}_{\lambda}$,
we have 
$D^F(\Pro^{(m),F}_{\hat{\mathcal{M}}_{\lambda_0}} (\theta) \|  \theta)
=D^F(\Pro^{(m),F}_{\mathcal{M}} (\theta) \|  \theta)
\le D^F(\Pro^{(m),F}_{\hat{\mathcal{M}}_{\lambda}} (\theta) \|  \theta)$
because $\Pro^{(m),F}_{\hat{\mathcal{M}}_{\lambda}} (\theta) \in
\mathcal{M}$.
Hence, we obtain the desired statement.
\end{proof}

\section{Examples of Bregman divergence}\Label{S3}
\subsection{Classical system}\Label{4A}
We consider the set of probability distributions on the finite set ${\cal X}=\{1, \ldots, n\}$.
We focus on $d$ linearly independent functions $f_1, \ldots, f_{d}$ defined on ${\cal X}$,
where the linear space spanned by $f_1, \ldots, f_{d}$ does not contain a constant function and $d \le n-1$.
Then, we define the $C^{\infty}$ strictly convex function 
$\mu$ on $\mathbb{R}^{d}$ as
$\mu(\theta):= \log (\sum_{x \in {\cal X}} \exp (  \sum_{j=1}^{d} \theta^j f_j(x)  )$, which yields
the Bregman divergence system $(\mathbb{R}^{d}, \mu, D^\mu)$. 
When $d=n-1$,
any probability distribution with full support on ${\cal X}$
can be written as $P_\theta$, which is defined as
$P_\theta(x):= 
\exp \Big(  (\sum_{j=1}^{n-1} \theta^j f_j(x)  ) - \mu(\theta) \Big)$.
It is known that the KL divergence 
equals the Bregman divergence of the potential function $\mu$ \cite[Section 3.4]{Amari-Nagaoka}, 
i.e., we have
\begin{align}
D^{\mu}(\theta\|\theta')= D(P_\theta\|P_{\theta'})
\Label{MGA}
\end{align}
for $\theta \in \mathbb{R}^{d}$, where
the KL divergence $D(q\| p)$ is defined as
\begin{align}
D(q\| p)=\sum_{\omega} p(\omega)
(\log p(\omega) - \log q(\omega)).
\end{align}

\begin{example}\Label{ALO}
When $d=n-1$, 
the Bregman divergence system $(\mathbb{R}^{d}, \mu, D^\mu)$ 
describes the set ${\cal P}_{\X}$ 
of distributions on $\X$ with full support and
the KL divergence.
\end{example}

\begin{example}
When $\X$ is given as $\X_1 \times \X_2$ with $n_i=|\X_i|$,
$f_i$ is a function on $\X_1$ or $\X_2$, and $d=n_1+n_2-2$,
the Bregman divergence system $(\mathbb{R}^{d}, \mu, D^\mu)$ 
describes the set ${\cal P}_{\X_1}\times {\cal P}_{\X_2}$ of independent distributions on  $\X_1 \times \X_2$.
\end{example}

\begin{example}
When $\X$ is given as $\X_1 \times \X_2\times \X_3$ with $n_i=|\X_i|$,
$f_i$ is a function on $\X_1,\X_2$ or $\X_2,\X_3$, and $d=n_2 (n_1+n_3-2)+n_2-1$,
the Bregman divergence system $(\mathbb{R}^{d}, \mu, D^\mu)$ 
describes the set ${\cal P}_{X_1-X_2-X_3}$ of distributions on 
$\X_1 \times \X_2 \times X_3$ to satisfy the Markovian condition $X_1-X_2-X_3$.
\end{example}

When the parameter $\theta$ is limited to $(\bar\theta,\underbrace{0, \ldots, 0}_{d-l})$ with 
$\bar\theta \in \mathbb{R}^l$,
the set of distributions $P_\theta$ forms an exponential subfamily.
Also, when the linear space spanned by $d-k$ linearly independent functions $g_1, \ldots, g_{d-k}$
does not contain a constant function,
for $d-k$ constants $a_1, \ldots, a_{d-k}$, 
the following set of distributions forms a mixture subfamily; 
\begin{align}
\Big\{P_\theta \Big|
\sum_{x \in {\cal X}}g_i(x) P_\theta(x)=a_i \hbox{ for } i=1, \ldots, d-k \Big\}.
\end{align}

When we make linear constraints as explained in Subsection \ref{APB},  
changing the potential function ${\mu}$ in the way as \eqref{XPA},
we can recover \eqref{MGA}.

For the possibility of the projection, we have the following lemma.
\begin{lemma}\Label{LOS}
The Bregman divergence system $(\mathbb{R}^{d}, \mu, D^\mu)$ defined in this subsection satisfies
Conditions (E4) and (M4).
\end{lemma}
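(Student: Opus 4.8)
The plan is to verify the two conditions separately, and I would first dispose of Condition (M4) since it is essentially automatic here: the natural parameter space of this system is the whole of $\mathbb{R}^{d}$. Indeed, for any linearly independent $v_1,\ldots,v_l$ and any base point $\theta_0$, the exponential subfamily is parametrized by $\phi_{\mathcal{E}}^{(e)}(\bar\theta)=\theta_0+\sum_{j=1}^{l}\bar\theta^{j}v_j$, and this lies in $\Theta=\mathbb{R}^{d}$ for every $\bar\theta\in\mathbb{R}^{l}$. Hence $\Theta_{\mathcal{E}(\theta_0)}=\mathbb{R}^{l}$ for all $\theta_0$, which trivially does not depend on $\theta_0$. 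This is Condition (M3) for arbitrary $l$ and arbitrary linearly independent vectors, i.e.\ (M4).

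The real content is Condition (E4), and I would set it up as follows. Fix linearly independent $v_1,\ldots,v_l$, put $k:=d-l$, complete $\{v_j\}$ to a basis by setting $u_{k+j}=v_j$ and choosing $u_1,\ldots,u_k$ so that $U=(u_1\ \ldots\ u_d)$ is invertible, and write $G:=\mu\circ U$. Since $\mu(\theta)=\log\sum_{x}\exp(\sum_j\theta^{j}f_j(x))$, we get $G(\bar\theta)=\log\sum_{x}\exp(\sum_j\bar\theta^{j}\tilde f_j(x))$ with $\tilde f(x):=U^{T}f(x)$, which is again a log-partition function defined on all of $\mathbb{R}^{d}$. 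By \eqref{const1-U}, $\Theta_{\mathcal{M}(a)}$ is the set of $c=(\bar\theta^{1},\ldots,\bar\theta^{k})\in\mathbb{R}^{k}$ for which there exists $w=(\bar\theta^{k+1},\ldots,\bar\theta^{d})\in\mathbb{R}^{l}$ with $\partial_{k+j}G(c,w)=a_j$ for $j=1,\ldots,l$. Setting $g_j(x):=\tilde f_{k+j}(x)=\sum_i v_j^{i}f_i(x)$, $g(x):=(g_1(x),\ldots,g_l(x))$, and $G_c(w):=G(c,w)$, this says exactly that $a$ lies in the range $R_c$ of the partial moment map $w\mapsto\nabla G_c(w)=(\partial_{k+j}G(c,w))_{j=1}^{l}$. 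So everything reduces to showing that $R_c$ is independent of $c$.

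The main obstacle is precisely this invariance, and I would argue it from standard exponential-family theory. For fixed $c$ one computes $G_c(w)=\log\sum_{x}b_x(c)\exp(\sum_{j=1}^{l}w^{j}g_j(x))$ with strictly positive base weights $b_x(c):=\exp(\sum_{i=1}^{k}c^{i}\tilde f_i(x))$, so $G_c$ is the log-partition function of a finite-support exponential family with sufficient statistic $g$ and fully supported base measure $b_{\cdot}(c)$. Because the span of $f_1,\ldots,f_d$ contains no constant and the $v_j$ are linearly independent, the functions $1,\sum_i v_1^{i}f_i,\ldots,\sum_i v_l^{i}f_i$ are linearly independent, so $\{g(x)\}_{x\in\mathcal{X}}$ affinely spans $\mathbb{R}^{l}$ and $G_c$ is strictly convex. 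For such a family the moment map is a diffeomorphism of $\mathbb{R}^{l}$ onto $R_c=\interior\,\mathrm{conv}\{g(x)\mid x\in\mathcal{X}\}$; since varying $c$ only rescales the strictly positive weights $b_x(c)$, it leaves the support $\{g(x)\}$ and hence this interior unchanged. Thus $R_c=R:=\interior\,\mathrm{conv}\{g(x)\}$ for all $c$, so $\Theta_{\mathcal{M}(a)}=\mathbb{R}^{k}$ when $a\in R$ (equivalently when $\mathcal{M}(a)$ is nonempty) and $\Theta_{\mathcal{M}(a)}=\emptyset$ otherwise. In either case it is independent of $a$, which is Condition (E3) for these vectors and all $l$, i.e.\ (E4).
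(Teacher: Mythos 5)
Your treatment of Condition (E4) is correct and is essentially the paper's own route: after completing $v_1,\ldots,v_l$ to a basis and passing to $G=\mu\circ U$, everything reduces to the fact that the partial moment map $w\mapsto(\partial_{k+j}G(c,w))_{j=1}^{l}$ has range equal to the interior of the convex hull of $\{g(x)\}_{x\in\mathcal{X}}$, independently of the fixed coordinates $c$. The paper proves this fact from scratch as Lemma~\ref{LOS2} (Steps 1--3 of its proof), whereas you invoke it as standard exponential-family theory; that substitution is legitimate, and your check that $1,g_1,\ldots,g_l$ are linearly independent (using that the span of the $f_i$ contains no constant) is exactly the minimality hypothesis that the standard result requires.

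The gap is in your one-line disposal of (M4). You read (M3) literally as a statement about the natural-parameter domain $\Theta_{\mathcal{E}(\theta_0)}$ of \eqref{const1-UE}, which is indeed trivially $\mathbb{R}^{l}$ when $\Theta=\mathbb{R}^{d}$. But under that reading (M3) is vacuous for every Bregman system with full parameter space, and it could not play the role the paper assigns to it: Lemma~\ref{Lem7} uses (M3) precisely to conclude that, for \emph{every} base point $\theta_0$, the exponential family $\mathcal{E}(\theta_0)$ contains an element whose \emph{mixture} parameter equals the prescribed $(a_1,\ldots,a_l)$. So the intended content of (M3) --- and what the paper's own proof of Lemma~\ref{LOS} actually establishes via \eqref{MLJ} --- is that the mixture-parameter image $\Xi_{\mathcal{E}(\theta_0)}$ (the set denoted $\Xi(v_1,\ldots,v_l)$) is the same for all $\theta_0$; this is what guarantees existence of $m$-projections, and it is not automatic. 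Fortunately, your own (E4) argument already contains the missing fact: writing $\theta_0=U(c_0,w_0)$, one has $\mathcal{E}(\theta_0)=\{U(c_0,w)\,|\,w\in\mathbb{R}^{l}\}$ and hence $\Xi_{\mathcal{E}(\theta_0)}=R_{c_0}$, so your identity $R_c=R$ for all $c$ is exactly the statement that $\Xi_{\mathcal{E}(\theta_0)}$ does not depend on $\theta_0$. The repair is therefore a single sentence --- route the $c$-invariance of $R_c$ to (M3)/(M4) instead of declaring them trivial --- but as written, your proof of (M4) does not establish what the paper needs.
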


To show this lemma, we prepare the following lemma.
\begin{lemma}\Label{LOS2}
For $(\theta^1, \ldots, \theta^{d-l})\in \mathbb{R}^{d-l}$
and $\xi:=(\xi^{1}, \ldots, \xi^{l})\in \mathbb{R}^{l}$,
we define 
\begin{align}
\tau_{(\theta^1, \ldots, \theta^{d-l})}(\xi):=
\Bigg(\frac{\sum_{x \in \mathcal{X}}
f_{d-j}(x) \exp \Big(  \sum_{i=1}^{d} \theta^i f_{i}(x)  \Big) }{\mu(\theta)}
\Bigg)_{j=1}^{l} \in \mathbb{R}^l
\end{align}
with $\theta^{d-l+i}=\xi^i$.
Then, the set $\mathcal{T}_{1,(\theta^1, \ldots, \theta^{d-l})}:=
\{ \tau_{(\theta^1, \ldots, \theta^{d-l})}(\xi) | \xi \in \mathbb{R}^{l} \}$
equals the inner $\mathcal{T}_2$ of the convex full of 
$\{ (f_{d-j}(x))_{j=1}^{l}\}_{x \in \mathcal{X}}$.
\end{lemma}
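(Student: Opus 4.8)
The plan is to recognise $\tau_{(\theta^1,\ldots,\theta^{d-l})}$ as the mean-parameter map of an auxiliary exponential family and then to invoke the classical fact that such a map is a bijection onto the interior of the convex hull of the support of its sufficient statistic. Writing $w(x):=(f_{d-j}(x))_{j=1}^{l}\in\mathbb{R}^{l}$ for the statistic carried by the free parameters $\xi$, and $c(x):=\exp\!\big(\sum_{i=1}^{d-l}\theta^i f_i(x)\big)>0$ for the contribution of the fixed coordinates, one checks that $\tau(\xi)=\sum_{x}P_\theta(x)\,w(x)=\mathrm{E}_\theta[w]$, i.e. $\tau$ is genuinely an expectation (here the denominator is read as the partition function $\sum_x\exp(\sum_i\theta^i f_i(x))$, so that $\xi$ enters only through the factor $\exp(\langle\xi,w(x)\rangle)$ weighted against the everywhere-positive $c(x)$). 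Because $c(x)>0$ for every $x\in\mathcal{X}$, the statistic has full support $\mathcal{X}$ no matter what the fixed coordinates are, which is exactly why $\mathcal{T}_2=\interior(\mathrm{conv}\{w(x)\}_{x\in\mathcal{X}})$ carries no dependence on $(\theta^1,\ldots,\theta^{d-l})$. I would then prove the two inclusions separately.

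For $\mathcal{T}_{1,(\theta^1,\ldots,\theta^{d-l})}\subseteq\mathcal{T}_2$: each $\tau(\xi)=\sum_x P_\theta(x)\,w(x)$ is a convex combination of the points $w(x)$ with all weights $P_\theta(x)$ strictly positive, hence lies in the relative interior of $\mathrm{conv}\{w(x)\}$. To upgrade relative interior to interior it suffices that $\mathrm{conv}\{w(x)\}$ be $l$-dimensional, i.e. that $\{w(x)\}_x$ affinely span $\mathbb{R}^l$; equivalently that $1,f_{d-1},\ldots,f_{d-l}$ be linearly independent on $\mathcal{X}$. This is guaranteed by the standing hypotheses that $f_1,\ldots,f_d$ are linearly independent and that their span contains no nonzero constant, so the inclusion follows.

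For the reverse inclusion $\mathcal{T}_2\subseteq\mathcal{T}_{1,(\theta^1,\ldots,\theta^{d-l})}$, I would use a coercivity/variational argument. Fix $m\in\mathcal{T}_2$ and set
\[
\Phi(\xi):=\log\sum_{x\in\mathcal{X}}c(x)\,e^{\langle\xi,\,w(x)\rangle}-\langle\xi,\,m\rangle .
\]
Its gradient is $\nabla\Phi(\xi)=\tau(\xi)-m$ and its Hessian is the covariance matrix of $w$ under $P_\theta$, which is positive definite by the same affine independence of $\{w(x)\}$; thus $\Phi$ is strictly convex, and any minimiser $\xi^*$ satisfies $\tau(\xi^*)=m$ and is unique. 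The crux is to show $\Phi$ attains its minimum, for which I would establish coercivity: writing $\xi=ru$ with $\|u\|=1$ and bounding the log-sum-exp below by its largest term gives $\Phi(ru)\ge r\big(\max_x\langle u,w(x)\rangle-\langle u,m\rangle\big)+\mathrm{const}$. Since $m$ is interior to $\mathrm{conv}\{w(x)\}$, the support-function inequality is strict, $\langle u,m\rangle<\max_x\langle u,w(x)\rangle$ for every unit $u$; by compactness of the unit sphere this gap has a positive minimum $\delta$, whence $\Phi(ru)\ge r\delta+\mathrm{const}\to\infty$.

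I expect the coercivity step to be the main obstacle: the easy inclusion, the strict convexity, and the uniqueness are all routine, whereas converting ``$m$ lies in the interior'' into the uniform strict support-function gap that forces $\Phi\to\infty$ is the one place where the interior hypothesis is used essentially. A continuous, strictly convex, coercive function on $\mathbb{R}^l$ attains a unique global minimum, and the vanishing of $\nabla\Phi$ there produces the required $\xi^*$ with $\tau(\xi^*)=m$. This closes $\mathcal{T}_2\subseteq\mathcal{T}_{1,(\theta^1,\ldots,\theta^{d-l})}$ and, combined with the first inclusion, proves the claimed equality.
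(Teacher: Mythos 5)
Your proof is correct, but it follows a genuinely different route from the paper's. You use the classical Legendre-duality argument: after identifying $\tau$ as the mean-parameter map of an auxiliary exponential family (correctly reading the denominator as the partition function $e^{\mu(\theta)}$), you fix a target $m$ in the interior, minimize $\Phi(\xi)=\log\sum_x c(x)e^{\langle \xi, w(x)\rangle}-\langle \xi,m\rangle$, and extract a preimage from the stationarity condition; the only substantive work is coercivity, which you obtain from the uniform strict support-function gap $\max_x\langle u,w(x)\rangle-\langle u,m\rangle\ge\delta>0$ guaranteed by interiority and compactness of the unit sphere, while strict convexity (positive-definite covariance, using that the $f_{d-j}$ together with constants are linearly independent) gives uniqueness. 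The paper instead argues asymptotically and topologically: its Steps 1 and 2 analyze the limits of $\tau(t\xi)$ along rays as $t\to\infty$, showing these limits land on (convex combinations supported on) the faces of the hull, and its Step 3 concludes that the images $\mathcal{D}(t)$ of large balls are compact sets whose boundaries approach the hull's boundary, so their union fills the interior. Your approach buys a self-contained, pointwise existence proof for each interior $m$ that is fully rigorous as stated; the paper's approach yields finer information about which boundary points are reached along which directions, but its final covering step leans on an unstated topological fact (that a continuous image of a ball whose boundary image hugs the hull boundary must cover the deep interior), so your variational argument is arguably the cleaner of the two. One small point worth noting: your first inclusion, which you prove via full affine span, is simply declared trivial in the paper, and your uniqueness observation, while true, is not needed for the set equality.
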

\begin{proofof}{Lemma \ref{LOS2}}
In this proof, $\mathcal{T}_{1,(\theta^1, \ldots, \theta^{d-l})}$ 
and $\tau_{(\theta^1, \ldots, \theta^{d-l})}(\xi)$
are simplified to $\mathcal{T}_1$ and $\tau(\xi)$.
Since $\mathcal{T}_1\subset \mathcal{T}_2$ is trivial, we show only the opposite relation.

\noindent{\bf Step 1:}\quad
Any element in the boundary of the convex full of $\{ (f_{d-j}(x))_{j=1}^{l}\}_{x \in \mathcal{X}}$
is written as $\Big(\sum_{i=1}^{l'} p_i f_{d-j}(x_i)\Big)_{j=1}^{l}$ with extremal points $(f_{d-j}(x_i))_{j=1}^{l}$
with at most $l$ elements $x_i \in \mathcal{X}$ and at most  $l$ positive numbers $p_i$, where $i=1, \ldots, l'\le l$.
There exists an element $\xi_* \in \mathbb{R}^{l}$ such that
$ \max_{x \in \mathcal{X}} \sum_{j=1}^l \xi_*^j f_{d-j}(x)=
\sum_{j=1}^l \xi_*^j f_{d-j}(x_i)=1$ and 
$\sum_{j=1}^l \xi_*^j f_{d-j}(x)<1$ unless 
$(f_{d-j}(x))_{j=1}^{l}$ is written as a convex combination of 
$\{(f_{d-j}(x_i))_{j=1}^{l}\}_{i=1}^{l'}$.
For any $x_i$, there exists an element $\xi(x_i)_* \in \mathbb{R}^{l}$ such that
$ \max_{x \in \mathcal{X}} \sum_{j=1}^l \xi(x_i)_*^j f_{d-j}(x)=
\sum_{j=1}^l \xi(x_i)_*^j f_{d-j}(x_i)=1$,
$\sum_{j=1}^l \xi(x_i)_*^j f_{d-j}(x)<1$ for $x \neq x_i$
and
$\sum_{j=1}^l \xi(x_i)_*^j f_{d-j}(x_{i'})>0$ for $i' \neq i$.
Then, there exist elements $t_i>0$ such that
\begin{align}
\frac{\exp\Big(\sum_{i'=1}^{l'}\sum_{j=1}^l t_{i'}\xi(x_{i'})_*^j f_{d-j}(x_i) + \sum_{j=1}^{d-l}\theta^j f_j(x_i)   \Big)}
{\sum_{i''=1}^{l'}
\exp\Big(\sum_{i'=1}^{l'}\sum_{j=1}^l t_{i'}\xi(x_{i'})_*^j f_{d-j}(x_{i''}) + \sum_{j=1}^{d-l}\theta^j f_j(x_{i''})   \Big)}
=p_i.
\end{align}
Hence, we have
\begin{align}
\tau \Big(t \xi(x_0) + \sum_{i'=1}^{l'} t_{i'}\xi(x_{i'})_* \Big)
\to \Big(\sum_{i=1}^{l'} p_i f_{d-j}(x_i)\Big)_{j=1}^{l}
\end{align}
as $t \to \infty$.

\noindent{\bf Step 2:}\quad
Conversely, for any $\xi \in \mathbb{R}^{l}$, we can choose 
at most $l$ elements $x_1, \ldots, x_{l'} \in \mathcal{X}$ such that\par
\noindent$ \max_{x \in \mathcal{X}} \sum_{j=1}^l \xi^j f_{d-j}(x)=
\sum_{j=1}^l \xi^j f_{d-j}(x_i)$ and 
$\sum_{j=1}^l \xi^j f_{d-j}(x)<\sum_{j=1}^l \xi^j f_{d-j}(x_i)$ for $x \notin 
\{x_1, \ldots, x_{l'}\}$.
Then, we have 
\begin{align}
\tau (t \xi )
\to \Bigg(\sum_{i=1}^{l'} 
\frac{\exp\Big(\sum_{j=1}^{d-l}\theta^j f_j(x_i)   \Big)}
{\sum_{i'=1}^{l'}\exp\Big(\sum_{j=1}^{d-l}\theta^j f_j(x_{i'})   \Big)}
 f_{d-j}(x_i)\Bigg)_{j=1}^{l}
\end{align}
as $t \to \infty$.

\noindent{\bf Step 3:}\quad
We consider the compact set 
$\mathcal{T}(t):=
\{\tau (\xi )\}_{ \max_{x\in \mathcal{X}}  \sum_{j=1}^l \xi^j f_{d-j}(x)=t }$ for a large real number $t>0$.
The analysis on Steps 1 and 2 shows that 
the set $\mathcal{T}(t)$ approaches to the boundary of the convex full of $\{ (f_{d-j}(x))_{j=1}^{l}\}_{x \in \mathcal{X}}$
when $t$ approaches infinity.
Since map $\tau $ is continuous,
the image $\mathcal{D}(t)$
of $\{\xi \in \mathbb{R}^{l}
|\max_{x\in \mathcal{X}}  \sum_{j=1}^l \xi^j f_{d-j}(x)\le t
\}$ for the map $\tau$
is a compact subset whose boundary is close to the boundary of the convex full of $\{ (f_{d-j}(x))_{j=1}^{l}\}_{x \in \mathcal{X}}$.
Therefore, 
$\cup_{t>0}\mathcal{D}(t)$ equals the convex full of $\{ (f_{d-j}(x))_{j=1}^{l}\}_{x \in \mathcal{X}}$.

\if0
We choose an element $x_0 \in \mathcal{X}$ such that $(f_{n-1-j}(x_0))_{j=1}^{l}$ is an extremal point
of the convex full of $\{ (f_{n-1-j}(x))_{j=1}^{l}\}_{x \in \mathcal{X}}$.
In the following, we show that  there is a sequence $\xi_m \in \mathbb{R}^{l}$ 
such that $\tau(\xi_m)\to  (f_{n-1-j}(x_0))_{j=1}^{l}$ as $m \to \infty$.
Since $(f_{n-1-j}(x_0))_{j=1}^{l}$ is an extremal point,
there exists an element $\xi(x_0) \in \mathbb{R}^{l}$ such that
$ \max_{x \in \mathcal{X}} \sum_{j=1}^l \xi(x_0)^j f_{n-1-j}(x)=
\sum_{j=1}^l \xi(x_0)^j f_{n-1-j}(x_0)=1$ and 
$\sum_{j=1}^l \xi(x_0)^j f_{n-1-j}(x)<1$ for $x \neq x_0$
Then, we have 
$\tau(t \xi(x_0))\to  (f_{n-1-j}(x_0))_{j=1}^{l}$ as $t \to \infty$.
\fi
\end{proofof}

\begin{proofof}{Lemma \ref{LOS}}
It is sufficient to show Conditions (E3) and (M3) for any set of vectors 
$v_1, \ldots, v_l$, where $l=1, \ldots, d-1$.
This fact can be shown as follows.
First, we show (E3). For this aim, we choose an invertible matrix $U$ such that
$u_{d-i}=v_i$ for $i=1, \ldots,l$. 
For simplicity, we rewrite $\sum_{i=1}^{d} u_j^i f_i$ by $f_j$.
Also, we choose $(a_1, \ldots, a_l) \in \mathbb{R}^l$ such that
$\mathcal{M}(a_1, \ldots, a_l) $ is not empty.
We show that $\mathcal{M}(a_1, \ldots, a_l)$ is $\mathbb{R}^{d-l}$.
Due to Lemma \ref{LOS2},
for $(\theta^1, \ldots, \theta^{d-l})\in \mathbb{R}^{d-l}$,
there exists $(\theta^{d-l+1}, \ldots, \theta^{d})\in \mathbb{R}^{l}$
such that
\begin{align}
\frac{\sum_{x \in \mathcal{X}}
f_{d-j}(x) \exp \Big(  \sum_{i=1}^{d} \theta^i f_{i}(x)  \Big) }{\mu(\theta)}=
a_j \Label{MLJ}
\end{align}
for $j=1, \ldots, l$.
The above condition is equivalent to 
\begin{align}
\frac{\partial \mu}{\partial \theta^j}(\theta)=a_j.\Label{MLJ2}
\end{align}
This condition implies the relation 
$\mathcal{M}(a_1, \ldots, a_l)=\mathbb{R}^{d-l}$.
Hence, we have Condition (E3). 

Next, we show (M3). 
The relation \eqref{MLJ} means that 
the set $\Xi_{\mathcal{E}(\theta_0)}$ does not depend on $ \theta_0 \in \Theta$
because 
the choice of $(\theta^1, \ldots, \theta^{d-l})\in \mathbb{R}^{d-l}$ corresponds to 
the choice of $ \theta_0 \in \Theta$
in the relation \eqref{MLJ2}.
Hence, we have Condition (M3). 
\end{proofof}

\subsection{Classical system with fixed marginal distribution}\Label{4A2}
We consider the set of probability distributions on the finite set $\X \times \Y$ with $n_1=|\X|$ and $n_2=|\Y|$.
In particular, the marginal distribution on $\X$
is restricted as $P_X(x)=p_x$.
We focus on $d$ linearly independent functions $\bar{f}_1, \ldots, \bar{f}_{n_2-1}$ defined on ${\cal Y}$,
where the linear space spanned by $\bar{f}_1, \ldots, \bar{f}_{n_2-1}$
does not contain a constant function.
Then, we define the $C^{\infty}$ strictly convex function 
$\bar\mu$ on $\mathbb{R}^{n_1(n_2-1)}$ as
$\bar\mu(\bar\theta):= 
\sum_{x \in \X}p_{x}\mu_x(\bar\theta)$,
where
$\mu_x(\bar\theta):=
\log (\sum_{y \in {\cal Y}} \exp (  
\sum_{j=1}^{n_2-1} \bar\theta^{(x-1)(n_2-1)+j} \bar{f}_j(y)  )$, which yield
the Bregman divergence system $(\mathbb{R}^{n_1(n_2-1)}, \bar\mu, D^{\bar\mu})$. 

A probability distribution with full support on $\X \times \Y$
with the marginal distribution $p_x$ 
can be written as $P_\theta$, which is defined as
as $P_{\bar\theta}(x,y):=p_x 
\exp \Big(  (\sum_{j=1}^{n_2-1} \bar\theta^{(x-1)(n_2-1)+j} \bar{f}_j(y)  ) - \mu_x(\bar\theta) \Big)$.
The KL divergence 
equals the Bregman divergence of the potential function $\bar\mu$, i.e., 
we have
\begin{align}
&D^{\bar\mu}(\bar\theta\|\bar\theta_0)\nonumber \\
=&\sum_{x,j}p_x 
\Big(\frac{\partial \mu_x(\bar\theta)}{\partial  \bar\theta^{(x-1)(n_2-1)+j}}
(\bar{\theta}^{(x-1)(n_2-1)+j}-{{\bar{\theta}}_0^{(x-1)(n_2-1)+j}}) 
-\mu_x(\bar\theta)+ \mu_x(\bar\theta_0)
\Big) \nonumber\\
=&
D(P_{\bar\theta}\|P_{\bar\theta_0})
\Label{MGA4}
\end{align}
for $\theta \in \mathbb{R}^{d}$.

Next, we consider 
the Bregman divergence system $(\mathbb{R}^{n_1 n_2-1}, \mu, D^{\mu})$ defined in Subsection \ref{4A}
with $f_1, \ldots, f_{n_1n_2-1}$ defined as follows;
$f_{(i-1)(n_2-1)+j}(x,y):= \delta_{i,x}\bar{f}_x(y) $ for 
$i=1, \ldots, n_1$ and $j=1, \ldots, n_2-1$.
$f_{n_1(n_2-1)+i}(x,y):= \delta_{i,x}$ for 
$i=1, \ldots, n_1-1$.
We define the mixture subfamily $\mathcal{M}$ by the constraint
\begin{align}
\frac{\partial \mu}{\partial \theta^{n_1(n_2-1)+i}}=
p_i \Label{XOP}
\end{align}
for $i=1, \ldots, n_1-1$.
We apply the discussion given in Subsection \ref{APB} to 
the mixture subfamily $\mathcal{M}$.
The matrix $U$ is the identity matrix.
The mixture subfamily $\mathcal{M}$ is parameterized 
by the natural parameter 
$\bar\theta=(\bar\theta^1, \ldots, \bar\theta^{n_1(n_2-1)})$.
The function $\kappa$ is chosen as
$\kappa^{n_1(n_2-1)+i}(\bar\theta):=\mu_x(\bar\theta)$.
Then, the parameter $(\bar\theta, \kappa(\bar\theta))$
satisfies the condition \eqref{XOP}.
Hence, the mixture subfamily $\mathcal{M}$ coincides with 
the Bregman divergence system $(\mathbb{R}^{n_1(n_2-1)}, \bar\mu, D^{\bar\mu})$.

As an extension of Lemma \ref{LOS}, we have the following lemma.
\begin{lemma}\Label{LOST}
The Bregman divergence system $(\mathbb{R}^{n_1(n_2-1)}, \bar\mu, D^{\bar\mu})$ defined in this subsection satisfies
Conditions (E4) and (M4).
\end{lemma}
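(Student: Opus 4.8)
The plan is to deduce Conditions (E4) and (M4) for the $\bar\mu$ system from the corresponding conditions for the ambient system $(\mathbb{R}^{n_1n_2-1},\mu,D^{\mu})$ of Subsection \ref{4A}, which already satisfies (E4) and (M4) by Lemma \ref{LOS}. The bridge is the identification, established just before the statement, of the $\bar\mu$ system with the mixture subfamily $\mathcal{M}$ of the $\mu$ system cut out by the marginal constraint \eqref{XOP}. I write $d':=n_1(n_2-1)$ and, for $\bar v\in\mathbb{R}^{d'}$, I let $\tilde v:=(\bar v,0)\in\mathbb{R}^{n_1n_2-1}$ be its zero padding into the first $d'$ coordinates. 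The one computational fact I need is that the first $d'$ mixture coordinates agree: from the computation in Subsection \ref{APB} (with $U$ the identity and $F=\mu$) one has $\partial_j\bar\mu(\bar\theta)=\partial_j\mu(\bar\theta,\kappa(\bar\theta))$ for $j=1,\dots,d'$, so that $\langle\bar v,\nabla^{(e)}[\bar\mu](\bar\theta)\rangle=\langle\tilde v,\nabla^{(e)}[\mu](\bar\theta,\kappa(\bar\theta))\rangle$ whenever $(\bar\theta,\kappa(\bar\theta))\in\mathcal{M}$. Since (E4) and (M4) quantify over all $l$ and all linearly independent $\bar v_1,\dots,\bar v_l\in\mathbb{R}^{d'}$, I fix such a family and verify (E3) and (M3), letting $w_1,\dots,w_{n_1-1}$ denote the vectors whose constraints define $\mathcal{M}$ (the features $\delta_{m,x}$ of \eqref{XOP}); the set $\{\tilde v_1,\dots,\tilde v_l,w_1,\dots,w_{n_1-1}\}$ is linearly independent, the $\tilde v_i$ being supported on the first $d'$ coordinates and the $w_m$ on the last $n_1-1$.

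For (E3): a mixture subfamily of the $\bar\mu$ system defined by $\langle\bar v_j,\nabla^{(e)}[\bar\mu]\rangle=a_j$ coincides, as a subset of $\mathcal{M}$ under $\bar\theta\mapsto(\bar\theta,\kappa(\bar\theta))$, with the mixture subfamily of the $\mu$ system generated by $\tilde v_1,\dots,\tilde v_l,w_1,\dots,w_{n_1-1}$ with constraint values $(a_1,\dots,a_l,p_1,\dots,p_{n_1-1})$. Since the $\bar v$-constraints involve only the first $d'$ natural coordinates, the two descriptions have the same free natural coordinates, so the natural-parameter space of the former (a subset of $\mathbb{R}^{d'-l}$, in the sense of \eqref{const1-U}) equals that of the latter; the dimension count $n_1n_2-1-(l+n_1-1)=d'-l$ confirms this. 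Because the $\mu$ system satisfies (E4) by Lemma \ref{LOS}, this space is independent of the full constraint vector, hence in particular of $(a_1,\dots,a_l)$ once the marginal block is held at $p$, and emptiness is preserved under the identification. Thus (E3) holds for the $\bar\mu$ system.

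For (M3) I read the same correspondence in mixture coordinates. Let $\theta_0=(\bar\theta_0,\kappa(\bar\theta_0))\in\mathcal{M}$ and let $\mathcal{E}'(\theta_0)$ be the exponential subfamily of the $\mu$ system generated at $\theta_0$ by the $l+n_1-1$ vectors $\tilde v_1,\dots,\tilde v_l,w_1,\dots,w_{n_1-1}$. A direct check shows that the exponential subfamily of the $\bar\mu$ system generated by $\bar v_1,\dots,\bar v_l$ at $\bar\theta_0$, viewed inside $\mathcal{M}$ through $\bar\theta\mapsto(\bar\theta,\kappa(\bar\theta))$, equals the fiber $\mathcal{E}'(\theta_0)\cap\{\langle w_m,\nabla^{(e)}[\mu]\rangle=p_m\}=\mathcal{E}'(\theta_0)\cap\mathcal{M}$: moving $\bar\theta$ along $\mathrm{span}(\bar v_1,\dots,\bar v_l)$ and then resetting the last $n_1-1$ coordinates through $\kappa$ is exactly moving inside $\mathcal{E}'(\theta_0)$ while re-imposing the marginal. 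Now the mixture map is a bijection from $\mathcal{E}'(\theta_0)$ onto its mixture-parameter range, which by (M3) for the $\mu$ system (again from Lemma \ref{LOS}) is a fixed open set $\Xi^{*}$, the interior of a convex hull of feature vectors, independent of $\theta_0$. The mixture range of the $\bar\mu$ exponential family is the image of that bijection restricted to the fiber $\{\langle w_m,\cdot\rangle=p_m\}$ and read off in the coordinates $(\langle\tilde v_j,\cdot\rangle)_j$, i.e. the section $\{c:(c,p)\in\Xi^{*}\}$. Since $\Xi^{*}$ does not depend on $\theta_0$, neither does this section, which is precisely the set appearing in (M3) for the $\bar\mu$ system; hence (M3) holds as $\bar\theta_0$ ranges over $\mathbb{R}^{d'}$.

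The step that needs the most care is the last identification in the (M3) argument: the $\bar\mu$ exponential family is curved, not affine, inside the ambient system, and it must be recognized as the marginal-fixing fiber of the genuinely affine $\mu$ exponential family $\mathcal{E}'(\theta_0)$. Once that structural fact is secured, base-point independence of $\Xi^{*}$ passes to base-point independence of the relevant section because the mixture map on $\mathcal{E}'(\theta_0)$ is a diffeomorphism onto $\Xi^{*}$. The remaining points are routine: the coincidence $\partial_j\bar\mu=\partial_j\mu(\,\cdot\,,\kappa(\cdot))$ for $j\le d'$, the linear independence and dimension bookkeeping, and the preservation of emptiness. Carrying (E3) and (M3) through for every admissible $(l,\bar v_1,\dots,\bar v_l)$ then yields (E4) and (M4), proving the lemma.
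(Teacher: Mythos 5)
Your proof is correct, and at the level of strategy it follows the same route as the paper: both deduce the lemma from Lemma \ref{LOS} by exploiting the identification, made in Subsections \ref{4A2} and \ref{APB}, of the $\bar\mu$ system with the mixture subfamily $\mathcal{M}$ of the ambient system $(\mathbb{R}^{n_1 n_2-1},\mu,D^{\mu})$. The difference is one of completeness: the paper's entire proof is two sentences, asserting that Condition (E4) holds ``because the parameter set is $\mathbb{R}^{n_1(n_2-1)}$'' and that Condition (M4) is inherited by the mixture subfamily $\mathcal{M}$ from the ambient system, without arguing why the inheritance holds. You supply exactly the missing content. For (M3) you identify an exponential subfamily of the $\bar\mu$ system (affine in $\bar\theta$, hence curved in the ambient coordinates) with the fiber $\mathcal{E}'(\theta_0)\cap\mathcal{M}$ of the genuinely affine ambient exponential family generated by the padded vectors $\tilde v_i$ together with the marginal-constraint vectors $w_m$, and then pass base-point independence of the ambient mixture image $\Xi^{*}$ (ambient Condition (M3)) to its section at the marginal values $p$. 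For (E3) you observe that a $\bar\mu$-mixture subfamily is the ambient mixture subfamily with the combined constraints, and that the block-triangular choice of basis makes the free natural coordinates agree, so ambient Condition (E4) applies. This last reduction also repairs the weakest point of the paper's proof: the one-line justification of (E4) does not by itself establish that $\Theta_{\mathcal{M}(a_1,\ldots,a_l)}$ is independent of $(a_1,\ldots,a_l)$ for arbitrary constraint vectors (for the ambient system this required the separate argument of Lemma \ref{LOS2}), whereas your reduction does. In short: same approach as the paper, but your version actually proves the two claims that the paper only asserts.
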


\begin{proof}
Condition (E4) holds because the parameter set is $\mathbb{R}^{n_1(n_2-1)}$.
Since the Bregman divergence system $(\mathbb{R}^{n_1 n_2-1}, \mu, D^{\mu})$ satisfies
Condition (M4),
its mixture subfamily $\mathcal{M}$
satisfies Condition (M4).
Hence, $(\mathbb{R}^{n_1(n_2-1)}, \bar\mu, D^{\bar\mu})$
satisfies Condition (M4).
\end{proof}

\subsection{Quantum system}\Label{4B}
In the quantum system, we focus on the $n$-dimensional Hilbert space ${\cal H}$ \cite{hayashi}.
We choose $d$ linearly independent Hermitian matrices $X_1, \ldots, X_{d}$ on ${\cal H}$,
where the linear space spanned by $X_1, \ldots, X_{d}$ does not contain the identify matrix.
Then, we define the $C^{\infty}$ strictly convex function 
$\mu$ on $\mathbb{R}^{d}$ as
$\mu(\theta):= \log (\Tr \exp (  \sum_{j=1}^{d} \theta^j X_j  )$.
A quantum state on ${\cal H}$ is given as
a positive semi definite Hermitian matrix $\rho$ with the condition $\Tr \rho=1$, which is called 
a density matrix.
We denote the set of density matrices by ${\cal S}({\cal H})$.
Any density matrix with full support on ${\cal H}$
can be written as $\rho_\theta$, which is defined as
as $\rho_\theta := 
\exp \Big(  (\sum_{j=1}^{d} \theta^j X_j  ) - \mu(\theta) \Big)$.
It is known that the relative entropy 
equals the Bregman divergence of the potential function $\mu$ \cite[Section 7.2]{Amari-Nagaoka}, 
i.e., we have
\begin{align}
D^{\mu}(\theta\|\theta')= D(\rho_\theta\|\rho_{\theta'})
\Label{MGA2}
\end{align}
for $\theta \in \mathbb{R}^{d}$, where
the relative entropy $D(\rho\| \rho')$ is defined as
\begin{align}
D(\rho\| \rho')=\Tr \rho (\log \rho - \log \rho').
\end{align}

\begin{example}
When $d=n^2-1$, 
the Bregman divergence system $(\mathbb{R}^{d}, \mu, D^\mu)$ 
describes the set ${\cal S}(\H)$ 
of density matrices on $\H$ with full support and
the relative entropy.
\end{example}

\begin{example}
When $\H$ is given as $\H_1 \otimes \H_2$ with $n_i=\dim \H_i$,
$X_i$ is an Hermitian matrices with the form 
$A \otimes I$ or $I \otimes B$, and $d=n_1^2+n_2^2-2$,
the Bregman divergence system $(\mathbb{R}^{d}, \mu, D^\mu)$ 
describes the set ${\cal S}(\H_1) \otimes {\cal S}(\H_2)$ 
of product density matrices on $\H_1 \otimes \H_2$.
\end{example}

When the parameter $\theta$ is limited to 
$(\bar\theta,\underbrace{0, \ldots, 0}_{d-l})$ with 
$\bar\theta \in \mathbb{R}^l$,
the set of distributions $\rho_\theta$ forms an exponential family.
Also, when the linear space spanned by $d-k$ linearly independent Hermitian matrices
$Y_1, \ldots, Y_{d-k}$
does not contain a constant function,
for $d-k$ constants $a_1, \ldots, a_{d-k}$, 
the following set of distributions forms a mixture family; 
\begin{align}
\Big\{\rho_\theta \Big|
\Tr Y_i \rho_\theta=a_i \hbox{ for } i=1, \ldots, d-k \Big\}.
\end{align}

For the possibility of the projection, we have the following lemma.
\begin{lemma}\Label{LOS3}
The Bregman divergence system $(\mathbb{R}^{d}, \mu, D^\mu)$ defined in this section satisfies Conditions (E4) and (M4).
\end{lemma}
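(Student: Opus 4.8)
The plan is to verify Conditions (E4) and (M4) separately, following the structure of the classical argument in Lemma \ref{LOS}. By definition, Condition (M4) is Condition (M3) for every $l=1,\ldots,d-1$ and every choice of $l$ linearly independent vectors, and likewise (E4) is (E3) for every such $l$ and choice. The decisive structural feature of the quantum system is that the parameter domain is the whole space $\Theta=\mathbb{R}^d$, since $\mu(\theta)=\log\Tr\exp(\sum_j\theta^j X_j)$ is finite and $C^\infty$ on all of $\mathbb{R}^d$. Condition (M4) is then immediate: for any $\theta_0$ the exponential subfamily $\mathcal{E}(\theta_0)$ has natural-parameter space $\Theta_{\mathcal{E}(\theta_0)}=\{\bar\theta\in\mathbb{R}^l\mid \theta_0+\sum_m\bar\theta^m v_m\in\Theta\}=\mathbb{R}^l$, which does not depend on $\theta_0$. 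This is exactly Condition (M3), so (M4) holds.

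The content is Condition (E4). Fixing $v_1,\ldots,v_l$, I would choose an invertible $U=(u_1\ \ldots\ u_d)$ with $u_{d-l+m}=v_m$ for $m=1,\ldots,l$; after the change of variables $\theta=U\bar\theta$ the constraints \eqref{BO1} become constraints on the last $l$ coordinates. Setting $\tilde X_i:=\sum_k u_i^k X_k$ and $\tilde Y_m:=\tilde X_{d-l+m}$, the mixture parameter is $\partial_i(\mu\circ U)(\bar\theta)=\Tr \tilde X_i\,\rho$ with $\rho=\exp(\sum_i\bar\theta^i\tilde X_i)/Z$. Thus, with the first $d-l$ coordinates fixed and $H_0:=\sum_{i=1}^{d-l}\bar\theta^i\tilde X_i$, everything reduces to identifying the image of
\begin{align}
(\xi_1,\ldots,\xi_l)\mapsto \Big(\Tr \tilde Y_m\,\rho(\xi)\Big)_{m=1}^{l},\qquad
\rho(\xi):=\frac{\exp\!\big(H_0+\sum_{m}\xi_m \tilde Y_m\big)}{\Tr\exp\!\big(H_0+\sum_{m}\xi_m \tilde Y_m\big)},
\Label{Qim}
\end{align}
and showing that this image is a fixed open convex set independent of $H_0$. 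This is the quantum analogue of Lemma \ref{LOS2}.

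To identify the image I would introduce $g(\xi):=\log\Tr\exp(H_0+\sum_m\xi_m\tilde Y_m)$, a $C^\infty$ strictly convex function of $\xi\in\mathbb{R}^l$ (strict convexity because no nonzero combination of the $\tilde Y_m$ is a multiple of the identity, the span of the $\tilde X_i$ equalling that of the $X_i$). Its gradient is the map \eqref{Qim}, so, $g$ being of Legendre type on $\mathbb{R}^l$, the image is the open convex set $\interior(\mathrm{dom}\,g^*)$. Let $W:=\{(\Tr \tilde Y_m\rho)_{m=1}^l\mid \rho\in\mathcal{S}(\mathcal{H})\}$, a compact convex, full-dimensional subset of $\mathbb{R}^l$ that is manifestly independent of $H_0$. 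Using $e^{\lambda_{\max}(A)}\le\Tr e^A\le n\,e^{\lambda_{\max}(A)}$ together with $|\lambda_{\max}(tC+H_0)-t\,\lambda_{\max}(C)|\le\|H_0\|$, the recession function of $g$ is
\begin{align}
\lim_{t\to\infty}\frac{g(tv)}{t}
&=\lambda_{\max}\Big(\sum_{m} v_m \tilde Y_m\Big)
=\max_{\rho\in\mathcal{S}(\mathcal{H})}\Tr\Big(\sum_{m} v_m \tilde Y_m\Big)\rho
=h_W(v),
\end{align}
the support function $h_W$ of $W$, with no dependence on $H_0$. Hence $g^*(a)<\infty$ iff $\langle a,v\rangle\le h_W(v)$ for all $v$, i.e.\ $\mathrm{dom}\,g^*=W$, so the image \eqref{Qim} equals $\interior W$, independent of $H_0$. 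Consequently the space $\Theta_{\mathcal{M}(a)}$ of \eqref{const1-U} equals $\mathbb{R}^{d-l}$ when $a\in\interior W$ and is empty otherwise; in either case it is independent of $a$, which is Condition (E3), and therefore (E4) holds.

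The main obstacle is precisely this quantum analogue of Lemma \ref{LOS2}: identifying the image \eqref{Qim} with $\interior W$ \emph{uniformly in} $H_0$. In the classical case this was handled by the explicit scaling analysis of how mixture parameters approach the extreme points of the convex hull (Steps 1--3 of Lemma \ref{LOS2}); here the same role is played by the asymptotics of $g$, and the crucial simplification is that the leading term $\lambda_{\max}(\sum_m v_m\tilde Y_m)$ is insensitive to the bounded perturbation $H_0$, so the non-commutativity of $H_0$ with the $\tilde Y_m$ affects only the $O(1)$ term and hence the point reached within a boundary face, not the domain of $g^*$. An equivalent and perhaps more transparent route would realize each $a\in\interior W$ directly by minimizing $D(\rho\|e^{H_0}/\Tr e^{H_0})$ subject to $\Tr \tilde Y_m\rho=a_m$, the minimizer being automatically of the Gibbs form $\rho(\xi)$; this gives $\interior W\subseteq$ image at once, and the reverse inclusion follows from openness of the image and $\rho(\xi)\in\mathcal{S}(\mathcal{H})$.
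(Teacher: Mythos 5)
Your proposal is correct in substance, but it establishes the key step by a genuinely different route from the paper. The paper disposes of Lemma \ref{LOS3} in one line: repeat the proof of Lemma \ref{LOS} with the quantum Lemma \ref{LOS4} playing the role of Lemma \ref{LOS2}, and Lemma \ref{LOS4} itself is proved by mimicking the classical scaling construction (Steps 1--3: explicit parameter sequences whose mixture images approach extremal points and faces of the convex hull of $\{(\Tr X_{d-j}\rho)_{j=1}^l\}_{\rho\in\mathcal{P}}$, plus a compactness/continuity argument). You replace all of that with convex duality: the mixture-parameter map with the first $d-l$ coordinates frozen is $\nabla g$ for the strictly convex log-partition function $g$, so by the Legendre-type correspondence its image is $\interior(\mathrm{dom}\,g^*)$, and the recession-function computation $\lim_{t\to\infty} g(tv)/t=\lambda_{\max}(\sum_m v_m\tilde Y_m)=h_W(v)$ identifies $\mathrm{cl}(\mathrm{dom}\,g^*)=W$, hence the image is $\interior W$ uniformly in $H_0$. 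Note that $\interior W$ is exactly the paper's ``inner of the convex full'' over pure states, since $W$ equals that convex hull; so your duality argument is a clean substitute for Lemma \ref{LOS4}, and it makes the insensitivity to the bounded perturbation $H_0$ transparent, whereas the paper's Steps 1--2 require delicate explicit constructions (orthogonal extremal pure states, commutativity conditions). Two harmless imprecisions: only the closure of $\mathrm{dom}\,g^*$ equals $W$ (the interiors still agree), and full-dimensionality of $W$ deserves the one-line remark that $\sum_m c_m\tilde Y_m=\lambda I$ forces $c=0$. Your closing ``more transparent route'' via constrained relative-entropy minimization would itself need an argument that the minimizer has full rank and hence Gibbs form, so the duality argument should remain the main proof.

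The one point you must repair is your treatment of (M4). You read Condition (M3) literally as a statement about the natural-parameter domain $\Theta_{\mathcal{E}(\theta_0)}$, which is indeed trivially $\mathbb{R}^l$ here; but that reading makes (M3) vacuous, and it is not what the paper uses downstream. Lemma \ref{Lem7} needs $\Xi(v_1,\ldots,v_l)$ to be the common set of \emph{achievable mixture parameters} over $\mathcal{E}(\theta_0)$, and the paper's own proof of Lemma \ref{LOS} verifies (M3) precisely as the statement that $\Xi_{\mathcal{E}(\theta_0)}$, the image of the mixture-parameter map on $\mathcal{E}(\theta_0)$, is independent of $\theta_0$; the symbol ``$\Theta_{\mathcal{E}(\theta_0)}$'' in the statement of (M3) is a typo. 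Fortunately your central identity supplies exactly the needed fact: the image of the frozen-coordinate gradient map is $\interior W$ for every $H_0$, and varying $H_0$ is the same as varying $\theta_0$. So (M3), hence (M4), follows from the same computation you used for (E4); you only need to reassign that conclusion instead of declaring (M4) immediate, and the proof is complete.
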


To show this lemma, we prepare the following lemma in a way similar to Lemma \ref{LOS2}.
\begin{lemma}\Label{LOS4}
For $(\theta^1, \ldots, \theta^{d-l})\in \mathbb{R}^{d-l}$
and $\xi:=(\xi^{1}, \ldots, \xi^{l})\in \mathbb{R}^{l}$,
we define 
\begin{align}
\tau_{(\theta^1, \ldots, \theta^{d-l})}(\xi):=
\Bigg(\frac{
\Tr X_{d-j} \exp \Big(  \sum_{i=1}^{d} \theta^j X_{d-i}  \Big) }{\mu(\theta)}
\Bigg)_{j=1}^{l}
\end{align}
with $\theta^{d-l+i}=\xi^i$.
Then, the set $\mathcal{T}_{1,(\theta^1, \ldots, \theta^{d-l})}:=
\{ \tau_{(\theta^1, \ldots, \theta^{d-l})}(\xi) | \xi \in \mathbb{R}^{l} \}$
equals the inner $\mathcal{T}_2$ of the convex full of 
$\{ (  \Tr X_{d-j} \rho)_{j=1}^{l}\}_{\rho \in \mathcal{P}}$,
where $\mathcal{P}$ is the set of pure states.
\end{lemma}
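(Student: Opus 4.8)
The plan is to recognize $\tau$ as the gradient of a restricted log-partition function and to identify its image with the interior of a fixed compact convex set. Note first that $\mathcal{T}_2$ is the interior of the convex hull of $\{(\Tr X_{d-j}\rho)_{j=1}^l:\rho\in\mathcal{P}\}$; since $\mathcal{S}(\mathcal{H})$ is the convex hull of its pure states and $\rho\mapsto(\Tr X_{d-j}\rho)_{j=1}^l$ is linear, this convex hull equals $C:=\{(\Tr X_{d-j}\rho)_{j=1}^l:\rho\in\mathcal{S}(\mathcal{H})\}$, a compact convex set, so $\mathcal{T}_2=\interior C$. Now freeze the first $d-l$ coordinates, let $A$ collect the corresponding fixed part of the exponent, and set $M_\xi:=\sum_{j=1}^{l}\xi^j X_{d-j}$ and $g(\xi):=\mu(\theta^1,\dots,\theta^{d-l},\xi)=\log\Tr\exp(A+M_\xi)$. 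Then $\tau(\xi)=\nabla g(\xi)=(\Tr X_{d-j}\rho_\xi)_{j=1}^l$ with $\rho_\xi:=e^{A+M_\xi}/\Tr e^{A+M_\xi}$ (the denominator in the statement being the partition function $\Tr e^{A+M_\xi}$), so $\mathcal{T}_1$ is precisely the image of $\nabla g$. The map $g$ is smooth and strictly convex on $\mathbb{R}^l$, its Hessian being the Kubo--Mori covariance matrix of $X_{d-1},\dots,X_{d-l}$, which is positive definite because these observables are linearly independent modulo the identity; in particular $C$ is full-dimensional and $\interior C$ is its ordinary interior.

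For the easy inclusion $\mathcal{T}_1\subseteq\mathcal{T}_2$ I would show that $\nabla g$ never reaches $\partial C$. The support function of $C$ is $h_C(v)=\max_{\rho}\Tr M_v\rho=\lambda_{\max}(M_v)$, where $M_v:=\sum_{j=1}^l v^j X_{d-j}$. If $\tau(\xi)\in\partial C$, a supporting hyperplane supplies $v\neq 0$ with $\langle v,\tau(\xi)\rangle=\lambda_{\max}(M_v)$; but $\langle v,\tau(\xi)\rangle=\Tr M_v\rho_\xi<\lambda_{\max}(M_v)$ strictly, since $\rho_\xi$ has full support while $M_v$ is not a multiple of the identity (because $v\neq 0$ and the $X_{d-j}$ are independent modulo the identity). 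As $\tau(\xi)\in C$ trivially, this contradiction forces $\tau(\xi)\in\interior C$.

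The substantive inclusion is $\mathcal{T}_2\subseteq\mathcal{T}_1$, i.e.\ surjectivity of $\nabla g$ onto $\interior C$. Fix $\eta\in\interior C$ and consider $h(\xi):=g(\xi)-\langle\eta,\xi\rangle$. The key estimate is the eigenvalue asymptotics $g(tv)=t\,\lambda_{\max}(M_v)+O(1)$, which follows from $\lambda_{\max}(A+tM_v)\le\log\Tr e^{A+tM_v}\le\log\dim\mathcal{H}+\lambda_{\max}(A+tM_v)$ together with Weyl's inequality $|\lambda_{\max}(A+tM_v)-t\lambda_{\max}(M_v)|\le\|A\|$. Since $\eta\in\interior C$ means $\langle v,\eta\rangle<\lambda_{\max}(M_v)$ strictly for every unit $v$, compactness of the unit sphere gives $\delta:=\min_{|v|=1}(\lambda_{\max}(M_v)-\langle v,\eta\rangle)>0$, whence $h(tv)\ge t\delta-\|A\|\to+\infty$ uniformly in $v$. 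Thus $h$ is coercive and, being continuous, attains a global minimum at some $\xi^*$, where $\nabla h(\xi^*)=0$, i.e.\ $\nabla g(\xi^*)=\eta$; hence $\eta=\tau(\xi^*)\in\mathcal{T}_1$. The fixed part $A$ enters only through the $O(1)$ correction, which is exactly why the image is the same for every $(\theta^1,\dots,\theta^{d-l})$, in parallel with the classical Lemma~\ref{LOS2}.

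The main obstacle, and the only genuine departure from the classical argument, is this asymptotic analysis of $g$: classically $\log\sum_x e^{\langle\xi,f(x)\rangle}$ is dominated by the maximizing atoms $x$, whereas here it is dominated by the top eigenspace of $M_v$, so the support function of $C$ must be read off as $\lambda_{\max}(M_v)$ rather than a maximum over a finite set. One must control the weight that $A$ places on that eigenspace (the noncommutative analog of the classical redistribution of mass among tied maximizers), but once the two-sided bound $g(tv)=t\lambda_{\max}(M_v)+O(1)$ is secured, only its leading term enters the coercivity estimate, so eigenvalue degeneracies of $M_v$ cause no difficulty and the proof closes uniformly in $A$.
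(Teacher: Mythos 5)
Your proof is correct, but it takes a genuinely different route from the paper's. The paper mirrors its classical Lemma~\ref{LOS2}: in Steps 1 and 2 it computes limits of $\tau$ along rays $t\xi$ and along specially constructed combinations of extremal directions, showing that every boundary point of the convex hull is approached in the limit $t\to\infty$, and in Step 3 it closes with a topological filling argument (the images $\mathcal{D}(t)$ of growing balls are compact sets whose boundaries approach $\partial\mathcal{T}_2$, hence $\cup_{t>0}\mathcal{D}(t)$ exhausts the interior). You instead identify $\tau=\nabla g$ for the restricted log-partition function $g(\xi)=\log\Tr\exp(A+M_\xi)$ and prove surjectivity onto $\interior C$ variationally: for $\eta\in\interior C$ the tilted function $g(\xi)-\langle\eta,\xi\rangle$ is coercive, by the two-sided bound $\lambda_{\max}(B)\le\log\Tr e^{B}\le\log\dim\mathcal{H}+\lambda_{\max}(B)$ combined with Weyl's inequality to get $g(tv)=t\lambda_{\max}(M_v)+O(1)$ uniformly, and by compactness of the unit sphere applied to the strict support-function inequality $\langle v,\eta\rangle<\lambda_{\max}(M_v)$; a global minimizer then satisfies $\nabla g(\xi^*)=\eta$. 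You also supply an actual proof of the inclusion $\mathcal{T}_1\subseteq\mathcal{T}_2$ (full support of $\rho_\xi$ plus the fact that $M_v$ is never a multiple of the identity), which the paper dismisses as trivial. What your approach buys: it sidesteps the delicate extremal-point and degenerate-eigenspace analysis of the paper's Step 1 and the rather informal compactness argument of its Step 3, and the uniformity of your $O(1)$ correction in $A$ immediately gives that the image is independent of $(\theta^1,\ldots,\theta^{d-l})$, which is exactly what is needed downstream for Conditions (E3)/(M3) in Lemma~\ref{LOS3}. What the paper's approach buys: explicit identification of which parameter sequences converge to which boundary points, in exact structural parallel with the classical proof.
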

\begin{proofof}{Lemma \ref{LOS4}}
In this proof, $\mathcal{T}_{1,(\theta^1, \ldots, \theta^{d-l})}$ 
and $\tau_{(\theta^1, \ldots, \theta^{d-l})}(\xi)$
are simplified to $\mathcal{T}_1$ and $\tau(\xi)$.
Since $\mathcal{T}_1\subset \mathcal{T}_2$ is trivial, we show only the opposite relation.

\noindent{\bf Step 1:}\quad
Any element in the boundary of the convex full of 
$\{ (\Tr X_{d-j}\rho)_{j=1}^{l}\}_{\rho \in \mathcal{P}}$
is written as \par
\noindent$\Big(\sum_{i=1}^{l'} p_i \Tr \rho_i X_{n-1-j}\Big)_{j=1}^{l}$ with extremal points 
$(\Tr \rho X_{d-j})_{j=1}^{l}$
with at most $l$ orthogonal elements $\rho_i \in \mathcal{P}$ and at most  $l$ positive numbers $p_i$, 
where $i=1, \ldots, l'\le l$.
There exists an element $\xi_* \in \mathbb{R}^{l}$ such that
$ \max_{\rho \in \mathcal{P}} \sum_{j=1}^l \xi_*^j \Tr \rho X_{d-j}=
\sum_{j=1}^l \xi_*^j \Tr \rho_i X_{d-j}=1$ and
$\sum_{j=1}^l \xi_*^j \Tr \rho X_{d-j}<1$ unless 
$(\Tr \rho X_{d-j})_{j=1}^{l}$ is written as a convex combination of 
$\{( \Tr \rho_i X_{d-j})_{j=1}^{l}\}_{i=1}^{l'}$.
For any $\rho_i$, there exists an element $\xi(\rho_i)_* \in \mathbb{R}^{l}$ such that
$ \max_{\rho \in \mathcal{P}} \sum_{j=1}^l \xi(\rho_i)_*^j \Tr \rho X_{d-j}=
\sum_{j=1}^l \xi(\rho_i)_*^j \Tr \rho_i X_{d-j}=1$,
$\sum_{j=1}^l \xi(x_i)_*^j \Tr \rho X_{d-j}<1$ for $\rho (\neq \rho_i) \in \mathcal{P}$
and
$\sum_{j=1}^l \xi(x_i)_*^j \Tr \rho_{i'}X_{d-j} >0$ for $i' \neq i$.
Then, there exists elements $t_i>0$ such that
\begin{align}
\frac{
\Tr \rho_i \exp 
\Big(\sum_{i'=1}^{l'}\sum_{j=1}^l t_{i'}\xi(x_{i'})_*^j  X_{n-1-j} 
+ \sum_{j=1}^{d-l}\theta^j X_j \Big)}
{\sum_{i''=1}^{l'}
\Tr \rho_{i''} \exp\Big(\sum_{i'=1}^{l'}\sum_{j=1}^l t_{i'}\xi(x_{i'})_*^j X_{d-j} 
+ \sum_{j=1}^{d-l}\theta^j X_j \Big)}
=p_i.
\end{align}
Hence, we have
\begin{align}
\tau \Big(t \xi(x_0) + \sum_{i'=1}^{l'} t_{i'}\xi(x_{i'})_* \Big)
\to \Big(\sum_{i=1}^{l'} p_i \Tr \rho_i X_{d-j} \Big)_{j=1}^{l}
\end{align}
as $t \to \infty$.

\noindent{\bf Step 2:}\quad
Conversely, for any $\xi \in \mathbb{R}^{l}$, we can choose 
at most $l$ orthogonal pure states $\rho_1, \ldots, \rho_{l'} \in \mathcal{X}$ such that
$\sum_{j=1}^l \xi^j X_{d-j}$ is commutative with $\rho_1, \ldots, \rho_{l'}$,
$ \max_{\rho \in \mathcal{P}} \sum_{j=1}^l \xi^j \Tr \rho X_{d-j}=
\sum_{j=1}^l \xi^j \Tr \rho_i X_{d-j}$ and 
$\sum_{j=1}^l \xi^j \Tr \rho X_{d-j}<\sum_{j=1}^l \xi^j \Tr \rho_i X_{d-j}$ unless
$(\Tr \rho X_{d-j})_{j=1}^{l}$ is written as a convex combination of 
$\{( \Tr \rho_i X_{d-j})_{j=1}^{l}\}_{i=1}^{l'}$.
Then, we have 
\begin{align}
\tau (t \xi )
\to \Bigg(\sum_{i=1}^{l'} 
\frac{\Tr \rho_i \exp\Big(\sum_{j=1}^{d-l}\theta^j X_j   \Big)}
{\sum_{i'=1}^{l'}\Tr \rho_{i'} \exp\Big(\sum_{j=1}^{d-l}\theta^j X_j   \Big)}
 X_{d-j}\Bigg)_{j=1}^{l}
\end{align}
as $t \to \infty$.

\noindent{\bf Step 3:}\quad
We consider the compact set 
$\mathcal{T}(t):=
\{\tau (\xi )\}_{ \| (\sum_{j=1}^l \xi^j X_{d-j})_+\| =t }$ for large real number $t>0$, where
$(X)_+$ is an operator composed of the positive part.
The analysis on Steps 1 and 2 shows that 
the set $\mathcal{T}(t)$ approaches to the boundary of the convex full of 
$\{ ( \Tr \rho X_{d-j})_{j=1}^{l}\}_{\rho \in \mathcal{P}}$
when $t$ approaches infinity.
Since the map $\tau $ is continuous,
the image $\mathcal{D}(t)$
of $\{\xi \in \mathbb{R}^{l} |~
\| (\sum_{j=1}^l \xi^j X_{d-j})_+\| \le t
\}$ for the map $\tau$
is a compact subset whose boundary is close to the boundary of the convex full of 
$\{ ( \Tr \rho X_{d-j})_{j=1}^{l}\}_{\rho \in \mathcal{P}}$.
Therefore, 
$\cup_{t>0}\mathcal{D}(t)$ equals the convex full of 
$\{ ( \Tr \rho X_{d-j})_{j=1}^{l}\}_{\rho \in \mathcal{P}}$.

\if0
We choose an element $x_0 \in \mathcal{X}$ such that $(f_{n-1-j}(x_0))_{j=1}^{l}$ is an extremal point
of the convex full of $\{ (f_{n-1-j}(x))_{j=1}^{l}\}_{x \in \mathcal{X}}$.
In the following, we show that  there is a sequence $\xi_m \in \mathbb{R}^{l}$ 
such that $\tau(\xi_m)\to  (f_{n-1-j}(x_0))_{j=1}^{l}$ as $m \to \infty$.
Since $(f_{n-1-j}(x_0))_{j=1}^{l}$ is an extremal point,
there exists an element $\xi(x_0) \in \mathbb{R}^{l}$ such that
$ \max_{x \in \mathcal{X}} \sum_{j=1}^l \xi(x_0)^j f_{n-1-j}(x)=
\sum_{j=1}^l \xi(x_0)^j f_{n-1-j}(x_0)=1$ and 
$\sum_{j=1}^l \xi(x_0)^j f_{n-1-j}(x)<1$ for $x \neq x_0$
Then, we have 
$\tau(t \xi(x_0))\to  (f_{n-1-j}(x_0))_{j=1}^{l}$ as $t \to \infty$.
\fi
\end{proofof}

\begin{proofof}{Lemma \ref{LOS3}}
Lemma \ref{LOS3} can be shown in the same way as Lemma \ref{LOS}
by replacing the role of Lemma \ref{LOS2} by Lemma \ref{LOS4}.
\end{proofof}

\section{em-algorithm}\Label{Sec:em}
\subsection{Basic description for algorithm}\Label{Sec:emS1}
In this section, we address a minimization problem for
a pair of a $k$-dimensional mixture subfamily $\mathcal{M}$ and an $l$-dimensional  exponential subfamily $\mathcal{E}$ although the paper \cite{Fujimoto} discussed a similar problem setting based on Bregman divergence.
Here, we employ notations $u^i_{k+j}$, $a_j$, etc,
for a $k$-dimensional mixture subfamily $\mathcal{M}$ and an $l$-dimensional  exponential subfamily $\mathcal{E}$
that are introduced in Subsections \ref{S2-1} and \ref{S2-2}.
We assume the following condition;
\begin{description}
\item[(B0)]
The Bregman divergence system $(\Theta,F,D^F)$
satisfies Conditions (E4) and (M4). 
\end{description}
Hence, the minimums 
$\min_{\theta' \in \mathcal{E}} D^{F}(\theta \| \theta')$ 
and
$\min_{\theta \in \mathcal{M}} D^{F}(\theta \| \theta')$ 
exist.
We consider the following minimization problem;
\begin{equation}\Label{min1}
C_{\inf}(\mathcal{M},\mathcal{E})
:=
\inf_{\theta \in \mathcal{M}} D^{F}(\theta \| \Pro^{(e),F}_{\mathcal{E}} (\theta))
=
\inf_{\theta \in \mathcal{M}} 
\min_{\theta' \in \mathcal{E}} 
D^{F}(\theta \| \theta').
\end{equation}

The first task is to clarify whether the minimum exists in \eqref{min1}.
If the minimum exists, our second task is to find the minimization point
\begin{equation}\Label{min2}
\theta^*(\mathcal{M},\mathcal{E})
:=
\argmin_{\theta \in \mathcal{M}} D^{F}(\theta \| \Pro^{(e),F}_{\mathcal{E}} (\theta)).
\end{equation}
When we define 
$\theta_*(\mathcal{M},\mathcal{E})
:=\Pro^{(e),F}_{\mathcal{E}} (\theta^*(\mathcal{M},\mathcal{E}))$,
we have the opposite relation 
$\theta^*(\mathcal{M},\mathcal{E})
=\Pro^{(m),F}_{\mathcal{M}} (\theta_*(\mathcal{M},\mathcal{E}))$
because $\theta^*(\mathcal{M},\mathcal{E})$ achieves the maximum.
Hence, we have the relation 
$\mathcal{M}_{\theta_*\to \mathcal{E}}
=\mathcal{E}_{\theta^*\to \mathcal{M}}$.
If there is no risk of confusion, 
$\theta^*(\mathcal{M},\mathcal{E})$ and $\theta_*(\mathcal{M},\mathcal{E})$
are simplified to $\theta^*$ and $\theta_*$, respectively.
If the minimum does not exit, our second task is 
 to find a sequence of elements $\{\theta_{n}^*(\mathcal{M},\mathcal{E})\}
$ in $\mathcal{M}$ to achieve the infimum \eqref{min1}.

\begin{figure}[htbp]
\begin{center}
  \includegraphics[width=0.7\linewidth]{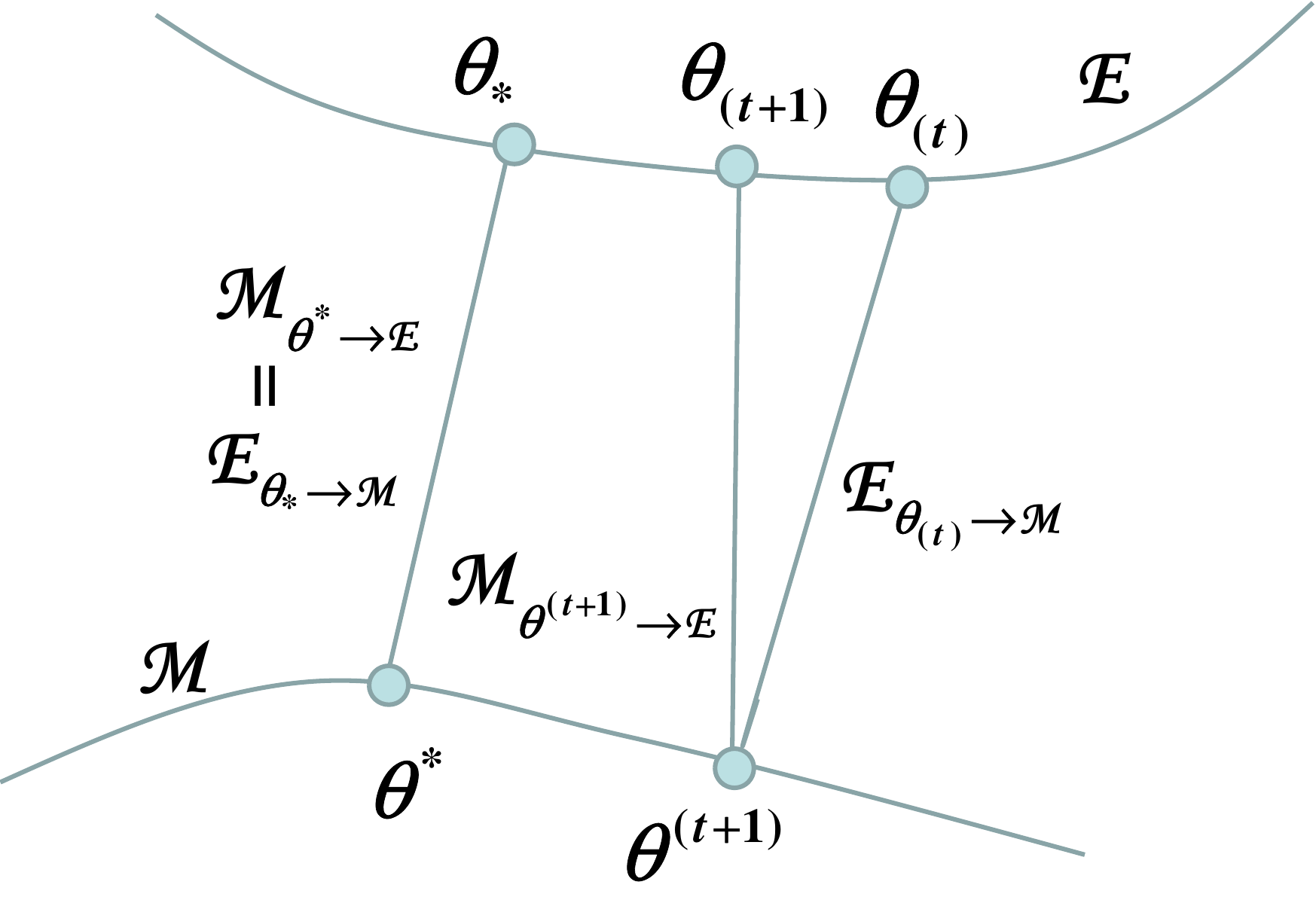}
  \end{center}
\caption{Algorithms \ref{protocol1-0} and \ref{protocol1}: 
This figure shows the topological relation among
$\theta_*$, $\theta^*$,
$\theta_{(t+1 )}$, $\theta^{(t+1 )}$, and $\theta_{(t)}$,
which is used in the
application of Phythagorean theorem (Proposition \ref{MNL}).
$\mathcal{M}_{\theta_*\to \mathcal{E}}
=\mathcal{E}_{\theta^*\to \mathcal{M}}$
and 
$\mathcal{M}_{\theta^{(t+1)} \to \mathcal{E}}$
are the mixture subfamilies to project $\theta(\epsilon_1)$
and $\theta^{(t+1)}$
to the exponential subfamily $\mathcal{E}$, respectively.
$\mathcal{E}_{\theta_{(t)}\to \mathcal{M}}$
is the exponential subfamily to project $\theta_{(t)}$
to the mixture subfamily $\mathcal{M}$.
}
\Label{FFY}
\end{figure}

Although the above minimization problem is very common in machine learning and statistics,
many kinds of minimization problems in information theory can be written in the above form as explained in Section \ref{S1}.
The above minimization asks to minimize the divergence between 
two points in the mixture and exponential subfamilies ${\cal E}$ and ${\cal M}$.
Algorithm \ref{protocol1-0} shows an algorithm to calculate the element 
$\theta^*(\mathcal{M},\mathcal{E})$ to achieve the minimum.
This algorithm is called the em algorithm, and is illustrated in Fig. \ref{FFY}.
By describe the m-step in a concrete form,
Algorithm \ref{protocol1-0} is rewritten as Algorithm \ref{protocol1}, which follows from 
(A3) of Lemma \ref{Th6}.

\begin{algorithm}
\caption{em-algorithm}
\Label{protocol1-0}
\begin{algorithmic}
\STATE {Assume that $\mathcal{M}$ is characterized by \eqref{BO1}.
Choose the initial value ${\theta}_{(1)} \in \mathcal{E}$;} 
\REPEAT 
\STATE {\bf m-step:}\quad 
Calculate ${\theta}^{(t+1)}:=\Pro^{(m),F}_{{\cal M}} ({\theta}_{(t)})$.
That is, ${\theta}^{(t+1)}$ is given as 
$\argmin_{\theta \in \mathcal{M}} D^{F}({\theta} \| {\theta}_{(t)})$, i.e.,
the unique element in ${\cal M}$ to realize the minimum
of the smooth convex function $\theta \mapsto D^{F}(\theta\| {\theta}_{(t)})$.
\STATE {\bf e-step:}\quad 
Calculate ${\theta}_{(t+1)}:=\Pro^{(e),F}_{{\cal E}} (\theta^{(t+1)})$.
That is, ${\theta}_{(t+1)}$ is given as 
$\argmin_{\theta' \in \mathcal{E}} D^{F}({\theta}^{(t+1)} \| \theta')$, i.e.,
the unique element in ${\cal E}$ to realize the minimum
of the smooth convex function $\theta'\mapsto D^{F}({\theta}^{(t+1)} \| \theta')$.
\UNTIL{convergence.} 
\end{algorithmic}
\end{algorithm}

When the mixture family $\mathcal{M}$ has to many parameters,
the optimization in m-step takes a long time.
In this case, m-step can be replaced by another optimization problem
with $d-k$ parameters. 
This replacement is useful when $k > d-k$.

\begin{algorithm}
\caption{em-algorithm}
\Label{protocol1}
\begin{algorithmic}
\STATE {Assume that $\mathcal{M}$ is characterized by \eqref{BO1}.
Choose the initial value ${\theta}_{(1)} \in \mathcal{E}$;} 
\REPEAT 
\STATE {\bf m-step:}\quad 
Calculate ${\theta}^{(t+1)}:=\Pro^{(m),F}_{{\cal M}} ({\theta}_{(t)})$.
That is, ${\theta}^{(t+1)}$ is given as 
${\theta}_{(t)}+ \sum_{j=k+1}^{d} {\tau}^{j}_o u_{j}$,
where $({\tau}_o^{k+1}, \ldots, {\tau}_o^{d})$ is the unique element to satisfy 
\begin{align}
\frac{\partial}{\partial \tau^{\bar{j}}} F \Big({\theta}_{(t)}+ \sum_{j=1}^{l} \tau^{j} u_{j} \Big) \Big|_{{\tau}^{j}={\tau}^{j}_o}=a_{\bar{j}}\Label{const1-2}
\end{align}
for $\bar{j}=k+1, \ldots, d$.
The above choice is equivalent to the following;
\begin{align}
({\tau}_o^{k+1}, \ldots, {\tau}_o^{d}):= \argmin_{\bar{\tau}^{k+1}, \ldots, \bar{\tau}^{d}}
 F \Big({\theta}_{(t)}+ \sum_{j=1}^{l} \bar{\tau}^{j} u_{j} \Big) 
- \sum_{{j}=k+1}^d \bar{\tau}_j a_{j}.
 \Label{const1-2-0}
\end{align}
\STATE {\bf e-step:}\quad 
Calculate ${\theta}_{(t+1)}:=\Pro^{(e),F}_{{\cal E}} (\hat{\theta}^{(t+1)})$.
That is, ${\theta}_{(t+1)}$ is given as 
$\argmin_{\theta' \in \mathcal{E}} D^{F}({\theta}^{(t+1)} \| \theta')$, i.e.,
the unique element in ${\cal E}$ to realize the minimum
of the smooth convex function $\theta'\mapsto D^{F}({\theta}^{(t+1)} \| \theta')$.
\UNTIL{convergence.} 
\end{algorithmic}
\end{algorithm}

The em-algorithm repetitively applies the function $\Pro^{(m),F}_{{\cal M}}\circ \Pro^{(e),F}_{{\cal E}}|_{{\cal M}}$
for an element $\theta \in {\cal M}$.
Since the application of $\Pro^{(m),F}_{{\cal M}}\circ \Pro^{(e),F}_{{\cal E}}|_{{\cal M}}$
monotonically decreases the minimum Bregman divergence from the exponential family $\mathcal{E}$, 
when we apply the updating rule ${\theta}^{(t+1)}:= 
\Pro^{(m),F}_{{\cal M}}\circ \Pro^{(e),F}_{{\cal E}}|_{{\cal M}} ({\theta}^{(t)})$,
it is expected that the outcome ${\theta}^{(t)}$ of the repetitive application converges to 
$\theta^*(\mathcal{M},\mathcal{E})$.
However, 
it is not guaranteed that the converged point gives the global minimum in general \cite{Amari,Fujimoto,Allassonniere}.
To get a global minimum by this algorithm, we introduce the following condition for an exponential subfamily $\mathcal{E}$.
\begin{description}
\item[(B1)]
The relation 
\begin{align}
 D^{F}(\theta'\|\theta)
\ge
 D^{F}( \Pro^{(e),F}_{{\cal E}}  (\theta')\|  \Pro^{(e),F}_{{\cal E}}  (\theta) ) \Label{MLA2I}
\end{align}
holds for any $\theta,\theta' \in \Theta$.
\end{description}
Also, as its weak version, we consider the following condition.
\begin{description}
\item[(B1${\cal M}$)]~
The relation 
\begin{align}
 D^{F}(\theta'\|\theta)
\ge
 D^{F}( \Pro^{(e),F}_{{\cal E}}  (\theta')\|  \Pro^{(e),F}_{{\cal E}}  (\theta) ) \Label{MLA2}
\end{align}
holds for any $\theta,\theta' \in {\cal M}$.
\end{description}

Then, we have the following theorem.
\begin{theorem}\Label{theo:conv2}
Assume Conditions (B0), (B1${\cal M}$), 
and $\sup_{\theta \in \mathcal{E}} D^F(\theta \| \theta_{(1)})<\infty$
for a pair of a $k$-dimensional mixture subfamily $\mathcal{M}$ and an 
$l$-dimensional  exponential subfamily $\mathcal{E}$.
Then, in Algorithms \ref{protocol1-0} and \ref{protocol1},
the quantity $ D^{F}(\theta^{(t)}\| \Pro^{(e),F}_{{\cal E}}  (\theta^{(t)}) ) 
$ converges to the minimum $C_{\inf}(\mathcal{M},\mathcal{E})$ with the speed 
\begin{align}
 D^{F}(\theta^{(t)}\| \Pro^{(e),F}_{{\cal E}}  (\theta^{(t)}) ) 
-C_{\inf}(\mathcal{M},\mathcal{E})
=o(\frac{1}{t}).\Label{mma2}
\end{align}
Also, we have another type evaluation
\begin{align}
 D^{F}(\theta^{(t)}\| \Pro^{(e),F}_{{\cal E}}  (\theta^{(t)}) ) 
-C_{\inf}(\mathcal{M},\mathcal{E})
\le \frac{\sup_{\theta \in \mathcal{M}} D^F(\theta \| \theta_{(1)})}{t-1}
.\Label{mma2BYY}
\end{align}
Further, when 
$t-1 \ge \frac{ \sup_{\theta \in \mathcal{M}} D^F(\theta \| \theta_{(1)})}{\epsilon}$,
the parameter $ \theta^{(t)}$ satisfies
\begin{align}
 D^{F}(\theta^{(t)}\| \Pro^{(e),F}_{{\cal E}}  (\theta^{(t)}) ) 
- C_{\inf}(\mathcal{M},\mathcal{E})
 \le \epsilon.
\Label{NHG2}
\end{align}
In particular, when the minimum in \eqref{min1} exists, i.e., 
$\theta^*(\mathcal{M},\mathcal{E})$ exists,
the supremum $\sup_{\theta \in \mathcal{E}} D^F(\theta \| \theta_{(1)})$ in the above evaluation is replaced by
$D^F(\theta^*(\mathcal{M},\mathcal{E}) \| \theta_{(1)})$.
\end{theorem}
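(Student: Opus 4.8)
Throughout write $\theta_{(t)}=\Pro^{(e),F}_{\mathcal{E}}(\theta^{(t)})\in\mathcal{E}$ with $\theta^{(t)}\in\mathcal{M}$, so that the quantity in the statement is $\Delta_t:=D^F(\theta^{(t)}\|\theta_{(t)})$. First I would record monotonicity: the m-step optimality of $\theta^{(t+1)}=\Pro^{(m),F}_{\mathcal{M}}(\theta_{(t)})=\argmin_{\theta\in\mathcal{M}}D^F(\theta\|\theta_{(t)})$ gives $D^F(\theta^{(t+1)}\|\theta_{(t)})\le D^F(\theta^{(t)}\|\theta_{(t)})=\Delta_t$, while the e-step optimality of $\theta_{(t+1)}=\argmin_{\theta'\in\mathcal{E}}D^F(\theta^{(t+1)}\|\theta')$ gives $\Delta_{t+1}=D^F(\theta^{(t+1)}\|\theta_{(t+1)})\le D^F(\theta^{(t+1)}\|\theta_{(t)})$. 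Chaining these yields $\Delta_{t+1}\le\Delta_t$, and since $\theta^{(t)}\in\mathcal{M}$ we always have $\Delta_t\ge C_{\inf}(\mathcal{M},\mathcal{E})$.

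The core of the argument is a telescoping inequality. Fix any $\bar\theta\in\mathcal{M}$, put $\bar\theta_*:=\Pro^{(e),F}_{\mathcal{E}}(\bar\theta)$, $C_{\bar\theta}:=D^F(\bar\theta\|\bar\theta_*)\ge C_{\inf}(\mathcal{M},\mathcal{E})$, and take the potential $g_t:=D^F(\bar\theta\|\theta_{(t)})$. I would assemble three facts. (i) Since $\theta^{(t+1)}$ is the intersection of $\mathcal{M}$ with the exponential family $\mathcal{E}_{\theta_{(t)}\to\mathcal{M}}$ (Lemma \ref{Th6}(A2)), Proposition \ref{MNL} applied with this exponential family gives $D^F(\bar\theta\|\theta_{(t)})=D^F(\bar\theta\|\theta^{(t+1)})+D^F(\theta^{(t+1)}\|\theta_{(t)})$. (ii) Since $\bar\theta_*$ is the intersection of $\mathcal{E}$ with $\mathcal{M}_{\bar\theta\to\mathcal{E}}$, Proposition \ref{MNL} gives $D^F(\bar\theta\|\theta_{(t+1)})=C_{\bar\theta}+D^F(\bar\theta_*\|\theta_{(t+1)})$. (iii) Condition (B1$\mathcal{M}$) applied to $\bar\theta,\theta^{(t+1)}\in\mathcal{M}$ gives $D^F(\bar\theta\|\theta^{(t+1)})\ge D^F(\bar\theta_*\|\theta_{(t+1)})$. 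Substituting (i) and (ii) into $g_t-g_{t+1}$, cancelling the two $D^F(\bar\theta_*\|\theta_{(t+1)})$ terms via (iii), and using $D^F(\theta^{(t+1)}\|\theta_{(t)})\ge\Delta_{t+1}$ from the e-step, I obtain
\begin{align}
g_t-g_{t+1}\ \ge\ \Delta_{t+1}-C_{\bar\theta}.\Label{plan-tele}
\end{align}

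Summing \eqref{plan-tele} over $t=1,\dots,T$ telescopes to $\sum_{t=1}^T(\Delta_{t+1}-C_{\bar\theta})\le g_1-g_{T+1}\le g_1=D^F(\bar\theta\|\theta_{(1)})$, and monotonicity turns the left side into $T(\Delta_{T+1}-C_{\bar\theta})$, so $\Delta_{T+1}-C_{\bar\theta}\le D^F(\bar\theta\|\theta_{(1)})/T$. Setting $t=T+1$, bounding $D^F(\bar\theta\|\theta_{(1)})\le\sup_{\theta\in\mathcal{M}}D^F(\theta\|\theta_{(1)})$, and then choosing $\bar\theta$ along a sequence in $\mathcal{M}$ with $C_{\bar\theta}\to C_{\inf}(\mathcal{M},\mathcal{E})$ yields \eqref{mma2BYY}; inequality \eqref{NHG2} is then just the statement that the right side of \eqref{mma2BYY} is $\le\epsilon$ once $t-1\ge\sup_{\theta\in\mathcal{M}}D^F(\theta\|\theta_{(1)})/\epsilon$. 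When the minimum in \eqref{min1} is attained I would take $\bar\theta=\theta^*(\mathcal{M},\mathcal{E})$ directly, so $C_{\bar\theta}=C_{\inf}(\mathcal{M},\mathcal{E})$ and the numerator sharpens to $D^F(\theta^*(\mathcal{M},\mathcal{E})\|\theta_{(1)})$, giving the final assertion. For the refined rate \eqref{mma2}, the same telescoping shows $\sum_{t\ge1}(\Delta_{t+1}-C_{\inf}(\mathcal{M},\mathcal{E}))<\infty$, finiteness coming from $D^F(\bar\theta\|\theta_{(1)})=C_{\bar\theta}+D^F(\bar\theta_*\|\theta_{(1)})$ together with the hypothesis $\sup_{\theta\in\mathcal{E}}D^F(\theta\|\theta_{(1)})<\infty$, and a nonnegative, monotonically decreasing, summable sequence is automatically $o(1/t)$.

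The step I expect to be the main obstacle is establishing \eqref{plan-tele}, since it is the only place Condition (B1$\mathcal{M}$) enters and it must be matched exactly against the two Pythagorean decompositions so that the $D^F(\bar\theta_*\|\theta_{(t+1)})$ terms cancel with the correct sign. Because $D^F$ is not symmetric and $\bar\theta\in\mathcal{M}$ while $\theta_{(t)}\in\mathcal{E}$, keeping each divergence in the orientation in which Proposition \ref{MNL} actually applies (mixture element in the first slot, exponential element in the second) is the delicate point. A secondary difficulty is the case where the minimum is not attained: there one runs the argument along an approximating sequence $\bar\theta$ and must verify that $C_{\bar\theta}-C_{\inf}(\mathcal{M},\mathcal{E})\to0$ can be arranged while keeping the uniform control $D^F(\bar\theta\|\theta_{(1)})\le\sup_{\theta\in\mathcal{M}}D^F(\theta\|\theta_{(1)})$.
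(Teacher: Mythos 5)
Your proposal is correct and follows essentially the same route as the paper's proof in Appendix \ref{A2}: the same two Pythagorean decompositions via Proposition \ref{MNL}, the same use of Condition (B1${\cal M}$) to compare $D^F(\bar\theta\|\theta^{(t+1)})$ with $D^F(\bar\theta_*\|\theta_{(t+1)})$, the same telescoping and monotonicity, the same $\epsilon_1$-approximating reference point when the minimum is not attained, and the same summability argument for the $o(1/t)$ rate. The only cosmetic difference is your potential $g_t=D^F(\bar\theta\|\theta_{(t)})$ versus the paper's $D^F(\theta(\epsilon_1)_*\|\theta_{(t)})$, which differ by the constant $C_{\bar\theta}$ through the Pythagorean identity, so the key inequality \eqref{plan-tele} is identical to the paper's \eqref{VOMT3}.
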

The proof of Theorem \ref{theo:conv2} is given in Appendix \ref{A2}.

To improve the above evaluation, we introduce a strength version of Condition (B1)
as a condition for $\mathcal{M},\mathcal{E}$, and $\theta'\in \mathcal{E}$.
\begin{description}
\item[(B1+)]
The minimizer $\theta^*=\theta^*(\mathcal{M},\mathcal{E})$ exists.
There exists a constant $\beta(\theta')<1$ to satisfy the following condition.
When an element $\theta \in \im \Pro^{(m),F}_{{\cal M}}|_{\mathcal{E}}
\subset {\cal M} $ satisfies the condition
$ D^{F}(\theta^*\|\theta)\le  D^{F}( \theta_{*}\|  \theta' ) $,
the relation
\begin{align}
 \beta(\theta') D^{F}(\theta^*\|\theta)
\ge
 D^{F}( \theta_{*}\|  \Pro^{(e),F}_{{\cal E}}  (\theta) ) \Label{MLA2+}
\end{align}
holds. 
\end{description}
Then, we have the following theorem.

\begin{theorem}\Label{theo:conv2+}
Assume that Conditions (B0) and (B1+) hold
for a pair of a $k$-dimensional mixture subfamily $\mathcal{M}$, an $l$-dimensional  exponential subfamily $\mathcal{E}$, and 
$\theta'=\theta_{(1)} \in \mathcal{E}$.
Then, in Algorithms \ref{protocol1-0} and \ref{protocol1},
the quantity $ D^{F}(\theta^{(t)}\| \Pro^{(e),F}_{{\cal E}}  (\theta^{(t)}) ) 
$ converges to the minimum $C_{\inf}(\mathcal{M},\mathcal{E})$ with the speed 
\begin{align}
 D^{F}(\theta^{(t)}\| \Pro^{(e),F}_{{\cal E}}  (\theta^{(t)}) ) 
-C_{\inf}(\mathcal{M},\mathcal{E})
=
\beta(\theta_{(1)})^{t-2} D^F(\theta_{*} \| \theta_{(1)}).
\Label{mma2+}
\end{align}
Further, when 
$t-2 \ge \frac{ \log  D^F(\theta_{*} \| \theta_{(1)}) -\log \epsilon}{-\log\beta(\theta_{(1)})}$,
the parameter $ \theta^{(t)}$ satisfies
\begin{align}
 D^{F}(\theta^{(t)}\| \Pro^{(e),F}_{{\cal E}}  (\theta^{(t)}) ) 
- C_{\inf}(\mathcal{M},\mathcal{E})
 \le \epsilon.
\Label{NHG2+}
\end{align}
\end{theorem}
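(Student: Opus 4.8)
The plan is to track the within-$\mathcal{E}$ divergence $S_t := D^F(\theta_*\|\theta_{(t)})$ between the optimal exponential-family point $\theta_*=\Pro^{(e),F}_{\mathcal{E}}(\theta^*)$ and the current e-step output $\theta_{(t)}=\Pro^{(e),F}_{\mathcal{E}}(\theta^{(t)})$, to show that $S_t$ contracts by the factor $\beta(\theta_{(1)})$ at every step, and then to dominate the objective gap $\Delta_t := D^F(\theta^{(t)}\|\theta_{(t)})-C_{\inf}(\mathcal{M},\mathcal{E})$ by $S_{t-1}$; throughout, $C_{\inf}=D^F(\theta^*\|\theta_*)$. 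The starting point is the pair of forward Pythagorean identities coming from Proposition \ref{MNL} together with the optimality characterizations of Lemmas \ref{LA1} and \ref{LA2}. For the $e$-projection with source $\theta^*$ one has $D^F(\theta^*\|\theta_{(t)})=D^F(\theta^*\|\theta_*)+D^F(\theta_*\|\theta_{(t)})=C_{\inf}+S_t$, while for the $m$-projection $\theta^{(t+1)}=\Pro^{(m),F}_{\mathcal{M}}(\theta_{(t)})$ tested against the competitor $\theta^*\in\mathcal{M}$ one has $D^F(\theta^*\|\theta_{(t)})=D^F(\theta^*\|\theta^{(t+1)})+D^F(\theta^{(t+1)}\|\theta_{(t)})$. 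Writing $R_t^m:=D^F(\theta^{(t+1)}\|\theta_{(t)})=\min_{\theta\in\mathcal{M}}D^F(\theta\|\theta_{(t)})$ and subtracting the two identities gives $D^F(\theta^*\|\theta^{(t+1)})=C_{\inf}+S_t-R_t^m$. Since $\theta_{(t)}\in\mathcal{E}$ is an admissible second argument in the double minimization defining $C_{\inf}$, we have $R_t^m\ge C_{\inf}$, hence the decisive estimate $D^F(\theta^*\|\theta^{(t+1)})\le S_t$.

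Next I would invoke Condition (B1+) with $\theta'=\theta_{(1)}$ and $\theta=\theta^{(t+1)}$, which lies in $\im\Pro^{(m),F}_{\mathcal{M}}|_{\mathcal{E}}$ because $\theta_{(t)}\in\mathcal{E}$. The bound just obtained supplies the required hypothesis $D^F(\theta^*\|\theta^{(t+1)})\le S_t\le S_1=D^F(\theta_*\|\theta_{(1)})$, where $S_t\le S_1$ is maintained inductively. Condition (B1+) then yields $S_{t+1}=D^F(\theta_*\|\theta_{(t+1)})\le \beta(\theta_{(1)})\,D^F(\theta^*\|\theta^{(t+1)})\le \beta(\theta_{(1)})\,S_t$, and iterating gives $S_t\le \beta(\theta_{(1)})^{t-1}S_1$.

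Finally I would translate this into the stated estimate for the objective. Using $R_{t+1}=\min_{\theta'\in\mathcal{E}}D^F(\theta^{(t+1)}\|\theta')\le D^F(\theta^{(t+1)}\|\theta_{(t)})=R_t^m$ together with the rearrangement $R_t^m-C_{\inf}=S_t-D^F(\theta^*\|\theta^{(t+1)})\le S_t$, I obtain $\Delta_{t+1}=R_{t+1}-C_{\inf}\le S_t\le \beta(\theta_{(1)})^{t-1}S_1$, that is $\Delta_t\le \beta(\theta_{(1)})^{t-2}D^F(\theta_*\|\theta_{(1)})$, which is exactly the decay asserted in \eqref{mma2+}. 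The iteration-count statement \eqref{NHG2+} then follows by solving $\beta(\theta_{(1)})^{t-2}D^F(\theta_*\|\theta_{(1)})\le\epsilon$ for $t$, using $\log\beta(\theta_{(1)})<0$.

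The main obstacle I anticipate is the asymmetry of $D^F$: both the objective and $C_{\inf}$ place the mixture-family point in the first slot, whereas the projection characterizations most readily produce decompositions in the opposite order, so one must apply the forward Pythagorean identities (source-first for the $e$-projection, source-second for the $m$-projection) in precisely the right direction. The two points that make the recursion close cleanly are the cancellation $D^F(\theta^*\|\theta^{(t+1)})=C_{\inf}+S_t-R_t^m$ combined with $R_t^m\ge C_{\inf}$, and the verification that the hypothesis $D^F(\theta^*\|\theta^{(t+1)})\le S_1$ required by (B1+) is preserved along the entire orbit; these deserve the most care.
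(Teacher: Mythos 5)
Your proposal is correct and follows essentially the same route as the paper's own proof: the two Pythagorean decompositions (the paper's \eqref{VOMT} with $\epsilon_1=0$) give $D^F(\theta^*\|\theta^{(t+1)})\le D^F(\theta_*\|\theta_{(t)})$, Condition (B1+) then yields the contraction $D^F(\theta_*\|\theta_{(t+1)})\le\beta(\theta_{(1)})\,D^F(\theta_*\|\theta_{(t)})$, and the objective gap is dominated by $D^F(\theta_*\|\theta_{(t)})$ exactly as in the paper's \eqref{VK2}. If anything, your inductive verification that the (B1+) hypothesis propagates along the orbit (using $\beta(\theta_{(1)})<1$ to get monotonicity of $S_t$ directly from (B1+)) is slightly tidier than the paper's argument, which invokes \eqref{VK1} whose derivation rests on Condition (B1), not formally assumed in this theorem.
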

The proof of Theorem \ref{theo:conv2+} is given in Appendix \ref{A3}.

In fact, it is not so easy to find $\theta_{(1)}$ to satisfy Condition (B1+).
However, when we apply Algorithm \ref{protocol1},
$\theta_{(t)}$ becomes close to $\theta_{*}$ with sufficiently large $t$.
When $\theta_{(t)}\in \mathcal{E}$ belongs to the neighborhood
of $\theta_{*}$,
Condition (B1+) holds by substituting $\theta_{(t)}$ into $\theta_{(1)}$
so that 
Theorem \ref{theo:conv2+} can be applied with sufficiently $t$.
That is, once $\theta_{(t)}\in \mathcal{E}$ belongs to the neighborhood
of $\theta_{*}$, we have an exponential convergence. 

Further, it is not easy to implement $e$- and $m$- projections perfectly, in general.
Hence, we need an alternative algorithm instead of 
Algorithms \ref{protocol1-0} and \ref{protocol1}.
Now, we consider the case when only e-step can be perfectly implemented and 
m-step is approximately done with $\epsilon$ error.
Examples of such a case will be discussed later sections.
Algorithm \ref{protocol1-0} is modified as follows.

\begin{algorithm}
\caption{em-algorithm with $\epsilon$ approximated m-step in the mixture subfamily $\mathcal{M}$}
\Label{protocol1-error}
\begin{algorithmic}
\STATE {Assume that $\mathcal{M}$ is characterized by \eqref{BO1}.
Choose the initial value ${\theta}_{(1)} \in \mathcal{E}$;} 
\REPEAT 
\STATE {\bf m-step:}\quad 
Calculate ${\theta}^{(t+1)}$.
That is, we choose an element ${\theta}^{(t+1)}  \in \mathcal{M}$
such that 
\begin{align}
D^{F}({\theta}^{(t+1)} \| {\theta}_{(t)})
\le \min
\Big(D^{F}({\theta}^{(t)} \| {\theta}_{(t)}),
\min_{\theta \in \mathcal{M}} D^{F}({\theta} \| {\theta}_{(t)})+\epsilon \Big),
\Label{NLT}
\end{align}
where $D^{F}({\theta}^{(1)} \| {\theta}_{(1)})$ is defined as $\infty$.

\STATE {\bf e-step:}\quad 
Calculate ${\theta}_{(t+1)}:=\Pro^{(e),F}_{{\cal E}} (\hat{\theta}^{(t+1)})$.
That is, ${\theta}_{(t+1)}$ is given as 
$\argmin_{\theta' \in \mathcal{E}} D^{F}({\theta}^{(t+1)} \| \theta')$, i.e.,
the unique element in ${\cal E}$ to realize the minimum
of the smooth convex function $\theta'\mapsto D^{F}({\theta}^{(t+1)} \| \theta')$.
\UNTIL{convergence.} 
\end{algorithmic}
\end{algorithm}

Then, we have the following theorem.
\begin{theorem}\Label{theo:conv2BB}
Assume Conditions (B0), (B1), 
and the existence of the minimizer $\theta^*=\theta^*(\mathcal{M},\mathcal{E})$ in \eqref{min2} 
for a pair of a $k$-dimensional mixture subfamily $\mathcal{M}$ and an 
$l$-dimensional  exponential subfamily $\mathcal{E}$.
In addition, we define the set $\mathcal{E}_0:=\{\theta \in \mathcal{E} |
D^F(\theta_{*}\| \theta ) \le D^F(\theta_{*}\| \theta_{(1)} ) \} \subset \mathcal{E}$.

Then, in Algorithm \ref{protocol1-error}, 
the quantity $ D^{F}(\theta^{(t)}\| \Pro^{(e),F}_{{\cal E}}  (\theta^{(t)}) ) 
$ converges to the minimum $C_{\inf}(\mathcal{M},\mathcal{E})$ with the speed 
\begin{align}
& D^{F}(\theta^{(t+1)}\| \Pro^{(e),F}_{{\cal E}}  (\theta^{(t+1)}) ) 
-C_{\inf}(\mathcal{M},\mathcal{E}) \nonumber \\
\le &
\frac{D^F(\theta_{*}\| \theta_{(1)}  )}{t}
 + 2\gamma \sqrt{D^F(\theta_{*}\|\theta_{(1)})
\epsilon}+
(\gamma+1) \epsilon  .
\Label{mma2BA}
\end{align}
where $\gamma:=\gamma(\mathcal{E}_0|\mathcal{E})$.
Further, when 
$t \ge \frac{ 2 D^F(\theta_{*,1} \| \theta_{(1)})}{\epsilon'}+1$
and 
$\epsilon \le \frac{{\epsilon'}^2}{4 (3 \gamma+1)^2 D^F(\theta_{*}\|\theta_{(1)})}$,
the parameter $ \theta^{(t)}$ satisfies
\begin{align}
 D^{F}(\theta^{(t)}\| \Pro^{(e),F}_{{\cal E}}  (\theta^{(t)}) ) 
- C_{\inf}(\mathcal{M},\mathcal{E})
 \le \epsilon'.
\Label{NHG2T}
\end{align}
\end{theorem}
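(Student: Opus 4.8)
The plan is to track, inside the exponential subfamily $\mathcal{E}$, the divergence $b_t:=D^F(\theta_*\|\theta_{(t)})$ and the gap $\Delta_{t+1}:=D^F(\theta^{(t+1)}\|\Pro^{(e),F}_{\mathcal{E}}(\theta^{(t+1)}))-C_{\inf}(\mathcal{M},\mathcal{E})$, and to establish a one–step recursion $\Delta_{t+1}\le b_t-b_{t+1}+(\gamma+1)\epsilon+2\gamma\sqrt{b_t\epsilon}$ that telescopes. Two facts come for free. Since the e-step is exact, applying the Pythagorean theorem (Proposition \ref{MNL}) to the $e$-projection of $\theta^{(t+1)}$ onto $\mathcal{E}$ gives $D^F(\theta^{(t+1)}\|\theta_*)=(\Delta_{t+1}+C_{\inf})+b_{t+1}$ and $D^F(\theta^{(t+1)}\|\theta_{(t)})=(\Delta_{t+1}+C_{\inf})+D^F(\theta_{(t+1)}\|\theta_{(t)})$, so the tracked quantity equals $\min_{\theta'\in\mathcal{E}}D^F(\theta^{(t+1)}\|\theta')$; moreover \eqref{NLT} forces $D^F(\theta^{(t+1)}\|\theta_{(t)})\le D^F(\theta^{(t)}\|\theta_{(t)})$, whence $\Delta_{t+1}\le\Delta_t$, i.e. monotonicity. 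Second, Condition (B1) makes $\Pro^{(e),F}_{\mathcal{E}}$ a contraction, so that in particular $b_{t+1}=D^F(\theta_*\|\theta_{(t+1)})\le D^F(\theta^*\|\theta^{(t+1)})$.

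The heart of the argument introduces the \emph{virtual} exact $m$-projection $\tilde\theta^{(t+1)}:=\Pro^{(m),F}_{\mathcal{M}}(\theta_{(t)})$ and its $e$-projection $\tilde\theta_{(t+1)}:=\Pro^{(e),F}_{\mathcal{E}}(\tilde\theta^{(t+1)})$, i.e. what an exact step from the same $\theta_{(t)}$ would produce; for these the exact Pythagorean relation is available. Writing $m_t:=\min_{\theta\in\mathcal{M}}D^F(\theta\|\theta_{(t)})$, Proposition \ref{MNL} gives $D^F(\theta^*\|\theta_{(t)})=D^F(\theta^*\|\tilde\theta^{(t+1)})+m_t$, and since $D^F(\theta^*\|\theta_{(t)})=C_{\inf}+b_t$ while $m_t\ge C_{\inf}$ (because $\theta_{(t)}\in\mathcal{E}$), one gets $\tilde b_{t+1}:=D^F(\theta_*\|\tilde\theta_{(t+1)})\le D^F(\theta^*\|\tilde\theta^{(t+1)})=C_{\inf}+b_t-m_t\le b_t$. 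The same splitting applied to $\theta^{(t+1)}\in\mathcal{M}$ yields $D^F(\theta^{(t+1)}\|\tilde\theta^{(t+1)})=D^F(\theta^{(t+1)}\|\theta_{(t)})-m_t\le\epsilon$ by the $\epsilon$-suboptimality in \eqref{NLT}, and one more use of (B1) gives $D^F(\theta_{(t+1)}\|\tilde\theta_{(t+1)})\le\epsilon$. Finally $\Delta_{t+1}+C_{\inf}\le D^F(\theta^{(t+1)}\|\theta_{(t)})\le m_t+\epsilon$ produces the first half, $\Delta_{t+1}\le b_t-\tilde b_{t+1}+\epsilon$.

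To turn $\tilde b_{t+1}$ back into the genuine $b_{t+1}$ I invoke Theorem \ref{XAM} \emph{inside the Bregman subsystem} $(\Theta_{\mathcal{E}},F_{\mathcal{E}},D^{F_{\mathcal{E}}})$ of $\mathcal{E}$, with $\theta_1=\theta_*$, $\theta_2=\theta_{(t+1)}$ and $\theta_3=\tilde\theta_{(t+1)}$. Here the sublevel set $\mathcal{E}_0$ is a star subset because, by \eqref{XI1}, $D^F(\theta_*\|\cdot)=D^{F^*}(\cdot\|\eta(\theta_*))$ is convex in the mixture parameter, so $\gamma=\gamma(\mathcal{E}_0|\mathcal{E})$ is the legitimate constant and
\begin{align}
b_{t+1}\le \tilde b_{t+1}+\gamma\,D^F(\theta_{(t+1)}\|\tilde\theta_{(t+1)})+2\gamma\sqrt{\tilde b_{t+1}\,D^F(\theta_{(t+1)}\|\tilde\theta_{(t+1)})}\le \tilde b_{t+1}+\gamma\epsilon+2\gamma\sqrt{b_t\epsilon}.
\end{align}
Combining this with the first half and $\tilde b_{t+1}\le b_t$ gives the advertised recursion. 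Summing it over the first $t$ steps telescopes the $b$-terms, and the nonnegativity of $b_{t+1}$ together with the monotonicity $\Delta_{T+1}\le\Delta_{s+1}$ yields $t\,\Delta_{t+1}\le\sum_s\Delta_{s+1}\le b_1+t\big[(\gamma+1)\epsilon+2\gamma\sqrt{b_1\epsilon}\big]$, which is exactly \eqref{mma2BA} after dividing by $t$ and using $b_t\le b_1$. The stopping rule \eqref{NHG2T} then follows by splitting $\epsilon'$: $t-1\ge 2b_1/\epsilon'$ makes the first term $\le\epsilon'/2$, while for $\epsilon\le b_1$ one has $(\gamma+1)\epsilon+2\gamma\sqrt{b_1\epsilon}\le(3\gamma+1)\sqrt{b_1\epsilon}$, so $\epsilon\le\epsilon'^2/(4(3\gamma+1)^2 b_1)$ bounds the error by $\epsilon'/2$.

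The main obstacle I expect is guaranteeing that every iterate $\theta_{(t)}$ stays in $\mathcal{E}_0$, so that $\gamma(\mathcal{E}_0|\mathcal{E})$ may be used uniformly and the replacement $\sqrt{b_t\epsilon}\le\sqrt{b_1\epsilon}$ is valid; unlike the exact case, the approximate recursion only gives $b_{t+1}\le b_t+\gamma\epsilon+2\gamma\sqrt{b_t\epsilon}$, so $b_t$ need not be monotone and the containment must be forced by an induction that exploits the smallness of $\epsilon$ relative to $b_1$ and the decrease of the tracked quantity. The second delicate point is the precise coupling, through Theorem \ref{XAM}, between the actually computed iterate $\theta_{(t+1)}$ and the virtual exact one $\tilde\theta_{(t+1)}$, which is what keeps the single-step error from accumulating linearly and instead collapses it into the non-decaying terms $(\gamma+1)\epsilon+2\gamma\sqrt{b_1\epsilon}$.
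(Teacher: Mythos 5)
Your route is, in substance, the paper's own proof in Appendix \ref{A4}: your virtual exact projections $\tilde\theta^{(t+1)},\tilde\theta_{(t+1)}$ are the paper's $\theta^{(t+1),*},\theta_{(t+1),*}$; the bounds $D^F(\theta^{(t+1)}\|\tilde\theta^{(t+1)})\le\epsilon$ and $\tilde b_{t+1}\le b_t$ via the Pythagorean theorem, (B1), and $m_t\ge C_{\inf}$, the invocation of Theorem \ref{XAM} with $\gamma(\mathcal{E}_0|\mathcal{E})$, the one-step recursion (the paper's \eqref{CAM}), the monotonicity of the gap (the paper's \eqref{BLY}), and the final splitting of $\epsilon'$ in the stopping rule all coincide with the paper's Steps 1, 2 and 5. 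However, there is a genuine gap, which you flag yourself and then leave open: every application of Theorem \ref{XAM} at stage $t$ requires $\theta_{(t+1)}\in\mathcal{E}_0$ (the theorem needs both $\theta_1$ and $\theta_2$ to lie in the star subset, since its proof bounds $J^{-1}$ along the mixture segment joining them), and your telescoping requires $\sqrt{b_s\epsilon}\le\sqrt{b_1\epsilon}$, i.e.\ $b_s\le b_1$, at every stage $s\le t$. In the inexact algorithm $b_s$ is not automatically monotone; the recursion only yields $b_{s+1}\le b_s+\gamma\epsilon+2\gamma\sqrt{b_s\epsilon}$, so as written both the recursion step and the summation are unjustified, and the proof is incomplete.

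The missing idea, supplied by the paper's Steps 3--4, is a dichotomy against the error floor $\delta:=2\gamma\sqrt{b_1\epsilon}+(\gamma+1)\epsilon$ rather than an appeal to the ``smallness of $\epsilon$'': either the gap $\Delta_t$ falls below $\delta$ at some $t\le t_0$, in which case the monotonicity \eqref{BLY} alone gives the bound \eqref{mma2BA} at time $t_0+1$; or the gap stays at least $\delta$ up to $t_0$ (the paper's condition \eqref{BLA}), and then \eqref{CAM} combined with \eqref{BLA} forces $b_{t+1}\le b_t$ by induction (the paper's \eqref{BAP}), so the containment $\theta_{(t)}\in\mathcal{E}_0$ propagates and the telescoping becomes legitimate. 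A second, minor, flaw: your justification of the star property of $\mathcal{E}_0$ is incorrect as stated. By \eqref{XI1}, $D^F(\theta_*\|\theta)=D^{F^*}(\eta(\theta)\|\eta(\theta_*))$ places the varying point in the \emph{first} slot, whereas \eqref{BJ1} gives convexity of a Bregman divergence only in its \emph{second} slot, so convexity in the mixture parameter does not follow. The property itself is true, but for a different reason: applying \eqref{KPOT3} inside the subsystem $(\Theta_{\mathcal{E}},F_{\mathcal{E}},D^{F_{\mathcal{E}}})$ gives $D^F(\theta_*\|\theta(s))\le D^F(\theta_*\|\theta)$ for every $\theta(s)$ on the mixture segment from $\theta_*$ to $\theta$, so the sublevel set $\mathcal{E}_0$ is star-shaped around $\theta_*$ in mixture coordinates.
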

The proof of Theorem \ref{theo:conv2BB} is given in Appendix \ref{A4}.

\begin{algorithm}
\caption{em-algorithm
with $\epsilon$ approximated m-step in the exponential subfamily}
\Label{protocol1Berror}
\begin{algorithmic}
\STATE {Assume that $\mathcal{M}$ is characterized by \eqref{BO1}.
We choose two parameters $\epsilon_1 < \epsilon_2 $.
Choose the initial value ${\theta}_{(1)} \in \mathcal{E}$;} 
\REPEAT 
\STATE {\bf m-step:}\quad 
We choose ${\theta}^{(t+1)}\in {\cal M}$ and $\bar{\theta}^{(t+1)}={\theta}_{(t)}+ \sum_{j=k+1}^{d} {\tau}^{j}_o u_{j}
$
such that 
\begin{align}
& F \Big({\theta}_{(t)}+ \sum_{j=k+1}^{d} {\tau}_o^{j} u_{j} \Big) 
- \sum_{{j}=k+1}^d {\tau}_o^j a_{j} \nonumber \\
\le &
\min_{\bar{\tau}^{k+1}, \ldots, \bar{\tau}^{d}}
 F \Big({\theta}_{(t)}+ \sum_{j=k+1}^{d} \bar{\tau}^{j} u_{j} \Big) 
- \sum_{{j}=k+1}^d \bar{\tau}^j a_{j}+\epsilon_1.\Label{NKD}
\end{align}
and
\begin{align}
D({\theta}^{(t+1)} \| \bar{\theta}^{(t+1)}) \le \epsilon_2.
\Label{NKDB}
\end{align}
\STATE {\bf e-step:}\quad 
Calculate ${\theta}_{(t+1)}: =\Pro^{(e),F}_{{\cal E}} (\bar{\theta}^{(t+1)})$.
\UNTIL{
$t=t_1-1$.} 
\STATE {\bf final step:}\quad 
We output the final estimate $\theta_f^{(t_1)} :=\theta^{(t_2)} \in \mathcal{M}$
by using  $t_2:= \argmin_{t=2, \ldots, t_1} 
D^F(\theta^{(t)} \| \theta_{(t-1)})-D^F(\theta^{(t)} \| \bar\theta^{(t)})$.
\end{algorithmic}
\end{algorithm}

Since m-step has two conditions, Algorithm \ref{protocol1Berror} seems complicated.
This step can be realized as follows.
The condition \eqref{NKD} simply shows the error for the minimization 
of the convex function $ F \Big({\theta}_{(t)}+ \sum_{j=k+1}^{d} \bar{\tau}^{j} u_{j} \Big) 
- \sum_{{j}=k+1}^d \bar{\tau}^j a_{j}$.
The condition \eqref{NKDB} is related to the choice of $ {\theta}^{(t+1)}\in {\cal M}$.
As one possible choice,
we choose ${\theta}^{(t+1)}$ as follows. 
Next, we choose the element $\kappa^{j'}$ 
by solving the equations
\begin{align}
u_{j'}^{i'} \Big(\eta_{i'}(\bar{\theta}^{(t+1)})+ 
\sum_{i=1}^d J_{i',i}(\bar{\theta}^{(t+1)}) \sum_{j=k+1}^d u_{j}^i \kappa^{j}\Big)=
a_{j'}
\end{align}
for $j'=k+1, \ldots, d$.
Then, we choose the element ${\theta}^{(t+1)} $ 
by $ \eta_j({\theta}^{(t+1)})=\eta_j(\bar{\theta}^{(t+1)})+ 
\sum_{i=1}^d \sum_{j'=k+1}^d J_{j,i}(\bar{\theta}^{(t+1)})u_{j'}^i \kappa^{j'}$ for $j=1, \ldots, d$.
If ${\theta}^{(t+1)}$ does not satisfy \eqref{NKDB}, we retake $\bar{\theta}^{(t+1)} $
such that  the value
$ F \Big({\theta}_{(t)}+ \sum_{j=k+1}^{d} {\tau}_o^{j} u_{j} \Big) 
- \sum_{{j}=k+1}^d {\tau}_o^j a_{j} $ is smaller than the previous one.

In this way,
the m-step of Algorithm \ref{protocol1-error}
requires the approximate calculation of the minimum 
$\min_{\theta \in \mathcal{M}} D^{F}({\theta} \| {\theta}_{(t)})$,
which can be done as the convex minimization with respect to the mixture parameter 
in $\mathcal{M}$.
However, this minimization needs to handle $d-k$ parameters.
If $k < d-k$, the alternative minimization given in \eqref{const1-2-0}
has a smaller number of parameters.
As an approximate version of Algorithm \ref{protocol1}, we have 
Algorithm \ref{protocol1Berror}.
Indeed, if we can calculate the derivative of the convex function 
$ F \Big({\theta}_{(t)}+ \sum_{j=k+1}^{d} \bar{\tau}^{j} u_{j} \Big) 
- \sum_{{j}=k+1}^d \bar{\tau}^j a_{j}$,
we can employ algorithms explained in Appendix \ref{AA1}.

In Algorithm \ref{protocol1Berror},
we use the relation 
\begin{align}
\Pro^{(e),F}_{{\cal E}} ({\theta}^{(t+1)})
=\Pro^{(e),F}_{{\cal E}} (\bar{\theta}^{(t+1)}).\Label{XLQ}
\end{align}
In fact, the point $\Pro^{(e),F}_{{\cal E}} ({\theta}^{(t+1)})$ is characterized 
by the intersection between the exponential subfamily ${\cal E}$
and the mixture subfamily whose mixture parameters $\eta_1, \ldots, \eta_l$ are fixed to 
$\eta_1({\theta}^{(t+1)}), \ldots, \eta_l({\theta}^{(t+1)})$.
Hence, the above relation \eqref{XLQ} holds.

Then, we have the following theorem.
\begin{theorem}\Label{theo:conv2BC}
Assume Conditions (B0), (B1), 
and the existence of the minimizer $\theta^*:=\theta^*(\mathcal{M},\mathcal{E})$ in \eqref{min2} 
for a pair of a $k$-dimensional mixture subfamily $\mathcal{M}$ and an 
$l$-dimensional  exponential subfamily $\mathcal{E}$.
Then, in Algorithm \ref{protocol1Berror}, 
we have
\begin{align}
D^F({\theta}^{(t+1),*}\|\bar{\theta}^{(t+1)})\le \epsilon_1 \Label{XP8}
\end{align}
for $t=1, \ldots, t_1-1$,
where
${\theta}^{(t+1),*}$ is defined as 
${\theta}_{(t)}+ \sum_{j=k+1}^{d} {\tau}^{j}_* u_{j}$ by
using 
$(\tau^{k+1}_{*}, \ldots, \tau^d_{*}):=
\argmin_{\bar{\tau}^{k+1}, \ldots, \bar{\tau}^{d}}
 F \Big({\theta}_{(t)}+ \sum_{j=k+1}^{d} \bar{\tau}^{j} u_{j} \Big) 
- \sum_{{j}=k+1}^d \bar{\tau}^j a_{j}$.
Also, 
the quantity $ D^{F}(\theta_f^{(t_1)}\| \Pro^{(e),F}_{{\cal E}} 
 (\theta_f^{(t_1)}) ) 
$ converges to the minimum $C_{\inf}(\mathcal{M},\mathcal{E})$ with the speed 
\begin{align}
& D^{F}(\theta_f^{(t_1)}\| \Pro^{(e),F}_{{\cal E}} (\theta_f^{(t_1)}) ) 
-C_{\inf}(\mathcal{M},\mathcal{E}) \nonumber\\
\le & \frac{1}{t_1-1} D^F( \theta_{*} \| \theta_{(1)})+  \epsilon_1 + \epsilon_2
\Label{Qma2BA}.
\end{align}
\end{theorem}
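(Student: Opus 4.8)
The plan is to dispose of \eqref{XP8} first, and then to build a telescoping estimate for the selection quantity $S_t := D^F(\theta^{(t)}\|\theta_{(t-1)}) - D^F(\theta^{(t)}\|\bar\theta^{(t)})$ that drives the final step. For \eqref{XP8}, note that both ${\theta}^{(t+1),*}$ and $\bar{\theta}^{(t+1)}$ lie on the exponential subfamily generated by $u_{k+1},\dots,u_d$ at $\theta_{(t)}$; hence by \eqref{NBSO} the divergence $D^F({\theta}^{(t+1),*}\|\bar{\theta}^{(t+1)})$ is the Bregman divergence of the convex function $h_t(\bar\tau):=F(\theta_{(t)}+\sum_{j=k+1}^d \bar\tau^j u_j)$ at the pair $(\tau_*,\tau_o)$. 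Writing $g_t(\bar\tau):=h_t(\bar\tau)-\sum_{j=k+1}^d\bar\tau^j a_j$ for the objective of \eqref{NKD}, the exact minimizer $\tau_*$ satisfies $\nabla h_t(\tau_*)=a$, so a one-line computation gives $D^F({\theta}^{(t+1),*}\|\bar{\theta}^{(t+1)})=g_t(\tau_o)-g_t(\tau_*)$, which is $\le\epsilon_1$ by \eqref{NKD}.

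For the rate I would track $D^F(\theta_{*}\|\theta_{(t)})$ along the iteration and aim at the per-step inequality
\[
S_t - C_{\inf}(\mathcal{M},\mathcal{E}) \le D^F(\theta_{*}\|\theta_{(t-1)}) - D^F(\theta_{*}\|\theta_{(t)}) + \epsilon_1 .
\]
To obtain it, I apply the m-projection Pythagorean (Proposition \ref{MNL}) at the \emph{exact} m-projection $\theta^{(t),*}=\Pro^{(m),F}_{\mathcal{M}}(\theta_{(t-1)})$ for the base point $\theta^*\in\mathcal{M}$, together with the e-projection splitting $D^F(\theta^*\|\theta_{(t-1)}) = C_{\inf}(\mathcal{M},\mathcal{E}) + D^F(\theta_{*}\|\theta_{(t-1)})$ coming from $\theta_{*}=\Pro^{(e),F}_{\mathcal{E}}(\theta^*)$ and the minimality of $\theta^*$. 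I then bound $D^F(\theta_{*}\|\theta_{(t)})$ from above: Condition (B1) applied to $(\theta^*,\bar\theta^{(t)})$ yields $D^F(\theta_{*}\|\theta_{(t)})=D^F(\theta_{*}\|\Pro^{(e),F}_{\mathcal{E}}(\bar\theta^{(t)}))\le D^F(\theta^*\|\bar\theta^{(t)})$ via \eqref{XLQ}, and the three-point identity rewrites $D^F(\theta^*\|\bar\theta^{(t)})=D^F(\theta^*\|\theta^{(t),*})+D^F(\theta^{(t),*}\|\bar\theta^{(t)})$ with a cross term that vanishes. Finally, a direct expansion using $\theta^{(t)}\in\mathcal{M}$ shows $S_t=F(\theta_{(t-1)})-g_{t-1}(\tau_o)\le F(\theta_{(t-1)})-g_{t-1}(\tau_*)=D^F(\theta^{(t),*}\|\theta_{(t-1)})$, which is what couples $S_t$ to the Pythagorean decomposition.

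Combining these facts gives the displayed inequality; summing it over $t=2,\dots,t_1$ telescopes the right-hand side to $D^F(\theta_{*}\|\theta_{(1)})+(t_1-1)\epsilon_1$, and since $\min_t S_t$ is at most the average, $\min_{t}S_t - C_{\inf}(\mathcal{M},\mathcal{E}) \le \tfrac{1}{t_1-1}D^F(\theta_{*}\|\theta_{(1)})+\epsilon_1$. The last ingredient relates the true objective at the output to $S_{t_2}$: the e-projection Pythagorean $D^F(\theta^{(t)}\|\theta_{(t-1)})=D^F(\theta^{(t)}\|\theta_{(t)})+D^F(\theta_{(t)}\|\theta_{(t-1)})$ (valid since $\Pro^{(e),F}_{\mathcal{E}}(\theta^{(t)})=\theta_{(t)}$ by \eqref{XLQ}) combined with \eqref{NKDB} gives $D^F(\theta^{(t)}\|\theta_{(t)})\le S_t+\epsilon_2$. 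Evaluating at $t=t_2=\argmin_t S_t$ and noting $\theta_f^{(t_1)}=\theta^{(t_2)}$ produces \eqref{Qma2BA}.

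The delicate point, and the step I expect to be the main obstacle, is that the algorithm e-projects the approximate point $\bar\theta^{(t)}$, whereas the clean Pythagorean decomposition is available only at the exact m-projection $\theta^{(t),*}$; a priori these differ and have different e-projections. The reconciliation rests entirely on the cross term $\langle \nabla^{(e)}[F](\theta^*)-\nabla^{(e)}[F](\theta^{(t),*}),\, \theta^{(t),*}-\bar\theta^{(t)}\rangle$ being zero, which holds because $\theta^*$ and $\theta^{(t),*}$ both lie in $\mathcal{M}$ (so they share the constraint values $a_j=\sum_i u^i_j\partial_i F$) while $\theta^{(t),*}-\bar\theta^{(t)}$ is a combination of $u_{k+1},\dots,u_d$. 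Keeping the two errors separate—$\epsilon_1$ entering only through the suboptimality $g_t(\tau_o)-g_t(\tau_*)$, and $\epsilon_2$ only at the final comparison of $S_t$ with the objective—is precisely what yields the additive $\epsilon_1+\epsilon_2$ rather than any cross contamination.
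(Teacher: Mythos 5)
Your proposal is correct and follows essentially the same route as the paper's proof in Appendix \ref{A5}: the identity $D^F(\theta^{(t+1),*}\|\bar\theta^{(t+1)})=g_t(\tau_o)-g_t(\tau_*)\le\epsilon_1$ for \eqref{XP8}, Condition (B1) applied to the pair $(\theta^*,\bar\theta^{(t)})$, the vanishing-cross-term Pythagorean decompositions through $\theta^{(t),*}$, and a telescoping sum of the resulting per-step inequality. The only difference is bookkeeping: you telescope the computable selection quantity $S_t$ directly (via $S_t\le D^F(\theta^{(t),*}\|\theta_{(t-1)})$), whereas the paper telescopes $D^F(\theta^{(t),*}\|\theta_{(t-1)})$ with an auxiliary index $t_3$ and then compares $S_{t_2}\le S_{t_3}$ in a separate step; the two arguments are mathematically equivalent.
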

The proof of Theorem \ref{theo:conv2BC} is given in Appendix \ref{A5}.

Considering Taylor expansion, we have 
\begin{align}
a_{j'}=& \sum_{j'} u_{j'}^{i'}\eta_{i'}({\theta}_{(t_2-1)}+ \sum_{j=k+1}^{d} {\tau}_*^{j} u_{j} ) \nonumber\\
\cong &
u_{j'}^{i'} \Big(\eta_{i'}(\bar{\theta}^{(t_2)})+ 
\sum_{i=1}^d J_{i',i}(\bar{\theta}^{(t_2)}) \sum_{j=k+1}^d u_{j}^i ({\tau}_*^{j}-\bar{\tau}^{j})\Big)
\end{align}
for $j'=k+1, \ldots, d$.
Hence, 
\begin{align}
\kappa^{j}\cong ({\tau}_*^{j}-\bar{\tau}^{j})
\end{align}

Using \eqref{MLA}, we have
\begin{align}
D^F({\theta}^{(t_2),*}\|\bar{\theta}^{(t_2)})
\cong 
\frac{1}{2}\sum_{j=1}^d \sum_{i=1}^d \sum_{j'=k+1}^d 
\sum_{i'=k+1}^d 
J_{j,i}(\bar{\theta}^{(t_2)})u_{j'}^i ({\tau}_*^{j'}-\bar{\tau}^{j'}) 
u_{i'}^j ({\tau}_*^{i'}-\bar{\tau}^{i'}). \Label{XP9}
\end{align}
Using \eqref{MLA}, we have
\begin{align}
&D^F(\theta^{(t_2)} \| \bar\theta^{(t_2)})
=D^{F^*}( \eta(\bar\theta^{(t_2)}) \| \eta(\theta^{(t_2)}) ) \nonumber\\
\cong &
\frac{1}{2}\sum_{j=1}^d \sum_{\bar{j}=1}^d (J(\bar{\theta}^{(t_2)})^{-1})^{j,\bar{j}}
\sum_{\bar{i}=1}^d \sum_{\bar{j}'=k+1}^d J_{\bar{j},\bar{i}}(\bar{\theta}^{(t_2)})u_{\bar{j'}}^i \kappa^{\bar{j'}}
\sum_{i=1}^d \sum_{j'=k+1}^d J_{j,i}(\bar{\theta}^{(t_2)})u_{j'}^i \kappa^{j'} \nonumber\\
=&
\frac{1}{2}\sum_{j=1}^d \sum_{i=1}^d \sum_{j'=k+1}^d 
\sum_{i'=k+1}^d 
J_{j,i}(\bar{\theta}^{(t_2)})u_{j'}^i \kappa^{j'} 
u_{i'}^j \kappa^{i'}.\Label{XP10}
\end{align}
Combining \eqref{XP8}, \eqref{XP10}, and \eqref{XP9}, we have
\begin{align}
D^F(\theta^{(t_2)} \| \bar\theta^{(t_2)})
\lessapprox \epsilon_1
\end{align}
Therefore, \eqref{Qma2BA} is rewritten as
\begin{align}
& D^{F}(\theta_f^{(t_1)}\| \Pro^{(e),F}_{{\cal E}} (\theta_f^{(t_1)}) ) 
-C_{\inf}(\mathcal{M},\mathcal{E}) \nonumber\\
\lessapprox & \frac{1}{t_1-1} D^F( \theta_{*} \| \theta_{(1)})+  2 \epsilon_1 
\Label{Qma2BT}.
\end{align}
Hence, 
when 
$t_1-1 \ge \frac{3 D^F(\theta_{*,1} \| \theta_{(1)})}{\epsilon}$,
and $\epsilon_1 \le \frac{\epsilon}{3}$,
the parameter $ \theta^{(t)}$ satisfies
\begin{align}
 D^{F}(\theta_f^{(t_1)}\| \Pro^{(e),F}_{{\cal E}}  (\theta_f^{(t_1)}) ) 
- C_{\inf}(\mathcal{M},\mathcal{E})
 \lessapprox \epsilon  .
\Label{WHG2T}
\end{align}

\subsection{Closed convex mixture family}\Label{SEC-CC}
In this section, we address a similar minimization problem for
a pair of a $k$-dimensional closed convex mixture subfamily $\mathcal{M}$ and an $l$-dimensional  exponential subfamily $
\mathcal{E}$ under the following condition (B0).
That is, we discuss a closed convex mixture subfamily instead of a mixture subfamily $\mathcal{M}$ while we consider an exponential subfamily 
$\mathcal{E}$.
Under this condition, 
we employ the same $e$-projection $\Pro^{(e),F}_{\mathcal{E}}$ 
defined in Lemma \ref{Lem8} as in the previous subsection,
but, we use
the $m$-projection $\Pro^{(m),F}_{\mathcal{M}}$ defined in Lemma \ref{LMGO}.
Hence, we consider Algorithm \ref{protocol1B} instead of Algorithm \ref{protocol1}.

\begin{algorithm}
\caption{em-algorithm with closed convex mixture family}
\Label{protocol1B}
\begin{algorithmic}
\STATE {Assume that $\mathcal{M}$ is characterized by the mixture parameter $\eta$.
Choose the initial value ${\theta}_{(1)} \in \mathcal{E}$;} 
\REPEAT 
\STATE {\bf m-step:}\quad 
Calculate ${\eta}^{(t+1)}$.
That is, ${\eta}^{(t+1)}$ is given as 
$\argmin_{\eta \in {\Xi_{\cal M}}} D^{F}(
\phi_{\mathcal{M}}^{(m)}({\eta}) \| {\theta}_{(t)})$, i.e.,
the unique element in ${\cal M}$ to realize the minimum
of the smooth convex function $\eta\mapsto D^{F}(
\phi_{\mathcal{M}}^{(m)}({\eta})\| {\theta}_{(t)})$.
\STATE {\bf e-step:}\quad 
Calculate ${\theta}_{(t+1)}:=\Pro^{(e),F}_{{\cal E}} (
\phi_{\mathcal{M}}^{(m)}({\eta}^{(t+1)}))$.
That is, ${\theta}_{(t+1)}$ is given as 
$\argmin_{\theta' \in \mathcal{E}} D^{F}(
\phi_{\mathcal{M}}^{(m)}({\eta}^{(t+1)}) \| \theta')$, i.e.,
the unique element in ${\cal E}$ to realize the minimum
of the smooth convex function $\theta'\mapsto D^{F}(
\phi_{\mathcal{M}}^{(m)}({\eta}^{(t+1)})\| \theta')$.
\UNTIL{convergence.} 
\end{algorithmic}
\end{algorithm}


When the boundary $\partial \mathcal{M}$ is composed of a finite number of closed mixture families, 
due to Lemmas \ref{Th6} and \ref{LO10},
Algorithm \ref{protocol1B} can be simplified to Algorithm \ref{protocol1C}
because Lemma \ref{LO10} guarantees that
$\Pro^{(m),F}_{{\cal M}} (\theta_{(t)})$
is given as $\Pro^{(m),F}_{\hat{\cal M}_{\lambda_0}} (\theta_{(t)})$,
where
we denote the extended mixture subfamily of 
${\mathcal{M}}_{\lambda}$ by $\hat{\mathcal{M}}_{\lambda}$ for $\lambda \in \Lambda_*$,
and $\lambda_0 $ is given in \eqref{XPA9}.

\begin{algorithm}
\caption{em-algorithm with closed convex mixture family whose boundary is composed 
of finite number of closed mixture families}
\Label{protocol1C}
\begin{algorithmic}
\STATE {Assume the following conditions;
A set of closed convex mixture subfamilies 
$\{ {\mathcal{M}}_{\lambda} \}_{\lambda \in \Lambda}$
covers the boundary $\partial \mathcal{M}$ of a closed convex mixture family $\mathcal{M}$
with subsets $\Lambda_\lambda \subset \Lambda$
and $\lambda \in \Lambda_*:=\Lambda \cup \{0\}$.
Each closed convex mixture subfamily $\mathcal{M}_{\lambda}$ is generated by the constraint by 
$\sum_{i=1} ^d u_{j,\lambda}^i \partial_i F(\theta)= a_{j,\lambda}$ for 
$j=k_{\lambda}+1, \ldots, d$ for $\lambda \in \Lambda_*$.
Choose the initial value ${\theta}_{(1)} \in \mathcal{E}$;} 

\REPEAT 
\STATE {\bf m-step:}\quad 
Calculate ${\theta}^{(t+1)}:=\Pro^{(m),F}_{{\cal M}} ({\theta}_{(t)})$ in the following way.
For $\lambda\in \Lambda_*$,
we calculate ${\theta}^{(t+1),\lambda}$ is given as 
${\theta}_{(t)}+ \sum_{j=k_{\lambda}+1}^{d} {\tau}^{j,\lambda} u_{j}$,
where $({\tau}^{k_{\lambda}+1,\lambda}, \ldots, {\tau}^{d,\lambda})$ is the unique element to satisfy 
\begin{align}
\frac{\partial}{\partial \tau^{\bar{j},\lambda}} F \Big({\theta}_{(t)}
+ \sum_{j=k_{\lambda}+1}^{d} \tau^{j,\lambda} u_{j} \Big) =a_{\bar{j},\lambda}\Label{const1-2-U}
\end{align}
for $\bar{j}=k_{\lambda}+1, \ldots, d$.
We set ${\theta}^{(t+1)}$ as ${\theta}^{(t+1),\lambda_0}$, where
\begin{align}
\lambda_0:= \argmin_{\lambda \in \Lambda_*}
\{ D^F({\theta}^{(t+1),\lambda} \| {\theta}_{(t)}) |
{\theta}^{(t+1),\lambda}\in \mathcal{M} \}.\Label{XPA9}
\end{align}

\STATE {\bf e-step:}\quad 
Calculate ${\theta}_{(t+1)}:=\Pro^{(e),F}_{{\cal E}} (\hat{\theta}^{(t+1)})$.
That is, ${\theta}_{(t+1)}$ is given as 
$\argmin_{\theta' \in \mathcal{E}} D^{F}({\theta}^{(t+1)} \| \theta')$, i.e.,
the unique element in ${\cal E}$ to realize the minimum
of the smooth convex function $\theta'\mapsto D^{F}({\theta}^{(t+1)} \| \theta')$.
\UNTIL{convergence.} 
\end{algorithmic}
\end{algorithm}

Then, in the same way as Theorem \ref{theo:conv2}, we have the following theorem.
\begin{theorem}\Label{theo:conv2T}
Assume Conditions (B0), (B1), and 
$\sup_{\theta \in \mathcal{E}} D^F(\theta \| \theta_{(1)})<\infty$
for a pair of a $k$-dimensional closed convex mixture subfamily $\mathcal{M}$ and 
an $l$-dimensional  exponential subfamily $\mathcal{E}$.
Then, Algorithms \ref{protocol1B} and \ref{protocol1C} have 
the same conclusion as Theorem \ref{theo:conv2}.
\if0
the quantity $ D^{F}(\theta^{(t)}\| \Pro^{(e),F}_{{\cal E}}  (\theta^{(t)}) ) 
$ converges to the minimum $C_{\inf}(\mathcal{M},\mathcal{E})$ with the speed 
\begin{align}
 D^{F}(\theta^{(t)}\| \Pro^{(e),F}_{{\cal E}}  (\theta^{(t)}) ) 
-C_{\inf}(\mathcal{M},\mathcal{E})
=o(\frac{1}{t}).\Label{mma2}
\end{align}
Further, when 
$t-1 \ge \frac{ \sup_{\theta \in \mathcal{M}} D^F(\theta \| \theta_{(1)})}{\epsilon}$,
the parameter $ \theta^{(t)}$ satisfies
\begin{align}
 D^{F}(\theta^{(t)}\| \Pro^{(e),F}_{{\cal E}}  (\theta^{(t)}) ) 
- C_{\inf}(\mathcal{M},\mathcal{E})
 \le \epsilon.
\Label{NHG2}
\end{align}
In particular, when the minimum in \eqref{min1} exists, i.e., 
$\theta_*(\mathcal{M},\mathcal{E})$ exists,
the supremum $\sup_{\theta \in \mathcal{E}} D^F(\theta \| \theta_{(1)})$ in the above evaluation is replaced by
$D^F(\theta_*(\mathcal{M},\mathcal{E}) \| \theta_{(1)})$.
\fi
\end{theorem}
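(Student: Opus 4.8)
The plan is to run the proof of Theorem \ref{theo:conv2} (Appendix \ref{A2}) essentially unchanged, isolating the single place where the structure of $\mathcal{M}$ enters, namely the m-step. In Theorem \ref{theo:conv2} the set $\mathcal{M}$ is an ordinary mixture subfamily, so the m-projection obeys the exact Pythagorean identity (\ref{AKO9}) of Proposition \ref{MNL}; here $\mathcal{M}$ is only a closed convex mixture subfamily. First I would invoke Lemma \ref{LMGO}, which guarantees that for every $\theta_{(t)} \in \mathcal{E}\subset\Theta$ the m-projection $\Pro^{(m),F}_{\mathcal{M}}(\theta_{(t)})$ exists and is unique, so that Algorithm \ref{protocol1B} is well defined, and which supplies the Pythagorean inequality (\ref{MGO2}) in place of the identity (\ref{AKO9}). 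The crucial observation is that the convergence estimates of Theorem \ref{theo:conv2} use the m-step Pythagorean relation only in the $\ge$ direction, so replacing the identity by the inequality (\ref{MGO2}) does not weaken any of the resulting bounds.

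Concretely, I would reproduce the telescoping (alternating-minimization) estimate. Writing $\theta^{(t+1)}=\Pro^{(m),F}_{\mathcal{M}}(\theta_{(t)})$ and $\theta_{(t+1)}=\Pro^{(e),F}_{\mathcal{E}}(\theta^{(t+1)})$, the m-step gives, for every comparison point $\theta\in\mathcal{M}$,
\[
D^F(\theta\|\theta_{(t)})\ge D^F(\theta\|\theta^{(t+1)})+D^F(\theta^{(t+1)}\|\theta_{(t)})
\]
straight from (\ref{MGO2}); the e-step gives the exact splitting $D^F(q\|\theta^{(t+1)})=D^F(q\|\theta_{(t+1)})+D^F(\theta_{(t+1)}\|\theta^{(t+1)})$ for $q\in\mathcal{E}$ from Proposition \ref{MNL}; and Condition (B1) supplies the contraction $D^F(\theta\|\theta^{(t+1)})\ge D^F(\Pro^{(e),F}_{\mathcal{E}}(\theta)\|\theta_{(t+1)})$ that links consecutive potentials. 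Taking the comparison point to be $\theta^*$ (or a near-optimal sequence in $\mathcal{M}$ when the minimum is not attained) and summing the per-step decrease of the potential $D^F(\theta^*\|\theta_{(t)})$ over $t$, the non-negativity of $D^F$ together with the finiteness of the initial divergence yields the explicit bound (\ref{mma2BYY}); the monotone decrease of $D^F(\theta^{(t)}\|\Pro^{(e),F}_{\mathcal{E}}(\theta^{(t)}))$ then upgrades this to the $o(1/t)$ statement (\ref{mma2}), and the replacement of $\sup_{\theta\in\mathcal{M}}$ by $D^F(\theta^*\|\theta_{(1)})$ when $\theta^*$ exists is identical to the original argument.

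The step I expect to need the most care is verifying that every invocation of the m-step Pythagorean relation in Appendix \ref{A2} is genuinely used through its $\ge$ direction, so that (\ref{MGO2}) is a legitimate substitute for (\ref{AKO9}); this is also where the strengthening of the hypothesis from (B1${\cal M}$) to (B1) pays off, since in the closed convex setting the relevant points are no longer confined to $\mathcal{M}$, and (B1), being valid on all of $\Theta$, covers every pair that arises. Finally, for Algorithm \ref{protocol1C} I would observe that its m-step computes $\Pro^{(m),F}_{\hat{\mathcal{M}}_{\lambda_0}}(\theta_{(t)})$ with $\lambda_0$ selected as in (\ref{XPA9}), each face projection being the explicit expression in (A3) of Lemma \ref{Th6}; by Lemma \ref{LO10} this coincides with $\Pro^{(m),F}_{\mathcal{M}}(\theta_{(t)})$. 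Hence Algorithm \ref{protocol1C} produces exactly the same iterates as Algorithm \ref{protocol1B}, and the conclusion transfers verbatim.
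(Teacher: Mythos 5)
Your proposal is correct and follows essentially the same route as the paper: the paper's own (very terse) proof in Appendix \ref{A5+1} likewise reruns the proof of Theorem \ref{theo:conv2}, replacing the m-step Pythagorean identity of Proposition \ref{MNL} by the inequality \eqref{MGO2} of Lemma \ref{LMGO} at steps $(a)$ of \eqref{VOMT} and \eqref{VOMT3}, exactly the substitution you identify, with the equivalence of Algorithms \ref{protocol1B} and \ref{protocol1C} handled via Lemma \ref{LO10} as you do. The only slight inaccuracy is your rationale for needing (B1) rather than (B1${\cal M}$): in the adapted proof the points entering Condition (B1) (namely $\theta(\epsilon_1)$ and $\theta^{(t+1)}$) still lie in $\mathcal{M}$, so this is not where the strengthened hypothesis is essential, but since (B1) is assumed this does not affect the validity of your argument.
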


Also, in the same way as Theorem \ref{theo:conv2+}, we have the following theorem;
\begin{theorem}\Label{theo:conv2+T}
Assume that Conditions (B0) and (B1+) hold
for a pair of a $k$-dimensional close convex mixture subfamily $\mathcal{M}$, an $l$-dimensional  exponential subfamily $\mathcal{E}$, and $
\theta'=\theta_{(1)} \in \mathcal{E}$.
Then, the quantity $ D^{F}(\theta^{(t)}\| \Pro^{(e),F}_{{\cal E}}  (\theta^{(t)}) ) 
$ converges to the minimum $C_{\inf}(\mathcal{M},\mathcal{E})$ with the speed 
\begin{align}
 D^{F}(\theta^{(t)}\| \Pro^{(e),F}_{{\cal E}}  (\theta^{(t)}) ) 
-C_{\inf}(\mathcal{M},\mathcal{E})
=
\beta(\theta_{(1)})^{t-2} D^F(\theta_{*} \| \theta_{(1)}).
\Label{mma2+T}
\end{align}
Further, when 
$t-2 \ge \frac{ \log  D^F(\theta_{*} \| \theta_{(1)}) -\log \epsilon}{\log \beta(\theta_{(1)})}$,
the parameter $ \theta^{(t)}$ satisfies
\begin{align}
 D^{F}(\theta^{(t)}\| \Pro^{(e),F}_{{\cal E}}  (\theta^{(t)}) ) 
- C_{\inf}(\mathcal{M},\mathcal{E})
 \le \epsilon.
\Label{NHG2+T}
\end{align}
\end{theorem}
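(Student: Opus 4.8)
The plan is to run the argument of Theorem~\ref{theo:conv2+} essentially verbatim, the only structural change being that the $m$-step now invokes the closed convex $m$-projection $\Pro^{(m),F}_{\mathcal{M}}$ of Lemma~\ref{LMGO} (as in Algorithms~\ref{protocol1B} and~\ref{protocol1C}) rather than the exact $m$-projection of Lemma~\ref{Th6}. Consequently the Pythagorean identity \eqref{AKO9} that governed the $m$-step must be downgraded to the one-sided inequality \eqref{MGO2}, and the whole point of the proof is to check that every place where the $m$-side Pythagorean relation is used needs only the $\ge$ direction, so that the closed convex case is covered with no loss. I would monitor the two divergences $p_t:=D^F(\theta^*\|\theta^{(t)})$ and $q_t:=D^F(\theta_*\|\theta_{(t)})$, where $\theta^*=\theta^*(\mathcal{M},\mathcal{E})$ and $\theta_*=\Pro^{(e),F}_{\mathcal{E}}(\theta^*)$ are the optimal pair guaranteed by (B1+).

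The engine is a two-line contraction. First, the $e$-step $\theta_{(t)}=\Pro^{(e),F}_{\mathcal{E}}(\theta^{(t)})$ together with Condition (B1+) applied to $\theta=\theta^{(t)}\in\Pro^{(m),F}_{\mathcal{M}}(\mathcal{E})$ gives $q_t\le \beta(\theta_{(1)})\,p_t$, valid as long as the precondition $p_t\le D^F(\theta_*\|\theta_{(1)})$ holds. Second, for the $m$-step $\theta^{(t+1)}=\Pro^{(m),F}_{\mathcal{M}}(\theta_{(t)})$ I would apply \eqref{MGO2} with $\theta'=\theta^*\in\mathcal{M}$ to obtain
\[
D^F(\theta^*\|\theta_{(t)})\ge p_{t+1}+D^F(\theta^{(t+1)}\|\theta_{(t)}),
\]
and combine it with two elementary facts: the exact $e$-projection Pythagorean equality (Proposition~\ref{MNL}, which stays an equality because $\mathcal{E}$ is genuinely exponential) yields $D^F(\theta^*\|\theta_{(t)})=C_{\inf}(\mathcal{M},\mathcal{E})+q_t$, while the projection value obeys $D^F(\theta^{(t+1)}\|\theta_{(t)})=\min_{\theta\in\mathcal{M}}D^F(\theta\|\theta_{(t)})\ge C_{\inf}(\mathcal{M},\mathcal{E})$ since $\theta_{(t)}\in\mathcal{E}$. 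These give simultaneously $p_{t+1}\le q_t$ and the objective bound $D^F(\theta^{(t+1)}\|\theta_{(t)})-C_{\inf}(\mathcal{M},\mathcal{E})\le q_t$. Chaining $q_{t+1}\le\beta(\theta_{(1)})p_{t+1}\le\beta(\theta_{(1)})q_t$ then produces the geometric decay $q_t\le\beta(\theta_{(1)})^{t-1}D^F(\theta_*\|\theta_{(1)})$, and feeding this into the objective bound together with the $e$-step monotonicity $D^F(\theta^{(t)}\|\theta_{(t)})\le D^F(\theta^{(t)}\|\theta_{(t-1)})$ delivers exactly \eqref{mma2+T}; the iteration count for \eqref{NHG2+T} then follows by solving $\beta(\theta_{(1)})^{t-2}D^F(\theta_*\|\theta_{(1)})\le\epsilon$ for $t$.

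The main obstacle, and the only real content beyond bookkeeping, is the directional check just described: I must confirm that the inequality $p_{t+1}\le q_t$, the objective bound, and the propagation of the precondition $p_t\le D^F(\theta_*\|\theta_{(1)})$ (by induction, since $p_{t+1}\le q_t\le q_1$ and $\beta(\theta_{(1)})<1$) all rely solely on \eqref{MGO2} and never on the reverse inequality, so that replacing the equality \eqref{AKO9} by the closed convex inequality is harmless. A secondary point is to invoke Lemma~\ref{LMGO} for the existence and uniqueness of $\Pro^{(m),F}_{\mathcal{M}}(\theta_{(t)})$ on the closed convex family under (B0), and to observe that when Algorithm~\ref{protocol1C} is used the $m$-step output coincides with $\Pro^{(m),F}_{\mathcal{M}}(\theta_{(t)})$ by Lemma~\ref{LO10}; hence Algorithms~\ref{protocol1B} and~\ref{protocol1C} generate the same iterates and obey the same bound.
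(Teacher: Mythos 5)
Your proposal is correct and takes essentially the same route as the paper: the paper proves Theorem \ref{theo:conv2+T} by re-running the proof of Theorem \ref{theo:conv2+} and replacing the $m$-side Pythagorean equality (step $(a)$ of \eqref{VOMT}) with the one-sided inequality \eqref{MGO2} of Lemma \ref{LMGO}, which is exactly your downgrade-and-check-directions plan, and your contraction chain $q_{t+1}\le\beta(\theta_{(1)})p_{t+1}\le\beta(\theta_{(1)})q_t$ reproduces the paper's use of \eqref{VK2} and Condition (B1+). If anything, your explicit induction propagating the precondition $p_t\le D^F(\theta_{*}\|\theta_{(1)})$ of (B1+) is slightly more careful than the paper's own bookkeeping, but the substance is identical.
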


Theorems \ref{theo:conv2T} and \ref{theo:conv2+T} are shown in Appendix \ref{A5+1}.

When we need to care the error in the m-step,
as an error version of Algorithm \ref{protocol1B}, we have  
Algorithm \ref{protocol1-error2} in the same way as 
Algorithm \ref{protocol1-error}. 

\begin{algorithm}
\caption{em-algorithm with $\epsilon$ approximated m-step}
\Label{protocol1-error2}
\begin{algorithmic}
\STATE {Assume that $\mathcal{M}$ is characterized by the mixture parameter $\eta$.
Choose the initial value ${\theta}_{(1)} \in \mathcal{E}$;} 
\REPEAT 
\STATE {\bf m-step:}\quad 
Calculate ${\eta}^{(t+1)}$.
That is, we choose ${\eta}^{(t+1)} \in \mathcal{M}$ such that
\begin{align}
D^{F}({\theta}^{(t+1)} \| {\theta}_{(t)})
\le \min
\Big(D^{F}({\theta}^{(t)} \| {\theta}_{(t)}),
\min_{\theta \in \mathcal{M}} D^{F}({\theta} \| {\theta}_{(t)})+\epsilon \Big),
\Label{NLT2}
\end{align}
where $D^{F}({\theta}^{(1)} \| {\theta}_{(1)})$ is defined as $\infty$.
\STATE {\bf e-step:}\quad 
Calculate ${\theta}_{(t+1)}:=\Pro^{(e),F}_{{\cal E}} (
\phi_{\mathcal{M}}^{(m)}({\eta}^{(t+1)}))$.
That is, ${\theta}_{(t+1)}$ is given as 
$\argmin_{\theta' \in \mathcal{E}} D^{F}(
\phi_{\mathcal{M}}^{(m)}({\eta}^{(t+1)}) \| \theta')$, i.e.,
the unique element in ${\cal E}$ to realize the minimum
of the smooth convex function $\theta'\mapsto D^{F}(
\phi_{\mathcal{M}}^{(m)}({\eta}^{(t+1)})\| \theta')$.
\UNTIL{convergence.} 
\end{algorithmic}
\end{algorithm}

Then, we have the following theorem.
\begin{theorem}\Label{theo:conv2BB2}
Assume Conditions (B0), (B1), and the existence of the minimizer $\theta^*:=\theta^*(\mathcal{M},\mathcal{E})$ in \eqref{min2} 
for a pair of a $k$-dimensional mixture subfamily $\mathcal{M}$ and an $l$-dimensional  exponential subfamily $\mathcal{E}$.
In addition, we define the set $\mathcal{E}_0:=\{\theta \in \mathcal{E} |
D^F(\theta_{*}\| \theta ) \le D^F(\theta_{*}\| \theta_{(1)} ) \} \subset \mathcal{E}$
and $\theta_{*}:=\Pro^{(e),F}_{{\cal E}}(\theta^*)$.
Then, Algorithm \ref{protocol1-error2} has 
the same conclusion as Theorem \ref{theo:conv2BB}.
\end{theorem}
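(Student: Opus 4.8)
The plan is to reduce Theorem~\ref{theo:conv2BB2} to Theorem~\ref{theo:conv2BB}, by showing that for a (flat) $k$-dimensional mixture subfamily $\mathcal{M}$ the two algorithms in question, Algorithm~\ref{protocol1-error} and Algorithm~\ref{protocol1-error2}, generate the identical pair of sequences $\{\theta^{(t)}\}$ and $\{\theta_{(t)}\}$. First I would recall from Subsection~\ref{S2-2} that $\phi_{\mathcal{M}}^{(m)}:\Xi_{\mathcal{M}}\to\mathcal{M}$ is a bijection, so that prescribing the mixture parameter $\eta^{(t+1)}\in\Xi_{\mathcal{M}}$ in the m-step of Algorithm~\ref{protocol1-error2} is the same as prescribing the point $\theta^{(t+1)}:=\phi_{\mathcal{M}}^{(m)}(\eta^{(t+1)})\in\mathcal{M}$ in the m-step of Algorithm~\ref{protocol1-error}. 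The defining m-step requirement \eqref{NLT2} is word-for-word the inequality \eqref{NLT}; hence in both algorithms $\theta^{(t+1)}$ lies in $\mathcal{M}$, is an $\epsilon$-approximate minimizer of $\theta\mapsto D^{F}(\theta\|\theta_{(t)})$ over $\mathcal{M}$, and satisfies the monotonicity bound $D^{F}(\theta^{(t+1)}\|\theta_{(t)})\le D^{F}(\theta^{(t)}\|\theta_{(t)})$.

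Next I would identify the e-steps. Because $\theta^{(t+1)}=\phi_{\mathcal{M}}^{(m)}(\eta^{(t+1)})$ by construction, the e-step of Algorithm~\ref{protocol1-error2}, namely $\theta_{(t+1)}=\Pro^{(e),F}_{\mathcal{E}}(\phi_{\mathcal{M}}^{(m)}(\eta^{(t+1)}))$, is identical to $\Pro^{(e),F}_{\mathcal{E}}(\theta^{(t+1)})$, which is the e-step of Algorithm~\ref{protocol1-error}; here the $e$-projection exists thanks to (E4) in Condition (B0). Starting from the common initial value $\theta_{(1)}\in\mathcal{E}$, an immediate induction then shows that the two algorithms produce exactly the same iterates $\{\theta^{(t)}\}\subset\mathcal{M}$ and $\{\theta_{(t)}\}\subset\mathcal{E}$.

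With the iterates identified, the conclusion transfers verbatim. The hypotheses (B0), (B1), the existence of $\theta^*=\theta^*(\mathcal{M},\mathcal{E})$, the auxiliary set $\mathcal{E}_0=\{\theta\in\mathcal{E}| D^F(\theta_*\|\theta)\le D^F(\theta_*\|\theta_{(1)})\}$ with $\theta_*=\Pro^{(e),F}_{\mathcal{E}}(\theta^*)$, and the constant $\gamma=\gamma(\mathcal{E}_0|\mathcal{E})$ are declared identically in the two theorems. The proof of Theorem~\ref{theo:conv2BB} in Appendix~\ref{A4} invokes only that each $\theta^{(t+1)}$ lies in the mixture subfamily $\mathcal{M}$, that the m-step is $\epsilon$-approximate and monotone, that the e-step is the exact $e$-projection onto $\mathcal{E}$, together with the geometric inputs Proposition~\ref{MNL} and Theorem~\ref{XAM}; every one of these holds unchanged for the iterates of Algorithm~\ref{protocol1-error2}. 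Crucially, because $\mathcal{M}$ is a genuine (flat) mixture subfamily and not merely a closed convex one, the Pythagorean \emph{equality} of Proposition~\ref{MNL} is available, so no extra boundary term appears and the bounds \eqref{mma2BA} and \eqref{NHG2T} emerge exactly as stated.

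The only point I expect to require genuine care is the bookkeeping verification that the argument of Appendix~\ref{A4} nowhere exploits the specific $\theta$-parameterization of Algorithm~\ref{protocol1-error}'s m-step beyond the three structural properties isolated above (feasibility in $\mathcal{M}$, $\epsilon$-approximate optimality, and monotone decrease of $D^F(\cdot\|\theta_{(t)})$). Once this is confirmed, the identification of the two iterate sequences makes Theorem~\ref{theo:conv2BB2} an immediate corollary of Theorem~\ref{theo:conv2BB}, and no new estimate needs to be proved.
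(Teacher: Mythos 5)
Your reduction misreads the setting, and the misreading removes the only nontrivial content of the theorem. Theorem \ref{theo:conv2BB2} belongs to Subsection \ref{SEC-CC}: Algorithm \ref{protocol1-error2} is introduced there as the error version of Algorithm \ref{protocol1B}, i.e., the m-step is an approximate minimization of $\eta \mapsto D^F(\phi_{\mathcal{M}}^{(m)}(\eta)\|\theta_{(t)})$ over a \emph{closed convex} mixture subfamily $\mathcal{M}$ (the theorem's wording ``mixture subfamily'' is admittedly sloppy, but if $\mathcal{M}$ were a flat mixture subfamily the statement would coincide verbatim with Theorem \ref{theo:conv2BB} and there would be nothing to prove --- which is in effect all your argument delivers). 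Your pivotal assertion, that ``because $\mathcal{M}$ is a genuine (flat) mixture subfamily \ldots the Pythagorean equality of Proposition \ref{MNL} is available,'' is exactly what fails in the intended setting: the exact comparator $\theta^{(t+1),*}=\Pro^{(m),F}_{\mathcal{M}}(\theta_{(t)})$ is now the projection of Lemma \ref{LMGO}, it may lie on the boundary $\partial\mathcal{M}$, and only the one-sided inequality \eqref{MGO2}, namely $D^F(\theta'\|\theta)\ge D^F(\theta'\|\Pro^{(m),F}_{\mathcal{M}}(\theta))+D^F(\Pro^{(m),F}_{\mathcal{M}}(\theta)\|\theta)$, survives.

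Consequently the paper's proof is not an identification of iterates with Algorithm \ref{protocol1-error}; it reruns the Appendix \ref{A4} argument with Proposition \ref{MNL} replaced by Lemma \ref{LMGO} at every place where the Pythagorean identity was invoked for the $m$-projection --- step $(a)$ of \eqref{NBTA}, step $(a)$ of \eqref{NBT5}, and the corresponding Pythagorean step in \eqref{NBT8}--\eqref{NBT9} --- and verifies that the resulting inequalities point in the usable direction: at \eqref{NBTA} and \eqref{NBT5} the inequality \eqref{MGO2} gives precisely the needed one-sided bound, while the bound $D^F(\theta^{(t+1)}\|\theta_{(t)})\le \epsilon + D^F(\theta^{(t+1),*}\|\theta_{(t)})$ used in \eqref{NBT9} follows directly from the $\epsilon$-approximation condition \eqref{NLT2} without any Pythagorean input. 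These directional checks are the entire substance of the theorem, and your proposal forecloses them by assuming the equality version. Your correct observations --- that \eqref{NLT2} is the same condition as \eqref{NLT} and that the e-steps agree once $\theta^{(t+1)}=\phi_{\mathcal{M}}^{(m)}(\eta^{(t+1)})$ --- are worth keeping, but the proof must then redo the three Pythagorean steps with Lemma \ref{LMGO} rather than declare the conclusion a verbatim corollary.
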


Theorem \ref{theo:conv2BB2} is shown in Appendix \ref{A5+1}.

When $k < d-k$, we need an alternative minimization for 
the $m$-step for Algorithm \ref{protocol1-error2}
in a way similar to Algorithm \ref{protocol1Berror}.
However, although we can consider a modification of  
Algorithm \ref{protocol1C} in a way similar to Algorithm \ref{protocol1Berror},
it is not so easy to evaluate the error or the modified algorithm.
Hence, to take into account the error in the $m$-step,
we propose another method to modify Algorithm \ref{protocol1C} 
as Algorithm \ref{protocol1Berror2}. 

\begin{algorithm}
\caption{em-algorithm
with $\epsilon$ approximated m-step in the exponential subfamily}
\Label{protocol1Berror2}
\begin{algorithmic}
\STATE {We assume the same conditions as Algorithm \ref{protocol1C}.
We denote the extended mixture subfamily of 
${\mathcal{M}}_{\lambda}$ by $\hat{\mathcal{M}}_{\lambda}$ for $\lambda \in \Lambda_*$.} 
\STATE {\bf 1st-step:}\quad 
For $\lambda \in \Lambda_*$, we apply Algorithm \ref{protocol1Berror}
to the pair of the exponential subfamily $\mathcal{E}$ and 
the mixture subfamily $\hat{\mathcal{M}}_{\lambda}$.
As the result with $t$ iteration, 
we denote the number $t_2$ in this application of Algorithm \ref{protocol1Berror}
by $t_2(\lambda)$.
Then, we denote 
$\theta^{(t_2(\lambda))}$, $\bar\theta^{(t_2(\lambda))}$, and 
$\theta_{(t_2(\lambda)-1)}$ in this application
by 
$\theta^{(t_2(\lambda)),\lambda}$, $\bar\theta^{(t_2(\lambda)),\lambda}$, and 
$\theta_{(t_2(\lambda)-1),\lambda}$, respectively


\STATE {\bf 2nd-step:}\quad 
We output the final estimate $\theta_f^{(t)} :=\theta^{(t_2(\lambda_0)),\lambda_0} \in \mathcal{M}$, where
\begin{align}
\lambda_0:= \argmin_{\lambda \in \Lambda_*}
\Big\{ 
D^F \Big({\theta}^{(t_2(\lambda)),\lambda} \Big\| 
{\theta}_{(t_2(\lambda)-1),\lambda} \Big)
  \Big|
{\theta}^{(t_2(\lambda)),\lambda}\in \mathcal{M}_\lambda \Big\}.
\end{align}
\end{algorithmic}
\end{algorithm}

To evaluate the error of Algorithm \ref{protocol1Berror2}, we prepare
the following lemma.
Therefore, using Theorem \ref{theo:conv2BC}, 
we obtain the following theorem for the error evaluation of 
Algorithm \ref{protocol1Berror2}.

\begin{theorem}\Label{theo:conv2BC2}
Assume the same assumption as Algorithm \ref{protocol1Berror2} and Conditions (B0) and (B1) for $\mathcal{E}$.
Also, we assume the existence of the minimizer $\theta^*:=\theta^*(\mathcal{M}_{\lambda},\mathcal{E})$ in \eqref{min2} 
for $\lambda \in \Lambda_*$.
Then, in Algorithm \ref{protocol1Berror2}, 
the quantity $ D^{F}(\theta_f^{(t)}\| \Pro^{(e),F}_{{\cal E}} 
(\theta_f^{(t)}) ) 
$ converges to the minimum $C_{\inf}(\mathcal{M},\mathcal{E})$ with the speed 
\begin{align}
& D^{F}(\theta_f^{(t)}\| \Pro^{(e),F}_{{\cal E}}  (\theta_f^{(t)}) ) 
-C_{\inf}(\mathcal{M},\mathcal{E}) \nonumber \\
\le &
(D(0)+1)
\max_{\lambda\in \Lambda_*}\Big(
\frac{1}{t_1-1} D^F(\theta_{*}(\mathcal{M}_\lambda,\mathcal{E})\| \theta_{(1)}  ) 
+ \epsilon_1 +D^F(\theta^{(t_2(\lambda)),\lambda} \| \bar\theta^{(t_2(\lambda)),\lambda}) 
\Big)\Label{Qma2BA2}.
\end{align}
Notice that $D(\lambda)$ is defined before Lemma \ref{CLP}.
\if0
Further, when 
$t-1 \ge \frac{2 
\max_{\lambda\in \Lambda_*}D^F(\theta_{*,1}(\mathcal{M}_\lambda,\mathcal{E}) \| \theta_{(1)})}{\epsilon'}$
and
$\epsilon \le 
\frac{\epsilon' \gamma_{\downarrow}}{2 (\gamma_{\downarrow}+\gamma_{\uparrow})}$,
the parameter $ \theta^{(t)}$ satisfies
\begin{align}
 D^{F}(\theta_f^{(t)}\| \Pro^{(e),F}_{{\cal E}}  (\theta_f^{(t)}) ) 
-C_{\inf}(\mathcal{M},\mathcal{E})
 \le \epsilon'  .
\Label{WHG2T2}
\end{align}
\fi
\end{theorem}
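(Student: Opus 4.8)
The plan is to bound the objective value of the output $\theta_f^{(t)}$ of Algorithm \ref{protocol1Berror2} by reducing to the per-piece estimate of Theorem \ref{theo:conv2BC} and then organizing the boundary hierarchy as an induction on the depth $D(0)$ of $\mathcal{M}$. Write $e(\lambda):= \frac{1}{t_1-1} D^F(\theta_{*}(\mathcal{M}_\lambda,\mathcal{E})\| \theta_{(1)}) + \epsilon_1 + D^F(\theta^{(t_2(\lambda)),\lambda}\|\bar\theta^{(t_2(\lambda)),\lambda})$ for the summand appearing inside the maximum of \eqref{Qma2BA2}, and set $E:=\max_{\lambda\in\Lambda_*}e(\lambda)$. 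Since the 1st-step runs Algorithm \ref{protocol1Berror} on each flat $\hat{\mathcal{M}}_\lambda$, Theorem \ref{theo:conv2BC} (with $\epsilon_2 = D^F(\theta^{(t_2(\lambda)),\lambda}\|\bar\theta^{(t_2(\lambda)),\lambda})$) applies for every $\lambda$. First I would use the Pythagorean theorem (Proposition \ref{MNL}), together with \eqref{XLQ} asserting that $\theta^{(t_2(\lambda)),\lambda}$ and $\bar\theta^{(t_2(\lambda)),\lambda}$ share the $e$-projection $\theta_{(t_2(\lambda)),\lambda}$, to record for each $\lambda$ that the objective $D^F(\theta^{(t_2(\lambda)),\lambda}\|\Pro^{(e),F}_{\mathcal{E}}(\theta^{(t_2(\lambda)),\lambda}))$ is at most the outer selection quantity $s(\lambda):=D^F(\theta^{(t_2(\lambda)),\lambda}\|\theta_{(t_2(\lambda)-1),\lambda})$ of the 2nd-step, and that $s(\lambda)$ itself satisfies $s(\lambda)\le C_{\inf}(\hat{\mathcal{M}}_\lambda,\mathcal{E})+e(\lambda)$. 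The second bound upgrades Theorem \ref{theo:conv2BC} from the objective to $s(\lambda)$ by absorbing the consecutive $e$-projection term $D^F(\theta_{(t_2(\lambda)),\lambda}\|\theta_{(t_2(\lambda)-1),\lambda})$ through the $t_2$-selection built into Algorithm \ref{protocol1Berror}; this is the content of the preparatory lemma.

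The structural core is an induction on depth using the partition $\{\interior\mathcal{M}_\lambda\}_{\lambda\in\Lambda_*}$ of $\mathcal{M}$ from Lemma \ref{CLP}. I would prove, for each piece $\mathcal{N}=\mathcal{M}_\lambda$, the statement: among the already-computed runs there is a feasible index $\lambda'$ (one with $\theta^{(t_2(\lambda')),\lambda'}\in\mathcal{M}_{\lambda'}$) satisfying $\mathcal{M}_{\lambda'}\subseteq\mathcal{N}$ and $s(\lambda')\le C_{\inf}(\mathcal{N},\mathcal{E})+(D(\mathcal{N})+1)E$. The base case $D(\mathcal{N})=0$ is a single point, where $\hat{\mathcal{N}}=\mathcal{N}$ and the run is trivially feasible with $s\le C_{\inf}(\mathcal{N},\mathcal{E})+E$. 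For the inductive step I run a dichotomy on the flat minimizer of $\hat{\mathcal{N}}$, to which the run converges under Condition (B1): if it is feasible it realizes $C_{\inf}(\mathcal{N},\mathcal{E})$ and the index $\lambda$ itself works at cost $E$; otherwise the constrained minimizer $\theta^*(\mathcal{N},\mathcal{E})$ must lie on $\partial\mathcal{N}=\cup_{\lambda''\in\Lambda_\lambda}\mathcal{M}_{\lambda''}$, hence in the interior of a unique child piece $\mathcal{M}_{\lambda''}$ of strictly smaller depth with $C_{\inf}(\mathcal{M}_{\lambda''},\mathcal{E})=C_{\inf}(\mathcal{N},\mathcal{E})$, and the induction hypothesis applied to $\mathcal{M}_{\lambda''}$ yields a feasible $\lambda'$ with $s(\lambda')\le C_{\inf}(\mathcal{N},\mathcal{E})+D(\mathcal{N})\,E\le C_{\inf}(\mathcal{N},\mathcal{E})+(D(\mathcal{N})+1)E$. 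Taking $\mathcal{N}=\mathcal{M}_0=\mathcal{M}$ and using that the 2nd-step selects the feasible index $\lambda_0$ of minimal $s$, I conclude $D^F(\theta_f^{(t)}\|\Pro^{(e),F}_{\mathcal{E}}(\theta_f^{(t)}))\le s(\lambda_0)\le s(\lambda')\le C_{\inf}(\mathcal{M},\mathcal{E})+(D(0)+1)E$, which is \eqref{Qma2BA2}.

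I expect the main obstacle to be the feasibility dichotomy in the inductive step, i.e.\ showing that when the unconstrained minimizer over the flat $\hat{\mathcal{N}}$ is infeasible, the constrained minimizer $\theta^*(\mathcal{N},\mathcal{E})$ necessarily sits on $\partial\mathcal{N}$ and inherits the optimal value on a child face. The clean way is the contrapositive: if $\theta^*(\mathcal{N},\mathcal{E})\in\interior\mathcal{N}$, then, because $\theta'\mapsto D^F(\theta'\|\theta_{*}(\mathcal{N},\mathcal{E}))$ is convex in the mixture parameter (recall \eqref{XI1}), a relative-interior minimizer over $\mathcal{N}$ is the global minimizer over the whole flat, so the constrained $m$-projection coincides with the unconstrained one and $\theta^*(\mathcal{N},\mathcal{E})$ is a fixed point of $\Pro^{(m),F}_{\hat{\mathcal{N}}}\circ\Pro^{(e),F}_{\mathcal{E}}$; Condition (B1) and Theorem \ref{theo:conv2} then force this fixed point to be the global flat minimizer, contradicting its infeasibility. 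Lemma \ref{LMGO}, in particular the boundary property \eqref{MGOF}, supplies the complementary fact that the constrained minimizer then lands in $\partial\mathcal{N}$, and Lemma \ref{CLP} isolates the unique child piece containing it. A secondary technical point, which the preparatory lemma handles, is the passage from the objective to $s(\lambda)$ so that minimizing $s$ over feasible indices loses at most one additive $E$ per depth level, keeping the overall factor at $(D(0)+1)$.
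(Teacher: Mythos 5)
Your overall architecture---running Theorem \ref{theo:conv2BC} on each flat $\hat{\mathcal{M}}_\lambda$, then inducting on the depth from Lemma \ref{CLP} so that one additive error $E$ is lost per level, which yields the factor $(D(0)+1)$---is the same as the paper's. But there is a genuine gap in your inductive dichotomy: you branch on whether the \emph{flat minimizer} $\theta^*(\hat{\mathcal{N}},\mathcal{E})$ is feasible, whereas the 2nd-step of Algorithm \ref{protocol1Berror2} selects only among indices $\lambda$ whose \emph{finite-time output} satisfies $\theta^{(t_2(\lambda)),\lambda}\in\mathcal{M}_\lambda$. These are different objects. Theorem \ref{theo:conv2BC} controls the objective value of the output of the run on $\hat{\mathcal{M}}_\lambda$, not its location: even when the flat minimizer lies in the interior of $\mathcal{M}_\lambda$, the iterate selected at time $t_2(\lambda)$ may sit in $\hat{\mathcal{M}}_\lambda\setminus\mathcal{M}_\lambda$ (for instance when the iterates approach the minimizer from outside the constraint set, and $t_1$ is fixed). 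In that case the index $\lambda$ is not admissible in the 2nd-step, so your claim that ``the index $\lambda$ itself works at cost $E$'' fails; and your descent mechanism is also unavailable, because it requires the constrained minimizer $\theta^*(\mathcal{N},\mathcal{E})$ to lie on $\partial\mathcal{N}$, which it does not in this configuration. Hence your induction does not close in the case where the constrained and unconstrained minima coincide but the output is infeasible.

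The paper closes precisely this case with Lemma \ref{L-XLP}, which is the real technical content of the proof and has no counterpart in your proposal: under (B0) and (B1) there are no local minima (Theorem \ref{theo:conv2}), so from an infeasible point $\theta_0\in\hat{\mathcal{M}}_\lambda\setminus\mathcal{M}_\lambda$ one can follow a continuous path in the flat along which the objective decreases monotonically toward $C_{\inf}(\hat{\mathcal{M}}_\lambda,\mathcal{E})=C_{\inf}(\mathcal{M}_\lambda,\mathcal{E})$; since that value is approached inside $\mathcal{M}_\lambda$, the path must cross $\partial\mathcal{M}_\lambda$, and the crossing point lies in some face $\mathcal{M}_{\lambda''}$, whence, taking $\theta_0=\theta^{(t_2(\lambda)),\lambda}$ and writing $e(\lambda)$, $E$ for your per-piece errors,
\begin{align}
\min_{\lambda''\in\Lambda_\lambda} C_{\inf}(\mathcal{M}_{\lambda''},\mathcal{E})
\le D^F\big(\theta_0\big\|\Pro^{(e),F}_{\mathcal{E}}(\theta_0)\big)
\le C_{\inf}(\mathcal{M}_\lambda,\mathcal{E}) + e(\lambda).
\end{align}
The induction hypothesis applied to that child---whose own $C_{\inf}$ may now \emph{exceed} $C_{\inf}(\mathcal{M}_\lambda,\mathcal{E})$ by up to $e(\lambda)$, which is exactly where one $E$ per level is spent---then supplies a feasible index with controlled $s$-value. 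The branch of your dichotomy that does work (flat minimizer infeasible, so the constrained minimum is attained on a descendant face with equal $C_{\inf}$) corresponds to the paper's easier case. Repairing your argument therefore amounts to replacing your two-way dichotomy by a case analysis on output feasibility together with equality of the constrained and unconstrained minima, and proving Lemma \ref{L-XLP} or an equivalent path-crossing statement.
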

The proof of Theorem \ref{theo:conv2BC2} is given in Appendix \ref{A6}.

\section{Classical rate distortion}\Label{S5}
\subsection{Classical rate distortion without side information}\Label{S5S1}
Let $\X:=\{1,\ldots,n_1 \}$ and $\Y:= \{1,\ldots, n_2 \}$ be finite sets. 
We call a map  $ W  :\X  \rightarrow  {\cal P}_{\Y}$
a channel from $\X$ to $\Y$. 
We denote the set of the above maps by ${\cal P}_{\Y|\X}$.
We use the notation $W_x(y):=W(y|x)$.
For $q \in {\cal P}_{\X}$ and $r \in {\cal P}_{\Y}$, 
$W \cdot q \in {\cal P}_{\Y}$,
$W \times q \in {\cal P}_{\X \times \Y}$, and $q \times r \in \mathcal{P}_{\X \times \Y}$ 
are defined by 
$(W \cdot q)(y):=\sum_{x \in \X}W(y|x)q(x)$, 
$(W \times q)(x,y):=W(y|x)q(x)$, and $(q \times r)(x,y) := q(x)r(y)$ respectively. 

Given a distortion measure $d(x, y)$ on $\X \times \Y$ and a distribution $P_X$ on $\X$, 
we define the following sets;
\begin{align}
{\cal P}_{\Y|\X}^{d,P_X,D}:=& \Big\{ W \in {\cal P}_{\Y|\X}\Big|  \sum_{x \in \X,y\in \Y}
d(x,y) W\times P_X (x,y) =D \Big\} \\
{\cal P}_{\Y|\X}^{d,P_X,D,\le}:=& \Big\{ W \in{\cal P}_{\Y|\X}\Big|  \sum_{x \in \X,y\in \Y}
d(x,y) W\times P_X (x,y) \le D \Big\}.
\end{align}
We define $\bar{d}(x,y)$ as
\begin{align}
\bar{d}(x,y):={d}(x,y) -{d}(x,n_2),\quad
\bar{d}(x,n_2):=0
\end{align}
for $x\in \X$ and $y =1, \ldots, n_2-1$.
Then, the condition
\begin{align}
\sum_{x \in \X,y\in \Y} d(x,y) W\times P_X (x,y) \le D 
\end{align}
is equivalent to 
\begin{align}
\sum_{x \in \X,y\in \Y} \bar{d}(x,y) W\times P_X (x,y) \le D- 
\sum _{x \in \X} P_X(x) {d}(x,n_2).
\end{align}
Hence, for simplicity, we assume that 
$d(x, n_2)=0$ in the following.
Also, we define the vector $d=(d_j)_{j=1}^{n_1(n_2-1)}$ as
$d_{(x-1) (n_2+1)+y}:= d(i,j)$ for $x\in \X$ and $y=1, \ldots, n_2-1$.

The standard rate distortion function is given as
\begin{align}
\min_{W \in {\cal P}_{\Y|\X}^{d,P_X,D,\le} }I(X;Y)_{W \times P_X}=&
\min_{W \in {\cal P}_{\Y|\X}^{d,P_X,D,\le} }D(W \times P_X
\| (W \cdot P_X ) \times P_X ) \nonumber \\
=&\min_{W \in {\cal P}_{\Y|\X}^{d,P_X,D,\le} } \min_{  q \in {{\cal P}_{\Y}} }
 D(W \times P_X \| q \times P_X ).\Label{MOT}
\end{align}
In the following, we use the notation 
$W_*:= \argmin_{W \in {\cal P}_{\Y|\X}^{d,P_X,D,\le} }I(X;Y)_{W \times P_X}$.

When there exists a distribution $Q_Y$ on $\Y$ such that
\begin{align}
\sum_{x,y}P_X(x)Q_Y(y) d(x,y)\le D,\Label{CXP}
\end{align}
the above minimum \eqref{MOT} is zero.
The existence of $Q_Y$ to satisfy the condition \eqref{CXP} is equivalent to 
\begin{align}
\min_y d_Y(y) \le D,\Label{CXP2}
\end{align}
where 
$d_{Y}(y):= \sum_{x \in {\cal X}}P_X(x) d(x,y)$.

Then,  we consider the Bregman divergence system 
$(\mathbb{R}^{n_1(n_2-1)}, \bar\mu, D^{\bar\mu})$ defined in Subsection \ref{4A2}, which coincides with 
the set of distributions $W \times P_X$.
The set of distributions $q \times P_X$ forms an exponential subfamily ${\cal E}$, and 
the subset ${\cal P}_{\Y|\X}^{d,P_X,D}\times P_X$
forms a mixture subfamily ${\cal M}$.

Then, we have the following theorem.
\begin{lemma}\Label{TXPO}
When \eqref{CXP2} holds, 
$\min_{W \in {\cal P}_{\Y|\X}^{d,P_X,D,\le} }I(X;Y)_{W \times P_X}=0$.
Otherwise,
\begin{align}
\min_{W \in {\cal P}_{\Y|\X}^{d,P_X,D,\le} }I(X;Y)_{W \times P_X}
=&
\min_{W \in {\cal P}_{\Y|\X}^{d,P_X,D} }I(X;Y)_{W \times P_X}
\nonumber \\
=&\min_{W \in {\cal P}_{\Y|\X}^{d,P_X,D} } \min_{  q \in {{\cal P}_{\Y}} }
 D(W \times P_X \| q \times P_X ).\Label{MOTR}
\end{align}
\end{lemma}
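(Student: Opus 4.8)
The plan is to treat the two cases separately, and for the main (second) claim to show that the inequality constraint may be tightened to an equality because, when $\min_y d_Y(y) > D$, any minimizer necessarily saturates the distortion bound.

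First I would dispose of the case \eqref{CXP2}, which is essentially the discussion preceding the lemma. Choosing $y^*$ with $d_Y(y^*)=\min_y d_Y(y)\le D$ and taking the constant channel $W_x=\delta_{y^*}$ yields $W\times P_X=\delta_{y^*}\times P_X$, a product distribution, so $I(X;Y)_{W\times P_X}=0$, while its distortion equals $d_Y(y^*)\le D$. Since $I(X;Y)_{W\times P_X}\ge 0$ always, the minimum over ${\cal P}_{\Y|\X}^{d,P_X,D,\le}$ is $0$.

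For the second claim, the inclusion ${\cal P}_{\Y|\X}^{d,P_X,D}\subset {\cal P}_{\Y|\X}^{d,P_X,D,\le}$ immediately gives $\min_{{\cal P}_{\Y|\X}^{d,P_X,D,\le}} I(X;Y)\le \min_{{\cal P}_{\Y|\X}^{d,P_X,D}} I(X;Y)$. Both minima are attained since each feasible set is compact and $W\mapsto I(X;Y)_{W\times P_X}$ is continuous. Let $W_*$ attain the minimum over ${\cal P}_{\Y|\X}^{d,P_X,D,\le}$, with distortion $D_*\le D$; the goal is to prove $D_*=D$, for then $W_*\in {\cal P}_{\Y|\X}^{d,P_X,D}$ and the reverse inequality $\min_{{\cal P}_{\Y|\X}^{d,P_X,D}} I(X;Y)\le I(X;Y)_{W_*\times P_X}=\min_{{\cal P}_{\Y|\X}^{d,P_X,D,\le}} I(X;Y)$ follows, giving the first equality in \eqref{MOTR}. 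The crux is a mixing argument that uses the convexity of $W\mapsto I(X;Y)_{W\times P_X}$ for fixed $P_X$ together with the fact that a constant channel carries zero mutual information but distortion at least $\min_y d_Y(y)$. First, $I(X;Y)_{W_*\times P_X}>0$: otherwise $W_*$ would be a constant channel $q$, whose distortion $\sum_y q(y)d_Y(y)\ge \min_y d_Y(y)>D\ge D_*$ would contradict feasibility. Next, assume $D_*<D$ and pick $y^*$ with $d_Y(y^*)=\min_y d_Y(y)=:D_{\max}>D$. Setting $W_\lambda:=(1-\lambda)W_*+\lambda\,\delta_{y^*}$, its distortion is the affine function $(1-\lambda)D_*+\lambda D_{\max}$, which equals $D$ at $\lambda^*=(D-D_*)/(D_{\max}-D_*)\in(0,1)$, so $W_{\lambda^*}$ is feasible; convexity then gives $I(X;Y)_{W_{\lambda^*}\times P_X}\le (1-\lambda^*)I(X;Y)_{W_*\times P_X}<I(X;Y)_{W_*\times P_X}$, contradicting the minimality of $W_*$. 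Hence $D_*=D$. The second equality in \eqref{MOTR} is just \eqref{MOT} restricted to the equality-constrained set.

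The main obstacle is this last step, namely showing the minimizer saturates the constraint. Its only nontrivial ingredient is the convexity of the mutual information in $W$ (classical, and consistent with its expression as a projected Bregman divergence onto the exponential subfamily ${\cal E}$ in \eqref{MOT}), combined with the hypothesis $D<\min_y d_Y(y)$ that makes the constant channels strictly infeasible.
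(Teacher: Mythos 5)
Your proof is correct, but it follows a genuinely different route from the paper's. The paper argues inside its information-geometric framework: for the minimizer $W_*$ over ${\cal P}_{\Y|\X}^{d,P_X,D,\le}$ it takes the $e$-projection $Q_{Y,1}\times P_X = \Pro^{(e),\bar{\mu}}_{{\cal E}}(W_*\times P_X)$, observes that failure of \eqref{CXP2} makes every product distribution, hence $Q_{Y,1}\times P_X$, violate the distortion constraint, and then invokes \eqref{MGOF} of Lemma \ref{LMGO} to conclude that the $m$-projection of $Q_{Y,1}\times P_X$ onto the closed convex mixture family lies on the boundary, i.e.\ yields a channel $W_1$ with distortion exactly $D$; the chain $I(X;Y)_{W_*\times P_X}=D(W_*\times P_X\|Q_{Y,1}\times P_X)\ge D(W_1\times P_X\|Q_{Y,1}\times P_X)\ge I(X;Y)_{W_1\times P_X}$ then contradicts a strict gap between the two minima. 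You instead saturate the constraint directly: if the optimal distortion were $D_*<D$, mixing $W_*$ with a constant channel raises the (affine) distortion to exactly $D$, while convexity of $W\mapsto I(X;Y)_{W\times P_X}$, the vanishing mutual information of constant channels, and $I(X;Y)_{W_*\times P_X}>0$ force a strict decrease of the objective, a contradiction. Your route is more elementary and self-contained: it needs neither Lemma \ref{LMGO} nor the $e$/$m$-projection machinery, only the standard convexity of mutual information in the channel for fixed input. What the paper's route buys is that the identical projection argument transfers verbatim to the side-information and quantum versions (Lemmas \ref{TXPO2} and \ref{TXPO5}), whereas your argument, though it does extend, requires re-establishing the convexity input in each setting (quantumly it holds because $\rho_R$ is fixed, so $I(R;B)$ equals a constant minus the concave conditional entropy). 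One minor imprecision on your side: $I(X;Y)_{W_*\times P_X}=0$ only forces $W_{*,x}$ to equal a fixed $q$ for $x$ in the support of $P_X$, not for all $x$; since $d_Y$ weights only such $x$, your infeasibility bound $\sum_y q(y)d_Y(y)\ge\min_y d_Y(y)>D$ is unaffected.
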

\begin{proof}
The first statement has been already shown.
We show the second statement by contradiction. 
Assume that \eqref{CXP2} nor the first equation in \eqref{MOTR}
does not hold. 
We define $Q_{Y,1}$ as
$Q_{Y,1}\times P_X=\Pro^{(e),\bar{\mu}}_{{\cal E}}(W_*\times P_X)$.

Since $Q_{Y,1}\times P_X$ does not belong to ${\cal P}_{\Y|\X}^{d,P_X,D,\le}$,
applying \eqref{MGOF} in Lemma \ref{LMGO} to 
the closed convex mixture subfamily ${\cal P}_{\Y|\X}^{d,P_X,D,\le}$,
we find that
$\Pro^{(m),\bar{\mu}}_{{\cal M}}  (Q_{Y,1}\times P_X)
=\Pro^{(m),\bar{\mu}}_{{\cal M}}  
\circ \Pro^{(e),\bar{\mu}}_{{\cal E}}  
(W_*\times P_X)$
belongs to ${\cal P}_{\Y|\X}^{d,P_X,D}$.
We choose $W_1$ such that $(W_1\times P_X)
=\Pro^{(m),\bar{\mu}}_{{\cal M}}  (Q_{Y,1}\times P_X)$.
Hence, we have
\begin{align}
I(X;Y)_{W_* \times P_X}
=& D( W_*\times P_X  \| Q_{Y,1}\times P_X)\nonumber\\
\ge &
D( W_1\times P_X  \| Q_{Y,1}\times P_X)
\ge
I(X;Y)_{W_1 \times P_X},
\end{align}
which contradicts 
$W_*= \argmin_{W \in {\cal P}_{\Y|\X}^{d,P_X,D,\le} }I(X;Y)_{W \times P_X}$.
\end{proof}

\if0
When the maximum is attained when the equality holds in the constraint, the above problem is rewritten as
\begin{align}
\min_{W \in {\cal P}_{\Y|\X}^{d,P_X,D} }D(W \times P_X \| (W \cdot P_X ) \times P_X )
=\min_{W \in {\cal P}_{\Y|\X}^{d,P_X,D} } \min_{  q \in {{\cal P}_{\Y}} }
 D(W \times P_X \| q \times P_X ).
\Label{MOT2}
\end{align}
\fi

Due to Lemma \ref{TXPO}, when \eqref{CXP2} does not hold,
it is sufficient to address the minimization \eqref{MOTR}.
In the following, we address the minimization problem \eqref{MOTR}, which  is a special case of the minimization \eqref{min1}
with the formulation given in Subsection \ref{Sec:emS1}.
The mixture family $\mathcal{M}$ has $ n_1(n_2-1)-1$ parameters.

\if0
To apply the formulation given in Subsection \ref{Sec:emS1},
we modify the parameterization given of the Bregman divergence system 
$(\mathbb{R}^{n_1(n_2-1)}, \bar\mu, D^{\bar\mu})$ 
defined in Subsection \ref{4A2}.
We choose a pair $x_0,y_0$ as 
$\argmax_{x,y} |d(x,y)|$.
To write the new parameterization in a simple way, we assume that
$x_0$ and $y_0$ are $n_1$ and $n_2-1$, respectively.
Then, we define the parameter $\theta=(\theta^1, \ldots, \theta^{n_1(n_2-1)})$
as 
\begin{align}
\theta^j:=&\bar{\theta}^j,\quad j=1, \ldots, n_1(n_2-1)-1\\
\theta^{n_1(n_2-1)}:=&
\frac{1}{d(n_1,n_2-1)}+1-\frac{1}{d(n_1,n_2-1)}
\sum_{x\in \X}\sum_{y=1}^{n_2-1} d(x,y)\bar{\theta}^{(x-1)(n_2-1)+y  }.
\end{align}
Hence, we have
\begin{align}
\bar{\theta}^j=&{\theta}^j,\quad j=1, \ldots, n_1(n_2-1)-1\\
\bar{\theta}^{n_1(n_2-1)}=&
\sum_{x\in \X}\sum_{y=1}^{n_2-1} d(x,y){\theta}^{(x-1)(n_2-1)+y  }.
\end{align}
Hence, the condition 
\begin{align}
\sum_{x \in \X,y\in \Y} d(x,y) W\times P_X (x,y) = D 
\Label{BPA}
\end{align}
is equivalent to the condition
\begin{align}
\frac{\partial}{\partial \theta^{n_1(n_2-1)}}\bar\mu(\theta)
=D.
\end{align}
\fi

Since the total dimension is $n_1(n_2-1)$,
we employ Algorithm \ref{protocol1} instead of Algorithm \ref{protocol1-0}.
Since Lemma \ref{LOST} guarantees Condition (B0) for this problem, 
Algorithm \ref{protocol1} works and is rewritten as Algorithm \ref{protocol1-2}. 

\begin{algorithm}
\caption{em-algorithm for rate distortion}
\Label{protocol1-2}
\begin{algorithmic}
\STATE {Choose the initial distribution $P_Y^{(1)}$ on $\Y$.
Then, we define the initial joint distribution $P_{XY,(1)}$ as 
$P_Y^{(1)}\times P_X $;} 
\REPEAT 
\STATE {\bf m-step:}\quad 
Calculate $P_{XY}^{(t+1)}$ as $P_{XY}^{(t+1)}(x,y):=
P_X(x) P_{Y}^{(t)}(y)e^{\bar{\tau} d(x,y)}
\Big(\sum_{y'}P_{Y}^{(t)}(y') e^{\bar{\tau} d(x,y')} \Big)^{-1}$,
where $\bar{\tau}$ is the unique element
$\tau$ to satisfy 
\begin{align}
\frac{\partial}{\partial \tau} 
\sum_{x}P_{X}(x)
\log \Big(\sum_{y} P_{Y}^{(t)}(y) e^{{\tau} d(x,y)}\Big)
&=D \Label{const1-4} 
\end{align}
This choice can be written in the way as \eqref{const1-2-0}.
\STATE {\bf e-step:}\quad 
Calculate $P_{Y}^{(t+1)}(y)$ as 
$\sum_{x \in \X}P_{XY}^{(t+1)}(x,y)$.
\UNTIL{convergence.} 
\end{algorithmic}
\end{algorithm}

To check Condition (B1),
we set $\theta$ and $\theta'$ be elements of $\mathbb{R}^{n_1n_2-1}$ corresponding to 
$W \times P_X $ and $  W' \times P_X$
in the sense of the Bregman divergence system 
$(\mathbb{R}^{n_1(n_2-1)}, \bar\mu, D^{\bar\mu})$ defined in Subsection \ref{4A2}.
Then, the relation
\begin{align}
& D^{\bar{\mu}}( \Pro^{(e),\bar{\mu}}_{{\cal E}}  (\theta')\|  \Pro^{(e),\bar{\mu}}_{{\cal E}}  (\theta) ) 
= D( (W' \cdot P_X)\times P_X\|  (W' \cdot P_X)\times P_X)\nonumber \\
=& D( W' \cdot P_X \|  W' \cdot P_X) 
\le  D( W' \times P_X \|  W' \times P_X)
= D^{\bar{\mu}}(\theta'\|\theta)\Label{MKP}
\end{align}
guarantees condition (B1).
When the initial value $\theta_{(1)}$ is chosen as the case that
$W$ has full support, 
$\sup_{\theta \in \mathcal{E}}D^{\bar{\mu}}(\theta \| \theta_{(1)})$ has a finite value.
Hence, 
Theorem \ref{theo:conv2T} guarantees the convergence to the global minimum.
Now, we set $\theta_{(1)}$ to be the product of $P_X$ and the uniform distribution on $\Y$.
Then, we have
\begin{align}
D^{\bar{\mu}}(\theta_{*}\| \theta_{(1)} )
\le \sup_{\theta \in {\cal M}} D^{\bar{\mu}}(\theta\| \theta_{(1)} )
=\log n_2 \Label{XPZ}.
\end{align}
Hence, the inequality \eqref{mma2BYY} is rewritten as
\begin{align}
I(X;Y)_{P_{XY}^{(t)}}
-\min_{W \in {\cal P}_{\Y|\X}^{d,P_X,D,\le} }I(X;Y)_{W \times P_X}
\le \frac{\log n_2}{t-1}
\Label{mma2BYY2}
\end{align}
In particular, when $t \ge \frac{\log n_2}{\epsilon}+1$, the 
above value is bounded by $\epsilon$.

\if0
While Blahut \cite[Theorems 4 and 6]{Blahut} considered this minimization problem as a double minimization problem,
the iterative algorithm in \cite{Blahut} is different from our Algorithm \ref{protocol1-2}. 
\fi
The original problem \eqref{MOTR} is written as a concave optimization with respect to $n_1(n_2-1) $ mixture parameters
because the mutual information is concave with respect to the conditional distribution.
Although our protocol contains a convex optimization in m-step,
the convex optimization in m-step has only one variable.
Therefore, our method is considered to convert a complicated concave optimization with a larger size
to iterative applications of a convex optimization with one variable.

Next, we consider the case when 
we cannot exactly calculate the unique element
$\bar{\tau}$ to satisfy 
\eqref{const1-4}. Alternatively, we need to use $\epsilon$ approximation for the solution.
We employ Algorithm \ref{protocol1Berror}, which requires to solve
the minimization of
the one-variable smooth convex function
$\hat{F}[P_{Y}](\tau) := \sum_{x}P_{X}(x)
\log \Big(\sum_{y} P_{Y}(y) e^{{\tau}(D- d(x,y)))}\Big)
$.
That is, it  is needed to find the minimizer
$\tau_*[P_Y]:=\argmin_{\tau} \hat{F}[P_{Y}](\tau) $.

To consider this minimization, we focus on the one-parameter exponential subfamily 
$P_{X,Y|\tau}[P_Y](x,y):= 
P_{X}(x) \frac{ P_{Y}(y) e^{{\tau} (D-d(x,y))}}{\sum_{y'} P_{Y}(y') e^{{\tau}(D- d(x,y'))}}$.
The first and second derivatives are calculated as 
\begin{align}
\frac{d}{d\tau}\hat{F}[P_{Y}](\tau)=&
\mathbb{E}_{P_{X,Y|\tau}[P_Y]} [D-d(X,Y)] \\
\frac{d^2}{d\tau^2}\hat{F}[P_{Y}](\tau)=&
\mathbb{E}_{P_{X,Y|\tau}[P_Y]} [(D-d(X,Y))^2]-
\mathbb{E}_{P_{X,Y|\tau}[P_Y]} [D-d(X,Y)] ^2.
\end{align}
Defining $ \zeta_{+}:= \max_{x,y}|D-d(x,y)|^2$, we have
\begin{align}
\frac{d^2}{d\tau^2}\hat{F}[P_{Y}](\tau) \le \zeta_+.\Label{XNP2}
\end{align}
The condition \eqref{XNP2} guarantees that
\begin{align}
\hat{F}[P_{Y}](\tau)
\le
\hat{F}[P_{Y}](0)
+  \frac{d}{d\tau}\hat{F}[P_{Y}](0) \tau+ \frac{1}{2}\zeta_+ \tau^2 \Label{XNP3}
\end{align}
for $\tau>0$.
\if0
Since
\begin{align}
\frac{d}{d\tau}\hat{F}[P_{Y}](\tau_*[P_Y])=0\Label{XNP4},
\end{align}
we have
\begin{align}
D( P_{X,Y|\tau_*[P_Y]}[P_Y] \| P_{X,Y|0}[P_Y])
= \hat{F}[P_{Y}](0)- \hat{F}[P_{Y}](\tau_*[P_Y])\Label{XNP5}
\end{align}
\fi
To solve $\min_{\tau} \hat{F}[P_{Y}](\tau)$, 
we employ the bisection method explained in Appendix \ref{AA1A}.
Since \eqref{CXP2} holds, the relation 
$\mathbb{E}_{P_X\times {P}_{Y}}[d(X,Y)] >D$, i.e., 
$\frac{d}{d\tau}\hat{F}[P_{Y}](0)< 0$
 holds for any distribution $P_Y$.
Hence, $
\frac{d}{d \tau}\hat{F}[P_{Y}](-\frac{\frac{d}{d \tau}\hat{F}[P_{Y}](0)}{\zeta_-})
\ge 0$.

For the application of the bisection method, we consider the following condition for 
the convex function $\hat{F}[P_{Y}](\tau)$;
\begin{align}
\frac{d^2}{d \tau^2}\hat{F}[P_{Y}](\tau) \ge \zeta_-
 \hbox{ for } \tau \in [0, \tau_*[P_Y]].
\Label{XNP}
\end{align}
Since the condition \eqref{XNP} guarantees $0 \le \tau_*[P_Y] \le -\frac{\frac{d}{d \tau}\hat{F}[P_{Y}](0)}{\zeta_-}$,
we can apply the bisection method, Algorithm \ref{Bisection} with 
$a=0$ and $b= -\frac{\frac{d}{d \tau}\hat{F}[P_{Y}](0)}{\zeta_-}$.
Under the condition \eqref{XNP}, we have
\begin{align}
\hat{F}[P_{Y}](0)- \hat{F}[P_{Y}](\tau_*[P_Y])
\le -\frac{d}{d \tau}\hat{F}[P_{Y}](0) \tau_*[P_Y] 
\le  \frac{1}{\zeta_-}\big(\frac{d}{d \tau}\hat{F}[P_{Y}](0) \big)^2.
\Label{XNP5}
\end{align}

We choose the estimate $\tau_k[P_Y]$ as $b_k$ of Algorithm \ref{Bisection}, which requires $k$ iterations.
Then, we have $\frac{d}{d \tau}\hat{F}[P_{Y}](\tau_k[P_Y])>0 $.
The relation \eqref{GAT2} guarantees that 
\begin{align}
&\hat{F}[P_{Y}](\tau_k[P_Y])-\hat{F}[P_{Y}](\tau_*[P_Y]) \nonumber \\
\le & \frac{1}{2^{k-1}}
\max \Big(\hat{F}[P_{Y}](0)- \hat{F}[P_{Y}](\tau_*[P_Y]),
\hat{F}[P_{Y}](-\frac{\frac{d}{d \tau}\hat{F}[P_{Y}](0)}{\zeta_-})- \hat{F}[P_{Y}](\tau_*[P_Y])
\Big)\nonumber\\
\le & \frac{1}{2^{k-1}}
\max \Big( \frac{1}{\zeta_-}\Big(\frac{d}{d \tau}\hat{F}[P_{Y}](0) \Big)^2, \nonumber\\
&
\frac{1}{\zeta_-}\Big(\frac{d}{d \tau}\hat{F}[P_{Y}](0) \Big)^2
+  \frac{d}{d\tau}\hat{F}[P_{Y}](0)  
\cdot \frac{-\frac{d}{d \tau}\hat{F}[P_{Y}](0)}{\zeta_-}
+\frac{1}{2} \zeta_+ \cdot \Big(\frac{-\frac{d}{d \tau}\hat{F}[P_{Y}](0)}{\zeta_-}\Big)^2
\Big) \nonumber\\
= & \frac{1}{2^{k}}
 \Big(\frac{d}{d \tau}\hat{F}[P_{Y}](0)\Big)^2
\frac{ \zeta_+}{\zeta_-^2} .\Label{ZMO1}
\end{align}
The relation \eqref{GAT4} guarantees that 
\begin{align}
0\le  \tau_k[P_Y]- \tau_*[P_Y] 
\le  - \frac{1}{2^{k}}\frac{\frac{d}{d \tau}\hat{F}[P_{Y}](0)}{\zeta_-}.
\end{align}
Hence, 
\begin{align}
0 \le 
\frac{d}{d \tau}\hat{F}[P_{Y}](\tau_k[P_Y])
\le   -\frac{\zeta_+}{2^{k}}\frac{\frac{d}{d \tau}\hat{F}[P_{Y}](0)}{\zeta_-}.
\end{align}
We can choose $\kappa[P_Y] \ge 0$ as 
$0=(1-\kappa[P_Y]) \mathbb{E}_{P_{X,Y|\tau_k[P_Y] }[P_Y]}[D-d(X,Y)] 
+\kappa[P_Y] \mathbb{E}_{P_X\times {P}_{Y}}[D-d(X,Y)]
$.
Then, 
\begin{align}
0\le \kappa[P_Y] 
=& \frac{\frac{d}{d \tau}\hat{F}[P_{Y}](\tau_k[P_Y])}{
\frac{d}{d \tau}\hat{F}[P_{Y}](\tau_k[P_Y])
-\frac{d}{d \tau}\hat{F}[P_{Y}](0)}
\le
\frac{\frac{d}{d \tau}\hat{F}[P_{Y}](\tau_k[P_Y])}{
-\frac{d}{d \tau}\hat{F}[P_{Y}](0)} 
\le \frac{\zeta_+}{2^{k}\zeta_-}.
\end{align}

Then, we choose $P_{XY|k}[P_Y]$
as follows.
\begin{align}
P_{XY|k}[P_Y]:=
(1-\kappa[P_Y])P_{X,Y|\tau_k[P_Y] }[P_Y]
+\kappa[P_Y] P_X \times P_Y.
\end{align}

Since 
\begin{align}
& D(P_Y \times P_X\|P_{X,Y|\tau_k[P_Y] }[P_Y]) =
\hat{F}[P_{Y}](\tau_k[P_Y])-\hat{F}[P_{Y}](0) 
-\frac{d}{d\tau}\hat{F}[P_{Y}](0) \tau_k[P_Y] \nonumber\\
\le &
-\frac{d}{d\tau}\hat{F}[P_{Y}](0) \tau_k[P_Y] 
\le 
\frac{\big(\frac{d}{d \tau}\hat{F}[P_{Y}](0)\big)^2}{\zeta_-},
\end{align}
we have
\begin{align}
& D(P_{XY|k}[P_Y]\|P_{X,Y|\tau_k[P_Y] }[P_Y]) \nonumber\\
\le &
(1-\kappa[P_Y])D(P_{X,Y|\tau_k[P_Y] }[P_Y]\|P_{X,Y|\tau_k[P_Y] }[P_Y])
+\kappa [P_Y]D(P_X \times P_Y\|P_{X,Y|\tau_k[P_Y] }[P_Y])\nonumber\\
=&
\kappa [P_Y]D(P_X \times P_Y\|P_{X,Y|\tau_k[P_Y] }[P_Y]) \nonumber\\
\le &
\kappa [P_Y]\frac{\big(\frac{d}{d \tau}\hat{F}[P_{Y}](0)\big)^2}{\zeta_-}
\le
\frac{\zeta_+}{2^{k}\zeta_-^2}\big(\frac{d}{d \tau}\hat{F}[P_{Y}](0)\big)^2.\Label{ZMO2}
\end{align}
Given $\epsilon'>0$, we choose $k$ as
\begin{align}
k[P_Y,\epsilon']:= 
\log_2 \Big(
 \Big(\frac{d}{d \tau}\hat{F}[P_{Y}](0)\Big)^2
\frac{ \zeta_+ }{\zeta_-^2} 
\Big) 
-\log_2 \epsilon' 
\le
\log_2 \Big(
\frac{ \zeta_+^2 }{\zeta_-^2} 
\Big) 
-\log_2 \epsilon' .
\end{align}
The relations \eqref{ZMO1} and \eqref{ZMO2} guarantee 
\begin{align}
\hat{F}[P_{Y}](\tau_k[P_Y])-\hat{F}[P_{Y}](\tau_*[P_Y]) \le & \epsilon'\\
D(P_{XY|k}[P_Y]\|P_{X,Y|\tau_k[P_Y] }[P_Y]) \le & \epsilon' .
\end{align}
Combining the above discussion for the bisection method
and Algorithm \ref{protocol1Berror}, we obtain
Algorithm \ref{protocol1-2BB}.

\begin{algorithm}
\caption{em-algorithm for rate distortion}
\Label{protocol1-2BB}
\begin{algorithmic}
\STATE {Choose the initial distribution $P_Y^{(1)}$ on $\Y$.
Then, we define the initial joint distribution $P_{XY,(1)}$ as 
$P_Y^{(1)} \times P_X $;}
\REPEAT 
\STATE {\bf m-step:}\quad 
Calculate $P_{XY}^{(t+1)}$ and $\bar{P}_{XY}^{(t+1)}$ as follows.
We apply Algorithm \ref{Bisection} with $a:=0$ and $b:=
-\frac{\frac{d}{d \tau}\hat{F}[P_{Y}^{(t)}](0)
}{\zeta}$ with $k=k[P_Y^{(t)},\frac{\epsilon}{3}]$ iterations;
\if0
\begin{align}
k^{(t)} := 
\log_2 \Big(
2D( P_{X,Y|\tau_*[P_Y]}[P_Y] \| P_{X,Y}^{} )
+
 \Big(\frac{d}{d \tau}\hat{F}[P_{Y}](0)\Big)^2
\frac{ \zeta_+ }{\zeta_-^2} 
\Big) 
-\log_2 \epsilon' .
\end{align}
\fi
We choose $\bar{P}_{XY}^{(t+1)}$ and $\bar{P}_{XY}^{(t+1)}$
as 
$P_{X,Y|\tau_k[P_Y] }[P_Y]$ and 
$P_{XY|k}[P_Y]$, respectively.
\STATE {\bf e-step:}\quad 
Calculate $P_{Y}^{(t+1)}(y)$ as 
$\sum_{x \in \X}\bar{P}_{XY}^{(t+1)}(x,y)$.
\UNTIL{
$t=t_1-1$.} 
\STATE {\bf final step:}\quad 
We output the final estimate 
$P_{XY,f}^{(t_1)} :=P_{XY}^{(t_2)} \in \mathcal{M}$
by using  $t_2:= \argmin_{t=2, \ldots, t_1} 
D(P_{XY}^{(t)} \| P_{X} \times P_Y^{(t-1)})-
D(P_{XY}^{(t)} \| \bar{P}_{XY}^{(t)}) $.
\end{algorithmic}
\end{algorithm}

Since $\epsilon'$ is chosen as $\epsilon/2$ in Algorithm \ref{protocol1-2BB},
and the conditions (B0) and (B1) hold,
Theorem \ref{theo:conv2BC} guarantees
the precision \eqref{Qma2BA} with $\epsilon_1=\epsilon_2= \epsilon/3$. 
For its calculation complexity,
we have the following lemma .
\begin{lemma}\Label{XM9}
Assume the conditoins  $\zeta_-=O(1)$, \eqref{XNP}, and $\zeta_+=O(n_2^2)$.
We choose $P_Y^{(1)}$ as the uniform distribution on ${\cal Y}$.
To guarantee 
\begin{align}
I(X;Y)_{P_{XY}^{(t)}}
-\min_{W \in {\cal P}_{\Y|\X}^{d,P_X,D,\le} }I(X;Y)_{W \times P_X}
 \le \epsilon,\Label{XZT}
\end{align}
Algorithm \ref{protocol1-2BB} needs calculation complexity 
$O(\frac{n_1n_2\log n_2}{\epsilon} (\log_2 n_2+\log _2 \epsilon))$.
\end{lemma}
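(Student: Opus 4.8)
The plan is to express the total cost as the product of the number of outer em-iterations $t_1$ and the per-iteration cost, the latter being dominated by the bisection search carried out in the m-step; each of these two factors can be bounded using results already established in this section.

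First I would fix the number of outer iterations needed for \eqref{XZT}. Since $P_Y^{(1)}$ is the uniform distribution on $\Y$, the estimate \eqref{XPZ} gives $D^{\bar\mu}(\theta_{*}\|\theta_{(1)})\le \log n_2$. With the tolerances $\epsilon_1=\epsilon_2=\epsilon/3$ chosen in Algorithm \ref{protocol1-2BB}, the bound \eqref{Qma2BA} of Theorem \ref{theo:conv2BC} becomes $\le \frac{\log n_2}{t_1-1}+\frac{2\epsilon}{3}$, so it suffices to take $t_1-1\ge \frac{3\log n_2}{\epsilon}$, i.e.\ $t_1=O\big(\frac{\log n_2}{\epsilon}\big)$, to guarantee the left-hand side of \eqref{XZT} is at most $\epsilon$.

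Next I would count the inner bisection steps. Each m-step invokes the bisection method (Algorithm \ref{Bisection}) with $k=k[P_Y^{(t)},\epsilon/3]$ iterations, and the curvature condition \eqref{XNP} guarantees the bracketing interval $[0,-\frac{1}{\zeta_-}\frac{d}{d\tau}\hat{F}[P_Y](0)]$ is valid. Using $\big(\frac{d}{d\tau}\hat{F}[P_Y](0)\big)^2\le \zeta_+$ together with the hypotheses $\zeta_-=O(1)$ and $\zeta_+=O(n_2^2)$, the bound on $k$ simplifies to $k\le \log_2\!\big(\zeta_+^2/\zeta_-^2\big)-\log_2(\epsilon/3)=O(\log_2 n_2-\log_2\epsilon)$, uniformly in the current iterate $P_Y^{(t)}$. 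I would then note that each evaluation of $\hat{F}[P_Y](\tau)$ or of its derivative is a double sum over the $n_1 n_2$ pairs $(x,y)$ and hence costs $O(n_1 n_2)$, and that the final construction of $P_{XY}^{(t+1)},\bar P_{XY}^{(t+1)}$ and the marginalization in the e-step are likewise $O(n_1 n_2)$. Thus one em-iteration costs $O(k\, n_1 n_2)$, and multiplying by $t_1$ yields $O\big(\frac{n_1 n_2\log n_2}{\epsilon}(\log_2 n_2-\log_2\epsilon)\big)$, which is the claimed complexity.

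The main obstacle is the uniform control of the inner loop in the second step: one must verify that the bound on the number of bisection steps $k$ and the two error estimates \eqref{ZMO1} and \eqref{ZMO2} feeding the $\epsilon/3$ tolerances hold simultaneously for every iterate $P_Y^{(t)}$, not just for a fixed $P_Y$. This requires the lower-curvature condition \eqref{XNP} to bound the interval width independently of $P_Y^{(t)}$ and the inequality $\big(\frac{d}{d\tau}\hat{F}[P_Y](0)\big)^2\le \zeta_+$ to keep both the interval length and the achieved accuracy bounded by the $O(n_2^2)$ quantities; pinning down these uniform constants is the delicate point, whereas the remaining arithmetic (the product of $t_1$, $k$, and the $O(n_1 n_2)$ evaluation cost) is routine.
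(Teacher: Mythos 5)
Your proposal is correct and follows essentially the same route as the paper's own proof: fix $t_1-1 \ge 3\log n_2/\epsilon$ by combining Theorem \ref{theo:conv2BC}, the bound \eqref{Qma2BA} with $\epsilon_1=\epsilon_2=\epsilon/3$, and the estimate \eqref{XPZ}, then bound each bisection call by $O(\log_2 n_2 - \log_2 \epsilon)$ iterations of cost $O(n_1 n_2)$ each, and multiply the two factors. The uniformity issue you flag at the end is handled exactly as you suggest: the bound $k[P_Y,\epsilon'] \le \log_2(\zeta_+^2/\zeta_-^2)-\log_2 \epsilon'$ holds for every iterate $P_Y^{(t)}$ because $\big(\frac{d}{d\tau}\hat{F}[P_{Y}](0)\big)^2 \le \zeta_+$ uniformly in $P_Y$, which is all the paper uses.
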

\begin{proof}
Each iteration in the bisection method
needs calculation complexity $O(n_1 n_2)$.
Each application of the bisection method has $O(\log_2 n_2+\log _2 \epsilon)$ iterations.
Hence, one application of the bisection method has $O(n_1n_2(\log_2 n_2+\log _2 \epsilon))$
calculation complexity.

Since $D(\theta_*\|\theta_1   )= D( W_* \times P_X  \| P_Y \times P_X ) \le \log n_2$,
the number $t_1= \frac{3 \log n_2}{\epsilon}+1$ satisfies
\begin{align}
\frac{1}{t_1-1}D(\theta_*\|\theta_1   ) \le \frac{\epsilon}{3}.
\end{align}
Since $\epsilon_1$ and $\epsilon_2$ are chosen as
$\epsilon_1=\epsilon_2= \epsilon/3$ in Algorithm \ref{protocol1-2BB},
the RHS of \eqref{Qma2BA} is upper bounded by $\epsilon$, which implies \eqref{XZT}.
In this case, the calculation complexity of
Algorithm \ref{protocol1-2BB} is $(\frac{3 \log n_2}{\epsilon}+1)\cdot
O(n_1n_2(\log_2 n_2+\log _2 \epsilon))=
O(\frac{n_1n_2\log n_2}{\epsilon} (\log_2 n_2+\log _2 \epsilon))$.
\end{proof}

Next, we compare Algorithm \ref{Bisection} and a simple application of accelerated proximal gradient method
whose performance is evaluated as \eqref{XMU2}. 
In this application of accelerated proximal gradient method,
we treat $I(X;Y)_{W \times P_X}$ as a convex function for the mixture parameter, which is composed of 
$(n_2-1)n_1$ parameters.
In this case, $L$ in \eqref{XMU} is $\zeta_+^{\frac{1}{2}}$ and
$\|x_0-x_*\|^2$ in \eqref{XMU2} is $O((n_2-1)n_1)$.
Hence, to achieve the same precision as \eqref{XZT}, 
the number of iteration is $O(n_2^{\frac{1}{2}}n_1^{\frac{1}{2}} \zeta_+^{\frac{1}{4}}\frac{1}{\epsilon} )$.
Each iteration has calculation complexity $O(n_1n_2)$.
Hence, in total, this method has calculation complexity 
$O(\frac{1}{\epsilon}n_2^{\frac{3}{2}}n_1^{\frac{3}{2}} \zeta_+^{\frac{1}{4}})
=O(\frac{1}{\epsilon}n_2^{2}n_1^{\frac{3}{2}} )$.
This is larger than the calculation complexity given in Lemma \ref{XM9}.

\begin{remark}
Next, we see what Blahut algorithm \cite{Blahut} solved in the relation to \eqref{MOT}.
For this aim, we focus on the function 
$f(D):=\min_{W \in {\cal P}_{\Y|\X}^{d,P_X,D} }I(X;Y)_{W \times P_X}$
Instead of $f(D)$, using Lagrange multiplier $\tau_0$,  
Blahut \cite{Blahut} focused on the minimization
\begin{align}
\min_{W \in {\cal P}_{\Y|\X}} \tau_0 D + I(X;Y)_{W \times P_X}
- \tau_0  \sum_{x \in \X,y\in \Y} d(x,y) (W\times P_X) (x,y) .\Label{XOZ}
\end{align}
When $\frac{d}{d D}f(D)=\tau_0$, the minimum \eqref{XOZ} equals $f(D)$.
However, finding such $\tau_0$ is not so easy.
The algorithm to find such $\tau_0$ was not given in \cite{Blahut}.
The algorithm by \cite{Blahut} to solves \eqref{XOZ} is the same as
Algorithm \ref{protocol1-2} with replacing $\bar{\tau}$ by $\tau_0$.
That is, his algorithm does not consider the condition \eqref{const1-4}.
Attaching the condition \eqref{const1-4}, our algorithm guarantees the following constraint condition \eqref{BPA}
in each iteration.
\begin{align}
\sum_{x \in \X,y\in \Y} d(x,y) P_{Y|X}\times P_X (x,y) = D \Label{BPA}.
\end{align}
The algorithm by \cite{Blahut} has calculation complexity 
$O( \frac{ n_1 n_2 \log n_2}{\epsilon})$.
While our algorithm has the additional factor $-\log \epsilon$,
this factor can be considered as the additional cost to satisfy \eqref{BPA}.
\end{remark}

\subsection{Numerical analysis for classical rate distortion without side information}\Label{S5S1-R}
To see how our algorithm works, we make numerical analysis for the case when 
$n_1=n_2=3$ and $D=1.5$.
We choose the cost function $d$ as
\begin{align}
\left(
\begin{array}{ccc}
d(1,1) & d(1,2) & d(1,3)
\\
d(2,1) & d(2,2) & d(2,3)
\\
d(3,1) & d(3,2) & d(3,3)
\end{array}
\right)
=\left(
\begin{array}{ccc}
0 & 1 & 2\\
1 & 2 & 0\\
3 & 0 & 1
\end{array}
\right),
\end{align}
and the distribution $P_X$ as
\begin{align}
P_X(1)=0.5,~
P_X(2)=0.3,~
P_X(3)=0.2.
\end{align}
We set the initial marginal distribution $P_Y^{(1)}$ to be the uniform distribution.
By applying Algorithm \ref{protocol1-2},
the mutual information 
$I(X;Y)_{P_{XY}^{(t)}}$ converges to
\begin{align}
I(X;Y)_{P_{XY}^{*}}:= 0.100039,
\end{align}
and
the conditional distribution $P_{Y|X}^{(t)}$ converges to
\begin{align}
P_{Y|X}^{*}=
\left(
\begin{array}{ccc}
0.0855598 &0.188594& 0.430983 \\
0.22431&0.494433 &0.139579 \\
0.69013& 0.316974&0.429438
\end{array}
\right).
\end{align}
In particular, the marginal distribution $P_Y^{(t)}$
converges to
\begin{align}
P_{Y}^{*}=
\left(
\begin{array}{c}
0.185555\\
0.288401 \\
0.526045
\end{array}
\right).
\end{align}
Also, the parameter $\bar{\tau}$ appearing in Algorithm \ref{protocol1-2} converges to 
$0.522814$.
Fig. \ref{Tau} shows the behavior of the parameter $\bar{\tau}$.
In addition, Fig. \ref{Error} shows that the error 
$I(X;Y)_{P_{X,Y}^{(t)}}-I(X;Y)_{P_{X,Y}^{*}}$ is much smaller than the upper bound given in 
\eqref{mma2BYY2}, which suggests the existence of a much better evaluation than  
\eqref{mma2BYY2}.

\begin{figure}[htbp]
\begin{center}
  \includegraphics[width=0.7\linewidth]{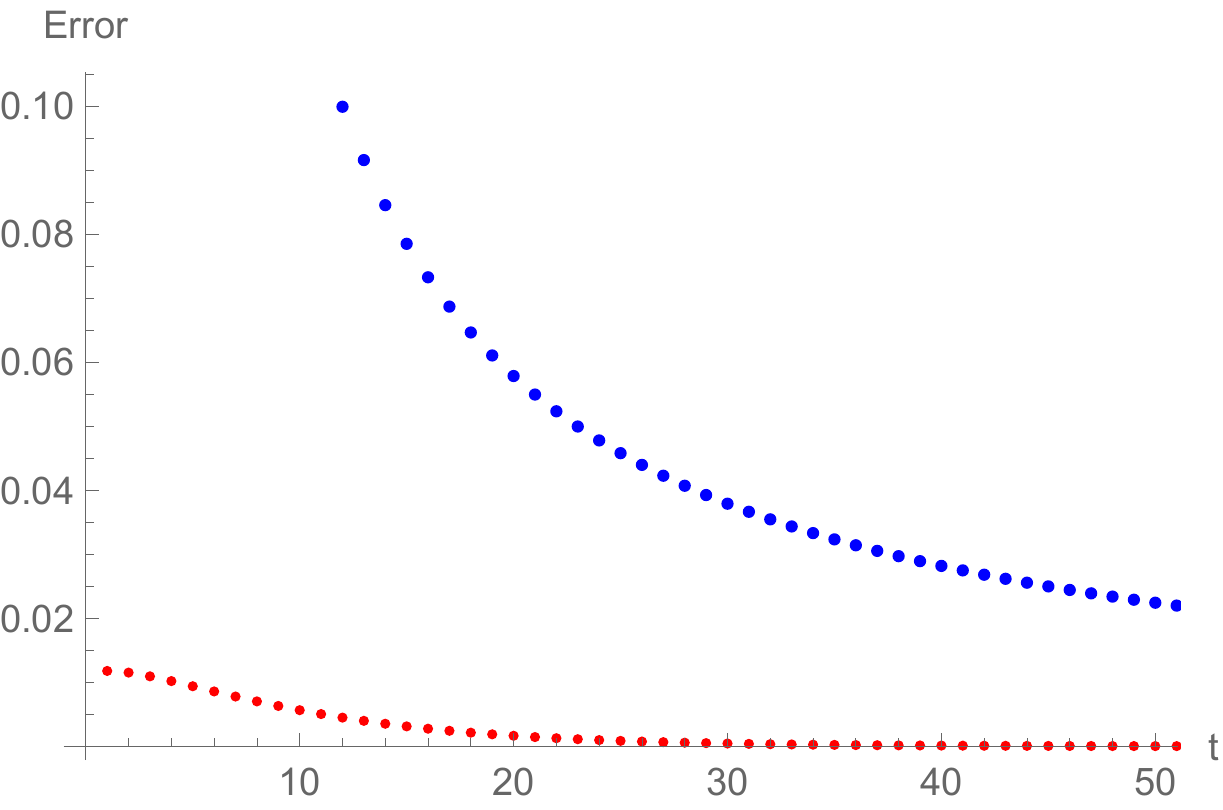}
  \end{center}
\caption{Behavior of the error 
$I(X;Y)_{P_{X,Y}^{(t)}}-I(X;Y)_{P_{X,Y}^{*}}$ of the minimum mutual information.
Red points show the value of the $I(X;Y)_{P_{X,Y}^{(t)}}-I(X;Y)_{P_{X,Y}^{*}}$
depending on the number of iteration $t$.
The blue points show its upper bound given in \eqref{mma2BYY2}.}
\Label{Error}
\end{figure}   

\begin{figure}[htbp]
\begin{center}
  \includegraphics[width=0.7\linewidth]{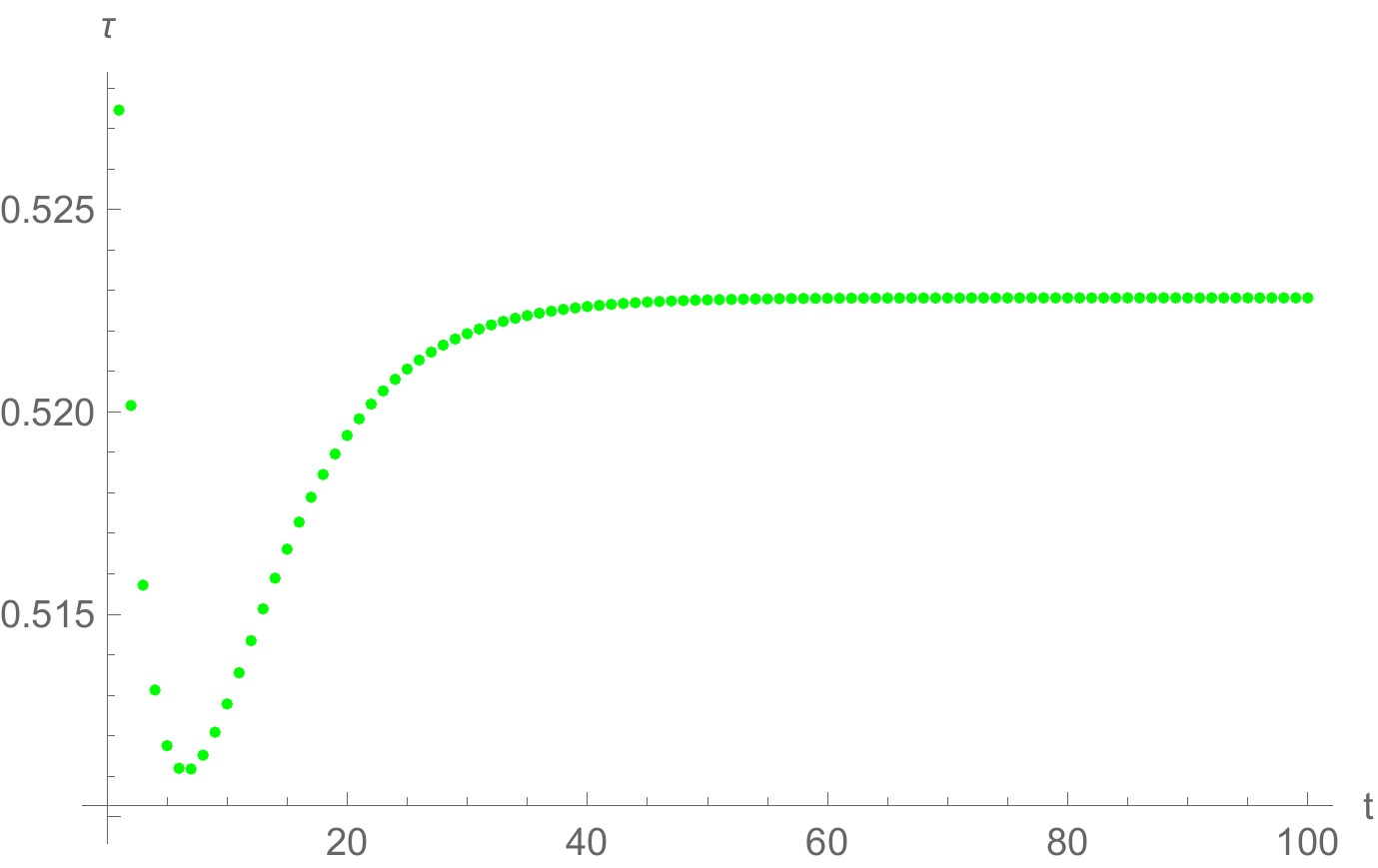}
  \end{center}
\caption{
The behavior of the parameter $\tau$ depending on the number of iteration $t$.
The green points show the parameter $\tau$ in algorithm \eqref{protocol1-2}.
}
\Label{Tau}
\end{figure}   

\subsection{Another approach to classical rate distortion without side information}\Label{S5S1-2}
To see the exponential decay, we discuss 
another approach to classical rate distortion without side information.
To apply Theorem \ref{theo:conv2+}, we need to satisfy Condition (B1+) holds.
For this aim, we apply the model given in Section \ref{4A} to the case when $\X$ is $\X\times \Y$.
Then,  we consider the Bregman divergence system $(\mathbb{R}^{n_1n_2-1}, \mu, D^\mu)$ given in Section \ref{4A}.
The set of distributions $q \times P_X$ forms an exponential family ${\cal E}$ and 
the set of distributions $W \times P_X$ forms a mixture family ${\cal M}$.
Hence, the minimization problem \eqref{MOT} is a special case of the minimization \eqref{min1}
with the formulation given in Subsection \ref{SEC-CC}.

Since the mixture family ${\cal M}$ has $ n_1(n_2-1)-1$ parameters, and the total dimension is $n_1n_2-1$,
Algorithm \ref{protocol1} is rewritten as
Algorithm \ref{protocol1-2U}.
In this case Conditions (B0) and (B1) hold in the same way as Subsection \ref{S5S1}.

\begin{algorithm}
\caption{em-algorithm for rate distortion}
\Label{protocol1-2U}
\begin{algorithmic}
\STATE {Choose the initial distribution $P_Y^{(1)}$ on $\Y$.
Then, we define the initial joint distribution $P_{XY,(1)}$ as 
$P_Y^{(1)}\times P_X $;} 
\REPEAT 
\STATE {\bf m-step:}\quad 
Calculate $P_{XY}^{(t+1)}$ as $P_{XY}^{(t+1)}(x,y):=
P_{XY,(t)}(x,y)e^{\bar{\tau}_x+ \bar{\tau}_0d(x,y))}
\Big(\sum_{x',y'}P_{XY,(t)}(x',y')e^{\bar{\tau}_{x'}+ \bar{\tau}_0d(x',y')}\Big)^{-1}$,
where $(\bar{\tau}_x)_{x \in \X}$ and $\bar{\tau}_0$ are the unique elements 
$({\tau}_x)_{x \in \X}$ and ${\tau}_0$ to satisfy 
\begin{align}
\frac{\partial}{\partial \tau_x} 
\log \Big(\sum_{x',y'}P_{XY,(t)}(x',y')e^{{\tau}_{x'}+ {\tau}_0d(x',y')}\Big)
&=P_X(x) \Label{const1-3} \\
\frac{\partial}{\partial \tau_0} 
\log \Big(\sum_{x',y'}P_{XY,(t)}(x',y')e^{{\tau}_{x'}+ {\tau}_0d(x',y')}\Big)
&=D \Label{const1-4T} 
\end{align}
for $x \in X\setminus \{n_1\}$ and 
${\tau}_{n_1} =\bar{\tau}_{n_1} $ is fixed to $0$.
This choice can be written in the way as \eqref{const1-2-0}.
\STATE {\bf e-step:}\quad 
Calculate $P_{XY,(t+1)} $ as 
$P_{Y}^{(t+1)}\times P_X$ where $P_{Y}^{(t+1)}(y):= \sum_{x \in \X}P_{XY}^{(t+1)}(x,y)$.
\UNTIL{convergence.} 
\end{algorithmic}
\end{algorithm}

The m-step in Algorithm \ref{protocol1-2U}
has optimization with $n_1$-variable convex function
\par
\noindent$\log \Big(\sum_{x',y'}P_{XY,(t)}(x',y')e^{{\tau}_{x'}+ {\tau}_0d(x',y')}\Big)$.
However, this case can satisfy Condition (B1+), which leads the exponential  
decay as follows.
That is,
the above evaluation for the convergence can be improved by using Theorem \ref{theo:conv2+}, i.e.,
the same precision \eqref{NHG2+} can be realized with $t-2 \ge \frac{ (\log \log n ) -\log \epsilon}{\log \beta}$.
In fact, when an element $\theta'$ close to $\theta^{*}$ satisfies Condition (B1+),
the iterated point $\theta^{(t)}$ converges to the true value exponentially
after the iterated point $\theta^{(t)}$ is close to the true value.

In the following, we discuss a necessary condition for (B1+) with an element $\theta'$
close to $\theta^{*}$.
When two elements are close to each other, 
the divergence can be approximated by the Fisher information. 
Hence, we consider the Fisher information version of (B1+).
For this aim, we consider the exponential family $\{P_{\theta,Y}\}$ defined in Subsection \ref{4A} with $d=n_2-1$
by replacing $\X$ by $\Y$.
Let $J_{\theta,1}$ and $J_{\theta,2}$
be the Fisher information matrices of 
$\{ \Pro^{(m),\mu}_{{\cal M}}(P_{\theta,Y}\times P_X)\}$
and
$\{ \Pro^{(e),\mu}_{{\cal E}}\circ \Pro^{(m),\mu}_{{\cal M}}(P_{\theta,Y}\times P_X)\}$.
We choose $\theta^* \in \mathbb{R}^{n_1n_2-1}$ 
corresponding to $ \Pro^{(m),\mu}_{{\cal M}}(P_{\theta_0^*,Y}\times P_X)$
in the sense of the Bregman divergence system $(\mathbb{R}^{n_1n_2-1}, \mu, D^\mu)$ given in Section \ref{4A}.
The local version of Condition (B1+) is written as 
\begin{align}
\beta J_{\theta_0^*,1} \ge  J_{\theta_0^*,2} 
\end{align}
with a constant $0<\beta <1$.
In this case, when the iterated point $\theta^{(t)}$ is close to the true minimum point,
the difference $ D^{F}(\theta^{(t)}\| \Pro^{(e),F}_{{\cal E}}  (\theta^{(t)}) ) 
-C_{\inf}(\mathcal{M},\mathcal{E})$ approaches to zero exponential rate $\log \beta^{-1}$.
Therefore, our algorithm has such an exponential convergence at the neighborhood when 
the inequality 
\begin{align}
J_{\theta_0^*,1} >  J_{\theta_0^*,2} \Label{NAP}
\end{align}
holds, i.e., $J_{\theta_0^*,1} -  J_{\theta_0^*,2} $ is strictly positive-semidefinite.

Then, we have the following theorem.
\begin{theorem}\Label{BL12}
The matrix $J_{\theta,1}-J_{\theta,2}$ 
is a strictly positive semi-definite matrix
when 
the linear space spanned by the distributions $\{W_{\theta,x}\}_{x \in X}$ 
has dimension at least $n_2$ as a function space on $\Y$.
\end{theorem}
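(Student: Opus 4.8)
The plan is to recognise the two Fisher matrices as the Fisher informations of a joint family and of its $\mathcal{Y}$-marginal, read off their difference as a conditional covariance that is automatically positive semi-definite, and locate the strict positivity in the spanning hypothesis. First I would make the families explicit. Writing $R_\theta:=\Pro^{(m),\mu}_{\mathcal{M}}(P_{\theta,Y}\times P_X)$, the m-step formula (Algorithm \ref{protocol1-2U}, equivalently Lemma \ref{Th6}(A3)) gives $R_\theta=W_\theta\times P_X$ with conditional $W_{\theta,x}(y)=W_\theta(y|x)\propto P_{\theta,Y}(y)e^{\bar\tau(\theta)d(x,y)}$ for a single scalar $\bar\tau(\theta)$ (the per-$x$ normalisers absorb the marginal-fixing multipliers). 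Since $R_\theta$ has $\X$-marginal $P_X$, the computation \eqref{MKP} shows $\Pro^{(e),\mu}_{\mathcal{E}}(R_\theta)=(W_\theta\cdot P_X)\times P_X$. Because $P_X$ is frozen, $J_{\theta,1}$ is exactly the Fisher information of the joint family $\theta\mapsto W_\theta\times P_X$ and $J_{\theta,2}$ that of the marginal family $\theta\mapsto W_\theta\cdot P_X$.

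Next I would introduce the scores $s_i(x,y):=\partial_i\log W_\theta(y|x)$ and $\bar s_i(y):=\partial_i\log(W_\theta\cdot P_X)(y)$. Differentiating $(W_\theta\cdot P_X)(y)=\sum_x P_X(x)W_\theta(y|x)$ yields the tower identity $\bar s_i(y)=\mathbb{E}_{R_\theta}[s_i\mid Y=y]$, and each score is centred, $\mathbb{E}_{R_\theta}[s_i]=\mathbb{E}_{R_\theta}[\bar s_i]=0$. Hence $J_{\theta,1}^{ij}=\mathbb{E}_{R_\theta}[s_is_j]$ and $J_{\theta,2}^{ij}=\mathbb{E}_{R_\theta}[\bar s_i\bar s_j]$, while the tower property gives $\mathbb{E}_{R_\theta}[s_i\bar s_j]=\mathbb{E}_{R_\theta}[\bar s_i\bar s_j]$, so that
\begin{align}
J_{\theta,1}-J_{\theta,2}=\mathbb{E}_{R_\theta}\big[(s-\bar s)(s-\bar s)^{T}\big]\succeq 0 .
\end{align}
This is just the monotonicity of Fisher information under the marginalisation $(x,y)\mapsto y$, recovered inside the present Bregman setting.

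For strict positivity I would take $v\in\mathbb{R}^{n_2-1}$ with $v^{T}(J_{\theta,1}-J_{\theta,2})v=0$ and deduce $v=0$. The displayed covariance form forces $\sum_i v^i s_i(x,y)=\sum_i v^i\bar s_i(y)=:\ell(y)$ to be independent of $x$; centredness then gives $\sum_y W_{\theta,x}(y)\ell(y)=0$ for every $x$, i.e. $\ell$ annihilates each $W_{\theta,x}$. Since $\{W_{\theta,x}\}_{x\in\X}$ spans all of $\mathbb{R}^{n_2}$ by hypothesis, $\ell=0$, whence $\sum_i v^i s_i\equiv0$. Applying $\sum_i v^i\partial_i$ to $\log W_\theta(y|x)=\log P_{\theta,Y}(y)+\bar\tau(\theta)d(x,y)-\log\big(\sum_{y'}P_{\theta,Y}(y')e^{\bar\tau(\theta) d(x,y')}\big)$ shows that, for each fixed $x$, the function $\sum_i v^i f_i(y)+\big(\sum_i v^i\partial_i\bar\tau\big)d(x,y)$ is constant in $y$. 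If $\sum_i v^i\partial_i\bar\tau\neq0$ this makes $d(x,y)-d(x',y)$ independent of $y$, i.e. $d$ additively separable; but then $W_{\theta,x}\propto P_{\theta,Y}(y)e^{\bar\tau b(y)}$ is independent of $x$, so the span is one-dimensional, contradicting the hypothesis (for $n_2\ge 2$). Hence $\sum_i v^i\partial_i\bar\tau=0$ and $\sum_i v^i f_i$ is constant; as the span of $f_1,\dots,f_{n_2-1}$ contains no nonzero constant and the $f_i$ are linearly independent, this forces $v=0$.

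I expect the main obstacle to be this last passage, from $\sum_i v^i s_i\equiv0$ (degeneracy of $J_{\theta,1}$ alone) to $v=0$: this is precisely the immersion property of the map $\theta\mapsto W_\theta$, and it is where the explicit m-projection form $W_{\theta,x}\propto P_{\theta,Y}e^{\bar\tau d(x,\cdot)}$ and the non-separability of $d$ extracted from the spanning hypothesis are genuinely needed. By contrast, establishing the $\succeq$ part and the score identities is routine once the two families are correctly identified.
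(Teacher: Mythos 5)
Your proof is correct, and while its first half coincides with the paper's argument, the crucial strictness step takes a genuinely different route. The semidefiniteness part---writing $J_{\theta,1}-J_{\theta,2}=\mathbb{E}_{R_\theta}\big[(s-\bar s)(s-\bar s)^T\big]$ via the tower identity, with equality in direction $v$ exactly when the score $\sum_i v^i s_i$ is a function of $y$ alone---is precisely the content of the paper's Lemma \ref{MOF}, which you re-derive in matrix form instead of for one-parameter subfamilies. Where you diverge is in excluding the equality case. The paper embeds the problem into the auxiliary $(n_1+n_2-1)$-parameter exponential family $\{P_{XY,\theta,\tau}\}$, reduces strictness to the kernel condition $\Ker P_2 J_{\theta,\tau(\theta),3}P_1=\{0\}$ by implicitly differentiating the m-projection constraints (Lemma \ref{BL10}), and then verifies that condition from the spanning hypothesis by a rank count (Lemma \ref{BL11}). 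You instead exploit the explicit form $W_{\theta,x}\propto P_{\theta,Y}\,e^{\bar\tau(\theta)d(x,\cdot)}$ of the m-projection and use the spanning hypothesis twice: once, together with the conditional centredness $\sum_y W_{\theta,x}(y)s_i(x,y)=0$, to upgrade ``the score equals $\ell(y)$'' to ``the score vanishes identically'' (orthogonality to a spanning set), and once more, through the separability dichotomy on $d$, to pass from a vanishing score to $v=0$. This bypasses the auxiliary family and both lemmas entirely. The trade-off: the paper's route isolates a reusable necessary-and-sufficient criterion (Lemma \ref{BL10}) phrased in the geometry of the constraint manifold, while yours is more elementary and self-contained; it also makes fully explicit a point the paper's proof of Lemma \ref{BL10} passes over quickly---that a nonzero $\tau$-component of the log-derivative could still be a function of $y$ alone when $d(x,y)$ is additively separable---and shows that the spanning hypothesis rules out exactly this degenerate case.
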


Therefore, when the condition for Theorem \ref{BL12} holds,
Algorithm \ref{protocol1-2U} has such an exponential convergence at the neighborhood.

\begin{proofof}{Theorem \ref{BL12}}
To show Theorem \ref{BL12},
we define the parametric family $\{P_{XY,\theta,\tau}\}_{\theta,\tau}$ 
with $\theta=(\theta^i)_{i=1} ^{n_2-1}$
and $\tau =(\tau^i)_{i=0}^{n_1-1}$
as
\begin{align}
P_{XY,\theta,\tau}(x,y):=
\frac{P_{\theta,Y}(y) P_X(x) e^{\sum_{i=0}^{n_1-1} g_i(x,y) \tau_i}}
{\sum_{x'y'}P_{\theta,Y}(y') P_X(x') e^{\sum_{i=0}^{n_1-1} g_i(x',y') \tau_i}},
\end{align}
where $g_i(x,y):=\delta_{i,x}$ and $g_0(x,y):=d(x,y)$.
We define the Fisher information matrix $J_{\theta,\tau,3}$ of 
the parametric family $\{P_{XY,\theta,\tau}\}_{\theta,\tau}$.
We define the channel $W_\theta$ from $\X$ to $\Y$
as $ W_\theta\times P_X= \Pro^{(m),\mu}_{{\cal M}}(P_{\theta,Y}\times P_X)$.
Also, we choose $\tau(\theta)$ as
$W_\theta \times P_X=P_{XY,\theta,\tau(\theta)}$.
In the $n_1+n_2-1$ dimensional vector space,
we denote the projections 
to the first $n_2-1$-dimensional space corresponding to $\theta$ 
and the latter $n_1$-dimensional space corresponding to $\tau$ 
by $P_1$ and $P_2$, respectively.

Then, Theorem \ref{BL12} follows from the following two lemmas.

\begin{lemma}\Label{BL11}
The relation $\Ker P_2 J_{\theta,\tau(\theta),3} P_1=\{0\}$
holds when 
the linear space spanned by the distributions $\{W_{\theta,x}\}_{x \in X}$ 
has dimension at least $n_2$ as a function space on $\Y$.
\end{lemma}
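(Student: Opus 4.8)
The plan is to make the abstract statement $\Ker P_2 J_{\theta,\tau(\theta),3} P_1 = \{0\}$ concrete by recalling that, for an exponential family written in its natural parameters, the Fisher information matrix equals the covariance matrix of the sufficient statistics. Here the sufficient statistics of $\{P_{XY,\theta,\tau}\}$ split into the $\theta$-block $\{f_j(y)\}_{j=1}^{n_2-1}$ (the generators of the $Y$-marginal exponential family $\{P_{\theta,Y}\}$ imported from Subsection \ref{4A} with $\X$ replaced by $\Y$) and the $\tau$-block $\{g_i(x,y)\}_{i=0}^{n_1-1}$. Writing $J_{\theta,\tau(\theta),3}$ in block form with diagonal blocks $A,C$ and off-diagonal block $B=(\mathrm{Cov}(f_j,g_i))_{j,i}$, all covariances taken under $W_\theta\times P_X=P_{XY,\theta,\tau(\theta)}$, the composite $P_2 J_{3} P_1$ is exactly $B^\top$, read as the map $\mathbb{R}^{n_2-1}\to\mathbb{R}^{n_1}$ from $\theta$-directions into $\tau$-directions. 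Thus the claim reduces to: if $v\in\mathbb{R}^{n_2-1}$ and the function $h:=\sum_{j} v_j f_j$ on $\Y$ satisfies $\mathrm{Cov}(h(Y),g_i(X,Y))=0$ for every $i$, then $v=0$.

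First I would exploit the explicit form $g_i(x,y)=\delta_{i,x}$ for $i=1,\dots,n_1-1$. A direct computation under $W_\theta\times P_X$ gives $\mathrm{Cov}(h,g_i)=P_X(i)\big(\langle W_{\theta,i},h\rangle-\langle r_\theta,h\rangle\big)$, where $r_\theta:=W_\theta\cdot P_X$ is the $Y$-marginal and $\langle\cdot,\cdot\rangle$ is the inner product of functions on $\Y$. Since $P_X(i)>0$, vanishing of these $n_1-1$ covariances forces $\langle W_{\theta,i},h\rangle=\langle r_\theta,h\rangle=:c$ for $i=1,\dots,n_1-1$; combining this with $\langle r_\theta,h\rangle=\sum_x P_X(x)\langle W_{\theta,x},h\rangle=c$ then also yields $\langle W_{\theta,n_1},h\rangle=c$. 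Hence $\langle W_{\theta,x},h\rangle=c$ for every $x\in\X$, i.e.\ $h$ has the same conditional mean along all rows of the channel. Notably, the remaining condition coming from $g_0=d$ is not even needed.

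Next I would use that each $W_{\theta,x}$ is a probability vector, so $\langle W_{\theta,x},\mathbf{1}\rangle=1$ and therefore $\langle W_{\theta,x},\,h-c\mathbf{1}\rangle=0$ for all $x$. This is the point where the hypothesis enters: when $\mathrm{span}\{W_{\theta,x}\}_{x\in\X}$ has dimension at least $n_2$, these vectors span the whole $n_2$-dimensional function space on $\Y$, so the only vector orthogonal to all of them is zero and $h=c\mathbf{1}$. Finally, recall from Subsection \ref{4A} that the span of $f_1,\dots,f_{n_2-1}$ contains no nonzero constant function; since $h$ lies in that span and equals the constant $c\mathbf{1}$, we get $c=0$ and $h=0$, and linear independence of the $f_j$ gives $v=0$.

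I expect the only genuinely delicate point to be the bookkeeping at the start: pinning down that $P_2 J_{3} P_1$ is precisely the cross-covariance block $B^\top$, with the projections $P_1,P_2$ understood as the domain/codomain restrictions to the $\theta$- and $\tau$-coordinate subspaces so that the kernel condition means injectivity of $B^\top$ on the $\theta$-directions, and verifying the identity $\mathrm{Cov}(h,\delta_{i,X})=P_X(i)\big(\langle W_{\theta,i},h\rangle-\langle r_\theta,h\rangle\big)$. Once the problem is translated into the statement that no nonzero $h\in\mathrm{span}\{f_j\}$ is uncorrelated with all the indicators $\delta_{i,X}$, the remainder is the short linear-algebra argument above, and the spanning hypothesis is exactly what closes it.
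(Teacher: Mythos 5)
Your proof is correct, and at its core it rests on the same identification the paper uses: both treat $P_2 J_{\theta,\tau(\theta),3} P_1$ as the centered cross-covariance block between the $\tau$-statistics $g_i$ and the $\theta$-statistics $f_j$ under $W_\theta\times P_X$, both discard the row coming from $g_0=d$, and both close the argument with the spanning hypothesis on $\{W_{\theta,x}\}_x$ together with the fact that $\mathrm{span}\{f_j\}$ contains no nonzero constant. Where you differ is in execution: the paper factors the block as a product of two centered matrices, $(P_X(x)(\delta_{i,x}-c_{2,i}))_{x,i}$ of rank $n_1-1$ and $(\sum_y W_{\theta,x}(y)(f_j(y)-c_{1,j}))_{x,j}$ of rank $n_2-1$, and then asserts the product has full rank $n_2-1$; you instead take a vector $v$ in the kernel and show directly that $h=\sum_j v_j f_j$ must vanish. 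Your route is the tighter one, because a product of matrices of those ranks need not have rank $n_2-1$ in general: one must check that the column space of the second factor meets the kernel of the first factor's transpose (which is exactly the constant vectors) only in $\{0\}$. Your averaging step --- showing that if $\langle W_{\theta,x},h\rangle-\langle r_\theta,h\rangle$ is constant in $x$ then that constant is zero --- is precisely this missing compatibility check, so your argument supplies the justification that the paper's rank bookkeeping glosses over (and it also sidesteps the paper's final misstatement of the rank as $n_1-1$ rather than $n_2-1$).
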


\begin{lemma}\Label{BL10}
The matrix $J_{\theta,1}-J_{\theta,2}$ 
is a strictly positive semi-definite matrix
if and only if $\Ker P_2 J_{\theta,\tau(\theta),3} P_1=\{0\}$.
\end{lemma}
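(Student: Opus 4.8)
The plan is to compute both $J_{\theta,1}$ and $J_{\theta,2}$ from the block structure of the full Fisher matrix $J_3:=J_{\theta,\tau(\theta),3}$ of the exponential family $\{P_{XY,\theta,\tau}\}$, and then reduce the asserted equivalence to a statement about a single off-diagonal block. Writing $J_3$ in block form with respect to the splitting $\mathbb{R}^{n_2-1}\oplus\mathbb{R}^{n_1}$ induced by $P_1$ and $P_2$,
\begin{align}
J_3=\begin{pmatrix} A & B\\ B^{T}& C\end{pmatrix},\quad A=P_1J_3P_1,\ C=P_2J_3P_2,\ B^{T}=P_2J_3P_1,
\end{align}
I would establish the three identities $J_{\theta,1}=A-BC^{-1}B^{T}$, $G^{\cal E}=A$ (where $G^{\cal E}$ is the Fisher matrix of $\cal E$), and $J_{\theta,2}=J_{\theta,1}A^{-1}J_{\theta,1}$.

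For the first identity, recall $W_\theta\times P_X=\Pro^{(m),\mu}_{\cal M}(P_{\theta,Y}\times P_X)$. By (A1) of Lemma \ref{Th6} the $m$-projection leaves the natural parameters dual to the free ($\theta$) coordinates untouched while moving $\tau$ from $0$ to $\tau(\theta)$ so that the $\tau$-expectation parameters stay fixed at the constraints defining $\cal M$. Hence along the curve $\theta\mapsto P_{XY,\theta,\tau(\theta)}$ one has $d\tau=-C^{-1}B^{T}d\theta$, and substituting this into the quadratic form $J_3$ gives $J_{\theta,1}=A-BC^{-1}B^{T}$, the Schur complement of $C$; in particular $J_{\theta,1}>0$ since $J_3>0$. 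For $G^{\cal E}=A$, the generators attached to $\theta$ are the functions $\bar f_j(y)$ of $y$ alone, so $A$ is their covariance under the joint $W_\theta\times P_X$, which equals their covariance under the $\Y$-marginal $q_\theta=W_\theta\cdot P_X$; this is precisely the Fisher matrix of $\cal E=\{q\times P_X\}$ at $q_\theta\times P_X$. For the third identity, $\Pro^{(e),\mu}_{\cal E}$ preserves the $\theta$-expectation parameters (Lemma \ref{Th5}), so the map $\theta\mapsto\eta_\theta$ is the same whether read off $W_\theta\times P_X$ or off its $e$-projection, and its Jacobian computed along $\cal M$ is $\frac{d\eta_\theta}{d\theta}=J_{\theta,1}$. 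Pulling back the Fisher metric of $\cal E$, which in the expectation coordinate equals $(G^{\cal E})^{-1}=A^{-1}$, along this map yields $J_{\theta,2}=J_{\theta,1}A^{-1}J_{\theta,1}$.

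With these identities the equivalence becomes pure linear algebra. Since $J_{\theta,1}>0$, I would factor
\begin{align}
J_{\theta,1}-J_{\theta,2}=J_{\theta,1}^{1/2}\bigl(I-J_{\theta,1}^{1/2}A^{-1}J_{\theta,1}^{1/2}\bigr)J_{\theta,1}^{1/2},
\end{align}
so $J_{\theta,1}-J_{\theta,2}$ is positive definite iff $J_{\theta,1}^{1/2}A^{-1}J_{\theta,1}^{1/2}<I$, iff $A>J_{\theta,1}$ by operator monotonicity of the inverse. But $A-J_{\theta,1}=BC^{-1}B^{T}$, and since $C>0$ this is positive definite iff $B^{T}$ is injective on the $\theta$-subspace, i.e.\ iff $\Ker P_2J_3P_1=\{0\}$, which is exactly the asserted condition.

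I expect the main obstacle to lie in the bookkeeping behind the two Fisher-matrix identities rather than in the concluding algebra. One must verify carefully that the $m$-projection leaves the $\theta$-natural coordinates fixed, so that $\theta$ serves simultaneously as the natural coordinate along the curve in $\cal M$ and as the parameter defining $\eta_\theta$, and that the $e$-projection leaves $\eta_\theta$ fixed, so that a single Jacobian $J_{\theta,1}=\frac{d\eta_\theta}{d\theta}$ appears consistently in both $J_{\theta,1}$ and $J_{\theta,2}$. The identification $G^{\cal E}=A$, which is what collapses the two expressions into the clean form $J_{\theta,1}A^{-1}J_{\theta,1}$, relies on the generators being functions of $y$ alone and is the step most easily mis-stated.
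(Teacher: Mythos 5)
Your proof is correct, and it takes a genuinely different route from the paper's. Write $J_3:=J_{\theta,\tau(\theta),3}$, $A:=P_1J_3P_1$, $B:=P_1J_3P_2$, $C:=P_2J_3P_2$. The paper never computes $J_{\theta,2}$ in closed form: it first proves a standalone directional lemma (Lemma \ref{MOF}) showing, by a long pointwise computation, that for any one-parameter family of channels the Fisher information of the joint family dominates that of the output-marginal family, the gap being an expected conditional variance of the score given $Y$, so that equality holds in a given direction exactly when the corresponding score is a function of $y$ alone; it then performs the same implicit-function step as you ($v_2=-C^{-1}B^{T}v_1$, your $d\tau=-C^{-1}B^{T}d\theta$), writes the tangent score of the projected family as $\sum_i v_1^i f_i(y)+\sum_j v_2^j g_j(x,y)$, observes that this is a function of $y$ alone precisely when $v_1\in\Ker P_2J_3P_1$, and applies Lemma \ref{MOF} direction by direction. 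You instead replace Lemma \ref{MOF} by two matrix identities --- $J_{\theta,1}=A-BC^{-1}B^{T}$ (the Schur complement, obtained by pulling $J_3$ back along the constraint curve) and $J_{\theta,2}=J_{\theta,1}A^{-1}J_{\theta,1}$, the latter resting on $G^{\mathcal{E}}=A$ and on the $e$-projection preserving the expectations of the $f_j$ (Lemma \ref{Th5}), which gives $d\eta_\theta/d\theta=A-BC^{-1}B^{T}=J_{\theta,1}$ --- after which the equivalence is pure operator monotonicity: $J_{\theta,1}-J_{\theta,2}>0$ iff $A>J_{\theta,1}$ iff $BC^{-1}B^{T}>0$ iff $\Ker B^{T}=\Ker P_2J_3P_1=\{0\}$; I checked all three identities and they hold. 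Your route buys closed forms for both matrices, and both the data-processing inequality $J_{\theta,1}\ge J_{\theta,2}$ and its strictness fall out of the single identity $A-J_{\theta,1}=BC^{-1}B^{T}$, with no pointwise score computation. The paper's route buys a self-contained equality condition for Fisher-information data processing under marginalization, and it exhibits the degenerate directions concretely as those whose scores are functions of $y$. One caveat applies to both proofs equally: each tacitly assumes non-degeneracy --- $C$ invertible (the paper inverts $P_2J_3P_2$ without comment), and in your case $J_3>0$ so that $J_{\theta,1}$ and $A$ are invertible; this fails exactly when $d(x,y)$ decomposes as $u(x)+v(y)$, in which case $J_{\theta,1}$ is singular and neither argument applies as written.
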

Lemmas \ref{BL11} and \ref{BL10} is shown in Appendix \ref{BBO}.
\end{proofof}

\begin{remark}
Now, we can explain why we cannot show Condition (B1+) for Algorithm \ref{protocol1-2}
in the above method.
If we apply the same discussion to 
Algorithm \ref{protocol1-2}
the projection $P_2$ is the projection to the one-dimensional space.
Hence, the condition $\Ker P_2 J_{\theta,\tau(\theta),3} P_1=\{0\}$
does not hold unless $n_2=2$.
\end{remark}

\subsection{Classical rate distortion with multiple distortion constraint without side information}
Recently, 
the paper \cite[Theorem 1]{LZP} considers a rate-distortion problem motivated by the consideration
of semantic information.
That is, it considers two sets $\hat{\X}$ and $\hat{\mathcal{S}}$ in addition to the set $\X$, 
and focus on two distortion measures $d_{\rs}(x,\hat{s})$ and  $d_{\ra}(x,\hat{x})$ for $x \in \X, \hat{x} \in \hat{\X}$ and $\hat{s}\in \hat{\mathcal{S}}$.  Then, we define the following set for channels $W: \X \to \hat{\X} \times \hat{\mathcal{S}}$ as
\begin{align}
&{\cal P}_{\hat{\X}\times \hat{\S}|\X}^{d_{\ra},d_{\rs},P_{X},D_{\ra},D_{\rs},\le}
\nonumber \\
:=& 
\Bigg\{ W \Bigg|  
\sum_{x \in \X,\hat{s} \in \hat{\S},\hat{x}\in \hat{\X}}
d_i(x,\hat{x}) W\times P_{X} (x,\hat{x},\hat{s}) \le D_i
\hbox{ for }i={\ra},{\rs}\Bigg\}.
\end{align} 
The paper \cite[Theorem 1]{LZP} addresses the following minimization problem;
\begin{align}
\min_{W \in {\cal P}_{\hat{\X}\times \hat{\S}|\X}^{d_{\ra},d_{\rs},P_{X},D_{\ra},D_{\rs},\le}}
I(X;\hat{X},\hat{S})_{W \times P_X}.\Label{NOP}
\end{align}
For its generalization, we consider a set $\Y$ and $m$ distortion measures $d_i(x,y)$ for $x \in \X, y \in \Y$ and $i=1, \ldots, m$. 
We define the following set for channels $W: \X \to \Y$ as
\begin{align}
&{\cal P}_{\Y|\X}^{ (d_i)_{i=1}^m,P_{X},(D_i)_{i=1}^m,\le}
\nonumber \\
:=& 
\Bigg\{ W \Bigg|  
\sum_{x \in \X,y\in \Y}
d_i(x,y) W\times P_{X} (x,y) \le D_i
\hbox{ for }i=1, \ldots, m\Bigg\}.
\end{align} 
Then, the following minimization problem can be regarded a generalization of \eqref{NOP}
by considering  the case with  $Y=(\hat{X},\hat{S})$ and $m=2$;
\begin{align}
&\min_{W \in {\cal P}_{\Y|\X}^{ (d_i)_{i=1}^m,P_{X},(D_i)_{i=1}^m,\le}} I(X;Y)_{W \times P_X}\nonumber \\
=&\min_{W \in {\cal P}_{\Y|\X}^{ (d_i)_{i=1}^m,P_{X},(D_i)_{i=1}^m,\le}} \min_{q \in {\cal P}_Y}
 D(W \times P_X \| q \times P_X ).
\Label{NOP2}
\end{align}
The minimization problem \eqref{NOP2} can be considered as rate distortion 
with multiple distortion functions.
Now, we focus on the Bregman divergence system 
$(\mathbb{R}^{n_1(n_2-1)}, \bar\mu, D^{\bar\mu})$ 
defined in Subsection \ref{4A2}, which coincides with 
the set of distributions $W \times P_X$.
The set of distributions $q \times P_X$ forms an exponential subfamily ${\cal E}$, and 
the subset ${\cal P}_{\Y|\X}^{ (d_i)_{i=1}^m,P_{X},(D_i)_{i=1}^m,\le}\times P_X$
forms a closed convex mixture subfamily ${\cal M}$.
Then, the minimization problem \eqref{NOP2} is a special case of the minimization \eqref{min1}
with the formulation given in Subsection \ref{SEC-CC}

Since Lemma \ref{LOS} guarantees Condition (B0) for this problem, 
Algorithm \ref{protocol1C} works for the minimization problem \eqref{NOP2}
and is rewritten as Algorithm \ref{protocol1-2B}. 
Condition (B1) can be checked in the same way as \eqref{MKP}.
In addition, similar to Algorithm \ref{protocol1Berror}
in Subsection \ref{S5S1}, 
when we cannot solve the equations \eqref{const1-4N},
Algorithm \ref{protocol1Berror2} works in this model.

\begin{algorithm}
\caption{em-algorithm for rate distortion with multiple distortion functions}
\Label{protocol1-2B}
\begin{algorithmic}
\STATE {Choose the initial distribution $P_Y^{(1)}$ on $\Y$.
Then, we define the initial joint distribution $P_{XY,(1)}$ as 
$P_Y^{(1)}\times P_X $;} 
\REPEAT 
\STATE {\bf m-step:}\quad 
For any subset $A \subset \{1, \ldots, m\}$,
Calculate $P_{XY}^{(t+1),A}$ as $P_{XY}^{(t+1),A}(x,y):=
P_X(x)P_{Y|X,(t)}(y|x)e^{\sum_{i \in A}\bar{\tau}_{A,i} d_i(x,y))}
\Big(\sum_{y'}P_{Y|X,(t)}(y'|x)e^{ \sum_{i \in A}\bar{\tau}_{A,i} d_i(x,y')}\Big)^{-1}$,
where $(\bar{\tau}_{A,i})_{i \in A}$ are the unique elements 
$({\tau}_{A,i})_{i\in A}$ to satisfy 
\begin{align}
\frac{\partial}{\partial \tau_{A,i}} 
\sum_x P_X(x)
\log \Big(\sum_{y'}P_{Y|X,(t)}(y'|x)e^{{\tau}_{x'}+ \sum_{i' \in A}{\tau}_{A,i'}d_{i'}(x,y')}\Big)
&=D_i \Label{const1-4N} 
\end{align}
for $i \in A$.
Choose $P_{XY}^{(t+1)}$ to be $P_{XY}^{(t+1),A_0}$, where
\begin{align}
A_0:= \argmin_{ A \subset \{1, \ldots, m\}} 
\left\{D(  P_{XY}^{(t+1),A} \| P_{XY,(t+1)}) 
\left|
\begin{array}{l}
 \sum_{x,y} P_{XY}^{(t+1),A} (x,y)d_i(x,y) \le D_i \\
 \hbox{ for } i=1, \ldots, m
\end{array}
\right.\right\}.
\end{align}
\STATE {\bf e-step:}\quad 
Calculate $P_{XY,(t+1)} $ as 
$P_{Y}^{(t+1)}\times P_X$ where $P_{Y}^{(t+1)}(y):= \sum_{x \in \X}P_{XY}^{(t+1)}(x,y)$.
\UNTIL{convergence.} 
\end{algorithmic}
\end{algorithm}

\subsection{Classical rate distortion with side information}
Next, we consider the rate distortion problem
when the side information state $S \in \S=\{1, \ldots, n_3\}$ is available to both the encoder and the
decoder \cite{Cover}.
Hence, our channel $W$ is given as 
a map  $\X \times \mathcal{S} \rightarrow  {\cal P}_{\Y}$.
Given a distortion measure $d(x, y)$ on $\X \times \Y$ and a distribution $P_{XS}$ on $\X\times \S$, 
we define the following sets;
\begin{align}
{\cal P}_{\Y|\X\times \S}^{d,P,_{XS}D}:=& \Big\{ W \Big|  \sum_{x \in \X,s \in \S,y\in \Y}
d(x,y) W\times P_{XS} (x,s,y) =D \Big\} \\
{\cal P}_{\Y|\X\times \S}^{d,P_{XS},D,\le}:=& \Big\{ W \Big|  \sum_{x \in \X,s \in \S,y\in \Y}
d(x,y) W\times P_{XS} (x,s,y) \le D \Big\}.
\end{align}
We define the set ${{\cal P}_{X-S-Y}  }$  of distributions on ${\cal X}\times {\cal S}\times {\cal Y}$
to satisfy the Markov chain $X-S-Y$ with the marginal distribution $P_{XS}$.
The rate distortion function is given as
\begin{align}
& \min_{W \in {\cal P}_{\Y|\X\times \S}^{d,P,D,\le} }
I(Y;X|S)_{W \times P_{XS}}\nonumber \\
=&
\min_{W \in {\cal P}_{\Y|\X\times \S}^{d,P,D,\le} }
\sum_{s \in \S}P_S(s) D(W \times P_{X|S=s}
\| (W \cdot P_{X|S=s} ) \times P_{X|S=s} )\nonumber \\
=&\min_{W \in {\cal P}_{\Y|\X\times \S}^{d,P,D,\le} } \min_{  Q \in {{\cal P}_{X-S-Y}} }
 D(W \times P_{XS} \| Q ),\Label{MOT3}
\end{align}
where $P_{X|S=s}$ is the conditional distribution on $X$ with the condition $S=s$ of $P_{XS}$.
$P_S$ is the marginal distribution on $S$ of $P_{XS}$.
\if0
When the minimum is attained when the equality holds in the constraint, the above problem is rewritten as
\begin{align}
\min_{W \in {\cal P}_{\Y|\X\times \S}^{d,P,D} } \min_{  Q \in {{\cal P}_{X-S-Y}} }
 D(W \times P_{XS} \| Q ).
\Label{MOT4}
\end{align}
\fi
Now, we apply the discussion  in Subsection \ref{4A2} to the joint system $(\X\times\S)\times \Y$.
Then,  we consider the Bregman divergence system 
$(\mathbb{R}^{n_1 n_3 (n_2-1)}, \bar{\mu}, D^{\bar{\mu}})$, which coincides with 
the set of distributions $W \times P_{XS}$.
The set ${{\cal P}_{X-S-Y}}$ forms an exponential subfamily ${\cal E}$, and 
the subset $ {\cal P}_{\Y|\X\times \S}^{d,P,_{XS}D}\times P_{XS}$ forms a mixture subfamily ${\cal M}$.
Similar to \eqref{CXP2}, 
there exists a distribution $P_{XSY} \in {\cal E}$ such that
\begin{align}
\sum_{x,y,s}P_{XYS}(x,y,s)d(x,y) \le D
\end{align}
if and only if 
\begin{align}
\sum_s P_S(s)\min_{y}d_{YS}(y,s) \le D
\Label{CXP3}
\end{align}
where $d_{YS}(y,s):= \sum_{x}P_{X|S=s}d(x,y)$.
Therefore, in the same way as Lemma \ref{TXPO}, we can show the following lemma.
\begin{lemma}\Label{TXPO2}
When \eqref{CXP3} holds, 
$\min_{W \in {\cal P}_{\Y|\X\times \S}^{d,P_{XS},D,\le} }I(X;Y|S)_{W \times P_{XS}}=0$.
Otherwise,
\begin{align}
\min_{W \in {\cal P}_{\Y|\X\times \S}^{d,P_{XS},D,\le} }I(X;Y|S)_{W \times P_{XS}}
=&
\min_{W \in {\cal P}_{\Y|\X\times \S}^{d,P_{XS},D} }I(X;Y|S)_{W \times P_{XS}}
\nonumber \\
=&
\min_{W \in {\cal P}_{\Y|\X\times \S}^{d,P,D} } \min_{  Q \in {{\cal P}_{X-S-Y}} }
 D(W \times P_{XS} \| Q ).\Label{MOTR2}
\end{align}
\end{lemma}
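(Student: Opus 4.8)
The plan is to follow the proof of Lemma \ref{TXPO} almost verbatim, replacing the plain mutual information, the exponential family of product distributions, and the distortion constraint on $\X\times\Y$ by their conditional analogues on $(\X\times\S)\times\Y$. Concretely, I work inside the Bregman divergence system $(\mathbb{R}^{n_1 n_3(n_2-1)},\bar\mu,D^{\bar\mu})$, in which every point is a joint distribution $W\times P_{XS}$ with the fixed marginal $P_{XS}$, and I use the exponential subfamily ${\cal E}={\cal P}_{X-S-Y}$ (on which $I(X;Y|S)$ vanishes) together with the closed convex mixture subfamily ${\cal M}_{\le}:={\cal P}_{\Y|\X\times\S}^{d,P_{XS},D,\le}\times P_{XS}$, whose boundary is the equality set ${\cal M}:={\cal P}_{\Y|\X\times\S}^{d,P_{XS},D}\times P_{XS}$. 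Condition (B0)/(M4) holds by Lemma \ref{LOST} applied to the joint system $(\X\times\S)\times\Y$, so both projections are available.

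The first statement is immediate: by the equivalence stated just before the lemma, \eqref{CXP3} holds exactly when some $P_{XSY}\in{\cal E}$ meets the distortion bound; such a point lies in ${\cal M}_{\le}$ and, being a Markov chain $X-S-Y$, has $I(X;Y|S)=0$ by \eqref{MOT3}, so the minimum over ${\cal M}_{\le}$ is $0$.

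For the second statement I argue by contradiction, assuming both that \eqref{CXP3} fails and that the first equality in \eqref{MOTR2} fails, i.e. $\min_{W\in{\cal M}_{\le}}I(X;Y|S)<\min_{W\in{\cal M}}I(X;Y|S)$. Let $W_*$ attain the left-hand minimum and set $Q_1:=\Pro^{(e),\bar\mu}_{{\cal E}}(W_*\times P_{XS})$, so that $I(X;Y|S)_{W_*\times P_{XS}}=D(W_*\times P_{XS}\|Q_1)$ by \eqref{MOT3}. Because \eqref{CXP3} fails, no element of ${\cal E}$ meets the distortion bound, hence $Q_1\notin{\cal M}_{\le}$; since $Q_1$ still has marginal $P_{XS}$, it lies in the extended mixture family $\hat{\cal M}_{\le}$ (the whole fixed-marginal system) minus ${\cal M}_{\le}$. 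Then \eqref{MGOF} of Lemma \ref{LMGO} places $\Pro^{(m),\bar\mu}_{{\cal M}_{\le}}(Q_1)$ on $\partial{\cal M}_{\le}={\cal M}$; writing it as $W_1\times P_{XS}$ gives $W_1\in{\cal P}_{\Y|\X\times\S}^{d,P_{XS},D}$. Using the defining minimality of the $m$-projection in Lemma \ref{LMGO} (so $W_1\times P_{XS}$ minimizes $D(\cdot\|Q_1)$ over ${\cal M}_{\le}\ni W_*\times P_{XS}$) together with the variational formula \eqref{MOT3} for $I(X;Y|S)$, I chain
\begin{align}
I(X;Y|S)_{W_*\times P_{XS}}
=D(W_*\times P_{XS}\|Q_1)
\ge D(W_1\times P_{XS}\|Q_1)
\ge I(X;Y|S)_{W_1\times P_{XS}}.
\end{align}
Since $W_1\in{\cal M}$, this forces $\min_{W\in{\cal M}_{\le}}I(X;Y|S)\ge\min_{W\in{\cal M}}I(X;Y|S)$, while the reverse inequality is trivial from ${\cal M}\subset{\cal M}_{\le}$; the resulting equality contradicts the assumption. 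The second equality in \eqref{MOTR2} is then just the variational expression \eqref{MOT3}.

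The main obstacle, exactly as in Lemma \ref{TXPO}, is the geometric step $\Pro^{(m),\bar\mu}_{{\cal M}_{\le}}(Q_1)\in\partial{\cal M}_{\le}={\cal M}$. It requires checking that ${\cal M}_{\le}$ is a genuine closed convex mixture subfamily of the fixed-marginal system, that its extended mixture family is the whole system, and—the delicate point—that $Q_1$ actually leaves ${\cal M}_{\le}$ when \eqref{CXP3} fails, rather than merely sitting on its boundary, which is precisely where the equivalence preceding the lemma is invoked. Everything else is a routine transcription of the unconditional argument.
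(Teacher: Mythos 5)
Your proposal is correct and follows essentially the same route as the paper, which proves this lemma by transcribing the proof of Lemma \ref{TXPO} to the side-information setting: the e-projection $Q_1$ of the minimizer onto ${\cal P}_{X-S-Y}$ violates the distortion bound when \eqref{CXP3} fails, so \eqref{MGOF} of Lemma \ref{LMGO} forces its m-projection onto the boundary (equality) set, and the same divergence chain yields the contradiction. Your closing remarks on the delicate points (closed convex mixture structure, $Q_1$ genuinely leaving ${\cal M}_{\le}$) are exactly the ingredients the paper relies on implicitly.
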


Due to Lemma \ref{TXPO2}, when \eqref{CXP3} does not hold,
it is sufficient to address the minimization \eqref{MOTR2}.
In the following, we discuss the minimization problem \eqref{MOTR2}, 
which is a special case of the minimization \eqref{min1}
with the formulation given in Subsection \ref{Sec:emS1}.
The mixture family $\mathcal{M}$ has $ n_1 n_3(n_2-1)-1$ parameters.
Since the total dimension is $n_1n_3(n_2-1)$,
we employ Algorithm \ref{protocol1} instead of Algorithm \ref{protocol1-0}.
Since Lemma \ref{LOST} guarantees Condition (B0) for this problem, 
Algorithm \ref{protocol1} works and is rewritten as Algorithm \ref{protocol1-3}.

\if0
\begin{algorithm}
\caption{em-algorithm for rate distortion with side information}
\Label{protocol1-3}
\begin{algorithmic}
\STATE {Choose the initial distribution $q_{(1)}$ on $\Y$.
Then, we define the initial joint distribution $P_{XYS,(1)}$ as 
$q_{(1)}\times P_{XS} $;} 
\REPEAT 
\STATE {\bf m-step:}\quad 
Calculate $P_{XYS}^{(t+1)}$ as $P_{XYS}^{(t+1)}(x,y,s):=
P_X(x)
P_{YS|X,(t)}(y,s|x)e^{\bar{\tau}_{x,s}+ \bar{\tau}d(x,y))}
\Big(\sum_{y',s'}P_{YS|X,(t)}(y',s'|x)e^{\bar{\tau}_{x,s'}+ \bar{\tau}d(x,y')}
\Big)^{-1}$,
where $(\bar{\tau}_{x,s})_{x \in \X,s \in {\cal S}}$ and $\bar{\tau}$ are the unique elements 
$({\tau}_{x,s})_{x \in \X,s \in {\cal S}}$ and ${\tau}$ to satisfy 
\begin{align}
\frac{\partial}{\partial \tau} 
\sum_x P_X(x)
\log \Big(\sum_{y',s'}P_{YS|X,(t)}(y',s'|x)e^{{\tau}_{x,s'}+ {\tau}d(x,y')}\Big)
&=D \Label{const1-6} 
\end{align}
for $(x,s) \in \X\times {\cal S}\setminus \{(1,1)\}$
and ${\tau}_{1,1} =\bar{\tau}_{1,1} $ is fixed to $0$.
\STATE {\bf e-step:}\quad 
Calculate $P_{XYS,(t+1)} $ as 
$P_{Y|S}^{(t+1)}\times P_{XS}$ where $P_{Y|S}^{(t+1)}(y|s):= \sum_{x \in \X}P_{XYS}^{(t+1)}(x,y,s)/
\sum_{x' \in \X,y' \in \Y}P_{XYS}^{(t+1)}(x',y',s)$.
\UNTIL{convergence.} 
\end{algorithmic}
\end{algorithm}
\fi

\begin{algorithm}
\caption{em-algorithm for rate distortion with side information}
\Label{protocol1-3}
\begin{algorithmic}
\STATE {Choose the initial conditional distribution $P_{Y|S}^{(1)}$ on $\Y$ with the condition on ${\cal S}$.
Then, we define the initial joint distribution $P_{XYS,(1)}$ as 
$P_{Y|S}^{ (1)}\times P_{XS} $;} 
\REPEAT 
\STATE {\bf m-step:}\quad 
Calculate $P_{XYS}^{(t+1)}$ as $P_{XYS}^{(t+1)}(x,y,s):=
P_{XS}(x,s)
P_{Y|S,(t)}(y|s)e^{\bar{\tau}d(x,y))}
\Big(\sum_{y'}
P_{Y|S,(t)}(y'|s)e^{\bar{\tau}d(x,y'))}
\Big)^{-1}$,
where $\bar{\tau}$ is the unique element ${\tau}$ to satisfy 
\begin{align}
\frac{\partial}{\partial \tau} 
\sum_{x,s} P_{XS}(x,s)
\log \Big(
\sum_{y'} P_{Y|S,(t)}(y'|s)e^{{\tau}d(x,y'))}\Big)
&=D \Label{const1-6} .
\end{align}
\STATE {\bf e-step:}\quad 
Calculate $P_{XYS,(t+1)} $ as 
$P_{Y|S}^{(t+1)}\times P_{XS}$ where $P_{Y|S}^{(t+1)}(y|s):= \sum_{x \in \X}P_{XYS}^{(t+1)}(x,y,s)/
\sum_{x' \in \X,y' \in \Y}P_{XYS}^{(t+1)}(x',y',s)$.
\UNTIL{convergence.} 
\end{algorithmic}
\end{algorithm}

To check condition (B1),
we set $\theta$ and $\theta'$ be elements of $\mathbb{R}^{n_1n_3(n_2-1)}$ corresponding to 
$W \times P_{XS} $ and $  W' \times P_{XS}$.
We define the distribution $Q_{W} $ on $\X\times\S\times \Y$
as $Q_W(x,s,y)=\sum_{x'}W(y|x',s) P_{X|S=s}(x')P_S(s)$.
In the same way, we define $Q_{W'} $ on $\X\times\S\times \Y$ by replacing $W$ by $W'$. 
Then, the relations
\begin{align}
& D^{\bar{\mu}}( \Pro^{(e),\bar{\mu}}_{{\cal E}}  (\theta')\|  \Pro^{(e),\bar{\mu}}_{{\cal E}}  (\theta) ) 
=D(Q_{W'}\|Q_W) \nonumber \\
=&\sum_{s\in \S}P_S(s) D( (W_{Y|X,S=s}' \cdot P_{X|S=s} )\times P_{X|S=s} \|  
(W_{Y|X,S=s} \cdot P_{X|S=s})\times P_{X|S=s}  ) \nonumber \\
=&\sum_{s\in \S}P_S(s) D( W_{Y|X,S=s}' \cdot P_{X|S=s} \|  W_{Y|X,S=s} \cdot P_{X|S=s} ) \nonumber \\
\le & \sum_{s\in \S}P_S(s) D( W_{Y|X,S=s}' \times P_{X|S=s} \|  W_{Y|X,S=s} \times P_{X|S=s} )\nonumber \\
=& D( W' \times P_{XS} \|  W \times P_{XS})
= D^{\bar{\mu}}(\theta'\|\theta)
\end{align}
guarantee Condition (B1).
When the initial value $\theta_{(1)}$ is chosen as the case that
$W$ has full support, 
$\sup_{\theta \in \mathcal{E}}D^{\bar{\mu}}(\theta \| \theta_{(1)})$ has a finite value.
Hence, Theorem \ref{theo:conv2} guarantees the convergence to the global minimum as follows.
When we choose the initial value $\theta_{(1)}$ 
in the same way as the above case,
the precision \eqref{NHG2} holds with $t \ge \frac{\log n_2}{\epsilon}+1$.
In addition, in the same way as Subsection \ref{Sec:emS1}, we can apply Algorithm \ref{protocol1Berror}.

Next, we consider the case when 
we cannot exactly calculate the unique element
$\bar{\tau}$ to satisfy 
\eqref{const1-6}. Alternatively, we need to use 
Algorithm \ref{protocol1Berror}, which can be rewritten in the same way as Algorithm \ref{protocol1-2BB}.
That is, 
it is sufficient to replace 
$X$ by $XS$ and define 
$\hat{F}^{(t)}(\tau)$ by $\sum_{x,s} P_{XS}(x,s)
\log \Big(
\sum_{y'} P_{Y|S,(t)}(y'|s)e^{{\tau}(D-d(x,y')))}\Big)$
in Algorithm \ref{protocol1-2BB}.
When we fix the precision level $\epsilon>0$ and choose $\epsilon_1:=\frac{\epsilon}{3}$,
this algorithm achieves the precision condition \eqref{WHG2T} with $\frac{2 \log n_2}{\epsilon}+1$ rounds due to \eqref{XPZ}.
The calculation complexity can be evaluated in the same way as Algorithm \ref{protocol1-2BB}.

\if0
\subsection{General classical problem}
Yasui et al \cite{YSM} considered a generalized problem to 
extend Arimoto-Blahut algorithm.
Here, we consider the following generalized problem.
We define the sets ${\cal A}$, ${\cal B}$, and ${\cal C}$ as
\begin{align}
{\cal A}:=\{1, \ldots, n_0\},~
{\cal B}:=\{1, \ldots, n_1\},~
{\cal C}:=\{1, \ldots, n_2\}.
\end{align}
Then, we define the sets
${\cal S}_{{\cal A}}$,
${\cal X}_{{\cal B}}$, and ${\cal Y}_{{\cal C}}$ as
\begin{align}
{\cal S}_{{\cal A}}:= \prod_{a=1}^{n_0} {\cal S}_{a},~
{\cal X}_{{\cal B}}:= \prod_{b=1}^{n_1} {\cal X}_{b},~
{\cal Y}_{{\cal C}}:= \prod_{c=1}^{n_2} {\cal Y}_{c}.
\end{align}
Elements of the sets
${\cal S}_{{\cal A}}$,
${\cal X}_{{\cal B}}$, and ${\cal Y}_{{\cal C}}$ 
are denoted by 
$\bs,\bx$, and $\by$, respectively.
For $i=1, \ldots, I$, we choose subsets 
$U_i \subset {\cal A}$ and $V_i \subset {\cal B}$,
and define the function $d_i(\bs_{U_i}, \bx_{V_i})$
for $\bs_{U_i} \in {\cal S}_{U_i}$ and $\bx_{V_i} \in {\cal X}_{V_i}$.

In the following, the distribution $P_{\bS}$, the conditional distribution $P_{\bY|\bS \bX}$,
and real numbers $D_1, \ldots, D_I$
are fixed, and 
we need to choose a conditional distribution $Q_{\bX|\bS}$ from the following set;
\begin{align}
{\cal P}(D_1, \ldots, D_I)
:=\{Q_{\bX|\bS}|
\sum_{\bs,\bx}
P_{\bS}(\bs)
Q_{\bX|\bS}(\bx|\bs)
P_{\bY|\bS \bX}(\by|\bs \bx)= D_i\}.
\end{align}

Under the above preparation, our target function is
\begin{align}
I(Q_{\bX|\bS}):=
\sum_{\bs,\bx,\by} P_{\bS}(\bs)
Q_{\bX|\bS}(\bx|\bs) P_{\bY|\bS \bX}(\by|\bs \bx)
\log 
\frac{P_{\bS}(\bs) Q_{\bX|\bS}(\bx|\bs) P_{\bY|\bS \bX}(\by|\bs \bx)}
{\prod_{k=1}^k Q_k[Q_{\bX|\bS}](\bs_{A_k'},\bx_{B_k'},\by_{C_k'}| \bs_{A_k},\bx_{B_k},\by_{C_k})}.
\end{align}
\fi

\section{Quantum entanglement-assisted rate distortion}\Label{S6}
Consider two quantum systems ${\cal H}_A$ and ${\cal H}_B$ with dimension $d_A$ and $d_B$.
Let ${\cal H}_R$ be the reference system of ${\cal H}_A$ with the dimension $d_A$.
We focus on a density matrix $\rho$ on ${\cal H}_A$ and 
a Hermitian matrix $\Delta$ on ${\cal H}_R \otimes {\cal H}_B$, which expresses 
our distortion measure.
Using a purification $\Psi$ of $\rho$ on ${\cal H}_{A}\otimes {\cal H}_R$, we define 
the following sets of TP-CP maps with the input system ${\cal H}_A$ and the output system ${\cal H}_B$.
\begin{align}
{\cal P}_{A\to B}^{\Delta,\rho,D}:=& 
\Big\{ {\cal N} \Big|  \Tr \Delta  (id_R \otimes {\cal N})(|\Psi\rangle \langle \Psi|) =D \Big\} \\
{\cal P}_{A\to B}^{\Delta,\rho,D,\le}:=& 
\Big\{ {\cal N} \Big|  \Tr \Delta  (id_R \otimes {\cal N})(|\Psi\rangle \langle \Psi|) \le D \Big\} .
\end{align}
The entanglement-assisted rate distortion function is given as \cite[Theorem 2]{DHW}
\begin{align}
&\min_{{\cal N} \in {\cal P}_{A\to B}^{\Delta,\rho,D,\le} }
D( (id_R \otimes {\cal N})(|\Psi\rangle \langle \Psi|) \|
(id_R \otimes {\cal N})(|\Psi\rangle \langle \Psi|)_R \otimes (id_R \otimes {\cal N})(|\Psi\rangle \langle \Psi|)_B
)\nonumber \\
=&\min_{{\cal N} \in {\cal P}_{A\to B}^{\Delta,\rho,D,\le} }
\min_{ \sigma_B\in \S({\cal H}_B)}
D( (id_R \otimes {\cal N})(|\Psi\rangle \langle \Psi|) \|
\rho_R\otimes \sigma_B) .\Label{MOT5}
\end{align}
where
$\rho_R:=\Tr_A |\Psi\rangle \langle \Psi|$.
\if0
When the minimum is attained when the equality holds in the constraint, the above problem is rewritten as
\begin{align}
\min_{{\cal N} \in {\cal P}_{A\to B}^{\Delta,\rho,D} }
\min_{ \sigma_B\in \S({\cal H}_B)}
D( (id_R \otimes {\cal N})(|\Psi\rangle \langle \Psi|) \|
\rho_R\otimes \sigma_B) .
\Label{MOT6}
\end{align}
\fi
Essentially, the above minimization handles the state
$(id_R \otimes {\cal N})(|\Psi\rangle \langle \Psi|)$.
Hence, we introduce the following sets of states on ${\cal H}_R\otimes {\cal H}_B$;
\begin{align}
{\cal S}_{R B}^{\Delta,\Psi,D}:=& 
\Big\{ \bar\rho_{RB} \Big|  \Tr \Delta \bar\rho_{RB}= D ,\quad
\bar\rho_R= \rho_R\Big\} \\
{\cal S}_{R B}^{\Delta,\Psi,D,\le}:=& 
\Big\{ \bar\rho_{RB} \Big|  \Tr \Delta \bar\rho_{RB}\le D ,\quad
\bar\rho_R= \rho_R\Big\} .
\end{align}
The minimization \eqref{MOT5} is rewritten as
\begin{align}
\min_{\bar\rho_{RB} \in {\cal S}_{R B}^{\Delta,\Psi,D,\le} }
\min_{\sigma_B\in \S({\cal H}_B)}
D( \bar\rho_{RB} \| \rho_R\otimes \sigma_B) .
\Label{MOT7}
\end{align}

Now, we apply the discussion  in Section \ref{4B} to the case when ${\cal H}$ is ${\cal H}_R \otimes {\cal H}_B$.
Then,  we consider the Bregman divergence system $(\mathbb{R}^{d_A^2 d_B^2-1}, \mu, D^\mu)$. 
The set of states $\rho_R \otimes \sigma_B$ forms an exponential family ${\cal E}$, and
the set ${\cal S}_{R B}^{\Delta,\Psi,D}$ forms a mixture family ${\cal M}$.

Similar to \eqref{CXP2}, 
there exists a state $\sigma_B$ such that
\begin{align}
\Tr \Delta \rho_R \otimes \sigma_B  \le D
\end{align}
if and only if 
\begin{align}
\lambda_{\min} (\Delta_B) \le D
\Label{CXP5}
\end{align}
where $\Delta_B:= \Tr_R  \Delta \rho_R \otimes I_B$
and $\lambda_{\min} (\Delta_B)$ expresses the minimum eigenvalue of 
$\Delta_B$.
Therefore, in the same way as Lemma \ref{TXPO}, we can show the following lemma.
\begin{lemma}\Label{TXPO5}
When \eqref{CXP5} holds, the  minimum \eqref{MOT7} equals zero.
Otherwise,
\begin{align}
&\min_{\bar\rho_{RB} \in {\cal S}_{R B}^{\Delta,\Psi,D,\le} }
\min_{\sigma_B\in \S({\cal H}_B)}
D( \bar\rho_{RB} \| \rho_R\otimes \sigma_B) \\
=&
\min_{\bar\rho_{RB} \in {\cal S}_{R B}^{\Delta,\Psi,D} }
\min_{\sigma_B\in \S({\cal H}_B)}
D( \bar\rho_{RB} \| \rho_R\otimes \sigma_B) .
\Label{MOT5O}
\end{align}
\end{lemma}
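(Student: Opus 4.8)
The plan is to transport the proof of Lemma~\ref{TXPO} almost verbatim to the Bregman divergence system $(\mathbb{R}^{d_A^2 d_B^2-1},\mu,D^\mu)$ of states on ${\cal H}_R\otimes{\cal H}_B$, in which ${\cal E}$ is the exponential subfamily of product states $\rho_R\otimes\sigma_B$ and ${\cal M}={\cal S}_{R B}^{\Delta,\Psi,D}$ is the mixture subfamily cut out by the marginal condition $\bar\rho_R=\rho_R$ together with the distortion equality $\Tr\Delta\bar\rho_{RB}=D$. Here the relative entropy coincides with $D^\mu$ by \eqref{MGA2}, and the $e$- and $m$-projections all exist because Lemma~\ref{LOS3} supplies Conditions (E4) and (M4). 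The first statement is immediate from the discussion preceding the lemma: when \eqref{CXP5} holds there is a $\sigma_B$ with $\Tr\Delta\,\rho_R\otimes\sigma_B\le D$, and the product state $\rho_R\otimes\sigma_B$ then lies both in ${\cal E}$ and in ${\cal S}_{R B}^{\Delta,\Psi,D,\le}$ (its $R$-marginal is $\rho_R$ and its distortion is $\le D$), so taking $\bar\rho_{RB}=\rho_R\otimes\sigma_B$ and the inner variable equal to $\sigma_B$ gives divergence $0$ and hence a vanishing minimum.

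For the second statement I would argue by contradiction, assuming that \eqref{CXP5} fails and that the inequality-set minimum in \eqref{MOT7} is strictly smaller than the equality-set minimum; the reverse inequality holds trivially since ${\cal S}_{R B}^{\Delta,\Psi,D}\subset{\cal S}_{R B}^{\Delta,\Psi,D,\le}$. First I would fix a minimizer $\bar\rho_{RB,*}$ of the inequality-set problem, which exists by compactness of ${\cal S}_{R B}^{\Delta,\Psi,D,\le}$ and continuity of the objective. Its $e$-projection onto ${\cal E}$ is the product state $\rho_R\otimes\sigma_{B,1}$ with $\sigma_{B,1}=\Tr_R\bar\rho_{RB,*}$, because $\bar\rho_{RB,*}$ already has $R$-marginal $\rho_R$ and the Klein inequality selects its $B$-marginal as the optimal $\sigma_B$. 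Since \eqref{CXP5} is assumed to fail we have $\Tr\Delta\,\rho_R\otimes\sigma_{B,1}>D$, so $\rho_R\otimes\sigma_{B,1}$ belongs to the extended mixture family $\hat{\cal M}=\{\bar\rho_{RB}\mid\bar\rho_R=\rho_R\}$ but not to the closed convex mixture subfamily ${\cal S}_{R B}^{\Delta,\Psi,D,\le}$. Applying \eqref{MGOF} of Lemma~\ref{LMGO} to this closed convex mixture subfamily then forces its $m$-projection $\bar\rho_{RB,1}$ onto the boundary of ${\cal S}_{R B}^{\Delta,\Psi,D,\le}$, i.e.\ onto the equality set ${\cal S}_{R B}^{\Delta,\Psi,D}={\cal M}$, exactly as in the classical case.

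I would then close the contradiction with the same chain of inequalities as in Lemma~\ref{TXPO}:
\begin{align}
\min_{\bar\rho_{RB}\in{\cal S}_{R B}^{\Delta,\Psi,D,\le}}\min_{\sigma_B}D(\bar\rho_{RB}\|\rho_R\otimes\sigma_B)
=& D(\bar\rho_{RB,*}\|\rho_R\otimes\sigma_{B,1}) \nonumber\\
\ge& D(\bar\rho_{RB,1}\|\rho_R\otimes\sigma_{B,1})
\ge \min_{\sigma_B}D(\bar\rho_{RB,1}\|\rho_R\otimes\sigma_B),
\end{align}
where the first equality holds because $\rho_R\otimes\sigma_{B,1}$ is the $e$-projection of the minimizer, the first inequality because $\bar\rho_{RB,1}$ is the $m$-projection of $\rho_R\otimes\sigma_{B,1}$ onto ${\cal S}_{R B}^{\Delta,\Psi,D,\le}$ while $\bar\rho_{RB,*}$ lies in that set, and the last step is trivial. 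Since $\bar\rho_{RB,1}\in{\cal S}_{R B}^{\Delta,\Psi,D}$, the right-hand side is at least the equality-set minimum, giving the inequality-set minimum $\ge$ the equality-set minimum and contradicting the assumption, which proves \eqref{MOT5O}.

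The main obstacle is not the inequality chain, which is formal, but confirming the geometric setup that makes \eqref{MGOF} applicable: that ${\cal S}_{R B}^{\Delta,\Psi,D,\le}$ really is a closed convex mixture subfamily (it is, since both the marginal condition and $\Tr\Delta\bar\rho_{RB}\le D$ are affine in the mixture parameter), that its extended mixture family is $\{\bar\rho_{RB}\mid\bar\rho_R=\rho_R\}$ with active-constraint boundary exactly the equality set, and that the $e$-projection preserves the $R$-marginal so that its image lands in $\hat{\cal M}\setminus{\cal M}$. The failure of \eqref{CXP5} is used precisely to guarantee that this projected product state violates the distortion bound, which is what places it outside ${\cal M}$ and triggers \eqref{MGOF}.
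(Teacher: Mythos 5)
Your proposal is correct and takes essentially the same approach as the paper: the paper proves this lemma by declaring it follows ``in the same way as Lemma \ref{TXPO}'', and your argument is exactly that transport --- the $e$-projection onto $\rho_R\otimes\Tr_R\bar\rho_{RB,*}$, the observation that failure of \eqref{CXP5} places this product state in $\hat{\mathcal{M}}\setminus\mathcal{M}$ so that \eqref{MGOF} of Lemma \ref{LMGO} forces its $m$-projection onto the equality set ${\cal S}_{R B}^{\Delta,\Psi,D}$, and the same two-step divergence chain yielding the contradiction. Your added care about the geometric setup (affineness of the constraints in the mixture parameter, identification of the extended mixture family and its boundary) only makes explicit what the paper leaves implicit.
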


Due to Lemma \ref{TXPO5}, when \eqref{CXP5} does not hold,
it is sufficient to address the minimization \eqref{MOT5O}.
In the following, we discuss the minimization problem \eqref{MOT5O}.
To address it as a special case of the minimization \eqref{min1}
with the formulation given in Subsection \ref{Sec:emS1},
we choose $d_B^2-1$ linearly independent Hermitian matrices 
$X_{1,R},\ldots, X_{d_B^2-1}$ on ${\cal H}_R$,
and set 
${\cal H}$ to be ${\cal H}_R\otimes {\cal H}_B$.
Then, we consider the Bregman divergence system 
$(\mathbb{R}^{n_1(n_2-1)}, \mu, D^{\mu})$ 
defined in Subsection \ref{4B},
where
$X_{d_R^2 (d_B^2-1)}=\Delta$ and
$X_{d_R^2 (d_B^2-1)+1}, \ldots, X_{d_A^2 d_B^2-1}$
are 
$\theta^i X_{1,R}\otimes I_B,\ldots, \theta^i X_{d_R^2-1,R}\otimes I_B$.
Then, ${\cal S}_{R B}^{\Delta,\Psi,D}$ is given as
\begin{align}
{\cal M}:= \{\theta \in \mathbb{R}^{d_R^2d_B^2-1}|
\hbox{Conditions \eqref{AN1} and \eqref{AN2} hold.} 
\},
 \end{align}
where
\begin{align}
\Tr \rho_\theta X_{d_R^2 (d_B^2-1)}&=D, \Label{AN1} \\
\Tr \rho_\theta X_{d_R^2 (d_B^2-1)+j}&=
\Tr \rho_R X_{d_R^2 (d_B^2-1)+j} \Label{AN2} ,
\end{align}
for $j=1, \ldots, d_B^2-1$.
Also, we choose the set ${\cal E}$ as
\begin{align}
{\cal E}:=\{\theta \in \mathbb{R}^{d_R^2d_B^2-1}|
\rho_{\theta}=\rho_R \otimes \sigma_B \}.
\end{align}
The mixture family $\mathcal{M}$ has $ d_A^2(d_B^2-1)$ parameters.
Since the total dimension is $d_A^2d_B^2-1$,
we employ Algorithm \ref{protocol1} instead of Algorithm \ref{protocol1-0}.
Since Lemma \ref{LOS3} guarantees Condition (B0) for this problem, 
Algorithm \ref{protocol1} works and is rewritten as Algorithm \ref{protocol1-3B}.

Since
\begin{align}
D(\rho_\theta \| \rho_R \otimes \sigma_B)
=&
D(\rho_\theta \| \rho_R \otimes \Tr_R \rho_\theta )
+
D(\Tr_R \rho_\theta \| \sigma_B) \nonumber \\
=&
D(\rho_\theta \| \rho_R \otimes \Tr_R \rho_\theta )
+
D(\rho_R\otimes \Tr_R \rho_\theta \| \rho_R\otimes \sigma_B),
\end{align}
we find that
\begin{align}
\rho_{ \Pro^{(e),\mu}_{{\cal E}}  (\theta)}
=\rho_R \otimes \Tr_R \rho_\theta.
\end{align}
Therefore, we have
\begin{align}
 D^{\mu}( \Pro^{(e),\mu}_{{\cal E}}  (\theta')\|  \Pro^{(e),\mu}_{{\cal E}}  (\theta) ) 
=& D(\rho_R \otimes \Tr_R \rho_{\theta'}\|\rho_R \otimes \Tr_R \rho_\theta)
=D(\Tr_R \rho_{\theta'}\| \Tr_R \rho_\theta) \nonumber \\
\le & D( \rho_{\theta'}\|  \rho_\theta) 
= D^{\mu}(\theta'\|\theta),
\end{align}
which guarantees Condition (B1).
Hence, Theorem \ref{theo:conv2T} guarantees the convergence to the global minimum.
Since Conditions (B0) and (B1) hold,
Theorem \ref{theo:conv2BC} guarantees that 
Algorithm \ref{protocol1Berror} works when m-step has an error.
Since m-step of this case has $d_R^2$ parameters,
it requires more calculation amount as a convex optimization
than Algorithms \ref{protocol1-2BB} and \ref{protocol1-3}.
However, it still has small smaller calculation amount of the case when 
the original problem \eqref{MOT7} is treated as a convex optimization
because \eqref{MOT7} has $d_R^2(d_B^2-1)$ variables.

\begin{algorithm}
\caption{em-algorithm for Quantum entanglement-assisted rate distortion}
\Label{protocol1-3B}
\begin{algorithmic}
\STATE {Choose the state $\rho_{B}^{(1)}$, and set 
$\rho_{RB,(1)} $ to be $\rho_R\otimes \rho_{B}^{(1)}$.
\if0
Choose the initial channel $\mathcal{N}^{(1)}$.
Then, we define the initial joint state $\rho_{RB}^{(1)}$ as 
$(id_R \otimes {\cal N}^{(1)})(|\Psi\rangle \langle \Psi|)$.
We choose traceless Hermitian matrices $X_i$ on $\cH_R$ as a basis on the set of 
traceless Hermitian matrices on $\cH_R$.
\fi}
\REPEAT 
\STATE {\bf m-step:}\quad 
Calculate $\rho_{RB}^{(t+1)}$ as $\rho_{RB}^{(t+1)}:=
\exp ( \log  \rho_{RB,(t)}+ \sum_i \theta^i X_i\otimes I_B
+\theta^0 \Delta
)/ \Tr \exp ( \log  \rho_{RB,(t)}+ \sum_i \theta^i X_{i,R}\otimes I_B+\theta^0 \Delta)$,
where $(\theta^i)$ are the unique elements to satisfy 
\begin{align}
\frac{\partial}{\partial \theta^i} 
\log \Tr \exp ( \log  \rho_{RB,(t)}+ \sum_i \theta^i X_i\otimes I_B+\theta^0 \Delta)
&=\Tr X_i \rho_R  \Label{const1-4B} \\
\frac{\partial}{\partial \theta^0} 
\log \Tr \exp ( \log  \rho_{RB,(t)}+ \sum_i \theta^i X_i\otimes I_B+\theta^0 \Delta)
&=D \Label{const1-6B} 
\end{align}
for $i=1, \ldots, d_R^2-1$.
\STATE {\bf e-step:}\quad 
Calculate $\rho_{RB,(t+1)} $ as $\rho_R\otimes \rho_B^{(t+1)}$, 
where $\rho_B^{(t+1)}:= \Tr_R \rho_{RB}^{(t+1)}$.
\UNTIL{convergence.} 
\end{algorithmic}
\end{algorithm}

\if0
\section{Entanglement measures}\Label{S-Ent}
\subsection{Entanglement formation}
We consider two quantum systems $\cH_A$ and $\cH_B$ with dimensions $d_A$ and $d_B$.
For a state $\rho_{AB}$ on $\cH_A\otimes \cH_B$,
the entangled formation $E_F(\rho_{AB})$ is defined as
\begin{align}
E_F(\rho_{AB}):=\min_{\rho_{AB}=\sum_{i=1}^n p_i |\Psi_i\rangle \langle \Psi_i|}
\sum_{i=1}^n p_i H(\Tr_A |\Psi_i\rangle \langle \Psi_i|),
\end{align}
where $H(\rho):=-\Tr \rho\log \rho$.
The above minimum can be attained with $n=d_A^2 d_B^2-1$
due to Caratheodory's theorem.
Now, we choose the classical system $\cH_C$ with dimension $d_C:=d_A^2 d_B^2-1$.
This quantity can be written as
\begin{align}
E_F(\rho_{AB})
=&\frac{1}{2}\min_{\rho_{AB}=\sum_i p_i |\Psi_i\rangle \langle \Psi_i|}
 \sum_i p_i  
 \min_{\sigma_A,\sigma_B} D(|\Psi_i\rangle \langle \Psi_i| \|   \sigma_A\otimes \sigma_B) \\
=&\frac{1}{2}\min_{\rho_{AB}=\sum_i p_i |\Psi_i\rangle \langle \Psi_i|}
 \min_{\sigma_{ABC} \in \S_{A-C-B}} 
 D \Big( \sum_i p_i |\Psi_i\rangle \langle \Psi_i| \otimes |i\rangle \langle i| 
\Big\|   \sigma_{ABC}\Big) \\
=&\frac{1}{2}\min_{ \rho_{ABC} \in \S(\rho_{AB})}
 \min_{\sigma_{ABC} \in \S_{A-C-B}} 
 D \Big( \rho_{ABC} 
\Big\|   \sigma_{ABC}\Big) ,
\end{align}
where $\S_{A-C-B}$ expresses the set of density matrices with the form $
\sum_{i} p_i \sigma_{A,i} \otimes \sigma_{B,i}\otimes |i\rangle \langle i|$
and $\S(\rho_{AB})$ expresses the set of density matrices on $\cH_A\otimes \cH_B \otimes \cH_C$
with the condition $\Tr_C \rho_{ABC}=\rho_{AB}$.
The set $\S_{A-C-B}$ is an exponential family, and 
the set $\S(\rho_{AB})$ is a mixture family.
Hence, we can apply em-algorithm.
To concretely write down Algorithm \ref{protocol1} in this problem,  
we introduce choose linearly independent matrices
$X_j$ with $j=1, \ldots, d_A^2 d_B^2-1$, where
$X_j$ are tracessless Hermitian on $\cH_A\otimes \cH_C$
and block diagonal with respect to the basis on $\cH_C$. 
Then, Algorithm \ref{protocol1} is rewritten as Algorithm \ref{protocol1-6}. 

\begin{algorithm}
\caption{em-algorithm for entanglement of formation}
\Label{protocol1-6}
\begin{algorithmic}
\STATE {Choose the initial state $\rho_{ABC}^{(1)}$ to satisfy $
\Tr_C \rho_{ABC}^{(1)}=\rho_{AB}$;} 
\REPEAT 
\STATE {\bf e-step:}\quad 
Calculate $\rho_{ABC,(t)} $ as 
$\sum_{i} P_C^{(t)}(i)\rho_{A,i}^{(t)}\otimes \rho_{B,i}^{(t)}\otimes |i\rangle \langle i|$,
where
$P_C^{(t)}(i):=\Tr_{ABC} (I_A\otimes I_B \otimes |i\rangle \langle i|)\rho_{ABC}^{(t)}$,
$\rho_{A,i}^{(t)}:= (P_C^{(t)}(i))^{-1}\Tr_{BC} (I_B \otimes |i\rangle \langle i|)\rho_{ABC}^{(t)}$,
and
$\rho_{B,i}^{(t)}:= (P_C^{(t)}(i))^{-1}\Tr_{AC} (I_A \otimes |i\rangle \langle i|)\rho_{ABC}^{(t)}$.
\STATE {\bf m-step:}\quad 
Calculate $\rho_{ABC}^{(t+1)}$ as 
$\exp(\log \rho_{ABC,(t)}+ \sum_{j} X_j \bar\tau_j )
(\Tr \exp(\log \rho_{ABC,(t)}+ \sum_{j} X_j \bar\tau_j ))^{-1}$,
where $(\bar{\tau}_j)_{j=1}^{d_A^2 d_B^2-1}$ are the unique elements 
$({\tau}_j)_{j=1}^{d_A^2 d_B^2-1}$ to satisfy 
\begin{align}
\frac{\partial}{\partial \tau_j} 
\log \Tr_{ABC} \exp(\log \rho_{ABC,(t)}+ \sum_{j} X_j \tau_j )
&=\Tr_{AB} X_j \rho_{AB} \Label{const1-M}
\end{align}
for $j=1, \ldots  d_A^2 d_B^2-1$.
\UNTIL{convergence.} 
\end{algorithmic}
\end{algorithm}

To check condition (B1),
we set $\theta$ and $\theta'$ be elements of $\mathbb{R}^{d_A^2 d_B^2 d_C -1}$ corresponding to 
$\rho_{ABC} $ and $  \rho_{ABC}'$ in $\S(\rho_{AB})$.

We define the distribution $Q_{W} $ on $\X\times\S\times \Y$
as $Q_W(x,s,y)=\sum_{x'}W(y|x',s) P_{X|S=s}(x')P(x,s)$.
In the same way, we define $Q_{W'} $ on $\X\times\S\times \Y$ by replacing $W$ by $W'$. 
Then, the relation

We apply Theorem \ref{theo:conv2} to the initial state $\rho_{AB}\otimes \rho_{C,mix}$,
where $\rho_{C,mix}$ is the completely mixed state on $\cH_C$.
Let $\rho_{o,ABC}\in \S(\rho_{AB})$ be the element to realize the minimum.
Since $\rho_{o,ABC}$ is a separable state with bipartite cut $AB$ and $C$, we have
$H(\rho_{o,ABC})\ge  H(\rho_{AB})$ \cite{}.
Using $\rho_{o,C}:= \Tr_{AB}\rho_{o,ABC}$,
we have
\begin{align}
& D(\rho_{o,ABC}\| \rho_{AB}\otimes \rho_{C,mix})
=-H(\rho_{o,ABC})+ H(\rho_{AB})-\Tr  \rho_{o,C} \log \rho_{C,mix} \nonumber \\
\le &-\Tr  \rho_{o,C} \log \rho_{C,mix} =\log d_C.
\end{align}
Due to Theorem \ref{theo:conv2}, 
after $ \frac{\log d_C}{\epsilon}$ iterations, we obtain \eqref{NHG2}.

\subsection{Squashed entanglement}
To discuss the squashed entanglement, we prepare 
the conditional mutual information $I(A;B|E)_{\rho_{ABE}}$ 
\begin{align}
I(A;B|E)_{\rho_{ABE}}:=
H(AE)_{\rho_{ABE}}+H(BE)_{\rho_{ABE}}-H(ABE)_{\rho_{ABE}}-H(E)_{\rho_{ABE}}.
\end{align}
We denote the set of states on $\cH_A\otimes \cH_B \otimes \cH_E $
whose $I(A;B|E)$ is zero.
In particular, 
$I(A;B|E)_{\rho_{ABE}}$ is characterized as
\begin{align}
 \min_{\sigma_{ABE} \in \S_{A-E-B}} 
 D \big( \rho_{ABE} 
\big\|   \sigma_{ABE}\big) .
\end{align}

For a state $\rho_{AB}$ on $\cH_A\otimes \cH_B$,
the squashed entanglement $E_{sq}(\rho_{AB})$ is defined as
\begin{align}
E_{sq}(\rho_{AB})
:=&\frac{1}{2}\min_{ \rho_{ABE} \in \S(\rho_{BE})} I(A;B|E)_{\rho_{ABE}} \\
=&\frac{1}{2}\min_{ \rho_{ABE} \in \S(\rho_{BE})}
 \min_{\sigma_{ABE} \in \S_{A-E-B}} 
 D \big( \rho_{ABE} 
\big\|   \sigma_{ABE}\big) .
\end{align}
Hence, when we fix the dimension $d_E$ of $\cH_E$, we can apply em-algorithm.
In the case of entanglement of formation $E_{F}(\rho_{AB})$,
$\cH_E$ is restricted to classical system, which is the difference from $E_{sq}(\rho_{AB})$.
Hence, $E_{sq}(\rho_{AB})$ is defined as the minimum in a larger set than  
$E_F(\rho_{AB})$. Hence, we have 
$E_{sq}(\rho_{AB})\le E_F(\rho_{AB})$.
In the same way as $E_F(\rho_{AB})$, due to Theorem \ref{theo:conv2}, 
after $ \frac{\log d_E}{\epsilon}$ iterations, we obtain \eqref{NHG2}. 
\fi
\if0
\section{Measurement and state preparation}
We apply em-algorithm to the optimization for 
measurement and state preparation.
For simplicity, we assume that 
a quantum state $\rho_{A|x}$ on the quantum system ${\cal H}_A$ with 
a discrete unknown index $x \in {\cal X}$
is generated with probability $P_X(x)$.
Also, it is assume that the state on another system $\cH_B$ is 
a density matrix $\rho_{B|x}$ when the index is $x\in {\cal X}$.
A player is allowed to measure the system $\cH_A$ and get an outcome $y$.
The player is asked to the generate a quantum state $\sigma_{B|y}$ close to the state $\rho_{B|x}$.
For simplicity, we measure the difference between $\sigma_{B|y}$ and $\rho_{B|x}$
by the divergence $D(\rho_{B|x}\|\sigma_{B|y})$.
When the measurement on $\cH_Y$ is given by the POVM $\{M_y\}_{y \in \cY}$,
the average of the divergence $D(\rho_{B|x}\|\sigma_{B|y})$
is given as
\begin{align}
&\sum_{x\in \cX}\sum_{y \in \cY} P_X(x) P_{Y|X}(y|x) D(\rho_{B|x}\|\sigma_{B|y})\nonumber\\
=&\sum_{y\in \cY}P_Y(y) \sum_{x \in \cX} P_{X|Y}(x|y) D(\rho_{B|x}\|\sigma_{B|y})\nonumber\\
=&
D\Big(\sum_{x\in \cX}\sum_{y \in \cY} P_{XY}(x,y)
|x\rangle \langle x| \otimes |y\rangle \langle y| \otimes \rho_{B|x}\Big\|
\nonumber \\
&\quad \sum_{y\in \cY}P_Y(y) \sum_{x \in \cX} P_{X|Y}(x|y)
|x\rangle \langle x| \otimes |y\rangle \langle y| \otimes \sigma_{B|y}\Big),
\Label{JID2}
\end{align}
where
$P_{Y|X}(y|x)=\Tr M_y \rho_{A|x}$.
Also, we have
\begin{align}
&\sum_{y\in \cY}P_Y(y) \sum_{x \in \cX} P_{X|Y}(x|y) D(\rho_{B|x}\|\sigma_{B|y})\nonumber\\
=&\sum_{y\in \cY}P_Y(y) 
\Bigg(\sum_{x \in \cX} P_{X|Y}(x|y) D\bigg( \rho_{B|x} \bigg\|\sum_{x \in \cX} P_{X|Y}(x|y) \rho_{B|x}\bigg)
\nonumber \\
&\quad - D\bigg(\sum_{x \in \cX} P_{X|Y}(x|y)\rho_{B|x}\bigg\| \sigma_{B|y}\bigg) \Bigg).
\Label{JID}
\end{align}
When the POVM $\{M_y\}_{y \in \cY}$ is fixed,
\eqref{JID} implies that the choice $\sigma_{B|y}=\sum_{x \in \cX} P_{X|Y}(x|y)\rho_{B|x}$
gives the minimum value of \eqref{JID2}.
This choice realizes the minimum in \eqref{JID2} 
whenever the second state is a state to satisfying the Markovian condition 
$X-Y-B$.

When we optimize our measurement $\{M_y\}_{y \in \cY}$ as well,
we minimize
\begin{align}
\sum_{y\in \cY}P_Y(y) 
\sum_{x \in \cX} P_{X|Y}(x|y) D\bigg( \rho_{B|x} \bigg\|\sum_{x \in \cX} P_{X|Y}(x|y) \rho_{B|x}\bigg)\Label{XOS}
\end{align}
under the condition $P_{Y|X}(y|x)=\Tr M_y \rho_{A|x}$.
For any extreme point of the set of POVMs, 
the number of support is at most $d_A^2$, where $d$ is the dimension of $\cH_A$.
Hence, we restrict the cardinality of $\cY$ to be $d_A^2$.

For this minimization, 
we focus on the following set ${\cal S}_{\rho_{A|x},P_X}$
\begin{align}
&{\cal S}_{\rho_{A|x},P_X}\nonumber \\
:=& \Big\{\sum_{x\in \cX}\sum_{y \in \cY} P_{XY}(x,y)
|x\rangle \langle x| \otimes |y\rangle \langle y| \otimes \rho_{B|x}\Big|
P_{Y|X}(y|x)=\Tr M_y \rho_{A|x}
\Big\}_{\{M_y\}_{y \in \cY}}.
\end{align}
In the above set, all POVMs taking values in $\cY$ are allowed.
We choose $d_A^2-1$ linearly independent traceless Hermitian matrices $X_1, \ldots, X_{d_A^2-1}$. Then,
we choose a parameter $\eta=(\eta_{j})_{j=1}^{d_A^4-1 } $ as follows.
For $k=1, \ldots, d_A^2-1$, we define
\begin{align}
M_k(\eta):= \eta_{d_A^2 k} I +\sum_{i=1}^{d_A^2-1} X_i\eta_{d_A^2 (k-1)+i } 
\end{align}
and 
\begin{align}
M_{d_A^2}(\eta):= (1- \sum_{k=1}^{d_A^2-1}\eta_{d_A^2 k}) I 
+\sum_{i=1}^{d_A^2-1} X_i\eta_{d_A^2 (d_A^2-1)+i } .
\end{align}
Then, we define $\rho(\eta):=
\sum_{x\in \cX}\sum_{y \in \cY} P_{XY}(x,y)
|x\rangle \langle x| \otimes |y\rangle \langle y| \otimes \rho_{B|x}$ with 
$P_{Y|X}(y|x)=\Tr M_y(\eta) \rho_{A|x}$.
Hence, the set ${\cal S}_{\rho_{A|x},P_X}$ forms a mixture family with the mixture parameter 
$\eta$ defined the above.
That is, the minimization of \eqref{XOS} is given as
\begin{align}
\min_{\rho \in {\cal S}_{\rho_{A|x},P_X}} \min_{\sigma \in {\cal S}_{X-Y-B}}
D(\rho\|\sigma),
\end{align}
where ${\cal S}_{X-Y-B}$ is the classical-quantum states to satisfy the 
Markovian condition $X-Y-B$.

\begin{algorithm}
\caption{em-algorithm for measurement and state preparation}
\Label{protocol1-4}
\begin{algorithmic}
\STATE {Choose the initial value ${\eta}^{(1)}$ to parameterize an element of ${\cal S}_{\rho_{A|x},P_X}$;} 
\REPEAT 
\STATE {\bf e-step:}\quad 
Calculate $\sigma(\eta^{(t)})$, where
$\sigma(\eta^{(t)})$ is given as $\sum_{y\in \cY}P_Y(y) \sum_{x \in \cX} P_{X|Y}(x|y)
|x\rangle \langle x| \otimes |y\rangle \langle y| \otimes 
\sum_{x \in \cX} P_{X|Y}(x|y) \rho_{B|x}$ and 
$P_{Y|X}(y|x)=\Tr M_y(\eta^{(t)}) \rho_{A|x}$.
\STATE {\bf m-step:}\quad 
Calculate ${\eta}^{(t+1)}$ as the unique element in ${\cal S}_{\rho_{A|x},P_X}$ 
to realize the minimum
of the smooth convex function $\eta \mapsto D(\rho(\eta)\|\sigma(\eta^{(t)}))$.
\UNTIL{convergence.} 
\end{algorithmic}
\end{algorithm}
\fi

\section{Conclusion}\Label{Sec:conc}
We have formulated em algorithm in the general framework of 
Bregman divergence, and have shown the convergence to the true value and the convergence speed
under conditions that match information-theoretical problem settings.
Then, we have applied them to the rate distortion problem
and its variants including the quantum settings.

Our em algorithm in the general framework
contains two types of minimization processes in e- and m- steps.
Due to the above property of our em algorithm,
our em algorithm has merit only when 
the optimizations in the e- and m- step are written in a form without optimization,
or are converted to simpler optimizations with a smaller number of parameters
than the original minimization problem.
Fortunately, rate distortion problem and its variants satisfy this condition.
In particular, classical rate distortion problem with and without side information
need only a one-parameter convex optimization in each iteration.

To remove the constraint \eqref{BPA}, 
existing papers for the rate distortion problem and its variants
changed the objective function by using a Lagrange multiplier,
and no preceding paper showed how to choose the Lagrange multiplier \cite{Blahut,Csiszar,Cheng,YSM}.
Indeed, the number of studies for this topic is limited 
while more papers studied channel capacities \cite{Blahut,Arimoto,Cheng,YSM,Nagaoka,Matz,Yasui,Yu,SSML,Sutter,NWS,Li,Shoji,RISB}.
Since the set of conditional distributions with the linear constraint \eqref{BPA}
forms a mixture family,
our method can be directly applied to the original objective function with
the linear constraint \eqref{BPA}.
To handle the linear constraint, 
each iteration has a convex optimization only with one variable
in m-step.
Due to this convex optimization, our algorithm has a larger calculation complexity
than the algorithm by \cite{Blahut}.
However, this difference is not so large, and can be considered as
the additional cost to exactly solve the original minimization \eqref{MOT}
instead of the modified minimization \eqref{XOZ}.


Further, since our result is written in a form of Bregman divergence,
we can expect large applicability.
That is, our results have the advantage with respect to its generality over existing methods.
To emphasize our advantage, we need to apply our method to other problems because the problems discussed in this paper are limited.
Hence, it is an interesting open problem to apply 
our em algorithm to other optimization problems.
For example, it can be expected to extend our result to the case with memory \cite{Cheng,Heegard,Aharoni} 
because various information quantities in the Markovian setting
can be written in a form of Bregman divergence \cite{NK,Nagaoka2,HW20,WH17,HW16,HW16b}.
As another future problem, it is interesting to extend our method to the optimization of 
the exponential decreasing rate in various settings, which requires the optimization of 
R\'{e}nyi mutual information
by using R\'{e}nyi version of Pythagorean theorem \cite[Lemma 3 in Suppl. Mat.]{Sharma}\cite[Lemma 2.11]{Group2}.

\section*{Acknowledgments}
The author was supported in part by the National Natural Science Foundation of China (Grant No. 62171212) and
Guangdong Provincial Key Laboratory (Grant No. 2019B121203002).
The author is very grateful to Mr. Shoji Toyota for helpful discussions.

\appendices

\section{Review of convex optimization}\Label{AA1}
In this appendix, we review several existing algorithms for the minimization of 
a differentiable convex function $F$ defined on a closed convex set $C$.
In the following, we use the notation $x_*:=\argmin_{x \in C}  F(x)$.

\subsection{Bisection method}\Label{AA1A}
First we consider the bisection method, which works with one-variable 
differentiable convex function $F$ defined on an interval $[a,b]$ \cite{BV}.

\begin{algorithm}
\caption{Bisection method}
\Label{Bisection}
\begin{algorithmic}
\STATE {Set $a_0:=a$ and $b_0:=b$;} 
\REPEAT 
\STATE {\bf $k+1$th-step:}\quad 
Set $x_k:= \frac{a_k+b_k}{2}$.
If $\frac{d}{dx}F(x_k)>0$, we set $a_{k+1}:= a_k$ and $b_{k+1}:=x_k$.
Otherwise, we set $a_{k+1}:= x_k$ and $b_{k+1}:=b_k$.
This construction guarantees the conditions $\frac{d}{dx}F(a_{k+1})\le 0$
$\frac{d}{dx}F(b_{k+1})\ge 0$.
\UNTIL{convergence.} 
\end{algorithmic}
\end{algorithm}
To see the precision, 
we define the parameter
$V_0:= \max_{x,y \in [a,b]}  |F(x)-F(y)|$.

When we use the bisection method, i.e., Algorithm \ref{Bisection}, we have
\begin{align}
F(x_k)- F(x_*) \le &\frac{V_0}{2^k} \Label{GAT1} \\
F(a_k)- F(x_*), F(b_k)- F(x_*)\le &\frac{V_0}{2^{k-1}} \Label{GAT2}\\
|x_k-x_*| \le & \frac{b-a}{2^{k+1}} \Label{GAT3}\\
 x_*-a_k, b_k-x_* \le& \frac{b-a}{2^{k}} .\Label{GAT4}
\end{align}
That is, to guarantee $|F(x_k)-  F(x_*)| \le\epsilon$, 
the number of iteration $k$ needs to satisfy $k \ge \log_2 \frac{V_0}{\epsilon} $.

\subsection{Gradient method}\Label{AA1B}
Next, we consider the gradient method, which works for
a differentiable $d$-variable convex with the uniform Lipschitz condition.
We consider a differentiable $d$-variable convex function $F$ defined on a convex set $C \subset \mathbb{R}^d$,
and assume the uniform Lipschitz condition with a constant $L$;
\begin{align}
\| \nabla F(x)- \nabla F(y)\| \le L \| x-y\| 
\end{align}
for $x,y \in C$.

\begin{algorithm}
\caption{Gradient method}
\Label{Gradient}
\begin{algorithmic}
\STATE {Set an initial value $x_0 \in C$;} 
\REPEAT 
\STATE {\bf $k+1$th-step:}\quad 
Set $x_{k+1}$ as
\begin{align}
x_{k+1}:= x_k -\frac{1}{L}\nabla F(x_k).
\end{align}
\UNTIL{convergence.} 
\end{algorithmic}
\end{algorithm}
When we use the gradient method, i.e., Algorithm \ref{Gradient}, 
we have \cite[Chapter 10]{Beck} \cite{BT,Nesterov}
\begin{align}
|F(x_k)- F(x_*)| \le \frac{L}{2k} \| x_*-x_0\|^2.\Label{XMU}
\end{align}
That is, to guarantee $|F(x_k)-  F(x_*)| \le\epsilon$, 
the number of iteration $k$ needs to satisfy $k \ge \frac{L \| x_*-x_0\|^2}{2\epsilon} $.
When we employ accelerated proximal gradient
methods, the evaluation \eqref{XMU} is improved as \cite{BT,AT,Nesterov,Nesterov2,Nesterov3,Teboulle}
\begin{align}
|F(x_k)- F(x_*)| \le \frac{L}{2(k+2)^2} \| x_*-x_0\|^2.\Label{XMU2}
\end{align}

\section{Proof of Theorem \ref{XAM}}\Label{A1} 
In this proof, we simplify $\gamma(\hat{\Theta}|{\Theta})$ to $\gamma$.
We consider the mixture subfamily
$\mathcal{M}:=
\{ \theta \in \Theta|  \exists \lambda \in \mathbb{R}, 
\eta(\theta)=
(1-\lambda) \eta (\theta_1)+\lambda \eta(\theta_2)\}$.
Due to Condition (M4), we can define the $m$-projection
$\Pro^{(m),F}_{\mathcal{M}} ( \theta_3) \in \mathcal{M}$.
We choose $\lambda$ such that 
$\Pro^{(m),F}_{\mathcal{M}} ( \theta_3)=
(1-\lambda) \eta (\theta_1)+\lambda \eta(\theta_2)$
We consider three cases;
(i) $\lambda<0$.
(ii) $0\le \lambda \le 1$.
(iii) $1<\lambda$.

Case (i);
Since the subset $\hat{\Theta}\subset \Theta$ 
is a star subset for $ \theta_1 \in \hat{\Theta}$, and 
$ \theta_2 \in \hat{\Theta}$, we have
$ \theta(s) \in \hat{\Theta}$ for $s \in [0,1]$.
Hence, we have the matrix inequality
\begin{align}
J(\theta(s))^{-1}
\le
\gamma J(\theta(1-s))^{-1}. \Label{CLP9}
\end{align}
Thus, we have 
\begin{align}
& D^F(\theta_1\|\theta_2)
\stackrel{(a)}{=}
 \int_0^1 \sum_{i=1}^d \sum_{j=1}^d 
(\eta(\theta_2)-\eta(\theta_1))_i
(\eta(\theta_2)-\eta(\theta_1))_j
(J(\theta(s))^{-1})^{i,j} s ds \nonumber \\
\stackrel{(b)}{\le}&
\gamma
 \int_0^1 \sum_{i=1}^d \sum_{j=1}^d 
(\eta(\theta_2)-\eta(\theta_1))_i
(\eta(\theta_2)-\eta(\theta_1))_j
(J(\theta(1-s))^{-1})^{i,j} s ds \nonumber \\
\stackrel{(c)}{=}&
\gamma D^F( \theta_2 \|\theta_1) \Label{BLS}
\end{align}
where $(a)$, $(b)$, and $(c)$ follow from 
\eqref{NXP}, \eqref{CLP9}, and \eqref{NXP}, respectively.

Also, we have
\begin{align}
& D^F( \theta_2 \|\theta_1)
\le 
D^F( \theta_2 \|\Pro^{(m),F}_{\mathcal{M}} (\theta_3))
\nonumber \\
\le &
D^F( \theta_2 \|\Pro^{(m),F}_{\mathcal{M}} (\theta_3))
+
D^F( \Pro^{(m),F}_{\mathcal{M}} (\theta_3)\| \theta_3 ) 
=
D^F( \theta_2 \|\theta_3)\Label{BLE}.
\end{align}
The combination of \eqref{BLS} and \eqref{BLE} yields \eqref{BLT}.

Case (iii);
We have 
\begin{align}
& D^F(\theta_1\|\theta_2)\le D^F(\theta_1 \| \Pro^{(m),F}_{\mathcal{M}} ( \theta_3)) \nonumber \\
\le &
D^F( \theta_1 \|\Pro^{(m),F}_{\mathcal{M}} (\theta_3))
+
D^F( \Pro^{(m),F}_{\mathcal{M}} (\theta_3)\| \theta_3 ) 
=
D^F( \theta_1 \|\theta_3).
\end{align}

Case (ii);
We use the quantity $M:=\Big(\max_{s \in [0,1]}\sum_{i=1}^d \sum_{j=1}^d 
(\eta(\theta_2)-\eta(\theta_1))_i
(\eta(\theta_2)-\eta(\theta_1))_j
(J(\theta(s))^{-1})^{i,j} \Big).$
Then, we have 
\begin{align}
& D^F(\theta_1\| \theta_3) \nonumber \\
= & D^F(\theta_1\|\Pro^{(m),F}_{\mathcal{M}} (\theta_3))+ D^F(\Pro^{(m),F}_{\mathcal{M}} (\theta_3)\| \theta_3) \nonumber \\
\ge & D^F(\theta_1\|\Pro^{(m),F}_{\mathcal{M}} (\theta_3))= D^F(\theta_1\|\theta(\lambda))\nonumber \\
=&
 \int_0^\lambda \sum_{i=1}^d \sum_{j=1}^d 
(\eta(\theta_2)-\eta(\theta_1))_i
(\eta(\theta_2)-\eta(\theta_1))_j
(J(\theta(s))^{-1})^{i,j} s ds \nonumber \\
\ge & \Big( \int_0^\lambda s ds\Big)
\Big(\min_{s \in [0,1]}\sum_{i=1}^d \sum_{j=1}^d 
(\eta(\theta_2)-\eta(\theta_1))_i
(\eta(\theta_2)-\eta(\theta_1))_j
(J(\theta(s))^{-1})^{i,j} \Big)\nonumber  \\
\ge & \frac{\lambda^2}{2 \gamma}M ,
\end{align}
and
\begin{align}
& D^F(\theta_2\| \theta_3) \nonumber \\
= & D^F(\theta_2\|\Pro^{(m),F}_{\mathcal{M}} (\theta_3))+ D^F(\Pro^{(m),F}_{\mathcal{M}} (\theta_3)\| \theta_3) \nonumber \\
\ge & D^F(\theta_2\|\Pro^{(m),F}_{\mathcal{M}} (\theta_3))= D^F(\theta_2\|\theta(\lambda))\nonumber \\
=&
(1-\lambda)^2 \int_0^1 \sum_{i=1}^d \sum_{j=1}^d 
(\eta(\theta_2)-\eta(\theta_1))_i
(\eta(\theta_2)-\eta(\theta_1))_j
(J(\theta( 1 -s (1-\lambda)))^{-1})^{i,j}
 s ds \nonumber \\
\ge & \frac{(1-\lambda)^2}{2}
\Big(\min_{s \in [0,1]}\sum_{i=1}^d \sum_{j=1}^d 
(\eta(\theta_2)-\eta(\theta_1))_i
(\eta(\theta_2)-\eta(\theta_1))_j
(J(\theta(s))^{-1})^{i,j}
 \Big) \nonumber \\
\ge & \frac{(1-\lambda)^2}{2 \gamma} M.
\end{align}
That is,
we obtain
\begin{align}
\lambda \le \sqrt{ \frac{2 \gamma D^F(\theta_1\| \theta_3)}{M}}, \quad
1-\lambda \le \sqrt{ \frac{2 \gamma D^F(\theta_2\| \theta_3)}{M}}. \Label{CPA}
\end{align}
Therefore, we have
\begin{align}
& D^F(\theta_1\|\theta_2)
\stackrel{(a)}{=}
 \int_0^1 \sum_{i=1}^d \sum_{j=1}^d 
(\eta(\theta_2)-\eta(\theta_1))_i
(\eta(\theta_2)-\eta(\theta_1))_j
(J(\theta(s))^{-1})^{i,j}
 s ds \nonumber \\
\stackrel{(a)}{=}&
 \int_0^\lambda \sum_{i=1}^d \sum_{j=1}^d 
(\eta(\theta_2)-\eta(\theta_1))_i
(\eta(\theta_2)-\eta(\theta_1))_j
(J(\theta(s))^{-1})^{i,j}
 s ds \nonumber \\
&+
 \int_\lambda^1 \sum_{i=1}^d \sum_{j=1}^d 
(\eta(\theta_2)-\eta(\theta_1))_i
(\eta(\theta_2)-\eta(\theta_1))_j
(J(\theta(s))^{-1})^{i,j}
 s ds \nonumber \\
\stackrel{(b)}{\le}&
D^F(\theta_1\|\theta(\lambda))
+ \Big(\int_\lambda^1 s ds \Big)
M\nonumber \\
\stackrel{(c)}{\le}&
D^F(\theta_1\|\theta_3)
+ \frac{1-\lambda^2}{2} M\nonumber \\
\stackrel{(c)}{=}&
D^F(\theta_1\|\theta_3)
+ ( (1-\lambda)^2+ \lambda (1-\lambda))
\frac{M}{2} \nonumber \\
\stackrel{(c)}{\le}&
D^F(\theta_1\|\theta_3)
+ \frac{2\gamma}{M}
\big(D^F(\theta_2\|\theta_3)
+\sqrt{D^F(\theta_1\|\theta_3) D^F(\theta_2\|\theta_3)}\big)
\frac{M}{2},
\Label{BTLS}
\end{align}
which implies \eqref{BLT}.

\section{Proof of Theorem \ref{theo:conv2}}\Label{A2}
Remember that $\theta_{(t)} $ is $\Pro^{(e),F}_{{\cal E}}(\theta^{(t)} ) $, which implies
that $\Pro^{(m),F}_{{\cal M}}(\theta_{(t)} )= \theta^{(t+1)}$.
For any $\epsilon_1>0$, we choose an element $\theta(\epsilon_1) $ of ${\cal M}$ 
such that 
$D^F(\theta(\epsilon_1)\| \Pro^{(e),F}_{{\cal E}}(\theta(\epsilon_1) ) )
\le C_{\inf}(\mathcal{M},\mathcal{E})+\epsilon_1$.
Also, let $\theta(\epsilon_1)_* $ be $\Pro^{(e),F}_{{\cal E}}(\theta(\epsilon_1) ) $.

\begin{figure}[htbp]
\begin{center}
  \includegraphics[width=0.7\linewidth]{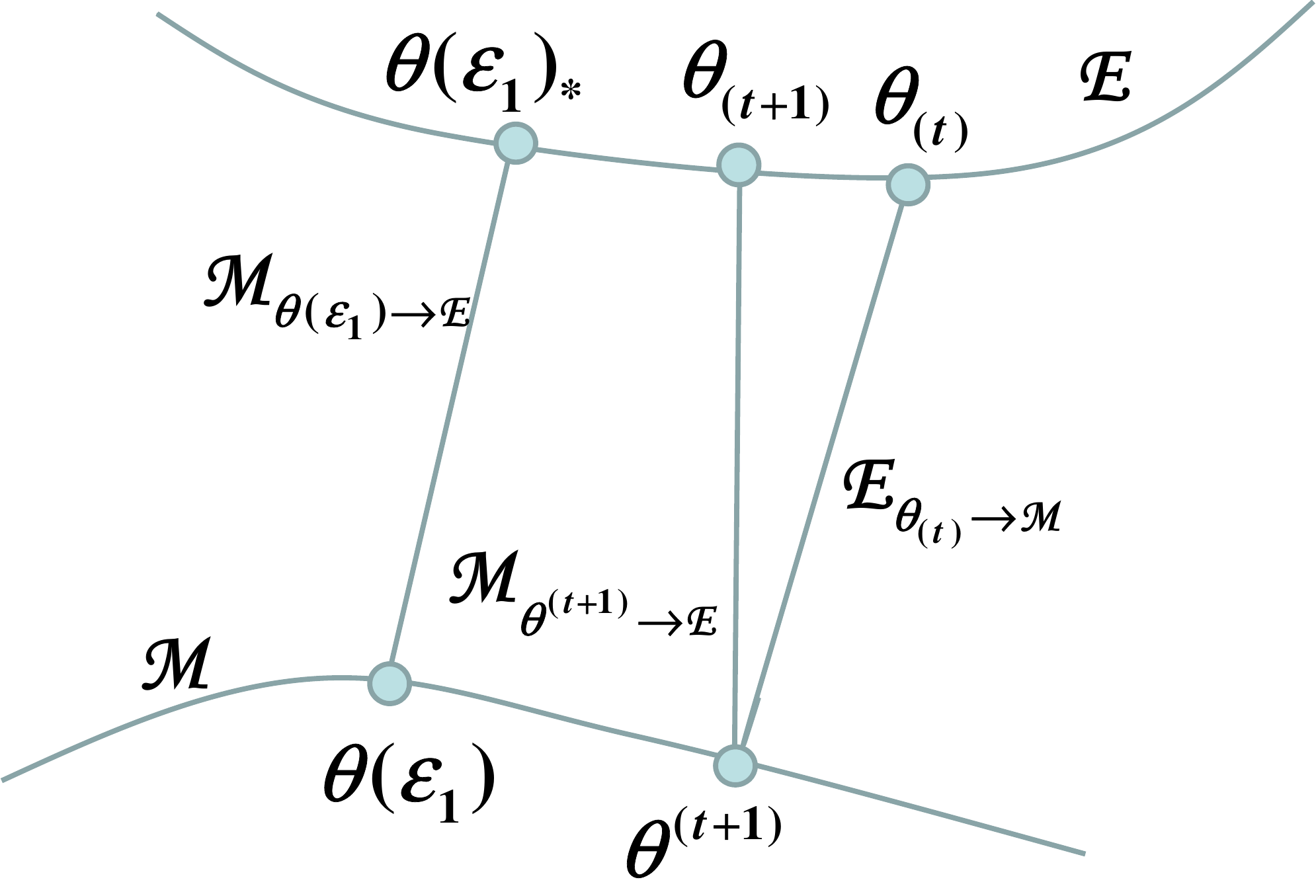}
  \end{center}
\caption{Algorithms \ref{protocol1-0} and \ref{protocol1}: 
This figure shows the topological relation among
$\theta(\epsilon_1 )_*$, $\theta(\epsilon_1 )$,
$\theta_{(t+1 )}$, $\theta^{(t+1 )}$, and $\theta_{(t)}$,
which is used in the
application of Phythagorean theorem (Proposition \ref{MNL}).
$\mathcal{M}_{\theta(\epsilon_1)\to \mathcal{E}}$
and 
$\mathcal{M}_{\theta^{(t+1)} \to \mathcal{E}}$
are the mixture subfamilies to project $\theta(\epsilon_1)$
and $\theta^{(t+1)}$
to the exponential subfamily $\mathcal{E}$, respectively.
$\mathcal{E}_{\theta_{(t)}\to \mathcal{M}}$
is the exponential subfamily to project $\theta_{(t)}$
to the mixture subfamily $\mathcal{M}$.
}
\Label{FF1}
\end{figure}   

As explained in Fig. \ref{FF1}, 
Phythagorean theorem (Proposition \ref{MNL}) guarantees that
the divergence $D^F(\theta(\epsilon_1)\| \theta_{(t)})$ can be written in the following two ways (as two equations $(a)$ and $(b)$);
\begin{align}
& D^F(\theta(\epsilon_1)\| \theta^{(t+1)})
+D^F(\theta^{(t+1)}\| \theta_{(t)})
\stackrel{(a)}{=}
D^F(\theta(\epsilon_1)\| \theta_{(t)}) \nonumber\\
\stackrel{(b)}{=}
&
D^F(\theta(\epsilon_1)\| \theta(\epsilon_1)_*)
+D^F(\theta(\epsilon_1)_*\| \theta_{(t)}).
\Label{VOMT}
\end{align}
Hence,
\begin{align}
& D^{F}(\theta^{(t+1)}\| \theta_{(t)}) 
-C_{\inf}(\mathcal{M},\mathcal{E})-\epsilon_1\nonumber \\
\le &
D^{F}(\theta^{(t+1)}\| \theta_{(t)})  
-D^F(\theta(\epsilon_1)\| \theta(\epsilon_1)_*)
\nonumber \\
\stackrel{(a)}{=}& 
D^{F}(\theta(\epsilon_1)_*\| \theta_{(t)})  
-D^{F}(\theta(\epsilon_1)\| \theta^{(t+1)})  \nonumber \\
\stackrel{(b)}{\le}&
D^{F}(\theta(\epsilon_1)_*\| \theta_{(t)})  
-D^{F}(\Pro^{(e),F}_{{\cal E}}(\theta(\epsilon_1)) \| \Pro^{(e),F}_{{\cal E}}(\theta^{(t+1)})  ) \nonumber \\
=&
D^{F}(\theta(\epsilon_1)_*\| \theta_{(t)})  
-D^{F}(\theta(\epsilon_1)_*\| \theta_{(t+1)})  \Label{VOMT3},
\end{align}
where the steps $(a)$ and $(b)$ follows from \eqref{VOMT} and Condition (B1${\cal M}$), respectively.
Thus,
\begin{align}
& \sum_{i=2}^{t}
D^{F}(\theta^{(i)}\| \theta_{(i-1)}) 
-C_{\inf}(\mathcal{M},\mathcal{E})-\epsilon_1
 \nonumber \\
\le &
\sum_{i=2}^{t}
D^{F}(\theta(\epsilon_1)_*\| \theta_{(i-1)})  
-D^{F}(\theta(\epsilon_1)_*\| \theta_{(i)})  
\nonumber \\
=&
D^{F}(\theta(\epsilon_1)_*\| \theta_{(1)})  
-D^{F}(\theta(\epsilon_1)_*\| \theta_{(t)})  
\le
 D^F(\theta(\epsilon_1)_*\| \theta_{(1)})\nonumber \\
\le &
\sup_{\theta \in \mathcal{E}} D^F(\theta \| \theta_{(1)}).\Label{suma7}
\end{align}
Taking the limit $\epsilon_1\to 0$ in \eqref{suma7}, 
we have
\begin{align}
 \sum_{i=2}^{t}
D^{F}(\theta^{(i)}\| \theta_{(i-1)}) -C_{\inf}(\mathcal{M},\mathcal{E})
\le 
\sup_{\theta \in \mathcal{E}} D^F(\theta \| \theta_{(1)}).\Label{suma8Y}
\end{align}
Since the relations
\begin{align}
D^{F}(\theta^{(i+1)}\| \Pro^{(e),F}_{{\cal E}}  (\theta^{(i+1)})) 
\le& D^{F}(\theta^{(i+1)}\| \theta_{(i)}) 
\le D^{F}(\theta^{(i)}\| \theta_{(i)}) \nonumber \\
=&
D^{F}(\theta^{(i)}\| \Pro^{(e),F}_{{\cal E}}  (\theta^{(i)}) ) 
\Label{AOP9}
\end{align}
for $i=2, \ldots, t$, 
\eqref{suma8Y} implies 
\begin{align}
(t-1) (D^{F}(\theta^{(t)}\| \theta_{(t-1)}) -C_{\inf}(\mathcal{M},\mathcal{E}))
\le 
\sup_{\theta \in \mathcal{E}} D^F(\theta \| \theta_{(1)})\Label{suma9}.
\end{align}
Thus, we have
\begin{align}
D^{F}(\theta^{(t)}\| \theta_{(t)}) -C_{\inf}(\mathcal{M},\mathcal{E})
\le &
D^{F}(\theta^{(t)}\| \theta_{(t-1)}) -C_{\inf}(\mathcal{M},\mathcal{E})
\nonumber \\
\le &
\frac{1}{t-1}\sup_{\theta \in \mathcal{E}} D^F(\theta \| \theta_{(1)})\Label{suma10},
\end{align}
which implies \eqref{mma2BYY} and \eqref{NHG2}.

When the inequality
\begin{align}
D^{F}(\theta^{(t)}\| \theta_{(t-1)}) 
-C_{\inf}(\mathcal{M},\mathcal{E})
\ge D^{F}(\theta^{(t)}\| \theta_{(t)}) 
-C_{\inf}(\mathcal{M},\mathcal{E})
\ge c(\frac{1}{t}) \Label{MU}
\end{align}
holds with a constant $c>0$, 
the relation \eqref{suma8Y} implies 
\begin{align}
\infty = \sum_{t=1}^{\infty} c(\frac{1}{t}) 
\le \sup_{\theta \in \mathcal{E}} D^F(\theta \| \theta_{(1)}),
\end{align}
which yields the contradiction.
Hence, we have 
\begin{align}
D^{F}(\theta^{(t)}\| \theta_{(t)}) 
-C_{\inf}(\mathcal{M},\mathcal{E})
= o(\frac{1}{t}) \Label{MUL}.
\end{align}
Combining \eqref{AOP9}, we obtain \eqref{mma2}.

Indeed, when the minimum in \eqref{min1} exists, i.e., 
$\theta_*(\mathcal{M},\mathcal{E})$ exists,
the supremum $\sup_{\theta \in \mathcal{E}} D^F(\theta \| \theta_{(1)})$ in the above evaluation is replaced by
$D^F(\theta_*(\mathcal{M},\mathcal{E}) \| \theta_{(1)})$
because $\theta(\epsilon_1)$ is replaced by $\theta_*(\mathcal{M},\mathcal{E})$.

\section{Proof of Theorem \ref{theo:conv2+}}\Label{A3}
We use the same notation as the proof of Theorem \ref{theo:conv2}.
We set $\beta:= \beta(\theta^{(1)})$.
The relations \eqref{VOMT3} is rewritten as the folloiwing way for the case with $\epsilon_1=0$;
\begin{align}
0\le & D^{F}(\theta^{(t+1)}\| \theta_{(t)}) 
-C_{\inf}(\mathcal{M},\mathcal{E})\nonumber \\
\le &
D^{F}(\theta_{*}\| \theta_{(t)})  
-D^{F}(\theta^* \| \theta^{(t+1)}  ) \Label{VK2}\\
\le &
D^{F}(\theta_{*}\| \theta_{(t)})  
-D^{F}(\Pro^{(e),F}_{{\cal E}}(\theta^*) \| \Pro^{(e),F}_{{\cal E}}(\theta^{(t+1)})  ) \nonumber \\
=&
D^{F}(\theta_{*}\| \theta_{(t)})  
-D^{F}(\theta_{*}\| \theta_{(t+1)})  \Label{VK1}.
\end{align}
Thus, we have 
$ D^F(\theta^*\| \theta^{(t+1)}) \stackrel{(a)}{\le}
 D^F(\theta_{*}\| \theta_{(t)})\stackrel{(b)}{\le}
 D^F(\theta_{*}\| \theta_{(t-1)}) \stackrel{(c)}{\le}
 D^F(\theta_{*}\| \theta_{(1)})$, where
 $(a)$ and $(b)$ follow from \eqref{VK2} and \eqref{VK1}, respectively, and
 $(c)$ follows from multiple use of \eqref{VK2}.
 Thus, Condition (B1+) implies $\beta D^F(\theta^*\| \theta^{(t+1)}) \ge
D^F( \theta_{*} \| \Pro^{(e),F}_{{\cal E}} (\theta^{(t+1)}))
=D^F( \theta_{*} \| \theta_{(t+1)})$.
Combining \eqref{VK2}, we have
$\beta D^F( \theta_{*} \| \theta_{(t)}) \ge D^F( \theta_{*} \| \theta_{(t+1)})$.
Thus, we have
\begin{align}
D^F(\theta_{*}\| \theta_{(t+1)})
 \le
\beta^{t} D^F(\theta_{*}\| \theta_{(1)}).
 \end{align}
Using \eqref{VK2}, we have 
\begin{align}
& 
D^{F}(\theta^{(t+1)}\| \theta_{(t+1)})  -C_{\inf}(\mathcal{M},\mathcal{E})\nonumber \\
\le &
D^{F}(\theta^{(t+1)}\| \theta_{(t)})  -C_{\inf}(\mathcal{M},\mathcal{E})\nonumber \\
\le &
 D^F(\theta_{*}\| \theta_{(t)})
 \le
\beta^{t-1} D^F(\theta_{*}\| \theta_{(1)}).
 \end{align}
Hence, we obtain \eqref{mma2+}.

\section{Proof of Theorem \ref{theo:conv2BB}}\Label{A4}
\noindent{\bf Step 1:}\quad
In this proof, 
we use the notations ${\theta}^{(t+1),*}:=\Pro^{(m),F}_{{\cal M}} ({\theta}_{(t)})$
and ${\theta}_{(t+1),*}:=\Pro^{(e),F}_{{\cal E}} (\theta^{(t+1),*})$.
From the construction, 
$D^F(\theta^{(t)}\| \theta_{(t)}  )$ is monotonically decreasing for $t$ as
\begin{align}
D^F(\theta^{(t+1)}\| \theta_{(t+1)}  )
\le D^F(\theta^{(t+1)}\| \theta_{(t)}  )
\stackrel{(a)}{\le} D^F(\theta^{(t)}\| \theta_{(t)}  ),\Label{BLY}
\end{align}
where $(a)$ follows from \eqref{NLT}.

\begin{figure}[htbp]
\begin{center}
  \includegraphics[width=0.7\linewidth]{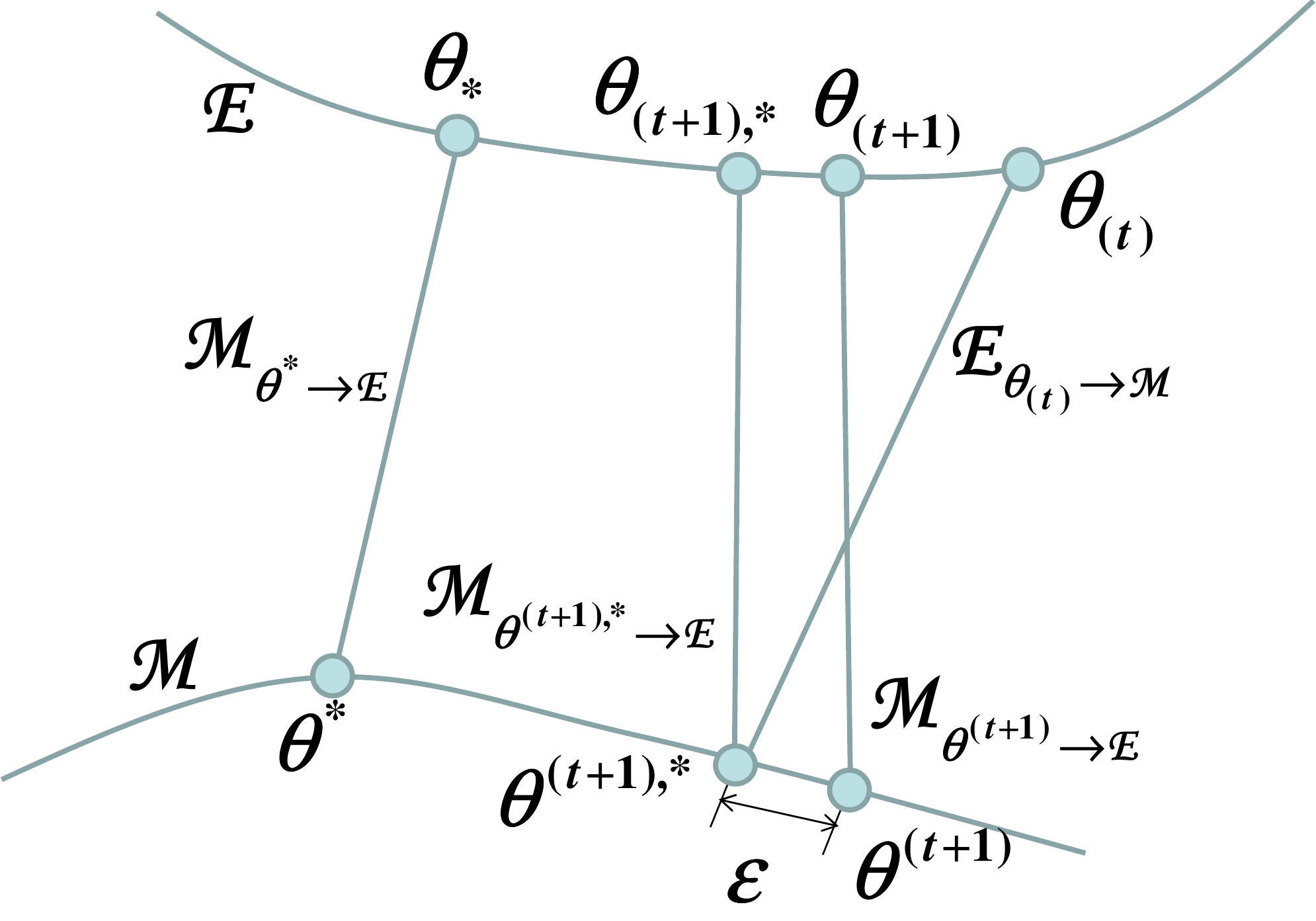}
  \end{center}
\caption{Algorithm \ref{protocol1-error}: 
This figure shows the topological relation among
$\theta_{*}$, $\theta^*$,
$\theta_{(t+1 )}$, $\theta^{(t+1 )}$, $\theta_{(t+1 ),*}$, 
$\theta^{(t+1 ),*}$, and $\theta_{(t)}$,
which is used in the
application of Phythagorean theorem (Proposition \ref{MNL}).
$\mathcal{M}_{\theta^*\to \mathcal{E}}$,
$\mathcal{M}_{\theta^{(t+1),*} \to \mathcal{E}}$,
and 
$\mathcal{M}_{\theta^{(t+1)} \to \mathcal{E}}$
are the mixture subfamilies to project $\theta^* $, $\theta^{(t+1),*}$,
and $\theta^{(t+1)}$
to the exponential subfamily $\mathcal{E}$, respectively.
$\mathcal{E}_{\theta_{(t)}\to \mathcal{M}}$
is the exponential subfamily to project $\theta_{(t)}$
to the mixture subfamily $\mathcal{M}$.
}
\Label{FF1B}
\end{figure}   

\noindent{\bf Step 2:}\quad
The aim of this step is the derivation of the relation;
\begin{align}
& D^F(\theta_{*}\| \theta_{(t)}  )-  D^F(\theta_{*}\|\theta_{(t+1)})
\nonumber \\
\ge &
D^F(\theta^{(t+1)}\| \theta_{(t+1)}  )- D^F(\theta^*\|\theta_{*})
- 2\gamma \sqrt{D^F(\theta_{*}\|\theta_{(t)})
\epsilon}-
(\gamma+1) \epsilon . \Label{CAM}
\end{align}

We notice that
\begin{align}
D^F(\theta^{(t+1)} \|\theta^{(t+1),*})
\stackrel{(a)}{=} D^F(\theta^{(t+1)} \|\theta_{(t)})
-D^F(\theta^{(t+1),*}\|\theta_{(t)})
\stackrel{(b)}{\le} \epsilon,\Label{NBTA}
\end{align}
where $(a)$ and $(b)$ follow from 
Phythagorean theorem (Proposition \ref{MNL}) and \eqref{NLT},
respectively.
Since $\theta^{(t+1),*}= \Pro^{(m),F}_{{\cal M}}  (\theta_{(t)})$, we have
\begin{align}
D^F(\theta^{(t+1),*}\| \theta_{(t)}  )
\le 
D^F(\theta^{(t)}\| \theta_{(t)}  ).\Label{BPA9}
\end{align}

Since the set $\mathcal{E}_0$ is a star subset of $\mathcal{E}$ for $\theta_{*}$,
we can apply Theorem \ref{XAM} to 
the set $\mathcal{E}_0$ as a star subset for $\theta_{*}$,
and obtain
\begin{align}
&D^F(\theta_{*}\|\theta_{(t+1)})
\nonumber \\
\stackrel{(a)}{\le} &
D^F(\theta_{*}\|\theta_{(t+1),*})+ 2\gamma \sqrt{D^F(\theta^*\|\theta_{(t+1),*})
D^F(\theta_{(t+1)} \| \theta_{(t+1),*})}\nonumber \\
& +
\gamma D^F(\theta_{(t+1)} \| \theta_{(t+1),*}) \nonumber \\
\stackrel{(b)}{\le} &
D^F(\theta_{*}\|\theta_{(t+1),*})+ 2\gamma \sqrt{D^F(\theta^*\|\theta_{(t+1),*})
D^F(\theta^{(t+1)} \| \theta^{(t+1),*})}\nonumber \\
&+
\gamma D^F(\theta^{(t+1)} \| \theta^{(t+1),*}) \nonumber \\
\stackrel{(c)}{\le} &
D^F(\theta_{*}\|\theta^{(t+1),*})+ 
2\gamma \sqrt{D^F(\theta_{*}\|\theta_{(t+1),*})
\epsilon}+
\gamma \epsilon ,
 \Label{NBT7}
\end{align}
where $(a)$, $(b)$, and $(c)$ follow from 
Theorem \ref{XAM},
Condition (B1), and \eqref{NBTA}, respectively.

The definition \eqref{min1} implies
\begin{align}
D^F(\theta^{(t+1),*}\| \theta_{(t)}  )
\ge
C_{\inf}(\mathcal{M},\mathcal{E}  )= D^F(\theta^*\|\theta_{*}).
\Label{NBT6}
\end{align}
Also, 
applying Phythagorean theorem (Proposition \ref{MNL}) 
to $ D^F(\theta^*\| \theta_{(t)}  )$, we have
\begin{align}
 D^F(\theta^*\|\theta^{(t+1),*})+
D^F(\theta^{(t+1),*}\| \theta_{(t)}  )
\stackrel{(a)}{=}
 D^F(\theta^*\| \theta_{(t)}  )
\stackrel{(b)}{=}
D^F(\theta_{*}\| \theta_{(t)}  )+ D^F(\theta^*\|\theta_{*}) .
 \Label{NBT5}
\end{align}
That is, steps $(a)$ and $(b)$ in \eqref{NBT5}
follow from Phythagorean theorem.
Using \eqref{NBT5}, we have
\begin{align}
0 \stackrel{(a)}{\le} 
& 
D^F(\theta^{(t+1),*}\| \theta_{(t)}  )- D^F(\theta^*\|\theta_{*}) \nonumber \\
\stackrel{(b)}{=}&
 D^F(\theta_{*}\| \theta_{(t)}  )-  D^F(\theta^*\|\theta^{(t+1),*})\nonumber \\
\stackrel{(c)}{\le}&
 D^F(\theta_{*}\| \theta_{(t)}  )-  D^F(\theta_{*}\|\theta_{(t+1),*})\nonumber \\
\stackrel{(d)}{\le}&
 D^F(\theta_{*}\| \theta_{(t)}  )-  D^F(\theta_{*}\|\theta_{(t+1)})
 + 2\gamma \sqrt{D^F(\theta_{*}\|\theta_{(t+1),*})
\epsilon}+
\gamma \epsilon 
\nonumber  \\
\stackrel{(e)}{\le}&
 D^F(\theta_{*}\| \theta_{(t)}  )-  D^F(\theta_{*}\|\theta_{(t+1)})
 + 2\gamma \sqrt{D^F(\theta_{*}\|\theta_{(t)})
\epsilon}+
\gamma \epsilon  ,
\end{align}
where
$(a)$, $(b)$, $(c)$, and $(d)$ follow from 
\eqref{NBT6}, 
\eqref{NBT5}, Condition (B1), and 
\eqref{NBT7}, respectively.
The final step $(e)$ is derived by the inequality
$ D^F(\theta_{*}\| \theta_{(t)}  )-  D^F(\theta_{*}\|\theta_{(t+1),*})\ge 0$, which can be shown from 
$(a)$ and $(b)$.
Comparing the RHS of $(a)$ and the final term, we have
\begin{align}
& 
D^F(\theta^{(t+1),*}\| \theta_{(t)}  )- D^F(\theta^*\|\theta_{*}) \nonumber \\
{\le}&
 D^F(\theta_{*}\| \theta_{(t)}  )-  D^F(\theta_{*}\|\theta_{(t+1)})
 + 2\gamma \sqrt{D^F(\theta_{*}\|\theta_{(t)})
\epsilon}+
\gamma \epsilon  .\Label{NBT8}
\end{align}
In addition, $D^F(\theta^{(t+1),*}\| \theta_{(t)}  )$ can be evaluated as 
\begin{align}
D^F(\theta^{(t+1)}\| \theta_{(t+1)}  )
\stackrel{(a)}{\le}&
D^F(\theta^{(t+1)}\| \theta_{(t)}  )
\stackrel{(b)}{=}
D^F(\theta^{(t+1)}\|\theta^{(t+1),*})
+D^F(\theta^{(t+1),*}\| \theta_{(t)}  )
\nonumber \\
\stackrel{(c)}{\le}&
\epsilon
+D^F(\theta^{(t+1),*}\| \theta_{(t)}  )\Label{NBT9},
\end{align}
where
$(a)$, $(b)$, and $(c)$ follow from 
the relation $\theta_{(t+1)} = \Pro^{(e),F}_{{\cal E}} (\theta^{(t+1)})$,
the relation $\theta^{(t+1),*} = 
\Pro^{(m),F}_{{\cal M}} (\theta_{(t)})$,
and \eqref{NBTA}, respectively.

Combining the above relations, we have
\begin{align}
& D^F(\theta_{*}\| \theta_{(t)}  )-  D^F(\theta_{*}\|\theta_{(t+1)})\nonumber \\
\stackrel{(a)}{\ge}&
D^F(\theta^{(t+1),*}\| \theta_{(t)}  )- D^F(\theta^*\|\theta_{*})
- 2\gamma \sqrt{D^F(\theta_{*}\|\theta_{(t)})
\epsilon}-
\gamma \epsilon \nonumber \\
\stackrel{(b)}{\ge}&
D^F(\theta^{(t+1)}\| \theta_{(t+1)}  )- D^F(\theta^*\|\theta_{*})
- 2\gamma \sqrt{D^F(\theta_{*}\|\theta_{(t)})
\epsilon}-
(\gamma+1) \epsilon . \Label{CAM2}
\end{align}
where
$(a)$ and $(b)$ follow from 
\eqref{NBT8} and \eqref{NBT9}, respectively.
Hence, we obtain \eqref{CAM}.

\noindent{\bf Step 3:}\quad
The aim of this step is showing 
\begin{align}
 D^F(\theta_{*}\| \theta_{(t)}  )-  D^F(\theta_{*}\|\theta_{(t+1)})
\ge 0 \Label{BAP}
\end{align}
for $t \le t_0$ by induction
when we assume that $t_0$ satisfies the following condition with $t\le t_0$;
\begin{align}
D^F(\theta^{(t)}\| \theta_{(t)}  )-D^F(\theta^*\|\theta_{*}) 
\ge 
2\gamma \sqrt{D^F(\theta_{*}\|\theta_{(1)})
\epsilon}+
(\gamma+1) \epsilon.\Label{BLA}
\end{align}
Due to the assumption of induction,
we have 
\begin{align}
D^F(\theta_{*}\|\theta_{(t)}) \le D^F(\theta_{*}\|\theta_{(1)})\Label{XAO5}.
\end{align}
The combination of \eqref{CAM}, \eqref{BLA}, and \eqref{XAO5} implies the relation \eqref{BAP}.

\noindent{\bf Step 4:}\quad
The aim of this step is showing 
\begin{align}
&
D^F(\theta^{(t_0+1)}\| \theta_{(t_0+1)}  ) - D^F(\theta^*\|\theta_{*}) \nonumber \\
\le & 
 \frac{D^F(\theta_{*}\| \theta_{(1)}  )}{t_0}
 + 2\gamma \sqrt{D^F(\theta_{*}\|\theta_{(1)})
\epsilon}+
(\gamma+1) \epsilon  \Label{XOA}.
\end{align}
If there exists a number $t \le t_0$ that does not satisfy 
the condition \eqref{BLA},
we have \eqref{XOA} as
\begin{align}
&
D^F(\theta^{(t_0+1)}\| \theta_{(t_0+1)}  ) - D^F(\theta^*\|\theta_{*}) \nonumber \\
\le & 
D^F(\theta^{(t)}\| \theta_{(t)}  ) - D^F(\theta^*\|\theta_{*}) \nonumber \\
<&
2\gamma \sqrt{D^F(\theta_{*}\|\theta_{(1)})
\epsilon}+
(\gamma+1) \epsilon \nonumber \\
\le &
 \frac{D^F(\theta_{*}\| \theta_{(1)}  )}{t_0}
 + 2\gamma \sqrt{D^F(\theta_{*}\|\theta_{(1)})
\epsilon}+
(\gamma+1) \epsilon  .
\end{align}
Hence, it is sufficient to show \eqref{XOA}
under the assumption  \eqref{BLA} with $t \le t_0$.

Using the facts shown above, under this assumption, we have
\begin{align}
&D^F(\theta^{(t+1)}\| \theta_{(t+1)}  ) - D^F(\theta^*\|\theta_{*})\nonumber  \\
\stackrel{(a)}{\le}&
 D^F(\theta_{*}\| \theta_{(t)}  )-  D^F(\theta_{*}\|\theta_{(t+1)})
 + 2\gamma \sqrt{D^F(\theta_{*}\|\theta_{(t)})
\epsilon}+
(\gamma+1) \epsilon \nonumber \\
\stackrel{(b)}{\le}&
 D^F(\theta_{*}\| \theta_{(t)}  )-  D^F(\theta_{*}\|\theta_{(t+1)})
 + 2\gamma \sqrt{D^F(\theta_{*}\|\theta_{(1)})
\epsilon}+
(\gamma+1) \epsilon ,\Label{BMO}
\end{align}
where $(a)$ and $(b)$
follow from \eqref{CAM} and \eqref{BAP}, respectively.

Taking the sum for \eqref{BMO}, we have
\begin{align}
&
t_0 \Big(D^F(\theta^{(t_0+1)}\| \theta_{(t_0+1)}  ) - D^F(\theta^*\|\theta_{*}) \Big)\nonumber \\
\stackrel{(a)}{\le}&
\sum_{t=1}^{t_0}
\Big(D^F(\theta^{(t+1)}\| \theta_{(t+1)}  ) - D^F(\theta^*\|\theta_{*})\Big) \nonumber \\
\stackrel{(b)}{\le}&
\sum_{t=1}^{t_0}
\Big( D^F(\theta_{*}\| \theta_{(t)}  )-  D^F(\theta_{*}\|\theta_{(t+1)})
 + 2\gamma \sqrt{D^F(\theta_{*}\|\theta_{(1)})
\epsilon}+
(\gamma+1) \epsilon \Big)\nonumber  \\
= & 
 D^F(\theta_{*}\| \theta_{(1)}  )-  D^F(\theta_{*}\|\theta_{(t_0+1)})
 + 2t_0\gamma \sqrt{D^F(\theta_{*}\|\theta_{(1)})
\epsilon}+
t_0(\gamma+1) \epsilon \nonumber  \\
\le & 
 D^F(\theta_{*}\| \theta_{(1)}  )
 + 2t_0\gamma \sqrt{D^F(\theta_{*}\|\theta_{(1)})
\epsilon}+
t_0(\gamma+1) \epsilon  ,
\end{align}
where $(a)$ and $(b)$
follow from \eqref{BLY} and \eqref{BMO}, respectively.
Hence, we have \eqref{XOA}.

\noindent{\bf Step 5:}\quad
Finally, we derive \eqref{NHG2T} from \eqref{mma2BA}.
The condition $t \ge \frac{ 2 D^F(\theta_{*,1} \| \theta_{(1)})}{\epsilon'}+1$
implies 
$\frac{D^F(\theta_{*}\| \theta_{(1)}  )}{t}
\le \epsilon'$.
The condition 
$\epsilon \le \frac{{\epsilon'}^2}{4 (3 \gamma+1)^2 D^F(\theta_{*}\|\theta_{(1)})}$
implies 
$(3\gamma+1) \sqrt{D^F(\theta_{*}\|\theta_{(1)}) \epsilon}
\le \frac{\epsilon'}{2}
$.
Since $ D^F(\theta_{*}\|\theta_{(1)}) \ge \epsilon$
and $\gamma>1$, 
we have 
$ + 2\gamma \sqrt{D^F(\theta_{*}\|\theta_{(1)})
\epsilon}+
(\gamma+1) \epsilon  \le \frac{\epsilon'}{2}$.
Thus, we obtain \eqref{NHG2T}.

\section{Proof of Theorem \ref{theo:conv2BC}}\Label{A5}
\noindent{\bf Step 1:}\quad
We define 
$\theta_{*}:= \Pro^{(e),F}_{{\cal E}} (\theta^*)$.
The aim of this step is showing the inequality \eqref{XP8}.
The condition \eqref{NKD} implies that 
\begin{align}
 F ( \bar\theta^{(t+1)} ) 
- \sum_{{j}=k+1}^d (\bar\theta^{(t+1)})^j a_{j} 
\le 
 F (\theta^{(t+1,*)}) 
- \sum_{{j}=k+1}^d (\theta^{(t+1,*)})^j a_{j} +\epsilon_1.
\Label{NMA}
\end{align}
Hence, 
\begin{align}
& D^F(\theta^{(t+1),*}\|\bar\theta^{(t+1)} ) 
=\sum_{i=1}^d \eta_i (\theta^{(t+1),*})(\theta^{(t+1),*} - \bar\theta^{(t+1)})^i -
F(\theta^{(t+1),*})+F(\bar\theta^{(t+1)} )
\le \epsilon_1\Label{LFS4}.
\end{align}

\begin{figure}[htbp]
\begin{center}
  \includegraphics[width=0.7\linewidth]{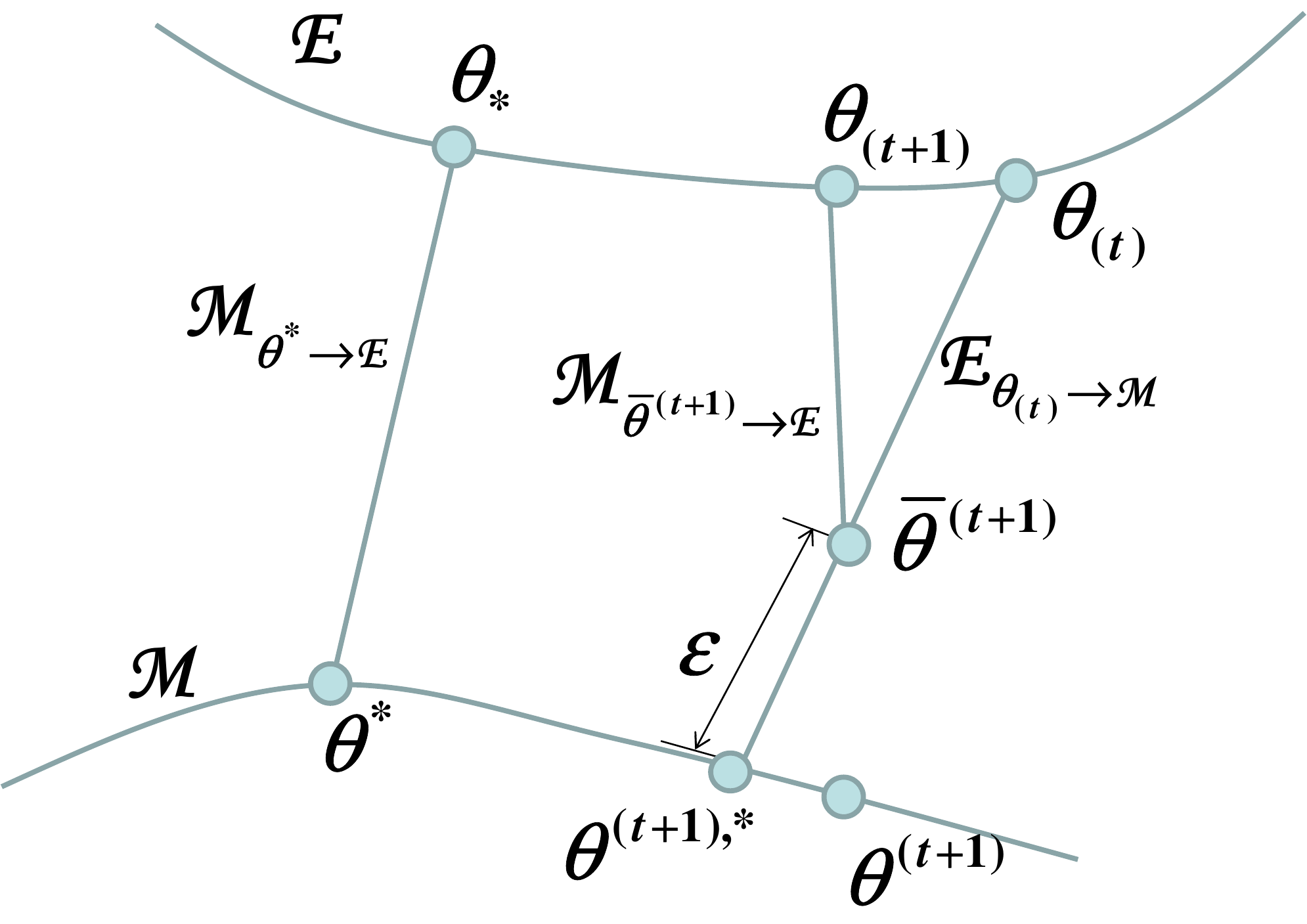}
  \end{center}
\caption{Algorithm \ref{protocol1Berror}: 
This figure shows the topological relation among
$\theta_{*}$, $\theta^*$,
$\theta_{(t+1 )}$, $\theta^{(t+1 )}$, $\bar\theta^{(t+1 )}$, 
$\theta^{(t+1 ),*}$, and $\theta_{(t)}$,
which is used in the
application of Phythagorean theorem (Proposition \ref{MNL}).
$\mathcal{M}_{\theta^*\to \mathcal{E}}$,
$\mathcal{M}_{\theta^{(t+1),*} \to \mathcal{E}}$,
and 
$\mathcal{M}_{\theta^{(t+1)} \to \mathcal{E}}$
are the mixture subfamilies to project $\theta^* $, $\theta^{(t+1),*}$,
and $\theta^{(t+1)}$
to the exponential subfamily $\mathcal{E}$, respectively.
$\mathcal{E}_{\theta_{(t)}\to \mathcal{M}}$
is the exponential subfamily to project $\theta_{(t)}$
to the mixture subfamily $\mathcal{M}$.
}
\Label{FF1C}
\end{figure}   

\noindent{\bf Step 2:}\quad
The aim of this step is showing 
\begin{align}
 D^F( \theta^{(t_3),*} \| \theta_{(t_3-1)})
-D^F(\theta^*\| \theta_{*})
\le \frac{1}{t_1-1} D^F( \theta_{*} \| \theta_{(1)})+  \epsilon_1 \Label{CO4}
\end{align}
under the choice of $t_3:= \argmin_{2\le t \le t_1}D^F( \theta^{(t),*} \| \theta_{(t-1)})$.

Pythagorean theorem (Proposition \ref{MNL}) implies that
\begin{align}
D^F(\theta^*\| \theta^{(t+1),*})
+
D^F( \theta^{(t+1),*} \| \theta_{(t)})
=
D^F(\theta^*\|  \theta_{(t)})
=
D^F(\theta^*\| \theta_{*})
+
D^F( \theta_{*} \| \theta_{(t)}).\Label{CO3}
\end{align}
Using the result of Step 1 and various formulas,  we have
\begin{align}
& D^F( \theta_{*} \| \theta_{(t)})-D^F( \theta_{*} \| \theta_{(t+1)}) \nonumber \\
\stackrel{(a)}{\ge} & D^F( \theta_{*} \| \theta_{(t)})-D^F(\theta^*\| \bar\theta^{(t+1)})
\stackrel{(b)}{=}D^F( \theta_{*} \| \theta_{(t)})-D^F(\theta^*\| \theta^{(t+1),*})-D^F(\theta^{(t+1),*} \| \bar\theta^{(t+1)}) \nonumber \\
\stackrel{(c)}{=} &
D^F( \theta^{(t+1),*} \| \theta_{(t)})
-D^F(\theta^*\| \theta_{*})
-D^F(\theta^{(t+1),*} \| \bar\theta^{(t+1)})
\nonumber \\
\stackrel{(d)}{\ge} &
D^F( \theta^{(t+1),*} \| \theta_{(t)})
-D^F(\theta^*\| \theta_{*})
-\epsilon_1 ,\Label{CO2}
\end{align}
where each step is derived as follows.
Step $(a)$ follows from Condition (B1). 
Step $(b)$ follows from Pythagorean theorem (Proposition \ref{MNL}).
Step $(c)$ follows from \eqref{CO3}.
Step $(d)$ follows from \eqref{LFS4}.

We choose $t_3:= \argmin_{2\le t \le t_1}D^F( \theta^{(t),*} \| \theta_{(t-1)})$.
Hence, for $t \le t_1-1$, we have
\begin{align}
D^F( \theta^{(t_3),*} \| \theta_{(t_3-1)})
-D^F(\theta^*\| \theta_{*}) -\epsilon_1 
\le & D^F( \theta_{*} \| \theta_{(t)})-D^F( \theta_{*} \| \theta_{(t+1)}).\Label{CO1T}
\end{align}
Taking the sum for \eqref{CO1T}, we have
\begin{align}
& D^F( \theta^{(t_3),*} \| \theta_{(t_3-1)})
-D^F(\theta^*\| \theta_{*}) -\epsilon_1 \nonumber \\
\le & \frac{1}{t_1-1}\sum_{t=1}^{t=t_1-1}D^F( \theta_{*} \| \theta_{(t)})-D^F( \theta_{*} \| \theta_{(t+1)}) \nonumber \\
=& \frac{1}{t_1-1} (D^F( \theta_{*} \| \theta_{(1)})-D^F( \theta_{*} \| \theta_{(t_1)}) )
\le \frac{1}{t_1-1} D^F( \theta_{*} \| \theta_{(1)}).
\end{align}
Therefore, we obtain \eqref{CO4}.

\noindent{\bf Step 3:}\quad
The aim of this step is showing the following inequality;
\begin{align}
&
D^F(\theta^{(t_2)}\| \theta_{(t_2-1)}  )
-D^F(\theta^*\| \theta_{*}) \nonumber \\
{\le} & \frac{1}{t_1-1} D^F( \theta_{*} \| \theta_{(1)})+  \epsilon_1 +D^F(\theta^{(t_2)} \| \bar\theta^{(t_2)}).\Label{CO8T}
\end{align}
Remember that the final estimate $\theta_f^{(t_1)}$
is defined as $\theta^{(t_2)} \in \mathcal{M}$ 
by using  $t_2= \argmin_{t=2, \ldots, t_1} 
D^F(\theta^{(t)} \| \theta_{(t-1)})-D^F(\theta^{(t)} \| \bar\theta^{(t)})$.
Then, Eq. \eqref{CO8T} is shown as follows.
\begin{align}
&
D^F(\theta^{(t_2)}\| \theta_{(t_2-1)}  )-D^F(\theta^{(t_2)} \| \bar\theta^{(t_2)})
\nonumber \\
\stackrel{(a)}{\le} &
D^F(\theta^{(t_3)}\| \theta_{(t_3-1)}  )-D^F(\theta^{(t_3)} \| \bar\theta^{(t_3)}) \nonumber \\
\stackrel{(b)}{=} & 
D^F(\theta^{(t_3)}\| \theta^{(t_3),*}  )
+D^F( \theta^{(t_3),*} \| \theta_{(t_3-1)})-D^F(\theta^{(t_3)} \| \bar\theta^{(t_3)})\nonumber \\
{\le} &
D^F(\theta^{(t_3)}\| \theta^{(t_3),*}  )
+D^F( \theta^{(t_3),*} \| \bar\theta^{(t_3)})
+D^F( \theta^{(t_3),*} \| \theta_{(t_3-1)})-D^F(\theta^{(t_3)} \| \bar\theta^{(t_3)})
\nonumber \\
\stackrel{(c)}{=} &
D^F( \theta^{(t_3),*} \| \theta_{(t_3-1)})
\nonumber \\
\stackrel{(d)}{\le} & \frac{1}{t_1-1} D^F( \theta_{*} \| \theta_{(1)})+  \epsilon_1 +D^F(\theta^*\| \theta_{*}),\Label{CO8}
\end{align}
where each step is derived as follows.
Step $(a)$ follows from the definition of $t_2$.
Steps $(b)$ and $(c)$ follow from Pythagorean theorem (Proposition \ref{MNL}) for $D^F(\theta^{(t_3)}\| \theta_{(t_3-1)}  )$ and 
$D^F(\theta^{(t_3)} \| \bar\theta^{(t_3)})$, respectively.
Step $(d)$ follows from \eqref{CO4}.

\noindent{\bf Step 4:}\quad
The aim of this step is showing Eq. \eqref{Qma2BA}.
Eq. \eqref{Qma2BA} is shown as follows;
\begin{align}
&
D^F(\theta_f^{(t_1)}\| \Pro^{(e),F}_{{\cal E}} (\theta_f^{(t_1)})  )
-D^F(\theta^*\| \theta_{*})  \nonumber \\
=&
D^F(\theta^{(t_2)}\| \Pro^{(e),F}_{{\cal E}} (\theta^{(t_2)})  )
-D^F(\theta^*\| \theta_{*}) \nonumber \\
\stackrel{(a)}{\le} &
D^F(\theta^{(t_2)}\| \theta_{(t_2-1)}  )
-D^F(\theta^*\| \theta_{*}) \nonumber \\
\stackrel{(b)}{\le} & \frac{1}{t_1-1} D^F( \theta_{*} \| \theta_{(1)})+  \epsilon_1 +D^F(\theta^{(t_2)} \| \bar\theta^{(t_2)}),
\Label{CO8A}
\end{align}
where
Step $(a)$ follows from the definition of $\Pro^{(e),F}_{{\cal E}} (\theta^{(t_2)})$
and 
Step $(b)$ follows from \eqref{CO8T}.

\section{Proofs of Theorems \ref{theo:conv2T}, \ref{theo:conv2+T}, and \ref{theo:conv2BB2}}\Label{A5+1}

\begin{proofof}{Theorem \ref{theo:conv2T}}
Theorem \ref{theo:conv2}
is shown by application of Phythagorean theorem (Proposition \ref{MNL}) to $m$-projection to $\mathcal{M}$.
We can show Theorem \ref{theo:conv2T} in the same way as
the proof of Theorem \ref{theo:conv2}
by replacing the role of Proposition \ref{MNL} by Lemma \ref{LMGO}
In this case, the proof of Theorem \ref{theo:conv2T} is completed
by replacing the equations 
at $(a)$ of \eqref{VOMT} and $(a)$ of \eqref{VOMT3}
by the inequality $\le$.
\end{proofof}

\begin{proofof}{Theorem \ref{theo:conv2+T}}
In the proof of Theorem \ref{theo:conv2+},
Phythagorean theorem is applied to $m$-projection to $\mathcal{M}$. However, this theorem is used only in the derivation for \eqref{VK2},
which is essentially given in \eqref{VOMT}.
In the current setting, 
the step $(a)$ of \eqref{VOMT} is derived by 
Lemma \ref{LMGO} instead of Proposition \ref{MNL}.
Hence, the proof of Theorem \ref{theo:conv2+T} is completed.
\end{proofof}

\begin{proofof}{Theorem \ref{theo:conv2BB2}}
We can show Theorem \ref{theo:conv2BB2} in the same way as
the proof of Theorem \ref{theo:conv2BB}
by replacing the role of Proposition \ref{MNL} by Lemma \ref{LMGO}
for Phythagorean theorem to the projeciton to $m$-projection to $\mathcal{M}$.
In this case, the proof of Theorem \ref{theo:conv2BB2} is completed
by replacing the equations 
at $(a)$ of \eqref{NBTA}, 
$(a)$ of \eqref{NBT5}, and $(b)$ of \eqref{NBT8}
by the inequality $\le$.
\end{proofof}

\section{Proof of Theorem \ref{theo:conv2BC2}}\Label{A6}
\noindent{\bf Step 1:}\quad
To show Theorem \ref{theo:conv2BC2}, we prepare the following lemma.
\begin{lemma}\Label{L-XLP}
Assume the same assumption as Algorithm \ref{protocol1Berror2}.
Also, we assume Conditions (B0) and (B1) for $\mathcal{E}$. 
When the relation $C_{\inf}(\mathcal{M}_{\lambda},\mathcal{E})
=C_{\inf}(\hat{\mathcal{M}}_{\lambda},\mathcal{E})
$ holds for $\lambda \in \Lambda_*$, 
for $\theta_0 \in \hat{\mathcal{M}}_{\lambda} \setminus {\mathcal{M}}_{\lambda}$, 
we have 
\begin{align}
\min_{\lambda' \in \Lambda_\lambda }
C_{\inf}(\mathcal{M}_{\lambda'},\mathcal{E})
\le
D^F({\theta}_0 \| \Pro^{(e),F}_{{\cal E}}  ({\theta}_0)).
\end{align}
\end{lemma}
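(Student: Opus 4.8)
The plan is to exploit the hypothesis $C_{\inf}(\mathcal{M}_{\lambda},\mathcal{E})=C_{\inf}(\hat{\mathcal{M}}_{\lambda},\mathcal{E})$ to place the minimizer inside $\mathcal{M}_{\lambda}$, and then to travel from that minimizer toward $\theta_0$ along a mixture geodesic, catching the boundary piece that the geodesic crosses. Throughout I write $g(\theta):=D^F(\theta\|\Pro^{(e),F}_{\mathcal{E}}(\theta))=\min_{\theta'\in\mathcal{E}}D^F(\theta\|\theta')$ for the projected Bregman divergence, so that $C_{\inf}(\mathcal{N},\mathcal{E})=\inf_{\theta\in\mathcal{N}}g(\theta)$ for any subfamily $\mathcal{N}$. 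First I would note that, since $\mathcal{M}_{\lambda}\subset\hat{\mathcal{M}}_{\lambda}$ forces $C_{\inf}(\hat{\mathcal{M}}_{\lambda},\mathcal{E})\le C_{\inf}(\mathcal{M}_{\lambda},\mathcal{E})$, the assumed equality means the minimizer $\theta^*:=\theta^*(\hat{\mathcal{M}}_{\lambda},\mathcal{E})$ (which exists under the standing hypotheses) can be chosen in $\mathcal{M}_{\lambda}$. Setting $\theta_*:=\Pro^{(e),F}_{\mathcal{E}}(\theta^*)$, the mutual-projection property at the minimizer gives $\theta^*=\Pro^{(m),F}_{\hat{\mathcal{M}}_{\lambda}}(\theta_*)$, so the Pythagorean theorem (Proposition \ref{MNL}) yields
\begin{align}
D^F(\theta\|\theta_*)=D^F(\theta\|\theta^*)+g(\theta^*)\qquad(\theta\in\hat{\mathcal{M}}_{\lambda}),\Label{plan-star}
\end{align}
with $g(\theta^*)=C_{\inf}(\hat{\mathcal{M}}_{\lambda},\mathcal{E})$.

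Next I would join $\theta^*$ and $\theta_0$ by the mixture geodesic $\theta(s)$, $s\in[0,1]$, defined through $\eta(\theta(s))=(1-s)\,\eta(\theta^*)+s\,\eta(\theta_0)$, which stays in $\hat{\mathcal{M}}_{\lambda}$ because $\hat{\mathcal{M}}_{\lambda}$ is a mixture subfamily. As $\mathcal{M}_{\lambda}$ is closed and convex with $\theta^*=\theta(0)\in\mathcal{M}_{\lambda}$ and $\theta_0=\theta(1)\notin\mathcal{M}_{\lambda}$, the value $s_0:=\sup\{s:\theta(s)\in\mathcal{M}_{\lambda}\}$ lies in $(0,1]$ and the crossing point $\theta_b:=\theta(s_0)$ belongs to $\partial\mathcal{M}_{\lambda}$. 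Since $\partial\mathcal{M}_{\lambda}=\cup_{\lambda'\in\Lambda_{\lambda}}\mathcal{M}_{\lambda'}$ by assumption, $\theta_b\in\mathcal{M}_{\lambda'}$ for some $\lambda'\in\Lambda_{\lambda}$, so that $g(\theta_b)\ge C_{\inf}(\mathcal{M}_{\lambda'},\mathcal{E})\ge\min_{\lambda''\in\Lambda_{\lambda}}C_{\inf}(\mathcal{M}_{\lambda''},\mathcal{E})$. Hence the whole lemma reduces to the single inequality $g(\theta_b)\le g(\theta_0)$, i.e. to the statement that $s\mapsto g(\theta(s))$ does not decrease between the boundary crossing and the endpoint.

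The hard part is exactly this monotonicity of the projected divergence along the geodesic emanating from the minimizer. I would approach it through the derivative identity $\frac{d}{ds}g(\theta(s))=\langle\,\theta(s)-\Pro^{(e),F}_{\mathcal{E}}(\theta(s)),\,\eta(\theta_0)-\eta(\theta^*)\rangle$, obtained by writing $g(\eta)=F^*(\eta)-\max_{\theta'\in\mathcal{E}}\big(\langle\theta',\eta\rangle-F(\theta')\big)$ and differentiating, the maximizer being precisely the $e$-projection (its first-order condition matches the mixture coordinates exactly as in Lemma \ref{Th5}). First-order optimality of $\theta^*$ within $\hat{\mathcal{M}}_{\lambda}$ gives $\frac{d}{ds}g(\theta(s))|_{s=0}=0$, and splitting this identity with the help of \eqref{plan-star} separates a term that measures how far $\theta(s)$ has moved from $\theta^*$—nonnegative by the one-dimensional monotonicity of $D^F(\theta(s)\|\theta^*)$ along the geodesic, cf. Lemma \ref{NLO}—from a term measuring the drift of the $e$-projection $\Pro^{(e),F}_{\mathcal{E}}(\theta(s))$. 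The main obstacle I anticipate is showing that this sum stays nonnegative for $s\ge s_0$: I expect to bound the $e$-projection drift by invoking Condition (B1), which controls the divergence between successive $e$-projections, together with the superadditivity \eqref{KPOT3} along the mixture geodesic, so that the outward motion of $\theta(s)$ dominates. Once $g(\theta_b)\le g(\theta_0)$ is secured, chaining it with the boundary estimate gives $\min_{\lambda'\in\Lambda_{\lambda}}C_{\inf}(\mathcal{M}_{\lambda'},\mathcal{E})\le g(\theta_0)=D^F(\theta_0\|\Pro^{(e),F}_{\mathcal{E}}(\theta_0))$, which is the assertion.
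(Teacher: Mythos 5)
Your reduction is structurally the same as the paper's argument: place a minimizer of $g(\theta):=D^F(\theta\|\Pro^{(e),F}_{\mathcal{E}}(\theta))$ over $\hat{\mathcal{M}}_{\lambda}$ inside $\mathcal{M}_{\lambda}$ (legitimate, given the assumed equality of the two infima and the attainment hypothesis), run a continuous path inside $\hat{\mathcal{M}}_{\lambda}$ between that minimizer and $\theta_0$, catch the crossing point on $\partial\mathcal{M}_{\lambda}=\cup_{\lambda'\in\Lambda_{\lambda}}\mathcal{M}_{\lambda'}$, and chain $\min_{\lambda'}C_{\inf}(\mathcal{M}_{\lambda'},\mathcal{E})\le g(\theta_b)\le g(\theta_0)$. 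The genuine gap is exactly the step you flag as ``the hard part'': you never prove that $s\mapsto g(\theta(s))$ is non-decreasing along the mixture geodesic from $\theta^*$ to $\theta_0$. The envelope identity $\frac{d}{ds}g(\theta(s))=\langle\theta(s)-\Pro^{(e),F}_{\mathcal{E}}(\theta(s)),\,\eta(\theta_0)-\eta(\theta^*)\rangle$ is correct, and first-order optimality gives nonnegativity of this derivative at $s=0$, but for $s>0$ the claim amounts to quasi-convexity of the projected divergence along $m$-geodesics. Condition (B1) is a contraction statement about $e$-projections of \emph{two} points, and \eqref{KPOT3} is a superadditivity along a fixed geodesic; neither controls the sign of the inner product above away from $s=0$, and no combination of them is shown to do so. In the classical rate-distortion instance the monotonicity does hold, but only because mutual information is convex in the conditional distribution (i.e.\ convex in the mixture parameter); no such convexity of $g$ is available, or claimed, for a general Bregman divergence system, so your plan as written cannot be completed.

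The paper avoids this issue by not choosing the path to be a geodesic. It invokes Theorem \ref{theo:conv2} (global convergence of the em iteration under (B0) and (B1)), which rules out non-global local minima of $g$ on the extended mixture family, and from this extracts a one-parameter continuous curve starting at $\theta_0$ along which $g$ is monotone \emph{by construction} and converges to $C_{\inf}(\hat{\mathcal{M}}_{\lambda},\mathcal{E})=C_{\inf}(\mathcal{M}_{\lambda},\mathcal{E})$; since that infimum is approached inside the closed set $\mathcal{M}_{\lambda}$ while $\theta_0\notin\mathcal{M}_{\lambda}$, the curve crosses $\partial\mathcal{M}_{\lambda}$ at some $\theta(s_0)$, and monotonicity along the curve gives $g(\theta(s_0))\le g(\theta_0)$ for free; the rest is your chaining argument. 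To repair your proof you should either (i) actually establish the geodesic monotonicity (e.g.\ by proving convexity of $g$ in the mixture parameter, which would restrict the lemma to settings like the classical one), or (ii) replace your geodesic by the descent curve furnished by Theorem \ref{theo:conv2}, after which your boundary-crossing and covering steps go through verbatim.
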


\begin{proofof}{Lemma \ref{L-XLP}}
Lemma \ref{theo:conv2} guarantees that there is no local minimum
for the minimization \par
\noindent$\min_{\theta \in \mathcal{M}_{\lambda}}
D^F({\theta} \| \Pro^{(e),F}_{{\cal E}}  ({\theta}))$.
Hence, there exists a one-parameter continuous curve $\theta(s) \in \mathcal{M}_{\lambda}$ such that
$\theta(0)= {\theta}_0$, 
\begin{align}
\lim_{s\to 1} D^F\big({\theta}(s) \big\| \Pro^{(e),F}_{{\cal E}}  ({\theta}(s)) \big)
=C_{\inf}(\mathcal{M}_{\lambda},\mathcal{E}),
\end{align}
and $D^F({\theta}(s) \| \Pro^{(e),F}_{{\cal E}}  ({\theta}(s)))$ is monotonically increasing for $s$.
Then, there exits $s_0 \in (0,1)$ such that 
${\theta}(s_0) \in \partial \mathcal{M}_{\lambda}$.
We choose $\lambda'' \in \Lambda_\lambda $ such that
${\theta}(s_0) \in \mathcal{M}_{\lambda''}$.
Then, we obtain
\begin{align}
 \min_{\lambda' \in \Lambda_\lambda }
C_{\inf}(\mathcal{M}_{\lambda'},\mathcal{E})
\le &
C_{\inf}(\mathcal{M}_{\lambda''},\mathcal{E})
\le
D^F \big({\theta}(s_0) \big\| \Pro^{(e),F}_{{\cal E}}  ({\theta}(s_0)) \big) \nonumber \\
\le &
D^F \big({\theta}_0 \big\| \Pro^{(e),F}_{{\cal E}}  ({\theta}_0)\big).
\end{align}
\end{proofof}

In the following, we show Theorem \ref{theo:conv2BC2} by using Lemma \ref{L-XLP} and
Eq. \eqref{CO8T} in the proof of Theorem \ref{theo:conv2BC}.

\noindent{\bf Step 2:}\quad
The aim of this step is showing the following relation by induction for $D(\lambda)$;
 \begin{align}
& 
\min_{\lambda' \in \bar{\Lambda}_\lambda \cup \{\lambda\}
: \theta^{(t_2(\lambda')),\lambda'} \in \mathcal{M}_{\lambda'}}
D^{F}(\theta^{(t_2(\lambda')),\lambda'}\| \theta_{(t_2(\lambda')-1),\lambda'} ) 
-C_{\inf}(\mathcal{M}_\lambda,\mathcal{E}) \nonumber \\
\le &
\frac{1}{t_1-1} D^F(\theta_{*}(\mathcal{M}_{\lambda},\mathcal{E})\| \theta_{(1)}  ) 
+ \epsilon_1 +D^F(\theta^{(t_2(\lambda)),\lambda} \| \bar\theta^{(t_2(\lambda)),\lambda}) 
\nonumber \\
&+\sum_{k=0}^{D(\lambda)-1}
\max_{\lambda'\in \bar{\Lambda}_\lambda : D(\lambda')=k}
\Big(\frac{1}{t_1-1} D^F(\theta_{*}(\mathcal{M}_{\lambda'},\mathcal{E})\| \theta_{(1)}  )
+ \epsilon_1 +D^F(\theta^{(t_2(\lambda')),\lambda'} \| \bar\theta^{(t_2(\lambda')),\lambda'}) 
\Big).
\Label{Qma2BA6}
\end{align}
Eq. \eqref{CO8T} in the proof of Theorem \ref{theo:conv2BC} implies \eqref{Qma2BA6}
with the condition $D(0)=0$.
In the following, we show \eqref{Qma2BA6} with the condition $D(\lambda)=k$
by assuming \eqref{Qma2BA6} with the condition $D(\lambda)\le k-1$.

When the relation 
\begin{align}
C_{\inf}(\mathcal{M}_{\lambda},\mathcal{E})
=C_{\inf}(\hat{\mathcal{M}}_{\lambda},\mathcal{E})
\Label{QNV}
\end{align}
does not hold,
there exists $\lambda' \in \bar{\Lambda}_{\lambda}$ such that
$C_{\inf}(\mathcal{M}_\lambda,\mathcal{E})
=C_{\inf}(\mathcal{M}_{\lambda'},\mathcal{E})$.
Since $D(\lambda ) \le k-1$, 
the assumption of induction implies \eqref{Qma2BA6}.
When the relation
\begin{align}
\theta^{(t),\lambda} \in \hat{\mathcal{M}}_{\lambda} \setminus {\mathcal{M}}_{\lambda}
\Label{QNMP}
\end{align}
dos not hold, i.e., $\theta^{(t),\lambda} \in
 {\mathcal{M}}_{\lambda}$, 
Theorem \ref{theo:conv2BC} implies \eqref{Qma2BA6}.
Hence, it is sufficient to show \eqref{Qma2BA6} when 
\eqref{QNV} and \eqref{QNMP} hold.

Due to these two conditions, 
Lemma \ref{L-XLP} implies that 
\begin{align}
\min_{\lambda' \in \Lambda_\lambda }
C_{\inf}(\mathcal{M}_{\lambda'},\mathcal{E}) 
\le D^{F}(\theta^{(t_2(\lambda)),\lambda}\| \Pro^{(e),F}_{{\cal E}}  
(\theta^{(t_2(\lambda)),\lambda}) ) .\Label{QNM4}
\end{align}
Thus,
\begin{align}
& 
\min_{\lambda' \in \Lambda_\lambda }
C_{\inf}(\mathcal{M}_{\lambda'},\mathcal{E}) 
-C_{\inf}(\mathcal{M}_\lambda,\mathcal{E}) \nonumber \\
 \le &
D^{F}(\theta^{(t_2(\lambda)),\lambda}\| \Pro^{(e),F}_{{\cal E}}  (\theta^{(t_2(\lambda)),\lambda}) ) 
-C_{\inf}(\mathcal{M}_\lambda,\mathcal{E}) \nonumber \\
\stackrel{(a)}{\le} &
D^{F}(\theta^{(t_2(\lambda)),\lambda}\| \theta_{(t_2(\lambda)-1),\lambda})  
-C_{\inf}(\mathcal{M}_\lambda,\mathcal{E}) \nonumber \\
\stackrel{(b)}{\le} &
\frac{1}{t_1-1} D^F(\theta_{*}(\mathcal{M}_{\lambda},\mathcal{E})\| \theta_{(1)}  ) 
+ \epsilon_1 +D^F(\theta^{(t_2(\lambda)),\lambda} \| \bar\theta^{(t_2(\lambda)),\lambda}) 
 \Label{QNMf},
\end{align}
where $(a)$ follows from the definition of the $e$-projection $\Pro^{(e),F}_{{\cal E}}$
and $(b)$ follows from Eq. \eqref{CO8T} in the proof of Theorem \ref{theo:conv2BC}.
Hence, we have
 \begin{align}
& \min_{\lambda' \in \bar{\Lambda}_\lambda \cup \{\lambda\}
: \theta^{(t),\lambda'} \in \mathcal{M}_{\lambda'}}
D^{F}(\theta^{(t_2(\lambda')),\lambda'}\| \theta_{(t_2(\lambda')-1),\lambda'} ) 
-C_{\inf}(\mathcal{M}_\lambda,\mathcal{E}) \nonumber \\
\stackrel{(a)}{\le} &
 \min_{\lambda' \in {\Lambda}_\lambda }
\Big(\min_{\lambda'' \in {\Lambda}_{\lambda'} \cup \{\lambda'\}
: \theta^{(t),\lambda''} \in \mathcal{M}_{\lambda''}}
D^{F}(\theta^{(t_2(\lambda'')),\lambda''}\| \theta_{(t_2(\lambda'')-1),\lambda''} ) 
-C_{\inf}(\mathcal{M}_\lambda,\mathcal{E}) \Big)\nonumber \\
=& \min_{\lambda' \in {\Lambda}_\lambda }
\Big(\min_{\lambda'' \in {\Lambda}_{\lambda'} \cup \{\lambda'\}
: \theta^{(t),\lambda''} \in \mathcal{M}_{\lambda''}}
D^{F}(\theta^{(t_2(\lambda'')),\lambda''}\| \theta_{(t_2(\lambda'')-1),\lambda''} ) 
-C_{\inf}(\mathcal{M}_{\lambda'},\mathcal{E}) \nonumber \\
&+
\big(C_{\inf}(\mathcal{M}_{\lambda'},\mathcal{E}) 
-C_{\inf}(\mathcal{M}_\lambda,\mathcal{E}) \big)
\Big)\nonumber \\
\le & 
\max_{\lambda' \in {\Lambda}_\lambda }
\Big(\min_{\lambda'' \in {\Lambda}_{\lambda'} \cup \{\lambda'\}
: \theta^{(t),\lambda''} \in \mathcal{M}_{\lambda''}}
D^{F}(\theta^{(t_2(\lambda'')),\lambda''}\| \theta_{(t_2(\lambda'')-1),\lambda''} ) 
-C_{\inf}(\mathcal{M}_{\lambda'},\mathcal{E}) \Big) 
\nonumber \\
&+
\min_{\lambda' \in {\Lambda}_\lambda }
\Big(C_{\inf}(\mathcal{M}_{\lambda'},\mathcal{E}) 
-C_{\inf}(\mathcal{M}_\lambda,\mathcal{E}) 
\Big)\nonumber \\
\stackrel{(b)}{\le} &
\max_{\lambda' \in {\Lambda}_\lambda }\Bigg(
\sum_{k=0}^{D(\lambda')-1}
\max_{\lambda''\in \bar{\Lambda}_{\lambda'} : D(\lambda'')=k}
\Big(\frac{1}{t_1-1} D^F(\theta_{*}(\mathcal{M}_{\lambda''},\mathcal{E})\| \theta_{(1)}  ) 
+ \epsilon_1 +D^F(\theta^{(t_2(\lambda'')),\lambda''} \| \bar\theta^{(t_2(\lambda'')),\lambda''}) 
\Big) \nonumber \\
&+
\frac{1}{t_1-1} D^F(\theta_{*}(\mathcal{M}_{\lambda'},\mathcal{E})\| \theta_{(1)}  )
+ \epsilon_1 +D^F(\theta^{(t_2(\lambda')),\lambda'} \| \bar\theta^{(t_2(\lambda')),\lambda'}) \Bigg)
\nonumber\\
&+
\frac{1}{t_1-1} D^F(\theta_{*}(\mathcal{M}_{\lambda},\mathcal{E})\| \theta_{(1)}  )
+ \epsilon_1 +D^F(\theta^{(t_2(\lambda)),\lambda} \| \bar\theta^{(t_2(\lambda)),\lambda}) \nonumber\\
\stackrel{(c)}{\le} &
\sum_{k=0}^{D(\lambda)-1}
\max_{\lambda'\in \bar{\Lambda}_\lambda : D(\lambda')=k}
\Big(\frac{1}{t_1-1} D^F(\theta_{*}(\mathcal{M}_{\lambda'},\mathcal{E})\| \theta_{(1)}  )
+ \epsilon_1 +D^F(\theta^{(t_2(\lambda')),\lambda'} \| \bar\theta^{(t_2(\lambda')),\lambda'}) 
\Big) \nonumber \\
&+
\frac{1}{t_1-1} D^F(\theta_{*}(\mathcal{M}_{\lambda},\mathcal{E})\| \theta_{(1)}  )
+ \epsilon_1 +D^F(\theta^{(t_2(\lambda)),\lambda} \| \bar\theta^{(t_2(\lambda)),\lambda}) ,
\Label{Qma2BAR}
\end{align}
where Step $(a)$ follows from the definition of $\bar{\Lambda}_\lambda$.
The second line of $(b)$ follows from \eqref{QNMf}.
The first line of $(b)$ follows from 
the substitution of $\lambda'$ into $\lambda$
in the relation \eqref{Qma2BA6} as the assumption of induction.
Step $(c)$ follows from the following fact.
For $\lambda' \in \Lambda_\lambda$, we have
the relations $\bar{\Lambda}_{\lambda'} \subset \bar{\Lambda}_{\lambda}$
and $D(\lambda)-1\ge D(\lambda') > D(\lambda') -1$.
Hence, we obtain \eqref{Qma2BA6} in the general case.

\noindent{\bf Step 3:}\quad
The aim of this step is showing \eqref{Qma2BA2} by using \eqref{Qma2BA6}.
We apply \eqref{Qma2BA6} to the case with $\lambda=0$.
We have
\begin{align}
& D^{F}(\theta_f^{(t)}\| \Pro^{(e),F}_{{\cal E}}  (\theta_f^{(t)}) ) 
-C_{\inf}(\mathcal{M},\mathcal{E}) \nonumber \\
\stackrel{(a)}{=} & D^{F}(\theta^{(t_2(\lambda_0)),\lambda_0}\| \Pro^{(e),F}_{{\cal E}}  (\theta^{(t_2(\lambda_0)),\lambda_0}) ) 
-C_{\inf}(\mathcal{M},\mathcal{E}) \nonumber \\
\stackrel{(b)}{\le} &
D^{F}(\theta^{(t_2(\lambda_0)),\lambda_0}\| \theta_{(t_2(\lambda_0)-1),\lambda_0})  
-C_{\inf}(\mathcal{M},\mathcal{E}) \nonumber \\
\stackrel{(c)}{=} &
\min_{\lambda' \in {\Lambda}_* 
: \theta^{(t_2(\lambda')),\lambda'} \in \mathcal{M}_{\lambda'}}
D^{F}(\theta^{(t_2(\lambda')),\lambda'}\| \theta_{(t_2(\lambda')-1),\lambda'})  
-C_{\inf}(\mathcal{M},\mathcal{E}) \nonumber \\
\stackrel{(d)}{=} &
\min_{\lambda' \in \bar{\Lambda}_0 \cup \{0\}
: \theta^{(t_2(\lambda')),\lambda'} \in \mathcal{M}_{\lambda'}}
D^{F}(\theta^{(t_2(\lambda')),\lambda'}\| \theta_{(t_2(\lambda')-1),\lambda'})  
-C_{\inf}(\mathcal{M},\mathcal{E}) \nonumber \\
\stackrel{(e)}{\le} &
\frac{1}{t_1-1} D^F(\theta_{*}(\mathcal{M}_{0},\mathcal{E})\| \theta_{(1)}  ) 
+ \epsilon_1 +D^F(\theta^{(t_2(0)),0} \| \bar\theta^{(t_2(0)),0}) 
\nonumber \\
&+\sum_{k=0}^{D(0)-1}
\max_{\lambda'\in \bar{\Lambda}_\lambda : D(\lambda')=k}
\Big(\frac{1}{t_1-1} D^F(\theta_{*}(\mathcal{M}_{\lambda'},\mathcal{E})\| \theta_{(1)}  )
+ \epsilon_1 +D^F(\theta^{(t_2(\lambda')),\lambda'} \| \bar\theta^{(t_2(\lambda')),\lambda'}) 
\Big) \nonumber \\
=&
(D(0)+1)
\max_{\lambda\in \Lambda_*}\Big(
\frac{1}{t_1-1} D^F(\theta_{*}(\mathcal{M}_\lambda,\mathcal{E})\| \theta_{(1)}  ) 
+ \epsilon_1 +D^F(\theta^{(t_2(\lambda)),\lambda} \| \bar\theta^{(t_2(\lambda)),\lambda}) 
\Big),
\end{align}
where each step is shown as follows.
$(a)$ follows from the definition of $\theta_f^{(t)}$.
$(b)$ follows from the definition of the $e$-projection $\Pro^{(e),F}_{{\cal E}}$.
$(c)$ follows from the definition of $\lambda_0$.
$(d)$ follows from the relation $\bar{\Lambda}_0 \cup \{0\}=\Lambda_*$.
$(e)$ follows from the application of \eqref{Qma2BA6} to the case with $\lambda=0$.

\section{Proofs of Lemmas \ref{BL11} and \ref{BL10}}\Label{BBO}
\begin{proofof}{Lemma \ref{BL11}}
The assumption implies $n_1 \ge n_2$. 
It is sufficient to show that the matrix 
$((P_2 J_{\theta,\tau(\theta),3} P_1)_{i,j}  )_{i = 1, \ldots, n_1-1, j=1, \ldots, n_2-1 }$ 
has at least rank $n_2-1$ under the given condition.
For $i = 1, \ldots, n_1-1, j=1, \ldots, n_2-1$,
we choose $c_{1,i}$ and $c_{2,j}$ as
\begin{align}
\sum_{x,y} P_X(x)W_{\theta,x}(y) f_j(y) &=c_{1,j} \\
\sum_{x,y} P_X(x)W_{\theta,x}(y) \delta_{i,x}&=c_{2,i},
\end{align}
where $f_j(y) $ is defined in Subsection \ref{4A}. Then, 
we have
\begin{align}
(P_2 J_{\theta,\tau(\theta),3} P_1)_{i,j}
=&\sum_{x,y} P_X(x)W_{\theta,x}(y) (\delta_{i,x}-c_{2,i})( f_j(y)-c_{1,j} ) \nonumber \\
=&\sum_{x} P_X(x)  (\delta_{i,x}-c_{2,i}) \Big(\sum_{y} W_{\theta,x}(y) ( f_j(y)-c_{1,j} ) \Big).
\end{align}
When $( f_j(y)-c_{1,j} )_{y,j}$ is considered as a matrix, its rank is $n_2-1$.
Also, $(W_{\theta,x}(y))_{x,y}$ can be regarded as a rank-$n_2-1$ matrix.
Hence, $\Big(\sum_{y} W_{\theta,x}(y) ( f_j(y)-c_{1,j} ) \Big)_{x,j}$
can be regarded as a rank-$n_2-1$ matrix.
Also, $(P_X(x)  (\delta_{i,x}-c_{2,i}))_{x,i}$ can be regarded as a rank-$n_1-1$ matrix.
Since $n_1 \ge n_2$, 
$(P_2 J_{\theta,\tau(\theta),3} P_1)_{i,j}$ is a rank-$n_1-1$ matrix.
\end{proofof}

\begin{proofof}{Lemma \ref{BL10}}
To show Lemma \ref{BL10}, we prepare the following lemma;
\begin{lemma}\Label{MOF}
We consider 
a one-parameterized family of channels $\{\bar{W}_s \}_{s \in \mathbb{R} }$
We denote the Fisher information of 
$\{\bar{W}_s \times P_X \}_{s }$
by $\bar{J}_{s,1}$.
We denote the Fisher information of 
$\{\bar{W}_s \cdot P_X \}_{s }$
by $\bar{J}_{s,2}$.
Then, 
$ \bar{J}_{s_0,1}\ge \bar{J}_{s_0,2}$.
The equality hold if and only if
the function $(x,y)\mapsto 
\frac{\frac{d }{d s}  \bar{W}_{s}(y|x) \big|_{s=s_0}}{\bar{W}_{s_0}(y|x)}$ 
can be written as a function of $y$.
\end{lemma}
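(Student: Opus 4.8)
The plan is to read this as the data-processing monotonicity of Fisher information under the marginalization map $(x,y)\mapsto y$, and to establish it by a pointwise Cauchy--Schwarz estimate. First I would write both quantities explicitly. Set $P_{XY,s}:=\bar{W}_s\times P_X$ and $P_{Y,s}:=\bar{W}_s\cdot P_X$, and abbreviate $\dot g:=\frac{d}{ds}g\big|_{s=s_0}$. Since $P_X$ does not depend on $s$, we have $\dot P_{XY,s_0}(x,y)=P_X(x)\,\dot{\bar W}_{s_0}(y|x)$ and, because marginalization is linear and $s$-independent, $\dot P_{Y,s_0}(y)=\sum_{x}\dot P_{XY,s_0}(x,y)$. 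With the standard scalar Fisher-information formula $J=\sum_\omega \dot p(\omega)^2/p(\omega)$, this gives
\begin{align}
\bar J_{s_0,1}=\sum_{x,y}\frac{\dot P_{XY,s_0}(x,y)^2}{P_{XY,s_0}(x,y)},\qquad
\bar J_{s_0,2}=\sum_{y}\frac{\big(\sum_x\dot P_{XY,s_0}(x,y)\big)^2}{P_{Y,s_0}(y)} .
\end{align}

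Next, for each fixed $y$ I would apply the Cauchy--Schwarz inequality to the vectors $\big(\dot P_{XY,s_0}(x,y)/\sqrt{P_{XY,s_0}(x,y)}\big)_x$ and $\big(\sqrt{P_{XY,s_0}(x,y)}\big)_x$, obtaining
\begin{align}
\Big(\sum_x\dot P_{XY,s_0}(x,y)\Big)^2
\le \Big(\sum_x\frac{\dot P_{XY,s_0}(x,y)^2}{P_{XY,s_0}(x,y)}\Big)\Big(\sum_x P_{XY,s_0}(x,y)\Big).
\end{align}
Dividing by $P_{Y,s_0}(y)=\sum_x P_{XY,s_0}(x,y)$ and summing over $y$ yields $\bar J_{s_0,1}\ge \bar J_{s_0,2}$, which is the desired inequality.

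For the equality case, I would track the equality condition of Cauchy--Schwarz termwise: for each $y$ with $P_{Y,s_0}(y)>0$, equality requires the two vectors above to be proportional, i.e.\ there is a constant $c(y)$ (possibly depending on $y$) with $\dot P_{XY,s_0}(x,y)/P_{XY,s_0}(x,y)=c(y)$ for all $x$. Cancelling the common factor $P_X(x)$ turns this into $\dot{\bar W}_{s_0}(y|x)/\bar W_{s_0}(y|x)=c(y)$, i.e.\ the score $(x,y)\mapsto \dot{\bar W}_{s_0}(y|x)/\bar W_{s_0}(y|x)$ is a function of $y$ alone; conversely, if it is such a function then each pointwise Cauchy--Schwarz step is tight and the two Fisher informations coincide. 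This is exactly the stated characterization.

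The computation itself is routine; the only points that require care are bookkeeping of the support (the sums run over pairs $(x,y)$ with $\bar W_{s_0}(y|x)>0$, and I would note that terms with $P_{Y,s_0}(y)=0$ contribute nothing on either side) and the observation that $P_X$ being $s$-independent is precisely what makes the score cancellation clean. So the ``main obstacle'' is merely formulating the equality condition precisely rather than any analytic difficulty; once the Cauchy--Schwarz step is identified, both the inequality and its equality case follow immediately.
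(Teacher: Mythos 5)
Your proposal is correct, and at its core it rests on the same fact as the paper's proof: the gap $\bar J_{s_0,1}-\bar J_{s_0,2}$ is controlled by the conditional fluctuation of the score $l(x,y)=\dot{\bar W}_{s_0}(y|x)/\bar W_{s_0}(y|x)$ given $Y=y$. The packaging, however, differs in a way worth noting. The paper proves an exact identity: it first performs the chain-rule decomposition of the joint Fisher information (Fisher information of the marginal plus the averaged Fisher information of the conditionals $P_{X|Y}$), and then rewrites the excess term as
\begin{align}
\bar J_{s_0,1}-\bar J_{s_0,2}
=\sum_{x,y}\bar W_{s_0}(y|x)P_X(x)
\Bigl(l(x,y)-\tfrac{\sum_{x'}l(x',y)\bar W_{s_0}(y|x')P_X(x')}{\sum_{x'}\bar W_{s_0}(y|x')P_X(x')}\Bigr)^{2},
\end{align}
i.e.\ the expected conditional variance of the score, from which both the inequality and the equality case are read off simultaneously. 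You instead prove the inequality directly by a termwise Cauchy--Schwarz estimate (which, with your choice of vectors, is precisely the statement that this conditional variance is nonnegative) and then recover the equality case from the proportionality condition of Cauchy--Schwarz. Your route is shorter and leans on a standard tool; the paper's route yields the explicit formula for the gap, which is slightly more information and makes the ``score is a function of $y$ alone'' characterization transparent without a separate proportionality argument. Your bookkeeping remarks (restricting to the support and discarding $y$ with $P_{Y,s_0}(y)=0$, and the cancellation of the $s$-independent factor $P_X(x)$ in the score) are exactly the right points of care, and nothing in your argument breaks.
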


We denote the mixture parameter of the exponential family 
$\{P_{XY,\theta,\tau}\}_{\theta,\tau}$
by $(\eta_1(\theta,\tau),\eta_2(\theta,\tau))$.
The condition \eqref{const1-4T} implies
\begin{align}
\eta_{2,0}(\theta,\tau(\theta))=D,
\end{align}
and the construction of $P_{XY}^{(t+1)}$ implies 
\begin{align}
\eta_{2,x}(\theta,\tau(\theta))=P_X(x)
\end{align}
for $x \in \X\setminus \{n_1\}$.
We choose a one-parameter family $c(t)\in \mathbb{R}^{n_2-1}$ such that
$c(0)=\theta_0$ and $v_1:=\frac{d}{dt} c(t)|_{t=0} \neq 0$.
Then, we have
\begin{align}
\frac{d}{dt}\eta_2(c(t),\tau(\theta_0))
+\frac{d}{dt}\eta_2(\theta_0,\tau(c(t)))=0.\Label{NCP}
\end{align}
We denote 
$ \frac{d}{dt} \tau(c(t))|_{t=0}$ by $v_2$. 
The condition \eqref{NCP} is equivalent to the condition;
\begin{align}
P_2 J_{\theta_0,\tau(\theta_0),3} P_1 v_1
+P_2 J_{\theta_0,\tau(\theta_0),3} P_2 v_2=0.
\end{align}
That is, 
\begin{align}
v_2= 
- (P_2 J_{\theta_0,\tau(\theta_0),3} P_2)^{-1}P_2 J_{\theta_0,\tau(\theta_0),3} P_1 v_1.
\end{align}
Hence, 
the vector $v_2$ is not zero for any $v_1\neq 0$
if and only if $\Ker P_2 J_{\theta,\tau(\theta),3} P_1=\{0\}$.

In addition, 
\begin{align}
&\frac{d}{dt}\Pro^{(m),\mu}_{{\cal M}}(P_{\theta,Y}\times P_X)(x,y)\Big|_{t=0}
\nonumber \\
=&\Pro^{(m),\mu}_{{\cal M}}(P_{\theta_0,Y}\times P_X)(x,y)
( \sum_{i=1}^{n_2-1}v_1^i f_i(y)+ \sum_{j=0}^{n_1-1} v_2^j g_j(x,y)).
\end{align}
Therefore, 
$\frac{\frac{d}{dt}\Pro^{(m),\mu}_{{\cal M}}(P_{\theta,Y}\times P_X)(x,y)\Big|_{t=0}}{
\Pro^{(m),\mu}_{{\cal M}}(P_{\theta_0,Y}\times P_X)(x,y)}$
cannot be written as a function of $y$ for any $v_1 \neq 0$
if and only if $\Ker P_2 J_{\theta,\tau(\theta),3} P_1=\{0\}$.

We define $W_\theta$ as $ W_\theta\times P_X= \Pro^{(m),\mu}_{{\cal M}}(P_{\theta,Y}\times P_X)$.
Applying Lemma \ref{MOF} with substitution of $W_{c(t)}$ into $\bar{W}_s$,
we obtain the desired statement of Lemma \ref{BL10} from the above equivalence relation.
\end{proofof}

\begin{proofof}{Lemma \ref{MOF}}
\begin{align}
& \bar{J}_{s_0,1} \nonumber \\
=&  \sum_{x,y} \Big(\frac{d}{ds} \bar{W}_{s}(y|x)|_{s=s_0}\Big)^2  
\bar{W}_{s_0} (y|x)^{-1} P_X(x) \nonumber \\
= &  \sum_{y} \Big(\sum_{x'} \frac{d}{ds} \bar{W}_{s}(y|x') |_{s=s_0} P_X(x')  \Big)^2  (\sum_{x'} \bar{W}_{s_0} (y|x') P_X(x'))^{-1} \nonumber \\
 &+  \sum_{y} \Big(\sum_{x'} \bar{W}_{s_0} (y|x') P_X(x')\Big)
\nonumber \\
&\quad \cdot  \sum_x \Big( \frac{d}{ds} 
\frac{ P_X(x)  \bar{W}_{s}(y|x) }{\sum_{x'} \bar{W}_{s} (y|x') P_X(x')}
 \Big|_{s=s_0}   \Big)^2   \Big(\frac{ P_X(x)  \bar{W}_{s_0}(y|x) }{\sum_{x'} \bar{W}_{s_0} (y|x') P_X(x')}
\Big)^{-1} .
\end{align}
Hence,
\begin{align}
&
\bar{J}_{s_0,1}-\bar{J}_{s_0,2}\nonumber \\
=&
 \sum_{y} \Big(\sum_{x'} \bar{W}_{s_0} (y|x') P_X(x')\Big)
\nonumber \\
&\quad \cdot \sum_x \Big( \frac{d}{ds} 
\frac{ P_X(x)  \bar{W}_{s}(y|x) }{\sum_{x'} \bar{W}_{s} (y|x') P_X(x')}
 \Big|_{s=s_0}   \Big)^2   
 \Big(\frac{ P_X(x)  \bar{W}_{s_0}(y|x) }{\sum_{x'} \bar{W}_{s_0} (y|x') P_X(x')}
\Big)^{-1}\nonumber \\
=&
 \sum_{y} \Big(\sum_{x'} \bar{W}_{s_0} (y|x') P_X(x')\Big)
\nonumber \\
&\quad \cdot \sum_x \Big( \frac{d}{ds} 
\log \Big(\frac{ P_X(x)  \bar{W}_{s}(y|x) }{\sum_{x'} \bar{W}_{s} (y|x') P_X(x')}\Big)
 \Big|_{s=s_0}   \Big)^2   
 \Big(\frac{ P_X(x)  \bar{W}_{s_0}(y|x) }{\sum_{x'} \bar{W}_{s_0} (y|x') P_X(x')}
\Big)\nonumber \\
=&
 \sum_{x,y} \bar{W}_{s_0} (y|x) P_X(x)
\nonumber \\
&\quad \cdot \Big( \frac{d}{ds} 
(\log  P_X(x)  \bar{W}_{s}(y|x) )- \log 
\Big( \sum_{x'} \bar{W}_{s} (y|x') P_X(x')\Big)
 \Big|_{s=s_0}   \Big)^2   
\nonumber \\
=&
 \sum_{x,y} \bar{W}_{s_0} (y|x) P_X(x)
\nonumber \\
&\quad \cdot \Big( 
\frac{d}{ds} \log   \bar{W}_{s}(y|x) ) \Big|_{s=s_0}
- \frac{d}{ds} \log 
\Big( \sum_{x'} \bar{W}_{s} (y|x') P_X(x')\Big)
 \Big|_{s=s_0}   \Big)^2   
\nonumber \\
=&
 \sum_{x,y} \bar{W}_{s_0} (y|x) P_X(x)
\Big( 
\frac{\frac{d}{ds}  \bar{W}_{s}(y|x) \Big|_{s=s_0}}{\bar{W}_{s_0}(y|x)}
-  
\frac{\frac{d}{ds} \Big( \sum_{x'} \bar{W}_{s} (y|x') P_X(x')\Big)\Big|_{s=s_0}  }
{\sum_{x'} \bar{W}_{s_0} (y|x') P_X(x')} 
  \Big)^2   
\nonumber \\
=&
 \sum_{x,y} \bar{W}_{s_0} (y|x) P_X(x)
\Big( 
l(x,y)-  
\frac{\Big( \sum_{x'} l(x',y) \bar{W}_{s_0} (y|x') P_X(x')\Big) }
{\sum_{x'} \bar{W}_{s_0} (y|x') P_X(x')} 
  \Big)^2   
\nonumber \\
=&
 \sum_{x,y} \bar{W}_{s_0} (y|x) P_X(x)
\Big( 
l(x,y)-  
\frac{\Big( \sum_{x'} l(x',y) \bar{W}_{s_0} (y|x') P_X(x')\Big) }
{\sum_{x'} \bar{W}_{s_0} (y|x') P_X(x')} 
  \Big)^2   ,
\end{align}
where $l(x,y):=
\frac{\frac{d}{ds}  \bar{W}_{s}(y|x) \Big|_{s=s_0}}{\bar{W}_{s_0}(y|x)}$.
Hence, we have $\bar{J}_{s_0,1}-\bar{J}_{s_0,2}\ge 0$.
The equality holds if and only if 
$l(x,y)$ depends only on $y$.
The desired statement is obtained.
\end{proofof}

\if0
\begin{lemma}
We consider 
a parameterized family of channels $\{W_\theta\}_{\theta\in \Theta \subset \mathbb{R}^d }$
We denote the Fisher information matrix of 
$\{W_\theta\times P_X \}_{\theta\in \Theta }$
by $\bar{J}_{\theta,1}$.
We denote the Fisher information matrix of 
$\{W_\theta\cdot P_X \}_{\theta\in \Theta }$
by $\bar{J}_{\theta,2}$.
Then, 
$ J_{\theta_0,1}-J_{\theta_0,2}$ is strictly positive semi-definite
if an only if 
for any vector $(v^i)_{i=1}^d$,
the function $
\sum_{i}v^i
\frac{\frac{\partial }{\partial \theta^i}  W_{\theta}(y|x) \Big|_{\theta=\theta_0}}{W_{\theta_0}(y|x)}$ for $x,y$
cannot be written as a function of $y$.

\end{lemma}

\begin{proof}
We choose a one-parameter family $c(t)\in \Theta$ such that
$c(0)=\theta_0$.
\begin{align}
&\sum_{i,j} \frac{d c^i(0)}{d t} \frac{d c^j(0)}{d t} J_{\theta_0,1}\\
=&  \sum_{x,y} \Big(\frac{d}{d t} W_{c(t)}(y|x)|_{t=0}\Big)^2  
W_{\theta_0} (y|x)^{-1} P_X(x) \\
= &  \sum_{y} \Big(\sum_{x'} \frac{d}{d t} W_{c(t)}(y|x') |_{t=0} P_X(x')  \Big)^2  (\sum_{x'} W_{\theta_0} (y|x') P_X(x'))^{-1} \\
 &+  \sum_{y} \Big(\sum_{x'} W_{\theta_0} (y|x') P_X(x')\Big)
 \sum_x \Big( \frac{d}{d t} 
\frac{ P_X(x)  W_{c(t)}(y|x) }{\sum_{x'} W_{c(t)} (y|x') P_X(x')}
 \Big|_{t=0}   \Big)^2   
 \Big(\frac{ P_X(x)  W_{\theta_0}(y|x) }{\sum_{x'} W_{\theta_0} (y|x') P_X(x')}
\Big)^{-1} .
\end{align}
Hence,
\begin{align}
&\sum_{i,j} \frac{d c^i(0)}{d t} \frac{d c^j(0)}{d t} 
(J_{\theta_0,1}-J_{\theta_0,2})\\
=&
 \sum_{y} \Big(\sum_{x'} W_{\theta_0} (y|x') P_X(x')\Big)
 \sum_x \Big( \frac{d}{d t} 
\frac{ P_X(x)  W_{c(t)}(y|x) }{\sum_{x'} W_{c(t)} (y|x') P_X(x')}
 \Big|_{t=0}   \Big)^2   
 \Big(\frac{ P_X(x)  W_{\theta_0}(y|x) }{\sum_{x'} W_{\theta_0} (y|x') P_X(x')}
\Big)^{-1}\\
=&
 \sum_{y} \Big(\sum_{x'} W_{\theta_0} (y|x') P_X(x')\Big)
 \sum_x \Big( \frac{d}{d t} 
\log \Big(\frac{ P_X(x)  W_{c(t)}(y|x) }{\sum_{x'} W_{c(t)} (y|x') P_X(x')}\Big)
 \Big|_{t=0}   \Big)^2   
 \Big(\frac{ P_X(x)  W_{\theta_0}(y|x) }{\sum_{x'} W_{\theta_0} (y|x') P_X(x')}
\Big)\\
=&
 \sum_{x,y} W_{\theta_0} (y|x) P_X(x)
\Big( \frac{d}{d t} 
(\log  P_X(x)  W_{c(t)}(y|x) )- \log 
\Big( \sum_{x'} W_{c(t)} (y|x') P_X(x')\Big)
 \Big|_{t=0}   \Big)^2   
\\
=&
 \sum_{x,y} W_{\theta_0} (y|x) P_X(x)
\Big( 
\frac{d}{d t} \log   W_{c(t)}(y|x) ) \Big|_{t=0}
- \frac{d}{d t} \log 
\Big( \sum_{x'} W_{c(t)} (y|x') P_X(x')\Big)
 \Big|_{t=0}   \Big)^2   
\\
=&
 \sum_{x,y} W_{\theta_0} (y|x) P_X(x)
\Big( 
\frac{\frac{d}{d t}  W_{c(t)}(y|x) \Big|_{t=0}}{W_{\theta_0}(y|x)}
-  
\frac{\frac{d}{d t} \Big( \sum_{x'} W_{c(t)} (y|x') P_X(x')\Big)\Big|_{t=0}  }
{\sum_{x'} W_{\theta_0} (y|x') P_X(x')} 
  \Big)^2   
\\
=&
 \sum_{x,y} W_{\theta_0} (y|x) P_X(x)
\Big( 
l(x,y)-  
\frac{\Big( \sum_{x'} l(x',y) W_{\theta_0} (y|x') P_X(x')\Big) }
{\sum_{x'} W_{\theta_0} (y|x') P_X(x')} 
  \Big)^2   
\\
=&
 \sum_{x,y} W_{\theta_0} (y|x) P_X(x)
\Big( 
l(x,y)-  
\frac{\Big( \sum_{x'} l(x',y) W_{\theta_0} (y|x') P_X(x')\Big) }
{\sum_{x'} W_{\theta_0} (y|x') P_X(x')} 
  \Big)^2   
\\
\if0
=&
  \sum_{x,y} 
\frac{  \Big(\sum_{x'} W_{\theta_0} (y|x') P_X(x')\Big)^2}
{P_X(x)  W_{\theta_0}(y|x) }
  \Big( \frac{d}{d t} 
\frac{ P_X(x)  W_{c(t)}(y|x) }{\sum_{x'} W_{c(t)} (y|x') P_X(x')}
 \Big|_{t=0}   \Big)^2   
 \\
\fi
\end{align}
where $l(x,y):=
\frac{\frac{d}{d t}  W_{c(t)}(y|x) \Big|_{t=0}}{W_{\theta_0}(y|x)}$
Due to the assumption, the functions
$ \frac{\partial }{\partial \theta^i}p_{\theta,X}$ with $i=1, \ldots, d$
are linearly independent.
\end{proof}
\fi

\end{document}